\newtheorem{theorem}{Theorem}[section]
\newtheorem{definition}[theorem]{Definition}
\newtheorem{proposition}[theorem]{Proposition}
\newtheorem{lemma}[theorem]{Lemma}
\newtheorem{example}[theorem]{Example}
\newenvironment{proof}[1][Proof]{\begin{trivlist}
\item[\hskip \labelsep {\bfseries #1}]}{\end{trivlist}}
\newcommand{\cau}[2]{#1\circ#2}
\newcommand{\inp}[5]{\SigmaB_{#2\in#3}#1?\Seq{#4_#2}{#5_#2}}
\newcommand{\inpp}[5]{\SigmaB_{#2\in#3}#1?\Seq{#4_#2}{#5}}
\newcommand{\oup}[5]{\bigoplus_{#2\in#3}#1!\Seq{#4_#2}{#5_#2}}
\newcommand{\oupp}[5]{\bigoplus_{#2\in#3}#1!\Seq{#4_#2}{#5}}
\newcommand{\oupTx}[5]{\textstyle{\bigoplus}_{#2\in#3}#1!\Seq{#4_#2}{#5_#2}}
\newcommand{\gt}[6]{#1\to#2: \GlSyB_{#3\in#4}\Seq{#5_#3}{#6_#3}}
\newcommand{\gtp}[6]{#1\to#2: \GlSyB_{#3\in#4}\Seq{#5_#3}{#6'_#3}}
 \newcommand{\RG}{\ensuremath{{\sf G}}}
 \newcommand{\GlSyB}{\mathlarger{\mathlarger{\mathlarger{\boxplus}}}}
 \newcommand{\PiB}{\mathlarger{\mathlarger\Pi}}
  \newcommand{\SigmaB}{\mathlarger{\mathlarger\Sigma}}
    \newcommand{\bm}{\vartheta}
     \newcommand{\bms}{\Theta}
   \newcommand{\ES}{\mathcal{P\!E}}
    \newcommand{\EGG}{\mathcal{G\!E}}
  \newcommand{\ee}{\epsilon}
  \renewcommand{\sep}{~|~}
  \newcommand{\eh}[1]{|#1|}
 \newcommand{\mylabel}[1]{\label{#1}}
\newcommand{\comocc}{\ensuremath{\gamma}}
\newcommand{\event}{\ensuremath{\eta}}
\newcommand{\pastpref}[2]{\ensuremath{\concat{{#1}}{#2}}}
\newcommand{\act}[1]{\ensuremath{\sf act}(#1)}
\newcommand{\concat}[2]{\ensuremath{#1\,{\cdot}\,#2}}
 \newcommand{\inpP}[5]{\SigmaB_{#2\in#3}#1?\Seq{#4_#2}{#5}}
\newcommand{\oupP}[5]{\bigoplus_{#2\in#3}#1!\Seq{#4_#2}{#5}}
\newcommand{\refToFigure}[1]{Figure~\ref{#1}}
\newcommand{\refToSection}[1]{Section~\ref{#1}}
\newcommand{\refToExample}[1]{Example~\ref{#1}}
\newcommand{\refToLemma}[1]{Lemma~\ref{#1}}
\newcommand{\refToTheorem}[1]{Theorem~\ref{#1}}
\newcommand{\refToDef}[1]{Definition~\ref{#1}}
\newcommand{\refToProp}[1]{Proposition~\ref{#1}}
\newcommand{\set}[1]{\{#1\}}
\newcommand{\kf}[1]{\ensuremath{\mathsf{#1}\xspace}}
\newcommand{\cSinferrule}[3][]{
  \mprset{fraction={===},
  fractionaboveskip=0.2ex,
  fractionbelowskip=1.30ex}
  \inferrule[#1]{#2}{~#3}
}
\newcommand{\PP}{\ensuremath{P}}
\newcommand{\la}{\lambda}
\newcommand{\M}{\lambda}
\newcommand{\Messages}{\ensuremath{{\sf Msg}}}
\newcommand{\pp}{{\sf p}}
\newcommand{\q}{{\sf q}}
\newcommand{\pr}{{\sf r}}
\newcommand{\ps}{{\sf s}}
\newcommand{\pt}{{\sf t}}
\newcommand{\Participants}{\ensuremath{{\sf Part}}}
\newcommand{\sendL}[2]{#1!#2}
\newcommand{\rcvL}[2]{#1?#2}
\newcommand{\pc}{~|~}
\newcommand{\Seq}[2]{#1;#2}
\newcommand{\inact}{\ensuremath{\mathbf{0}}}
\newcommand{\val}{v}
\newcommand{\Q}{\ensuremath{Q}}
\newcommand{\R}{\ensuremath{R}}
\newcommand{\del}[1]{\ensuremath{\delta}}
\newcommand{\Nt}{\ensuremath{{\sf N}}}
\newcommand{\parN}{\mathrel{\|}}
\newcommand{\pP}[2] {#1[\![\,#2\,]\!]}
\newcommand{\stackred}[1]{\xrightarrow{#1}}
 \newcommand{\G}{\ensuremath{{\sf G}}}
 \newcommand{\ty}{\textrm{\bf t}}
 \newcommand{\End}{\kf{End}}
\newcommand{\partcomm}[1]{\ensuremath{{\sf part}(#1)}}
\newcommand{\ptone}[1]{\ensuremath{{\sf pt}(#1)}}
\newcommand{\GlSy}{\boxplus}
 \newcommand{\participant}[1]{\partcomm{#1}}
\newcommand{\proj}[2]{#1 \!  \upharpoonright  \! #2\,}
\newcommand{\projb}[2]{#1 \!  \Rsh  \! #2\,}
\newcommand{\subt}{\leq}
\newcommand{\derN}[2]{\vdash #1 :#2}
\newcommand{\rulename}[1]{{[\textsc{#1}]}}
\newcommand{\weight}{\ensuremath{{\sf depth}}}
\newcommand{\gtCom}[3]{#1\stackrel{#3}{\to}#2}
\newcommand{\gr}{\ensuremath{~\#~}}
\newcommand{\grr}{\ensuremath{\,\#\,}}
\newcommand{\impl}{\ensuremath{\Rightarrow}}
\newcommand{\procev}{\eta}
\newcommand{\Procev}{\ES}
\newcommand{\netev}{\nu}
\newcommand{\globev}{\gamma}
\newcommand{\precP}{\ensuremath{\leq}}
\newcommand{\precN}{\ensuremath{\prec}}
 \newcommand{\GE}{\mathcal{N\!E}}
  \newcommand{\GEs}{\mathcal{G\!E}}
\newcommand{\DE}{\mathcal{D\!E}}
\newcommand{\dualev}[2]{\ensuremath{#1\Join #2}}
\newcommand{\dualevS}[2]{\ensuremath{#1\;\,\widehat\Join\;\, #2}}
\newcommand{\locev}[2]{\ensuremath{#1:: #2}}
\newcommand{\comseq}{ \sigma } %{\mathit{s}}
\newcommand{\actseq}{\zeta} 
\newcommand{\Comseq}{\textit{Traces}}
\newcommand{\eqclass}[1]{\ensuremath{[#1]_{\sim}}}
\newcommand{\quotient}{\ensuremath{\Comseq/\!\!\sim}}
\newcommand{\gn}[1]{\ensuremath{{\sf gn}}(#1)}
\newcommand{\ESP}[1]{\ensuremath{\mathcal{S^P}(#1)}}
\newcommand{\ESN}[1]{\ensuremath{\mathcal{S}^{\mathcal{N}}(#1)}}
\newcommand{\ESet}{\ensuremath{\mathcal{X}}}
\newcommand{\ESNstar}[1]{\ensuremath{\mathcal{S}_{*}^{\mathcal{N}}(#1)}}
\newcommand{\ESG}[1]{\ensuremath{\mathcal{S}^{\mathcal{G}}(#1)}}
\newcommand{\emptyseq}{\ensuremath{\epsilon}}
\newcommand{\eqdef}{\ensuremath{{=_{\sf def}}}}
\newcommand{\length}[1]{\ensuremath{\sf length(#1)}}
\newcommand{\Conf}[1]{\ensuremath{\mathcal{C}(#1)}}
\newcommand{\CD}[1]{\ensuremath{\mathcal{D}(#1)}}
\newcommand{\comm}[1]{\ensuremath{{\sf cm}}(#1)}
\newcommand{\last}[1]{\ensuremath{{\sf last}}(#1)}
\newcommand{\nec}[1]{\ensuremath{{\sf nec}(#1)}}
\newcommand{\gec}[1]{\ensuremath{{\sf gec}(#1)}}
\newcommand{\postP}[2]{\ensuremath{{#2}\,\blacklozenge\,{(#1)}}}
\newcommand{\preP}[2]{\ensuremath{{#2}\,\lozenge\,{(#1)}}}
\newcommand{\post}[2]{\ensuremath{{#2}\,\blacklozenge\,{#1}}}
\newcommand{\pre}[2]{\ensuremath{{#2}\,\lozenge\,{#1}}}
\newcommand{\postG}[2]{\ensuremath{{#2}\bullet{#1}}}
\newcommand{\preG}[2]{\ensuremath{{#2}\circ{#1}}}
\newcommand{\coDefGr}{::=^{coind}}
\newcommand{\GP}{{\G}}
\newcommand{\FPaths}[1]{{\sf Tr^+}(#1)}
\newcommand{\nr}[1]{{\sf n}(#1)}
\newcommand{\projnet}[2]{{\sc proj}_{#1}(#2)}
\newcommand{\projnetfun}[1]{{\sc proj}_{#1}}
\newcommand{\Comm}[3]{#1#3#2}%{#1#3!?#2}
\newcommand{\loc}[1]{\ensuremath{{\sf loc}(#1)}}
\newcommand{\pro}[2]{#1\,@\,#2}
\newcommand{\ev}[1]{\ensuremath{{\sf ev}(#1)}}
 \newcommand{\projS}[2]{#1 @   #2\,}%{#1 \!  \looparrowright   #2\,}
\newcommand{\at}[2]{#1[#2]}
\newcommand{\range}[3]{#1[#2\,...\,#3]}
\newcommand{\cardin}[1]{\!\!\pc\! #1\!\pc\!}
\newcommand{\occ}[2]{\ensuremath{#1\!\in\!\in\!#2}}
\newcommand{\tr}[2]{\ensuremath{#1\! \searrow \!#2}}
\newcommand{\prR}[3]{(#1,#2)\in\downarrow_#3}
\newcommand{\RR}{{\mathcal R}}
\newcommand{\iR}[3]{(#1,#2)\in #3}
\newcommand{\RT}{{\mathcal RT}}
\newcommand{\causpre}[3]{\ensuremath{{\sf pre}}(#1,#2,#3)}
\newcommand{\RS}{{\mathcal RS}}
\newcommand{\causpost}[3]{\ensuremath{{\sf post}}(#1,#2,#3)}
\newcommand{\EvSet}{\ensuremath{Ev}}
\newcommand{\qedhere}{}
\newenvironment{lemmaa}[2]{\begin{trivlist}
\item[\hskip \labelsep {\bfseries Lemma #1}] {#2} \it}{\end{trivlist}}
\begin{document}

\begin{frontmatter}

\title{Event Structure Semantics for Multiparty Sessions}
%{\color{white}{\author[0]{}}}
\author[1]{Ilaria Castellani\corref{cor1}\fnref{ic}
}
\ead{ilaria.castellani@inria.fr}
\fntext[ic]{This research has been supported by the ANR17-CE25-0014-01 CISC project.}
\address[1]{INRIA, Universit\'e C\^ote d'Azur, France}
\author[2]{Mariangiola Dezani-Ciancaglini}
 \ead{dezani@di.unito.it}
\address[2]{Dipartimento di Informatica, Universit\`a di Torino, Italy} 
\author[3]{Paola Giannini\fnref{pg}}
\ead{paola.giannini@uniupo.it}
\address[3]{DiSSTE,Universit\`{a} del Piemonte Orientale, Italy} 
\fntext[pg]{This original research has the financial support of the Universit\`{a}  del Piemonte Orientale.}
\cortext[cor1]{Corresponding author. INRIA, 2004 Route des Lucioles, 
BP 93, 06902 Sophia Antipolis
FRANCE}

\date{}

\begin{abstract} 
   We propose an interpretation of multiparty sessions as \emph{Flow
     Event Structures}, which allows concurrency %between communications
  within sessions 
%a session 
to be explicitly represented.  We show that this
   interpretation is equivalent, when the multiparty sessions can be
  described by global types, to an interpretation of such global types as
   \emph{Prime Event Structures}. 
%\bila 
% namely we show that the corresponding
%    domains of configurations are isomorphic. 
%\eila
\end{abstract}
 \begin{keyword}
Communication-centric Systems, 
Communication-based Programming, 
Process Calculi, Event Structures, Multiparty Session Types.
\end{keyword}
\end{frontmatter}

%\linenumbers

%\tableofcontents

% !TEX root =cdgS.tex

\section{Introduction}
\label{sec:intro}

Session types were proposed in the mid-nineties~\cite{THK94,HVK98}, as
a tool for specifying and analysing web services and communication
protocols. They were first introduced in a variant of the
$\pi$-calculus to describe binary interactions between processes.
Such binary interactions may often be viewed as client-server 
protocols. 
Subsequently, session types were extended to {\em multiparty
  sessions}~\cite{CHY08,CHY16}, where several participants may
interact with each other.  A multiparty session is an interaction
among peers, and there is no need to distinguish one of the
participants as representing the server.  All one needs is an abstract
specification of the protocol that guides the interaction.  This is
called the \emph{global type} of the session. The global type
describes the behaviour of the whole session, as opposed to the local
types that describe the behaviours of single participants. In a
multiparty session, local types may be retrieved as projections from
the global type.

Typical safety properties ensured by session types are
\emph{communication safety} (absence of communication errors),
\emph{session fidelity} (agreement with the protocol) and
\emph{deadlock-freedom}~\cite{CHY16}.  When dealing with multiparty
sessions, the type system is often enhanced so as to guarantee also
the liveness property known as \emph{progress} (no participant gets
stuck)~\cite{H2016}.
\\
Some simple examples of sessions not satisfying the above properties
are: 1) a sender emitting a message while the receiver expects a
different message (communication error); 2) two participants both
waiting to receive a message from the other one (deadlock due to a
protocol violation); 3) a three-party session where the first
participant waits to receive a message from the second participant,
which keeps interacting forever with the third participant
(starvation, although the session is not deadlocked).

What makes session types particularly attractive is that they offer
several advantages at once: 1) static safety guarantees, 2) automatic
check of protocol implementation correctness, based on local types,
and 3) a strong connection with linear
logics~\cite{CP10,TCP11,Wadler14,PCPT14,CPT16}, and with concurrency
models such as communicating automata~\cite{DY12}, graphical
choreographies~\cite{LTY15,TuostoG18} and message-sequence
charts~\cite{CHY16}.

In this paper we further investigate the relationship between
multiparty session types and concurrency models, by focussing on Event
Structures~\cite{Win88}.  We consider a standard multiparty session
calculus where sessions are described as networks of sequential
processes~\cite{DY12}.  Each process implements a participant in the
session.  We propose an interpretation of such networks as \emph{Flow
  Event Structures} (FESs)~\cite{BC88a,BC94} (a subclass of Winskel's
Stable Event Structures~\cite{Win88}), which allows concurrency
between session communications to be explicitly represented. We then
introduce global types for these networks, and define an
interpretation of them as \emph{Prime Event Structures}
(PESs)~\cite{Win80,NPW81}. Since the syntax of global types does not
allow all the concurrency among communications to be expressed, the
events of the associated PES need to be defined as equivalence classes
of communication sequences up to \emph{permutation equivalence}. We
show that when a network is typable by a global type, the FES
semantics of the former is equivalent, in a precise technical sense,
to the PES semantics of the latter.  In a companion
paper~\cite{CDG21}, we investigated a similar Event Structure
semantics for a session calculus with asynchronous communication,
which led to a quite different treatment as it made use of a new
notion of asynchronous global type. A detailed comparison
with~\cite{CDG21} will be given in \refToSection{sec:related}.

 This paper is an expanded and amended version
of~\cite{CDG-LNCS19}. The main novelty is that we use a coinductive
definition for processes and global types, which simplifies several
definitions and proofs, and a more stringent definition for network
events. This definition relies on the new notion of causal set, 
which is crucial for the correctness of our ES semantics.
% (as will be explained in more detail in \refToSection{sec:related}). 
% As a consequence, also the definition of conflict on events and some
% examples have been modified
Finally, the present paper includes all proofs
of results, some of which require ingenuity. 

The paper is organised as follows. Section~\ref{sec:calculus}
introduces our multiparty session calculus.  In
Section~\ref{sec:eventStr} we recall the definitions of PESs and FESs,
which will be used to interpret processes
(Section~\ref{sec:process-ES}) and networks
(Section~\ref{sec:netS-ES}), respectively.  PESs are also used to
interpret global types (Section~\ref{sec:events}), which are defined
in Section~\ref{sec:types}. In Section~\ref{sec:results} we prove the
equivalence between the FES semantics of a network and the PES
semantics of its global type.  Section~\ref{sec:related} discusses
related work %in some detail %at length,
and sketches directions for future work.  The Appendix contains some technical proofs.

% !TEX root =cdgS.tex

\section{A Core Calculus for Multiparty Sessions}\mylabel{sec:calculus}

We now formally introduce our calculus, where multiparty sessions are
represented as networks of processes.  We assume the following base
sets: \emph{session participants}, ranged over by $\pp,\q,\pr, \ldots$
and forming the set $\Participants $, and \emph{messages}, ranged over
by $\la,\la',\dots$ and forming the set $\Messages$.

Let $\pi \in \{ \sendL{\pp}{\la}, \rcvL{\pp}{\la} \pc
\pp\in \Participants, \la \in \Messages\}$ denote an \emph{action}.
The action $\sendL{\pp}{\la}$ represents an output of message $\la$ to
participant $\pp$, while the action $\rcvL{\pp}{\la}$ represents an
input of message $\la$ from participant $\pp$.   The
\emph{participant of an action}, $\ptone{\pi}$, is defined by
$\ptone{\sendL{\pp}{\la}}=\ptone{\rcvL{\pp}{\la}}=\pp$.

\begin{definition}[Processes]\mylabel{p} 
 Processes are  defined by:\\
 \[
\begin{array}{lll}
\PP & \coDefGr  & 
\oup\pp{i}{I}{\la}{\PP}%
~~\mid~~  
\inp\pp{i}{I}{\la}{\PP}%
~~\mid~~
\inact
\end{array}
\]
\noindent
where $I$ is non-empty and $\la_h\not=\la_k$ for all $h,k\in I$,
$h\neq k$, i.e.
 messages  in choices are all different.\\
Processes of the shape $\oup\pp{i}{I}{\la}{\PP}$ and
$\inp\pp{i}{I}{\la}{\PP}$ are called {\em output} and {\em input
  processes}, respectively.
\end{definition}\noindent
The symbol $ \coDefGr$, in the definition above and in later
definitions, indicates that the productions should be interpreted
\emph{coinductively}.  
Namely, they define possibly infinite processes.  However, we assume
such processes to be \emph{regular}, that is, with finitely many
distinct subprocesses. In this way, we only obtain processes which are
solutions of finite sets of equations, see~\cite{Cour83}. So, when
writing processes, we shall use (mutually) recursive equations.

Sequential composition ($;$) has higher precedence than choices ($\bigoplus$, $\SigmaB$). 
When $I$ is a singleton, $\oup\pp{i}{I}{\la}{\PP}$ will be rendered as
$\Seq{\sendL{\pp}{\la}}{\PP}$ and $\inp\pp{i}{I}{\la}{\PP}$ will be
rendered as $\Seq{\rcvL{\pp}{\la}}{\PP}$.  Trailing $\inact$ processes
will be omitted.

In a full-fledged calculus, messages would carry values, 
namely they would be of the form $\la(\val)$.  For simplicity, we
consider only pure messages here. This will allow us to project global
types directly to processes,  without having to explicitly introduce local
types,  see Section \ref{sec:types}.

\bigskip

Networks are comprised of 
%at least two 
pairs of the form
$\pP{\pp}{\PP}$ composed in parallel, each with a
different participant $\pp$. 
\begin{definition}[Networks]
{\em Networks} are defined by:
%\centerline{$
\[
\Nt = \pP{\pp_1}{\PP_1} \parN \cdots \parN \pP{\pp_n}{\PP_n}  \qquad
 n\geq 1,  \ 
\pp_h \neq \pp_k ~~\text{for any}~~ h, k~(1\leq h,k\leq n).
%$} 
\]
\end{definition}
We assume the standard structural congruence  $\equiv$  on
networks, stating that parallel composition is associative and
commutative and has neutral element $\pP\pp\inact$ for any fresh
$\pp$.

If $\PP\neq\inact$ we write $\pP{\pp}{\PP}\in\Nt$ as short for
$\Nt\equiv\pP{\pp}{\PP}\parN\Nt'$ for some $\Nt'$.

To express the operational semantics of networks, we use an LTS whose
labels record the message exchanged during a communication together
with its sender and receiver.  The set of \emph{communications},
ranged over by $\alpha, \alpha'$, is defined to be $\{
\Comm{\pp}{\M}{\q} \pc \pp,\q\in \Participants, \M \in \Messages\}$,
where $\Comm{\pp}{\M}{\q}$ represents the emission of a message $\M$
from participant $\pp$ to participant
$\q$.  %We write $\participant{\Comm{\pp}{\M}{\q}}=\set{\pp,\q}$.

\begin{figure}[h]
% \centerline{$
{\small
\[
\begin{array}{c}
\pP{\pp}{\oup\q{i}{I}{\la}{\PP}}\parN \pP{\q}{\inp\pp{j}{J}{\la}{\Q}}\parN\Nt\stackred{\Comm\pp{\la_k}\q}
  \pP{\pp}{\PP_k}\parN \pP{\q}{\Q_k}\parN\Nt~~~\text{where }
   k \in I{\cap}J{~~~~~~\rulename{Com}}
\end{array}
%$}  
\]
}
\caption{
LTS for networks.}\mylabel{fig:netred}\mylabel{fig:procLTS}
\end{figure}

The LTS semantics of networks is specified by the unique rule
$\rulename{Com}$ given in \refToFigure{fig:procLTS}.  
Notice that rule
$\rulename{Com}$  is symmetric with respect to 
input and output
choices.  In a well-typed network (see \refToSection{sec:types}) it
will always be the case that $I\subseteq J$,  ensuring  that participant
$\pp$ can freely choose an output, since participant $\q$ offers all
corresponding inputs. 

\bigskip

In the following we will make an extensive use of finite (and possibly
empty) sequences of communications. As usual we define them as traces.
\begin{definition}[Traces]\label{traces} {\em (Finite) traces} $\comseq \in
  \Comseq$ are defined by:
 % \centerline{$
 \[
  \comseq::=\ee\mid\concat\alpha\comseq
 % $} 
 \]
 We use
  $\cardin{\comseq}$ to denote the length of the trace $\comseq$.\\ 
  The set of {\em participants of a trace}, notation $\participant{\comseq}$, 
  is defined by 
  $\participant{\ee}=\emptyset$ and
  $\participant{\concat{\Comm\pp\la\q}\comseq}=\set{\pp,\q}\cup\participant\comseq$.
\end{definition}
\noindent 
When
$\comseq=\concat{\alpha_1}{\concat\ldots{\alpha_n}}$ ($n\geq 1)$ we
write $\Nt\stackred{\sigma}\Nt'$ as short for
$\Nt\stackred{\alpha_1}\Nt_1\cdots\stackred{\alpha_n}\Nt_n = \Nt'$.

% !TEX root =cdgS.tex

\section{Event Structures}\label{sec:eventStr}

We recall now the definitions of \emph{Prime Event Structure} (PES)
from~\cite{Win80,NPW81} and \emph{Flow Event Structure} (FES)
from~\cite{BC88a}. The class of FESs is more general than that of
PESs: for a precise comparison of various classes of event structures,
we refer the reader to~\cite{BC91}. As we shall see in
Sections~\ref{sec:process-ES} and~\ref{sec:netS-ES}, while PESs are sufficient to interpret
processes, the  greater  generality of FESs is needed to interpret networks.
\begin{definition}[Prime Event Structure] \mylabel{pes} A prime event structure {\rm (PES)} is a
    tuple 
$S=(E,\leq, \gr)$ where:
\begin{enumerate}
\item \mylabel{pes1} $E$ is a denumerable set of events;
\item \mylabel{pes2}    $\leq\,\subseteq (E\times E)$ is a partial order relation,
called the \emph{causality} relation;
\item \mylabel{pes3}  $\gr\subseteq (E\times E)$ is an irreflexive symmetric relation, called the
\emph{conflict} relation, satisfying the property: $\forall e, e', e''\in E: e \grr e'\leq
e''\Rightarrow e \grr e''$ (\emph{conflict hereditariness}).
\end{enumerate}
\end{definition}

\begin{definition}[Flow Event Structure]\mylabel{fes} A flow event structure {\rm (FES)}
    is a  tuple $S=(E,\prec,\gr)$ where:
    \begin{enumerate}
\item\mylabel{fes1} $E$ is a denumerable set of events;
\item\mylabel{fes2} $\prec\,\subseteq (E\times E)$ is an irreflexive relation,
called the \emph{flow} relation;
\item\mylabel{fes3} $\gr\subseteq (E\times E)$ is a symmetric relation, called the
\emph{conflict} relation.
\end{enumerate}
\end{definition}
Note that the flow relation is not required to be transitive, nor
acyclic (its reflexive and transitive closure is just a preorder, not
necessarily a partial order).  Intuitively, the flow relation
represents a possible {\sl direct causality} between two events.
%Observe also that 
 Moreover,  in a FES the conflict relation is not required to be
irreflexive nor hereditary; indeed, FESs may exhibit self-conflicting
events, as well as disjunctive causality (an event may have
conflicting causes).

Any PES $S = (E , \leq, \gr)$ may be regarded as a FES, with $\prec$
given by $<$ (the strict ordering) % $<$, or by
or by the covering relation of $\leq$.

\bigskip

We now recall the definition of {\sl configuration}\/ for event
structures. Intuitively, a configuration is a set of events having
occurred at some stage of the computation.   Thus, the semantics
of an event structure $S$ is given by its poset of configurations ordered
by set inclusion, where $\ESet_1 \subset \ESet_2$ means that $S$ may evolve
from $\ESet_1$ to $\ESet_2$. 
\begin{definition}[PES configuration] \mylabel{configP}
  Let $S=(E,\leq, \gr)$ be a prime event structure. A
    configuration of $S$ is a finite subset $\ESet$ of $E$ such that:
\begin{enumerate}
    \item \mylabel{configP1} $\ESet$ is downward-closed: \ $e'\leq e \in \ESet \, \ \impl \ \ e'\in \ESet$; 
 \item \mylabel{configP2} $\ESet$ is conflict-free: $\forall e, e' \in \ESet, \neg (e \gr
e')$.
\end{enumerate}
\end{definition}
The definition of
configuration for FESs is slightly more elaborated.
%
% Since flow event structures are rather general, the definition of
% configuration is slightly more elaborated than for prime event
% structures. 
%
% Let $\Cfree_S$ be the set of conflict-free, or consistent,
% sets of events: $X \in \Cfree_S$ iff $\forall e, e' \in X, \neg (e \gr
% e')$.  Obviously, an event $e$ is inconsistent, i.e. $e \gr e$, if and
% only if $e \not\in X$ for any $X \in \Cfree_S$. 
For a subset $\ESet$ of
$E$, let $\prec_\ESet$ be the restriction of the flow relation to $\ESet$ and
$ \prec_\ESet^*$ be its transitive and reflexive closure.
% $\leq_X =_{def} {\prec_X}^*$ be the preorder 
% %\PGComm{cosa vuoi dire? la chiusura transitiva? o sia riflessiva che
% %transitiva?} 
% generated by $\prec_X$.  
% Here we shall only deal with {\sl finite} configurations,
% which are enough to determine the whole domain associated with an
% event structure, see~\cite{BC91}. We recall the definition:
%
\begin{definition}[FES configuration]
\mylabel{configF}
Let $S=(E,\prec, \gr)$ be a flow event
structure. A configuration of $S$ is a finite
subset $\ESet$ of $E$ such that:
\begin{enumerate}
\item\mylabel{configF1} $\ESet$ is downward-closed up to conflicts: 
\ $e'\prec e \in \ESet, \ e'\notin \ESet\, \  \impl \ 
\,\exists \, e''\in \ESet.\,\, e'\gr \,e''\prec e$;
\item\mylabel{configF2} $\ESet$ is conflict-free: $\forall e, e' \in \ESet, \neg (e \gr
e')$;
\item\mylabel{configF3} $\ESet$ has no causality cycles: the relation $ \prec_\ESet^*$  is a partial order.
\end{enumerate}
\end{definition}
Condition (\ref{configF2}) is the same as for prime event
structures. Condition (\ref{configF1}) is adapted to account for the more general
-- non-hereditary -- conflict relation. 
It states that any event appears in a configuration with a ``complete
set of causes''.  Condition (\ref{configF3}) ensures that any event in a
configuration is actually reachable at some stage of the computation.

\bigskip

If $S$ is a prime or flow event structure, we denote by $\Conf{S}$ its
set of 
%finite 
 configurations.   Then, the \emph{domain of
  configurations} of $S$ is defined as follows:
\begin{definition}[ES configuration domain]
\mylabel{configDom}
Let $S$ be a prime or flow event
structure with set of configurations $\Conf{S}
 $. The \emph{domain of configurations} of $S$ is the partially ordered set
\mbox{$\CD{S} \eqdef (\Conf{S}, \subseteq)$.}
\end{definition}
%\bma
We recall from~\cite{BC91} a useful characterisation for
configurations of FESs, which is based on the notion of proving
sequence, defined as follows:

%\newpage

\begin{definition}[Proving sequence]
  \mylabel{provseq} Given a flow event structure $S=(E,\prec, \gr)$, a
  \emph{proving sequence} in $S$ is a sequence $\Seq{e_1;
    \cdots}{e_n}$ of distinct non-conflicting events (i.e. $i\not=j\
  \impl\ e_i\not=e_j$ and $\neg(e_i\gr e_j)$ for all $i,j$) satisfying:
 % \\ \centerline{$
  \[
  \forall i\leq n\,\forall e \in E \,: \quad e\prec e_i\
    \ \impl\ \ \exists j<i\,. \ \ \text{ either } \ e = e_j\ \text{ or
    } \ e\grr e_j \prec e_i 
    %$}
    \]
\end{definition}

Note that any prefix of a proving sequence is itself a proving
sequence. 

\bigskip

We have the following characterisation of configurations of FESs  
in terms of proving sequences.
\begin{proposition}[Representation of FES configurations as proving
  sequences~\cite{BC91}]
  \mylabel{provseqchar} Given a flow event structure $S=(E,\prec,
  \gr)$, a subset $\ESet$ of $E$ is a configuration of $S$ if and only
  if it can be enumerated as a proving sequence $\Seq{e_1;
    \cdots}{e_n}$.
\end{proposition}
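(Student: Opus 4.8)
The plan is to prove the two implications separately, using the linearisability of the partial order $\prec_\ESet^*$ (guaranteed by condition (\ref{configF3})) as the bridge between configurations and proving sequences.

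For the direction from proving sequences to configurations, suppose $\ESet=\set{e_1,\ldots,e_n}$ is enumerated as a proving sequence $\Seq{e_1;\cdots}{e_n}$. Conflict-freeness (\ref{configF2}) is immediate, since the events of a proving sequence are by definition pairwise non-conflicting. For downward-closure up to conflicts (\ref{configF1}), given $e'\prec e_i$ with $e'\notin\ESet$, the proving-sequence clause yields some $j<i$ with $e'=e_j$ or $e'\grr e_j\prec e_i$; the first alternative is excluded because $e'\notin\ESet$, so $e_j$ is exactly the witness $e''$ required by (\ref{configF1}). The key observation for the absence of causality cycles (\ref{configF3}) is that the flow relation restricted to $\ESet$ is compatible with the enumeration order: if $e_k\prec e_i$ with both events in $\ESet$, the proving-sequence clause gives $j<i$ with $e_k=e_j$ or $e_k\grr e_j$, and the conflict case is ruled out by conflict-freeness, forcing $k=j<i$. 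Hence $\prec_\ESet$ is contained in the order induced by the indices, so $\prec_\ESet^*$ is contained in that total order and is in particular a partial order.

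For the converse, suppose $\ESet$ is a configuration. By (\ref{configF3}) the relation $\prec_\ESet^*$ is a partial order on the finite set $\ESet$, so it admits a linear extension; I would enumerate $\ESet$ as $\Seq{e_1;\cdots}{e_n}$ along such an extension, so that $e_k\prec_\ESet e_i$ implies $k<i$ (using that $\prec$ is irreflexive, whence $e_k\neq e_i$). This enumeration consists of distinct non-conflicting events by (\ref{configF2}), so it remains to check the proving-sequence condition. Given $e\prec e_i$, I distinguish whether $e\in\ESet$. If $e\in\ESet$, then $e\prec_\ESet e_i$, so $e=e_j$ for some $j<i$ by the choice of enumeration. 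If $e\notin\ESet$, then (\ref{configF1}) provides $e''\in\ESet$ with $e\grr e''\prec e_i$; writing $e''=e_j$, the relation $e''\prec_\ESet e_i$ again gives $j<i$, so $e\grr e_j\prec e_i$, as required.

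The argument is essentially bookkeeping, and no step presents a genuine obstacle; the only point demanding care is the role of condition (\ref{configF3}). In one direction it is what makes the enumeration along a linear extension well defined, and in the other it is recovered precisely because the proving-sequence clause forces the flow relation on $\ESet$ to respect the enumeration order. The interplay between the conflict case of the proving-sequence clause and the conflict-freeness of configurations is what makes these two facts dovetail.
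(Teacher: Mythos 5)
Your proof is correct. Note that the paper does not actually prove this proposition---it is recalled from~\cite{BC91} and stated without proof---so there is no in-paper argument to compare against; your two-directional argument (conflict-freeness and downward-closure up to conflicts read off directly from the proving-sequence clause, acyclicity because the clause plus conflict-freeness forces $\prec_\ESet$ to respect the enumeration order, and conversely a linear extension of $\prec_\ESet^*$ serving as the enumeration) is the standard one and all the steps check out, including the point that the definition of proving sequence excludes self-conflicts, which is what rules out the conflict alternative when both events lie in the sequence.
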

Since PESs may be viewed as particular FESs, we may use
\refToDef{provseq} and \refToProp{provseqchar} both for the FESs
associated with networks (see Sections~\ref{sec:netS-ES})
%and~\ref{sec:netA-ES}) 
and for the PESs associated with global types
(see \refToSection{sec:events}).  Note that for a PES
the condition of \refToDef{provseq} simplifies to
%\\
%\centerline{$
\[
\forall i\leq n\,\forall e \in E \,: \quad e < e_i\ \
  \impl\ \ \exists j<i\,. \ \ e = e_j 
%  $}
\]

\medskip

To conclude this section, we recall from~\cite{CZ97} the definition of
\emph{downward surjectivity} (or \emph{downward-onto}, as it was
called there), a property that is required for partial functions
between two FESs in order to ensure that they preserve configurations.
We will make use of this property in \refToSection{sec:netS-ES}.

\begin{definition}[Downward surjectivity]
  \mylabel{down-onto} Let $S_i=(E_i,\prec_i, \gr_i)$, be a flow event
  structure, $i=0,1$. 
%Let $e,e'$ range over $E_0$ and $\procev, \procev'$ range over $E_1$. 
Let $e_i,e'_i$ range over $E_i$, $i=0,1$.  
A partial function $f: E_0
  \rightarrow_* E_1$ is \emph{downward surjective} if it satisfies the
  condition:
\[e_1 \prec_1 f(e_0) \implies \exists e'_0 \in E_0~.~e_1 = f(e'_0) \]
%\[\procev \prec_1 f(e) \implies \exists e' \in E_0~.~\procev = f(e') \]
 \end{definition}

% !TEX root =cdgS.tex

\section{Event Structure Semantics of Processes}\mylabel{sec:process-ES}

In this section, we define an event structure semantics for processes,
and show that the obtained event structures are PESs.  This semantics
will be the basis for defining the ES semantics for networks
in~\refToSection{sec:netS-ES}.  We start by introducing process
events, which are non-empty sequences of actions.

\begin{definition}[Process event]
  \mylabel{proceventP} 
{\em Process events} $\procev,
  \procev'$,  also called \emph{p-events},  are defined by: 
\[\procev \ \quad ::= \pi ~~ \mid~~
    \pastpref{\pi}{\procev} \qquad\qquad \pi \in \!
    \set{\sendL{\pp}{\la}, \rcvL{\pp}{\la} 
      \mid \pp\in \Participants, \la \in \Messages}\]
%$}
We denote by $\Procev$ the set of  p-events, and by
$\cardin{\procev}$ the length of the sequence of actions in
the p-event $\procev$. 
\end{definition}

Let $\actseq$ denote a (possibly empty) sequence of actions, and
$\sqsubseteq$ denote the prefix ordering on such sequences.  Each
 p-event  $\procev$ may be written either in the form $\procev = \concat
{\pi}{\actseq}$ or in the form $\procev =
\concat{\actseq}{\pi}$. We shall feel free to use any of these forms. 
When a  p-event  is written as $\procev = \concat{\actseq}{\pi} $,
then $\actseq$ may be viewed as the \emph{causal history} of
$\procev$, namely the sequence of past actions that must have happened
in the process for $\procev$ to be able to happen. 

We define the  \emph{action}  of a  p-event   to be
its last action: 
%\\
%as  the rightmost action:\\  %follows
% \centerline{$ 
\[
\act{\concat{\actseq}{\pi}} = \pi
\]
%$}

\begin{definition}[Causality and conflict relations on process events] %\text{~}\\[5pt]
\mylabel{procevent-relations}
The \emph{causality} relation $\leq$ and the \emph{conflict} relation
$\gr$ on the set of  p-events  $\Procev$ are defined by:
\begin{enumerate}
\item \mylabel{ila--esp2} 
%the $\precP$ relation on $\Procev$ is  given by:\\
$\procev \sqsubseteq \procev' \ \impl \ \procev\precP\procev'$;\smallskip
\item \mylabel{ila--esp3} 
%the $\gr$ relation on $\Procev$ is given by: \\
  $\pi\neq\pi'  \impl
  \concat{\concat{\actseq}{\pi}}{\actseq'}\,\grr\,\,\concat{\concat{\actseq}{\pi'}}{\actseq''}$.
\end{enumerate}
\end{definition}
%

% \bigskip 
% \noindent
% \bcomila green Io penso che le relazioni di causalità e conflitto dovrebbero
% essere definite sugli eventi in generale, come nel mio lavoro con
% Gérard su CCS (RR-1484 in PAPERS), e non solo sugli eventi delle PES
% di processo, come facciamo ora. Idem per gli eventi delle reti sincrone
% in seguito.  La nostra definizione sintattica degli eventi ci permette
% di farlo. Il motivo per cui lo propongo è che questo ci facilita la
% vita nelle prove, perché ci permette di usare liberamente la stessa
% notazione $\leq, \precN, \grr$ per tutte le PES di processo e FES di
% rete senza specificare a quale PES o FES ci riferiamo. In realtà, in
% molte delle nostre prove attuali usiamo già (impropriamente) queste
% relazioni come se fossero definite in generale, senza far riferimento
% ad una particolare PES o FES. Se invece diamo la definizione in
% generale, questo uso diventa corretto. In conclusione, io darei la
% definizione di causalità e conflitto sugli eventi di processo in
% questo punto qui, cioè subito prima la \refToDef{esp}. green\ecomila
% \bigskip

\begin{definition} [Event structure of a process] \mylabel{esp}
The {\em event structure of process} $\PP$ is the triple
%\\
% \centerline{$
\[
 \ESP{\PP} = (\ES(\PP), \precP_\PP , \gr_\PP)
 \]
% $} 
 where:
\begin{enumerate}
%\item  \mylabel{esp1} $\ESPlus(\PP)$ is defined by:
%\begin{enumerate}
%\item \mylabel{esp1a} $\ESPlus(\inp\pp{i}{I}{\la}{\PP} )\bpa = \epa \bigcup_{i\in I} \set{  \rcvL{\pp}{\la_i}   } \cup \bigcup_{i\in I}
%\set{{ \rcvL{\pp}{\la_i} }\cdot \procev_i \sep \procev_i
%  \in\ESPlus(\PP_i)}$;
%\item \mylabel{esp1b} $\ESPlus(\oup\pp{i}{I}{\la}{\PP} )\bpa = \epa  \bigcup_{i\in I} \set{  \sendL{\pp}{\la_i}   } \cup \bigcup_{i\in I}
%\set{{ \sendL{\pp}{\la_i} }\cdot \procev_i \sep \procev_i
%  \in\ESPlus(\PP_i)}$;
%\item  $\ESPlus( \inact ) =  
%\emptyset$;
%\end{enumerate}
%and $\ES(\PP)=\{\procev\ |\ \procev\in\ESPlus(\PP)\}$ 
\item \mylabel{ila-esp1} 
%$\ES(\PP) \subseteq \Procev $ is the set of
%  non-empty traces
%  of $\PP$ (cf \refToDef{def:trace}); \\
   $\ES(\PP) \subseteq \Procev $ is the set of non-empty sequences
  of labels along the nodes and edges of a path from the root to an
  edge in the tree of $\PP$; 
%  \epa \bcomila  cambiare con ''a path from the root to an
%  edge in the tree of $\PP$'' ? \ecomila
%\\
% \bcomila Prima era ``the set containing all 
% the sequences of labels of nodes and edges found on the way from the
% root of the tree of $\PP$ to a node (omitting the label of the node
% itself).'' \ecomila
%\\
\item \mylabel{ila-esp2} 
$\precP_\PP$ is the restriction of
  $\precP$ to the set %  $(\ES(\PP) \times \ES(\PP))$; 
$\ES(\PP)$;
   %  $\procev \sqsubseteq \procev' \ \impl \ \procev
% \precP\procev'$; 
\item \mylabel{ila-esp3} 
$\gr_\PP$ is the restriction of
  $\grr$ to the set 
% $(\ES(\PP) \times \ES(\PP))$. 
$\ES(\PP)$.
% the  $\gr$ relation on the set of events $\ES(\PP)$ is
%   given by: \\
%   $\pi\neq\pi' \impl  \concat{\concat{\actseq}{\pi}}{\actseq'}\,\grr\,\,\concat{\concat{\actseq}{\pi'}}{\actseq''}  $ .
\end{enumerate}
\end{definition}

It is easy to see that $\gr_\PP = (\ES(\PP)\times \ES(\PP))
\,\backslash \, (\precP_\PP \cup \geq_\PP)$.
In the following we shall feel free to drop the subscript in
$\precP_\PP$ and $\gr_\PP$.
\bigskip

%%%%%%%%%%%%%%%%%%%%%%%
Note that the set $\ES(\PP)$ may be denumerable, as shown by the following example.  
\begin{example}\mylabel{ex:rec1}
If  $\PP=\sendL{\q}\la;\PP \oplus\sendL{\q}{\la'}$, 
then 
$ \ES(\PP)  = \begin{array}[t]{l}
\set{\underbrace{\sendL{\q}\la\cdot\ldots\cdot\sendL{\q}\la}_n \mid
  n\geq 1} \quad \cup \\
\set{\underbrace{\sendL{\q}\la\cdot\ldots\cdot\sendL{\q}\la}_n\cdot\sendL{\q}{\la'}\mid
  n\geq 0} 
\end{array} $
\end{example}

\begin{proposition}\mylabel{basta10}
  Let $\PP$ be a process. Then $\ESP{\PP}$ is a prime event structure.
\end{proposition}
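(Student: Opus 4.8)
The plan is to verify directly the three defining conditions of a prime event structure from \refToDef{pes} for the triple $\ESP{\PP}=(\ES(\PP),\precP_\PP,\gr_\PP)$ supplied by \refToDef{esp}. For condition~(\ref{pes1}), denumerability of $\ES(\PP)$ follows from $\ES(\PP)\subseteq\Procev$ together with regularity of $\PP$: since $\PP$ has only finitely many distinct subprocesses, its tree is labelled over a \emph{finite} alphabet of actions, so $\ES(\PP)$ is a set of non-empty finite sequences over a finite alphabet, hence at most denumerable; \refToExample{ex:rec1} shows that the infinite case genuinely occurs.

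For condition~(\ref{pes2}), I would observe that by clause~(\ref{ila--esp2}) of \refToDef{procevent-relations} the causality relation $\precP$ coincides with the prefix ordering $\sqsubseteq$ on p-events. As $\sqsubseteq$ is reflexive, antisymmetric and transitive on sequences of actions, its restriction $\precP_\PP$ to $\ES(\PP)$ is again a partial order.

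For condition~(\ref{pes3}), the cleanest route is to first record the characterisation announced after \refToDef{esp}: for $\procev,\procev'\in\ES(\PP)$ one has $\procev\gr_\PP\procev'$ iff $\procev$ and $\procev'$ are incomparable for the prefix order (neither is a prefix of the other). Indeed, clause~(\ref{ila--esp3}) puts two p-events in conflict exactly when they share a common prefix $\actseq$ after which they continue with distinct actions $\pi\neq\pi'$; writing both events relative to their longest common prefix shows that this divergence pattern holds precisely when neither is a prefix of the other. Symmetry of $\gr_\PP$ is then immediate, since the condition is unchanged under swapping the two events, and irreflexivity holds because a single sequence has a unique action at each position and so cannot diverge from itself.

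The substantive point, and the step I expect to require the most care, is conflict hereditariness, $\procev\gr_\PP\procev'\precP_\PP\procev''\implies\procev\gr_\PP\procev''$. Using the characterisation above this reduces to a short argument about prefixes. Assume $\procev,\procev'$ are prefix-incomparable and $\procev'\sqsubseteq\procev''$, and suppose for contradiction that $\procev$ and $\procev''$ are comparable. If $\procev''\sqsubseteq\procev$, then $\procev'\sqsubseteq\procev''\sqsubseteq\procev$ yields $\procev'\sqsubseteq\procev$, contradicting incomparability; if $\procev\sqsubseteq\procev''$, then $\procev$ and $\procev'$ are both prefixes of $\procev''$ and hence comparable with each other, again a contradiction. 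Thus $\procev$ and $\procev''$ are incomparable, i.e. $\procev\gr_\PP\procev''$, and all three conditions hold, so $\ESP{\PP}$ is a PES. (Hereditariness can alternatively be proved by directly manipulating the divergence pattern of clause~(\ref{ila--esp3}), but the prefix-order formulation keeps the case analysis minimal.)
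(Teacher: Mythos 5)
Your proof is correct and follows essentially the same route as the paper's: causality is the prefix order (so partial-order properties are inherited from $\sqsubseteq$), and conflict hereditariness comes down to the fact that divergence after a common prefix persists along extensions. The only cosmetic difference is that you route the conflict argument through the incomparability characterisation $\gr_\PP = (\ES(\PP)\times\ES(\PP))\setminus(\precP_\PP\cup\geq_\PP)$ --- which the paper itself records as an easy observation right after \refToDef{esp} --- whereas the paper manipulates the divergence pattern of Clause~(\ref{ila--esp3}) directly; your additional verification of denumerability via regularity is a harmless bonus the paper omits.
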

\begin{proof} We show that $\precP$ and $\gr$ satisfy Properties
  \ref{pes2} and \ref{pes3} of \refToDef{pes}.  Reflexivity,
  transitivity and antisymmetry of $\precP$ follow from the
  corresponding properties of $\sqsubseteq$. As for irreflexivity and
  symmetry of $\gr$, they follow from Clause \ref{ila--esp3} of
  Definition \ref{procevent-relations} and the corresponding
  properties of inequality.  To show conflict hereditariness, suppose
  that $\procev \grr \procev'\precP \procev''$.  From Clause
  \ref{ila--esp3} of Definition \ref{procevent-relations} there are
  $\pi$, $\pi'$, $\actseq$, $\actseq'$ and $\actseq$ such that
  $\pi\neq\pi'$ and $\procev=\concat{\concat{\actseq}{\pi}}{\actseq'}$
  and $\procev'=\concat{\concat{\actseq}{\pi'}}{\actseq''}$.  From
  $\procev'\precP \procev''$ we derive that
  $\procev''=\concat{\concat{\actseq}{\pi'}}{\concat{\actseq''}{\actseq_1}}$
  for some $\actseq_1$.  Therefore $\procev \grr \procev''$, again
  from Clause \ref{ila--esp3}.
 \end{proof}

%
% !TEX root =cdgS.tex

\section{Event Structure Semantics of Networks}\mylabel{sec:netS-ES}

In this section we define the ES semantics of networks and show that
the resulting ESs, which we call \emph{network ESs}, are FESs. We also
show that when the network is binary, namely when it has only two
participants, then the obtained FES is a PES.  The formal treatment
involves defining the set of potential events of network ESs, which we
call \emph{network events}, as well as introducing the notion of
\emph{causal set} of a network event and the notion of
\emph{narrowing} of a set of network events.  This will be the subject
of \refToSection{subsec:main-defs-props}.

In \refToSection{subsec:further-props}, we first prove some properties
of the conflict relation in network ESs.  Then, we come back to
causal sets and we show that they are always finite and that each
configuration includes a unique causal set for each of its
n-events. We also discuss the relationship between causal sets and
prime configurations, which are specific configurations that are in
1-1 correspondence with events in ESs. Finally, we define a notion of
projection from n-events to p-events, and prove that this projection
(extended to sets of n-events) is downward surjective  and
preserves configurations.

\subsection{Definitions and Main Properties}
\mylabel{subsec:main-defs-props}

We start by defining network events, the potential events
of network ESs.  Since these events represent communications between
two network participants $\pp$ and $\q$, they should be pairs of \emph
{dual  p-events}, namely, of  p-events  emanating
respectively from $\pp$ and $\q$, which have both dual actions and
dual causal histories.

Formally, to define network events we need to specify the
\emph{location} of  p-events, namely the participant to which they
belong: 
\begin{definition}[Located event]
\mylabel{proceventL} We call \emph{located event} a  p-event 
$\procev$ pertaining to a participant $\pp$, written $\pp::\procev$.
 \end{definition}
 As hinted above, network events should be pairs of dual located
 events $\locev{\pp}{\actseq\cdot\pi}$ and
 $\locev{\q}{\actseq'\cdot\pi'}$ with matching actions $\pi$ and
 $\pi'$ and matching histories $\actseq$ and $\actseq'$. To formalise
 the matching condition, we first define the projections of process
 events on participants, which yield sequences of \emph{undirected
   actions} of the form $!\la$ and $?\la$, or the empty sequence
 $\emptyseq$.  Then we introduce a notion of duality between located
 events, based on a notion of duality between undirected actions.

Let $\bm$ range over $!\la$  and $?\la$, and $\bms$ range over 
 (possibly empty)  sequences of $\bm$'s. 

\begin{definition}[Projection of  p-events]\label{pd}
The projection of a  p-event  $\procev$ on a participant
$\pp$, written $\projb{\procev}{\pp}$, is defined by: \\
\[
\projb{\sendL{\q}{\la}}\pp=\begin{cases}
\sendL{}{\la}     & \text{if }\pp=\q \\
   \emptyseq   & \text{otherwise}
\end{cases}\quad
\projb{\rcvL{\q}{\la}}\pp=\begin{cases}
\rcvL{}{\la}     & \text{if }\pp=\q \\
   \emptyseq   & \text{otherwise}
\end{cases}
\]
\[
\projb{(\concat{\pi}{\procev})}{\pp}=
\concat{\projb{\pi}\pp} {\projb\procev\pp} 
\]
\end{definition}

\begin{definition}[Duality of undirected action sequences]\label{dualProcEv}
The \emph{duality of undirected action sequences}, written $\dualev\bms{\bms'}$, is the
symmetric relation induced by: 
\[ \dualev{\ee}{\ee}
\qquad\quad
\dualev\bms{\bms'} ~\impl ~ \dualev{\,\concat{\sendL{}{\la}} \bms\,}
 { \,\concat{\rcvL{}{\la}} \bms'}
 \]
\end{definition}

\begin{definition}[Duality of located events]\label{dualLocEv}
Two located events $\locev{\pp}{\procev}, \locev{\q}{\procev'}$ are \emph{dual}, written  $\dualevS{
  \locev{\pp}{\procev}}{\locev{\q}{\procev'}}$,  if $\dualev{\projb\procev\q}{\projb{\procev'}\pp}$
  and 
$\ptone{\act{\procev}} = \q\,$ and
$\,\ptone{\act{\procev'}} = \pp$. 
\end{definition}

Dual located events may be sequences of actions of different length. For instance 
$\dualevS{\locev{\pp}{\concat{\sendL{\q}{\la}}{\sendL{\pr}{\la'}}}}{\locev{\pr}{\rcvL{\pp}{\la'}}}$ and
$\dualevS{\locev{\pp}{{\sendL{\q}{\la}}}}{\locev{\q}{\concat{\sendL{\pr}{\la'}}{\rcvL{\pp}{\la}}}}$.

\begin{definition}[Network event]
%\mylabel{proceventN}
\mylabel{n-event} 
{\em Network events} $\netev, \netev'$, also called
\emph{n-events}, are unordered
pairs of dual % 
located events, namely:
%\\
%\centerline{$ 
\[
\netev ::= \set{\locev{\pp}{\procev},
    \locev{\q}{\procev'}} \qquad
  \text{where}~~~\dualevS{\locev{\pp}{\procev}}{\locev{\q}{\procev'}}
  %$}
\]  
We denote by $\DE$ the set of n-events.
 \end{definition}

 We define {\em the communication of the event} $\netev$, notation
 $\comm\netev$, by $\comm\netev=\Comm\pp{\M}\q$ if
 $\netev=\set{\locev{\pp}{\concat\actseq{\sendL\q\M}},
   \locev{\q}{\concat{\actseq'}{\rcvL\pp\M}}}$ and we say that the
 n-event $\netev$ \emph{represents} the communication
 $\Comm\pp{\M}\q$.  We also define the set of \emph{locations} of an
 n-event to be $\loc{\set{\locev{\pp}{\procev}, \locev{\q}{\procev'}}}
 = \set{\pp,\q}$.

 It is handy to have a notion of occurrence of a located event in
a set of network events: 
 
 \begin{definition}\label{olesne}
   {\em A located event $\locev{\pp}{\procev}$ occurs in a set $E$ of
     n-events,} notation $\occ{\locev{\pp}{\procev}}E$, if
   $\locev{\pp}{\procev}\in\netev$ and $\netev\in E$ for some
   $\netev$.
 \end{definition} 

We define now the flow and conflict relations on network
events. While the flow relation is the expected one (a network event
inherits the causality from its constituent processes), the conflict
relation is more subtle, as it can arise also between network
events with disjoint  sets of locations. 
%participants.

 In the following definition we use $\eh\bms$ to denote the length of
 the sequence $\bms$. 
\begin{definition}[Flow and conflict relations on n-events] %\text{~}\\[5pt]
\mylabel{netevent-relations}
The \emph{flow} relation $\precN$ and the \emph{conflict} relation
$\gr$ on the set of n-events $\DE$ are defined by:

\begin{enumerate}
\item\mylabel{c1} 
$\netev \precN\netev '$ if 
$
\locev{\pp}{\procev}\in \netev 
~\&~\locev{\pp}{\procev'}\in\netev' ~\&~ \procev  <  \procev' $; 
\item\mylabel{c2} 
  $\netev \grr\netev '$ if 
 \begin{enumerate}
\item \label{c21} either
  $  \locev{\pp}{\procev}\in \netev 
~\&~\locev{\pp}{\procev'}\in\netev'~\&~\procev \grr \procev' $;
\item\label{c22}
 or $\locev{\pp}{\procev}\in \netev 
~\&~\locev{\q}{\procev'}\in\netev' ~\&~\pp \neq \q ~\&~
\cardin{\projb\procev\q} = \cardin{\projb{\procev'}\pp}   ~\&~
\neg(\dualev{\projb\procev\q}{\projb{\procev'}\pp})$. 
\end{enumerate}
\end{enumerate}
\end{definition}
Two n-events are in conflict if they share a participant
with conflicting p-events (Clause (\ref{c21}))
or if some of their participants have communicated with each other in the past
in incompatible ways (Clause (\ref{c22})).  Note that the two
clauses are not exclusive, as shown in the following example.

\begin{example}\mylabel{ex:conflict} This example illustrates the use
  of 
%Clause (\ref{c2}) of
  \refToDef{netevent-relations} in various cases. It also shows
that the flow and conflict relations may be overlapping on n-events.
\begin{enumerate}
\item \mylabel{ex:conflict1} Let
  $\netev=\set{\locev\pp{\q!\la_1\cdot\pr!\la},\locev\pr{\pp?\la}}$
  and $\netev'=\set{\locev\pp{\q!\la_2}, \locev\q{\pp?\la_2}}$.  Then
  $\netev \grr\netev '$ by Clause (\ref{c21}) 
since
  $\q!\la_1\cdot\pr!\la\grr\q!\la_2$. Note that $\netev \grr\netev '$
  can be also deduced by Clause (\ref{c22}), since
  $\projb{(\q!\la_1\cdot\pr!\la)}{\q} = \,!\la_1$ and
  $\projb{\pp?\la_2}\pp = \,?\la_2$ and $\cardin{!\la_1} =
  \cardin{?\la_2}$ and $\neg(\dualev{!\la_1}{?\la_2})$.
\item \mylabel{ex:conflict2} 
Let $\netev$ be as in (\ref{ex:conflict1})
and $\netev'=\set{\locev\pp{\q!\la_2\cdot \q!\la},
  \locev\q{\pp?\la_2\cdot \pp?\la}}$. Again, we can deduce
  $\netev \grr\netev '$ using Clause (\ref{c21}) 
  since $\q!\la_1\cdot\pr!\la\grr\q!\la_2\cdot \q!\la$.  On the other
  hand, Clause (\ref{c22}) does not apply in this case since
  $\projb{(\q!\la_1\cdot\pr!\la)}{\q} = \,!\la_1$ and
  $\projb{(\pp?\la_2\cdot\pp?\la)}\pp = \,?\la_2\cdot?\la$ and thus
  $\cardin{!\la_1} \neq \cardin{?\la_2\cdot?\la}$.
\item \mylabel{ex:conflict3} Let $\netev$ be as in
  (\ref{ex:conflict1}) and
  $\netev'=\set{\locev\q{\pp?\la_2\cdot \ps!\la},
    \locev\ps{\q?\la}}$. Here $\loc{\netev} \cap \loc{\netev'}
  =\emptyset$, so clearly Clause (\ref{c21}) does not apply.  On the
  other hand, $\netev \grr\netev '$ can be deduced by Clause
  (\ref{c22}) since $\projb{(\q!\la_1\cdot\pr!\la)}{\q} = \,!\la_1$
  and $\projb{(\pp?\la_2\cdot\ps!\la)}\pp = \,?\la_2$ and
  $\cardin{!\la_1} = \cardin{?\la_2} $ and
  $\neg(\dualev{!\la_1}{?\la_2})$.  

\item \mylabel{ex:conflict5} Let $\netev$ be as in
  (\ref{ex:conflict1}) and $\netev'=\set{\locev\pp{\q!\la_2
\cdot\pr!\la \cdot\pr!\la'}, \locev\pr{\pp?\la \cdot\pp?\la'}}$.
In this case we have both $\netev\prec\netev'$ by Clause (\ref{c1})
and $\netev\grr\netev'$ by Clause (\ref{c21}), namely, causality
is inherited from participant $\pr$ and conflict from participant $\pp$.
\end{enumerate}
\end{example}

We introduce now the notion of \emph{causal set} of an n-event
$\netev$ in a given set of events $\EvSet$. Intuitively, a causal set
of $\netev$ in $\EvSet$ is a \emph{complete set of non-conflicting
direct causes} of $\netev$ which is included in $\EvSet$.

\begin{definition}[Causal set]
\mylabel{cs}
Let $\netev\in \EvSet \subseteq \DE$. 
A set of n-events $E$ is a {\em causal set} of $\netev$ in
$\EvSet$ if $E$ is a minimal subset of $\EvSet$ such that
\begin{enumerate}
\item \mylabel{cs1} $E \cup \set{\netev}$ is conflict-free and  
\item \mylabel{cs2} $\locev{\pp}{\procev}\in\netev$ and $\procev' <
  \procev$ 
imply $\occ{\locev{\pp}{\procev'}}{E}$.
\end{enumerate}
\end{definition} 

Note that in the above definition, the conjunction of minimality and
Clause (\ref{cs2}) implies that, if $\netev'\in E$, then
$\netev'\prec\netev$. Thus $E$ is a set of direct causes of $\netev$.
Moreover, a causal set of an n-event cannot be included in another
causal set of the same n-event, as this would contradict the
minimality of the larger set. Hence, \refToDef{cs} indeed formalises
the idea that causal sets should be complete sets of compatible direct
causes of a given n-event.

\begin{example}\mylabel{ex:causal-sets}
  Let $\netev_1=\set{\locev\pp{\q!\la_1\cdot
      \pr!\la},\locev\pr{\pp?\la}}$ and
  $\netev_2=\set{\locev\pp{\q!\la_2\cdot
      \pr!\la},\locev\pr{\pp?\la}}$.  Then both $\set{\netev_1}$ and
  $\set{\netev_2}$ are causal sets of $\netev=\set{\locev\pr{\pp?\la
      \cdot\ps!  \la'},\locev\ps{\pr?\la'}}$ in $\EvSet =
  \set{\netev_1, \netev_2, \netev}$. Note that $\netev_1 \grr
  \netev_2$ and that neither $\netev_1$ nor $\netev_2$ has a causal
  set in $\EvSet$.

Let us now consider also 
$\netev'_1=\set{\locev\pp{\q!\la_1},\locev\q{\pp?\la_1}}$ and
$\netev'_2=\set{\locev\pp{\q!\la_2},\locev\q{\pp?\la_2}}$.  Then
$\netev$ still has the same causal sets $\set{\netev_1}$ and
$\set{\netev_2}$ in $\EvSet' = \set{\netev'_1, \netev'_2, \netev_1,
  \netev_2, \netev}$, while each $\netev_i$, $i=1,2$, has the unique
causal set $\set{\netev'_i}$ in $\EvSet'$, and each $\netev'_i$,
$i=1,2$, has the empty causal set in $\EvSet'$.

Finally, $\netev$ has infinitely many causal sets in $\DE$.  For
instance, if for every natural number $n$ we let
$\netev_n=\set{\locev\pp{\q!\la_n\cdot \pr!\la},\locev\pr{\pp?\la}}$,
then each $\set{\netev_n}$ is a causal set of $\netev$ in $\DE$.
Symmetrically, a causal set may cause infinitely many events in
$\DE$. For instance, the above causal sets $\set{\netev_1}$ and
$\set{\netev_2}$ of $\netev$ could also act as causal sets for any
n-event $\netev''_n=\set{\locev\pr{\pp?\la \cdot\ps!
    \la_n},\locev\ps{\pr?\la_n}}$ or, assuming the set of participants
to be denumerable, for any event $%\widehat{\netev}_n
\netev'''_n=\set{\locev\pr{\pp?\la \cdot \ps_n!
    \la'},\locev{\ps_n}{\pr?\la'}}$.  
\end{example}

When defining the set of events of a network ES, we want to prune out
all the n-events that do not have
%without 
a causal set in the set itself.  The reason is that such
n-events cannot happen.  This pruning is achieved by means of the
following narrowing function.

\begin{definition}[Narrowing of a set of n-events]
\mylabel{def:narrowing}
The \emph{narrowing} of a set $E$ of n-events, denoted by $\nr{E}$,
is the greatest fixpoint of the function $f_E$ on sets of 
n-events defined by:
%\\ \centerline{$
\[
\begin{array}{lll}
f_E(X) &=& \set{\netev\in E \mid \exists E' \subseteq X . \, E' \text{is a causal
    set of } \netev \text{ in $X$ }  }
 %f_E(X) &=& \set{\netev\in E \mid \exists E' \subseteq  X. \, E' \text{is a causal set of } \netev}
 \end{array}
 %$}
\] 
\end{definition}

Note that we could not have taken $\nr{E}$ to be the least
fixpoint of $f_E$ rather than its greatest fixpoint.
Indeed, the least fixpoint of $f_E$ would be the empty set.

\begin{example} 
\label{ex-narrowing}
The following two examples illustrate the notions of causal set
and narrowing. 

Let $\netev_1=\set{\locev\pr{\ps?\la_1}, \locev\ps{\pr!\la_1}}$,
$\netev_2=\set{\locev\pr{\ps?\la_2}, \locev\ps{\pr!\la_2}}$,
$\netev_3=\set{\locev\pp{\pr?\la_1},\locev\pr{\ps?\la_1\cdot\pp!\la_1}}$,
$\netev_4=\set{\locev\q{\ps?\la_2},\locev\ps{\pr!\la_2\cdot\q!\la_2}}$,
$\netev_5=\set{\locev\pp{\pr?\la_1\cdot\q!\la},
  \locev\q{\ps?\la_2\cdot\pp?\la}}$.  Then 
$\nr{\set{\netev_1,\ldots,\netev_5}}= \set{\netev_1,\ldots,\netev_4}$,
 because a causal set for $\netev_5$ would need to contain both
$\netev_3$ and $\netev_4$, but this is not possible since
$\netev_3 \grr \netev_4$   
by Clause (\ref{c22}) of
\refToDef{netevent-relations}. In fact $\projb{(\ps?\la_1 \cdot\pp!\la_1)}\ps = \,?\la_1$ and
  $\projb{(\pr!\la_2  \cdot\q!\la_2)}\pr = \,!\la_2$ and $\cardin{?\la_1} =
  \cardin{!\la_2}$ and $\neg(\dualev{?\la_1}{!\la_2})$.

  Let $\netev_1=\set{\locev\pr{\ps?\la_1}, \locev\ps{\pr!\la_1}}$,
  $\netev_2=\set{\locev\pr{\ps?\la_2}, \locev\ps{\pr!\la_2}}$,
  $\netev_3=\set{\locev\pp{\pr?\la_1},\locev\pr{\ps?\la_1\cdot\pp!\la_1}}$,
  $\netev_4=\set{\locev\pp{\pr?\la_1\cdot\ps?\la_2},\locev\ps{\pr!\la_2\cdot\pp!\la_2}}$,
  $\netev_5=\set{\locev\pp{\pr?\la_1\cdot\ps?\la_2\cdot\q!\la},\locev\q{\pp?\la}}$.
  Here $\nr{\set{\netev_1,\ldots,\netev_5}}=
  \set{\netev_1,\netev_2,\netev_3}$. Indeed, a causal set for
  $\netev_4$ would need to contain both $\netev_2$ and $\netev_3$, but
  this is not possible since $\netev_2 \grr \netev_3$  by Clause
  (\ref{c21}) of \refToDef{netevent-relations}.  In fact
  $\ps?\la_2\grr\ps?\la_1  \cdot\pp!\la_1$.  Then, 
  %although  
  $\netev_5$   %is not in conflict with  $\netev_3$, it 
  will  also
  be pruned by the narrowing since any causal set for
  $\netev_5$ should contain $\netev_4$.
\end{example}

We can now  finally  define the event structure associated with a  network:

\begin{definition}[Event structure of a
   network] \mylabel{netev-relations} The {\em  event
     structure of network} $\Nt $
   is the triple
%   \\
%   \centerline{$
\[
   \ESN{\Nt} = (\GE(\Nt), \precN_\Nt , \grr_\Nt)
%   $} 
\]
where:
\begin{enumerate}
\item\mylabel{netev-relations1} 
$\begin{array}[t]{ll}\GE(\Nt) =\nr{\DE(\Nt)} ~~\text{ with }\\
\DE(\Nt) = \set{\set{\locev{\pp}{\procev},\locev{\q}{\procev'}} \sep\pP{\pp}{\PP}{\in}\Nt,\pP{\q}{\Q}{\in}\Nt,
\procev{\in}\ES(\PP),  \procev' {\in}\ES(\Q),
  \dualevS{\locev{\pp}{\procev}}{\locev{\q}{\procev'}}}
\end{array}$ 
\smallskip
\item \mylabel{c1SE} 
$\precN _\Nt$ is the restriction of
  $\precN$ to the set  $\GE(\Nt)$;
\item \mylabel{c2SE} 
$\gr _\Nt$ is the restriction of
  $\gr$ to the set  $\GE(\Nt)$.  \end{enumerate}
  \end{definition}

 The set of n-events of a network ES can be infinite, as  shown
 by  the following example. 
\begin{example}\mylabel{ex:rec2}
Let  $\PP$ be as in \refToExample{ex:rec1}, $\Q=\rcvL{\pp}\la;\Q +\rcvL{\pp}{\la'}$ 
and $\Nt=\pP{\pp}{\PP} \parN \pP{\q}{\Q}$. Then
$$ \GE(\Nt) =
\begin{array}[t]{l}
 \set{\set{\locev\pp{\underbrace{\sendL{\q}\la\cdot\ldots\cdot\sendL{\q}\la}_n},
\locev\q{\underbrace{\rcvL{\pp}\la\cdot\ldots\cdot\rcvL{\pp}\la}_n}}\mid
n\geq 1} \quad \cup  \\[15pt]
\set{\set{\locev\pp{\underbrace{\sendL{\q}\la\cdot\ldots\cdot\sendL{\q}\la}_n\cdot\sendL{\q}{\la'}},
\locev\q{\underbrace{\rcvL{\pp}\la\cdot\ldots\cdot\rcvL{\pp}\la}_n\cdot\rcvL{\pp}{\la'}}}\mid
n\geq 0} 
\end{array}
$$
 A simple variation of this example shows that even within the
events of a network ES, an n-event $\netev$ may have an infinite
number of causal sets. Let $\netev=\set{\locev\pr{\pp?\la \cdot\ps!
   \la'},\locev\ps{\pr?\la'}}$ be as in \refToExample{ex:causal-sets}.  
%
% Let $\PP'=\sendL{\q}\la;\PP' \oplus\sendL{\q}{\la'}; \sendL{\pr}\la$
% and $\Q$ %$\Q=\rcvL{\pp}\la;\Q +\rcvL{\pp}{\la'}$
% be as above. Let moreover $\R = \rcvL{\pp}\la;\sendL{\ps}{\la'}$ and
% $S = \rcvL{\pr}\la'$.
%
Consider the network $\Nt'=\pP{\pp}{\PP'} \parN \pP{\q}{\Q} \parN
\pP{\pr}{\R} \parN \pP{\ps}{S}$, where
$\PP'=\sendL{\q}\la;\PP'
\oplus\sendL{\q}{\la'}; \sendL{\pr}\la$,
$\Q$ %$\Q=\rcvL{\pp}\la;\Q +\rcvL{\pp}{\la'}$
is as above, $\R = \rcvL{\pp}\la;\sendL{\ps}{\la'}$ and
$S = \rcvL{\pr}\la'$.

Then $\netev$ has an infinite number of causal sets $E_n =
\set{\netev_n}$ in $\GE(\Nt')$, where
%\\
%\centerline{ $
\[
\netev_n =
\set{\locev\pp{\underbrace{\sendL{\q}\la\cdot\ldots\cdot\sendL{\q}\la}_n
\cdot\,\sendL{\q}{\la'}\cdot\sendL{\pr}{\la},
\locev{\pr}{\rcvL{\pp}\la}}}
%$}
\]
On the other hand, a causal set may only cause a finite number of
events in a network ES, since the number of branches in any choice is
finite, as well as the number of participants in the network.

\end{example}

\begin{theorem}\mylabel{nf}
Let $\Nt$ be a  network. Then $\ESN{\Nt}$ is a flow event
structure with an irreflexive conflict relation. 
\end{theorem}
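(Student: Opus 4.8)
The plan is to check the three defining clauses of \refToDef{fes} for the triple $\ESN{\Nt}$, and then additionally establish the irreflexivity of its conflict relation. Since $\precN_\Nt$ and $\grr_\Nt$ are by definition the restrictions to $\GE(\Nt)$ of the global relations $\precN$ and $\grr$ on $\DE$ given in \refToDef{netevent-relations}, and since restriction preserves irreflexivity and symmetry, it suffices to prove these relational properties for $\precN$ and $\grr$ on the whole of $\DE$; the narrowing defining $\GE(\Nt)\subseteq\DE(\Nt)$ then plays no role here. Denumerability of $\GE(\Nt)$ is immediate: each $\ES(\PP)$ is a set of finite sequences of actions, hence countable, a network has finitely many participants, and $\GE(\Nt)\subseteq\DE(\Nt)$ is a subset of the corresponding finite union of sets of dual pairs, therefore countable. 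Throughout I will use the basic observation that the two located events of an n-event $\netev=\set{\locev{\pp}{\procev},\locev{\q}{\procev'}}$ are located at \emph{distinct} participants $\pp\neq\q$: duality forces $\ptone{\act{\procev}}=\q$ and $\ptone{\act{\procev'}}=\pp$, and a participant never communicates with itself.

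Irreflexivity of $\precN$ is then straightforward. If $\netev\precN\netev$ held, Clause (\ref{c1}) of \refToDef{netevent-relations} would provide $\locev{\pp}{\procev},\locev{\pp}{\procev'}\in\netev$ with $\procev<\procev'$; but $\netev$ contains a single located event at each of its two distinct participants, so $\locev{\pp}{\procev}$ and $\locev{\pp}{\procev'}$ would coincide, contradicting $\procev<\procev'$. Symmetry of $\grr$ is checked clause by clause. Clause (\ref{c21}) is symmetric because the conflict relation $\grr$ on p-events is symmetric (\refToProp{basta10}). Clause (\ref{c22}) is symmetric because the condition $\pp\neq\q$, the equality $\cardin{\projb\procev\q}=\cardin{\projb{\procev'}\pp}$, and the duality relation $\dualev{\cdot}{\cdot}$, which is symmetric by \refToDef{dualProcEv}, are all preserved when the roles of the two located events are exchanged.

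The remaining, and most delicate, point is irreflexivity of the conflict relation, i.e.\ that $\ESN{\Nt}$ has no self-conflicting events --- a property \emph{not} guaranteed for arbitrary FESs. Suppose $\netev\grr\netev$. If this arises through Clause (\ref{c21}), then as above the two located events at a common participant must coincide, forcing $\procev\grr\procev$, which is impossible since $\grr$ on p-events is irreflexive (\refToProp{basta10}). If it arises through Clause (\ref{c22}), the witnesses $\locev{\pp}{\procev}\in\netev$ and $\locev{\q}{\procev'}\in\netev$ with $\pp\neq\q$ must be exactly the two constituents of $\netev$; but then the defining duality of the n-event (\refToDef{n-event}, via \refToDef{dualLocEv}) gives $\dualev{\projb\procev\q}{\projb{\procev'}\pp}$, directly contradicting the requirement $\neg(\dualev{\projb\procev\q}{\projb{\procev'}\pp})$ of Clause (\ref{c22}). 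I expect this last case to be the crux of the argument: it is precisely where the well-formedness built into the definition of n-events --- their two components being dual --- rules out the self-conflicts that Clause (\ref{c22}) could in principle generate between the two participants of a single event.
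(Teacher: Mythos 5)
Your proof is correct and follows essentially the same route as the paper's: irreflexivity of $\precN$ from the fact that an n-event carries a single p-event per location, symmetry and irreflexivity of $\grr$ under Clause (\ref{c21}) from the corresponding properties of conflict on p-events, and exclusion of self-conflict under Clause (\ref{c22}) from the built-in duality of the two located events of an n-event. You merely spell out the symmetry and denumerability checks that the paper leaves implicit.
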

\begin{proof}
  The relation  $\precN _\Nt$  is irreflexive since $ \procev < \procev'$
  implies $\netev\not=\netev'$, where $\procev,\procev ',\netev,
  \netev'$ are as in  \refToDef{netevent-relations}(\ref{c1}).
  As for the conflict relation, note first that a conflict between an
  n-event and itself could not be derived by Clause (\ref{c22}) of
  \refToDef{netevent-relations}, since the two located events of
  an n-event are dual by construction. Then,  symmetry and
  irreflexivity of the conflict relation follow from the corresponding
  properties of conflict between  p-events. 
\end{proof}

Notably, n-events with disjoint sets of 
%participants 
 locations  may 
%also 
be
related by the transitive closure of the flow relation, as illustrated
by the following example, which also shows how  n-events 
inherit the flow relation from the causality relation of their 
p-events.

\begin{example}\mylabel{ex1}

  Let $\Nt$ be the network 
%  \\
%  \centerline{ $
\[
 \pP{\pp}{\sendL{\q}{\la_1}}\parN
    \pP{\q}{\Seq{\rcvL{\pp}{\la_1}}{\sendL{\pr}{\la_2}}} \parN
    \pP{\pr}{\Seq{\rcvL{\q}{\la_2}}{\sendL{\ps}{\la_3}}}\parN
    \pP{\ps}{\rcvL{\pr}{\la_3}}
%$ }
\]
Then $\ESN{\Nt}$ has three network  events
%\\
%  \centerline{$
\[
\begin{array}{c}
\netev_1 {=}
    \set{\locev{\pp}{\sendL{\q}{\la_1}},\locev{\q}{\rcvL{\pp}{\la_1}}}\quad
    \netev_2 {=}
    \set{\locev{\q}{\concat{{\rcvL{\pp}{\la_1}}}{\sendL{\pr}{\la_2}}},\locev{\pr}{\rcvL{\q}{\la_2}}}
%\\
\quad
  \netev_3 {=} 
    \set{\locev{\pr}{\concat{\rcvL{\q}{\la_2}}{\sendL{\ps}{\la_3}}},
      \locev{\ps}{\rcvL{\pr}{\la_3}}}
%      $} 
\end{array}
  \]    
The flow relation obtained by
  \refToDef{netev-relations} is: $\netev_1 \precN \netev_2$ and
  $\netev_2 \precN \netev_3$.  Note that each time the flow relation
  is inherited from the causality within a different participant, $\q$
  in the first case and $\pr$ in the second case.  
  The nonempty configurations are $\set{\netev_1}, \set{\netev_1,
    \netev_2}$ and $\set{\netev_1, \netev_2, \netev_3}$.
Note that $\ESN{\Nt}$ has only one proving sequence per configuration
(which is  the one  given by the numbering of events).
\end{example}

If  a network is  binary, %If $\Nt$ is a binary network
then its FES may be turned into a PES by replacing $\precN$ with
$\precN^*$.  To prove this result, we first show  a
property of n-events of binary networks. We say that an n-event
$\netev$ is {\em binary} if  the participants occurring in 
the p-events of $\netev$ are contained in $\loc\netev$.

\begin{lemma}
  \mylabel{bnc} Let $\netev$ and $\netev'$ be binary n-events with
  $\loc\netev=\loc{\netev'}$. Then $\netev\grr\netev'$ iff
  $\locev\pp\procev\in\netev$ and $\locev\pp{\procev'}\in\netev'$
  imply $\procev\grr\procev'$.
  \end{lemma}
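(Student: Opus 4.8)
The plan is to exploit the fact that, for binary n-events sharing the same locations, each of the two constituent p-events is completely determined by the other through duality. Write $\netev=\set{\locev\pp{\procev_1},\locev\q{\procev_2}}$ and $\netev'=\set{\locev\pp{\procev'_1},\locev\q{\procev'_2}}$, so that $\loc\netev=\loc{\netev'}=\set{\pp,\q}$. Since both n-events are binary, every action of $\procev_1,\procev'_1$ has participant $\q$ and every action of $\procev_2,\procev'_2$ has participant $\pp$; hence the projections $\projb{\procev_1}\q,\projb{\procev'_1}\q$ (resp.\ $\projb{\procev_2}\pp,\projb{\procev'_2}\pp$) are just $\procev_1,\procev'_1$ (resp.\ $\procev_2,\procev'_2$) with the directions erased, and erasing directions is injective. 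The first step I would establish is a \emph{transfer property}: $\procev_1\grr\procev'_1$ iff $\procev_2\grr\procev'_2$. This holds because the duality of \refToDef{dualProcEv} and \refToDef{dualLocEv} acts position by position, mapping $\sendL\q\la$ to $\rcvL\pp\la$ and $\rcvL\q\la$ to $\sendL\pp\la$; since $\dualevS{\locev\pp{\procev_1}}{\locev\q{\procev_2}}$ and $\dualevS{\locev\pp{\procev'_1}}{\locev\q{\procev'_2}}$, we get $\procev_2=d(\procev_1)$ and $\procev'_2=d(\procev'_1)$ for this length-preserving bijective relabelling $d$. A relabelling preserves the prefix order and the first position of divergence, and by Clause~(\ref{ila--esp3}) of \refToDef{procevent-relations} conflict of p-events is exactly divergence after a common prefix; hence conflict is preserved by $d$.

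The direction from right to left is then immediate: assuming that $\locev\pp\procev\in\netev$ and $\locev\pp{\procev'}\in\netev'$ imply $\procev\grr\procev'$, instantiating at $\pp$ gives $\procev_1\grr\procev'_1$, and Clause~(\ref{c21}) of \refToDef{netevent-relations} yields $\netev\grr\netev'$. For the converse I would assume $\netev\grr\netev'$ and analyse which clause of \refToDef{netevent-relations} produced it. If it is Clause~(\ref{c21}), then for some shared participant the two p-events conflict, and by the transfer property they also conflict at the other shared participant, so the right-hand side holds for both participants of $\set{\pp,\q}$.

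The delicate case is Clause~(\ref{c22}), which a priori expresses a conflict arising from incompatible \emph{communication histories} rather than from a syntactic divergence of p-events, and this is where I expect the main obstacle to lie. Here I would use the duality internal to $\netev'$. Suppose the clause fires on $\locev\pp{\procev_1}\in\netev$ and $\locev\q{\procev'_2}\in\netev'$, giving $\cardin{\projb{\procev_1}\q}=\cardin{\projb{\procev'_2}\pp}$ and $\neg(\dualev{\projb{\procev_1}\q}{\projb{\procev'_2}\pp})$. Since $\netev'$ is an n-event we have $\dualev{\projb{\procev'_1}\q}{\projb{\procev'_2}\pp}$, and duality forces $\cardin{\projb{\procev'_1}\q}=\cardin{\projb{\procev'_2}\pp}$; combined with the length equality above this gives $\cardin{\projb{\procev_1}\q}=\cardin{\projb{\procev'_1}\q}$. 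Moreover $\projb{\procev_1}\q\neq\projb{\procev'_1}\q$, for otherwise $\projb{\procev_1}\q$ would be dual to $\projb{\procev'_2}\pp$, contradicting the non-duality hypothesis. Two undirected sequences of equal length that differ must first differ at some position $k$; translating back through the injective direction-erasing map, $\procev_1$ and $\procev'_1$ share their first $k-1$ actions and carry distinct actions at position $k$, which is precisely $\procev_1\grr\procev'_1$ by Clause~(\ref{ila--esp3}) of \refToDef{procevent-relations}. The transfer property then yields $\procev_2\grr\procev'_2$, so the right-hand side holds; the symmetric pairing is handled by exchanging $\pp$ and $\q$. Thus the crux is exactly recovering a genuine p-event conflict from the purely length-and-duality condition of Clause~(\ref{c22}), a step that relies essentially on binarity and on the internal duality of $\netev'$.
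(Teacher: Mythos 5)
Your proposal is correct and follows essentially the same route as the paper's proof: the right-to-left direction is immediate from Clause~(\ref{c21}), and for the converse you split on which clause produced the conflict, using the fact that for binary n-events projections erase no actions (so all four p-events have compatible lengths) and the internal duality of $\netev'$ to convert the Clause~(\ref{c22}) condition into ``$\procev_1$ and $\procev'_1$ are distinct of equal length, hence in conflict.'' Your ``transfer property'' is exactly the paper's observation that duality is a positionwise relabelling carrying a conflict at one location to a conflict at the other, so no gap remains.
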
 
\begin{proof}
 The ``if'' direction holds  by
\refToDef{netevent-relations}(\ref{c21}).  We show the
``only-if'' direction.  First observe that  for  any
n-event $\netev=\set{\locev\pp{\procev_1},\locev\q{\procev_2}}$ the
condition $\dualevS{\locev\pp{\procev_1}}
%\widehat\Join
{\locev\q{\procev_2}}$ of
\refToDef{n-event} implies
$\projb{\procev_1}{\q}\Join\projb{\procev_2}{\pp}$ by
\refToDef{dualLocEv}, which in turn implies
$\cardin{\projb{\procev_1}{\q}}=\cardin{\projb{\procev_2}{\pp}}$ by
\refToDef{dualProcEv}.   If $\netev$ is a binary event, 
%In a binary n-event 
we also have $\cardin{\procev_1}=\cardin{\projb{\procev_1}{\q}}$ and
$\cardin{\procev_2}=\cardin{\projb{\procev_2}{\pp}}$ by \refToDef{pd}, 
since  all the actions of $\procev_1$ involve $\q$ and all the
actions of $\procev_2$ involve $\pp$, and thus 
the projections do not erase actions. \\
Assume  now 
$\netev'=\set{\locev\pp{\procev_1'},\locev\q{\procev_2'}}$.
We consider two cases  (the others being symmetric):  
\begin{itemize}[label=--]
%Let $\netev\grr\netev'$ since $\procev_1\grr\procev_1'$.
\item $\netev\grr\netev'$  because 
  $\procev_1\grr\procev_1'$.
   Then 
$\projb{\procev_1}{\q}\Join\projb{\procev_2}{\pp}$ and
  $\projb{\procev_1'}{\q}\Join\projb{\procev_2'}{\pp}$
  imply $\procev_2\grr\procev_2'$; 
 \item $\netev\grr\netev'$  because 
  $\cardin{\projb{\procev_1}{\q}}=\cardin{\projb{\procev'_2}{\pp}}$
  and $\neg(\projb{\procev_1}{\q}\Join\projb{\procev'_2}{\pp})$. 
   As argued before, 
%for all n-events  %By duality, 
  we have $\cardin{\projb{\procev_2}{\pp}}
  =\cardin{\projb{\procev_1}{\q}}$ and
  $\cardin{\projb{\procev'_2}{\pp}} =\cardin{\projb{\procev'_1}{\q}}$.
  Then, from
  $\cardin{\projb{\procev_1}{\q}}=\cardin{\projb{\procev'_2}{\pp}}$
  and the above remark about binary events, we get
  $\cardin{\procev_2}=\cardin{\procev_1} =\cardin{\procev'_2} =
  \cardin{\procev'_1}$. From
  $\neg(\projb{\procev_1}{\q}\Join\projb{\procev'_2}{\pp})$ it follows
  that $\procev_1 \neq \procev'_1$ and $\procev_2
  \neq \procev'_2$. Then we may conclude, since $\cardin{\procev_i} =
  \cardin{\procev'_i}$ and $\procev_i \neq \procev'_i$ imply
  $\procev_i \grr \procev'_i$ for $i=1,2$.  \qedhere 
  \end{itemize} 
\end{proof} 

\begin{theorem}
%\mylabel{basta11}
\mylabel{binary-network-FES}
Let $\Nt = \pP{\pp_1}{\PP_1} \parN \pP{\pp_2}{\PP_2}\,$ and $\,\ESN{\Nt}
=  (\GE(\Nt), \precN _\Nt, \grr)$. Then  $\nr{\DE(\Nt)} =
  \DE(\Nt)$ and   the structure $\ESNstar{\Nt} \eqdef (\GE(\Nt), \precN^* _\Nt , \grr)$
is a prime event structure.
\end{theorem}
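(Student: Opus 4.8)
The plan is to prove the two claims in turn, both resting on the same structural feature of binary networks: since $\Nt$ has only the participants $\pp_1$ and $\pp_2$, every action of a p-event located at one of them involves the other, so no projection erases anything. Consequently, for any n-event $\netev = \{\locev{\pp_1}{\procev_1}, \locev{\pp_2}{\procev_2}\}$ we get $\cardin{\procev_1} = \cardin{\projb{\procev_1}{\pp_2}} = \cardin{\projb{\procev_2}{\pp_1}} = \cardin{\procev_2}$ (exactly as observed in the proof of \refToLemma{bnc}), so the two constituents share a common length, which I write $\cardin{\netev}$. Moreover duality becomes \emph{functional}: the located event dual to $\locev{\pp_1}{\procev_1}$ is uniquely recovered from $\procev_1$ by swapping each $!\la$ with $?\la$ and relocating the partner to $\pp_1$, and this dualization commutes with taking prefixes. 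These two facts -- equal length and prefix-preserving functional duality -- drive everything.

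For the first claim I would note that $\nr{\DE(\Nt)}$ is the greatest fixpoint of the monotone operator $f_{\DE(\Nt)}$, whose values always lie in $\DE(\Nt)$; hence it suffices to exhibit $\DE(\Nt)$ as a post-fixpoint, i.e.\ to show that every $\netev \in \DE(\Nt)$ has a causal set contained in $\DE(\Nt)$. Writing $\netev = \{\locev{\pp_1}{\procev_1}, \locev{\pp_2}{\procev_2}\}$ with common length $m$, the candidate is the set of \emph{prefix n-events} $\{\netev^{(k)} \mid 1 \le k \le m-1\}$, where $\netev^{(k)} = \{\locev{\pp_1}{\procev_1^{(k)}}, \locev{\pp_2}{\procev_2^{(k)}}\}$ pairs the length-$k$ prefixes $\procev_1^{(k)}, \procev_2^{(k)}$ of $\procev_1, \procev_2$. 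Each $\netev^{(k)}$ is a genuine member of $\DE(\Nt)$, since $\ES(\PP_i)$ is closed under non-empty prefixes and, by prefix-preserving duality, the two length-$k$ prefixes are again dual. This set is conflict-free (nested prefixes of one p-event are $\precP$-comparable, hence never in conflict), it satisfies clause (\ref{cs2}) of \refToDef{cs} (every strict prefix of $\procev_i$ occurs as a located event of some $\netev^{(k)}$), and it is minimal (dropping $\netev^{(k)}$ removes the only occurrence of $\locev{\pp_1}{\procev_1^{(k)}}$). For $m=1$ the causal set is empty. Thus $\DE(\Nt) \subseteq f_{\DE(\Nt)}(\DE(\Nt))$, whence $\nr{\DE(\Nt)} = \DE(\Nt) = \GE(\Nt)$.

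For the second claim I must verify the three conditions of \refToDef{pes} for $(\GE(\Nt), \precN^*_\Nt, \grr)$. Irreflexivity and symmetry of $\grr$ are inherited from \refToTheorem{nf}. The relation $\precN^*_\Nt$ is reflexive and transitive by construction, and antisymmetry follows from the length observation: by clause (\ref{c1}) of \refToDef{netevent-relations} a single $\precN$-step strictly increases $\cardin{\cdot}$, so a nontrivial $\precN^*$-cycle is impossible. The remaining work is conflict hereditariness, which it suffices to prove for a single flow step -- $\netev \grr \netev'$ and $\netev' \precN \netev''$ imply $\netev \grr \netev''$ -- and then propagate by induction along the $\precN$-chain witnessing $\netev' \precN^* \netev''$ (keeping $\netev$ fixed). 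For the single step I would first use \refToLemma{bnc} to recast $\netev \grr \netev'$ as conflict \emph{at both} participants, $\procev_1 \grr \procev_1'$ and $\procev_2 \grr \procev_2'$. Next I would upgrade the flow $\netev' \precN \netev''$, which names only one participant, to strict-prefix relations $\procev_1' < \procev_1''$ and $\procev_2' < \procev_2''$ at \emph{both} participants, using that the unique dual of a prefix is the corresponding prefix of the dual. Process-level conflict hereditariness (\refToProp{basta10}, applied to $\ESP{\PP_1}$ and $\ESP{\PP_2}$) then gives $\procev_1 \grr \procev_1''$ and $\procev_2 \grr \procev_2''$, and a final appeal to \refToLemma{bnc} yields $\netev \grr \netev''$.

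I expect the delicate point to be precisely this \emph{synchronization} step in hereditariness. The flow relation on n-events is witnessed by a single shared participant, whereas the characterisation of binary conflict in \refToLemma{bnc} speaks of both participants at once; bridging them requires knowing that a strict-prefix step on one side forces a matching strict-prefix step on the other. This is exactly where functionality of duality and its commutation with the prefix order are used, and it is the only place where the restriction to two participants is essential. Once that bridge is in place, the argument reduces cleanly to the prime-event-structure properties of the component processes already established in \refToProp{basta10}.
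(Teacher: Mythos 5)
Your proposal is correct and follows essentially the same route as the paper's proof: the post-fixpoint argument with prefix n-events as causal sets for the first claim, the length argument for antisymmetry of $\precN^*_\Nt$, and conflict hereditariness via \refToLemma{bnc} combined with hereditariness of $\grr$ on p-events and induction along the $\precN$-chain. The only (harmless) divergence is that you upgrade a flow step $\netev'\precN\netev''$ to strict-prefix relations at \emph{both} participants, whereas the paper skips this: since \refToLemma{bnc} already gives conflict at both participants, the single participant witnessing the flow step suffices to propagate the conflict.
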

\begin{proof} 
  We first show that $\nr{\DE(\Nt)} = \DE(\Nt)$. By
  \refToDef{netev-relations}(\ref{netev-relations1}) 
%  \\
%  \centerline{$
\[
  \DE(\Nt) =
    \set{\set{\locev{\pp_1}{\procev_1},\locev{\pp_2}{\procev_2}} \sep
      \procev_1 \in \ES(\PP_1), \procev_2 \in \ES(\PP_2),
      \dualevS{\locev{\pp_1}{\procev_1}}{\locev{\pp_2}{\procev_2}}}
%      $}
\]
Let $\set{\locev{\pp_1}{\procev_1},\locev{\pp_2}{\procev_2}} \in
  \DE(\Nt)$. Since
  $\dualevS{\locev{\pp_1}{\procev_1}}{\locev{\pp_2}{\procev_2}}$ and
  all the actions in $\procev_1$ involve $\pp_2$ and all the actions
  in $\procev_2$ involve $\pp_1$, we know that $\procev_1$ and
  $\procev_2$ have the same length $n \geq 1$ and for each $i, 1\leq i
  \leq n$, the prefixes of length $i$ of $\procev_1$ and $\procev_2$,
  written $\procev^i_1$ and $\procev^i_2$, must themselves be
  dual. Then
  $\set{\locev{\pp_1}{\procev^i_1},\locev{\pp_2}{\procev^i_2}} \in
  \DE(\Nt)$ for each $i, 1\leq i \leq n$, hence
  $\set{\locev{\pp_1}{\procev_1},\locev{\pp_2}{\procev_2}}$ has a
  causal
  set in $\DE(\Nt)$.\\
  We prove now that the reflexive and transitive closure $\precN^*
  _\Nt$ of $\precN _\Nt$ is a partial order.  Since by definition $\precN^*
  _\Nt$ is a preorder, we only need to show that it is antisymmetric.
  Define the length of an n-event $\netev =
  \set{\locev{\pp_1}{\procev_1},\locev{\pp_2}{\procev_2}}$ to be
  $\length{\netev} \eqdef \, \cardin{\procev_1} + \cardin{\procev_2}$
  (where $\cardin{\procev}$ is the length of $\procev$, as
%defined in 
 given by 
\refToDef{proceventP}). 
 Let now $\netev, \netev'\in \GE(\Nt)$, with  $\netev =
\set{\locev{\pp_1}{\procev_1},\locev{\pp_2}{\procev_2}}$ and $\netev'
= \set{\locev{\pp_1}{\procev'_1},\locev{\pp_2}{\procev'_2}}$.  By
definition $\netev \precN _\Nt \netev'$ implies $\procev_i < \procev'_i$
for some $i=1,2$, which in turn implies $\cardin{\procev_i} <
\cardin{\procev'_i}$.  
 As observed above, 
$\procev_1$ and $\procev_2$ must have the
same length, and so must $\procev'_1$ and $\procev'_2$ . 
This means that if $\netev \precN _\Nt \netev'$ then
 $\length{\netev} = \cardin{\procev_1}+ \cardin{\procev_2} < 
\cardin{\procev'_1} + \cardin{\procev'_2} = \length{\netev'}$. 
From this we can conclude
that if $\netev \precN^* _\Nt \netev'$ and $\netev' \precN^* _\Nt \netev$, then
necessarily $ \netev = \netev'$.
\\
Finally we show that the relation $\grr$ satisfies the required
properties. By \refToTheorem{nf} we only need to prove that $\grr$ is
hereditary. Let $\netev$ and $\netev'$ be as above. If
$\netev\grr\netev'$, then by \refToLemma{bnc}
$\procev_1\grr\procev_1'$ and $\procev_2\grr\procev_2'$.  Let now
$\netev'' =
\set{\locev{\pp_1}{\procev''_1},\locev{\pp_2}{\procev''_2}}$. If
$\netev'\precN^* _\Nt\netev''$, this means that there exist $\netev_1,
\ldots, \netev_n$ such that $\netev'\precN  _\Nt\netev_1 \ldots
\precN _\Nt\netev_n=\netev''$. We prove by induction on $n$ that
$\netev\grr\netev''$. For $n=1$ we have $\netev'\precN _\Nt\netev''$.  Then
by Clause (\ref{c1}) of \refToDef{netev-relations} we have $\procev'_j
< \procev''_j$ for some $j\in\set{1,2}$.
 Since $\procev_i \grr \procev'_i$ for all $i\in\set{1,2}$ 
and $\grr$ is hereditary on p-events, we deduce
$\procev_j\grr\procev''_j$, which implies $\netev\grr\netev''$. 
Suppose now $n >1$. By induction $\netev\grr\netev_{n-1}$. Since
$\netev_{n-1} \precN _\Nt\netev_n=\netev''$ we then obtain
$\netev\grr\netev''$ by the same argument as in the base case.
\end{proof}

If  a network  %$\Nt$ 
has more than two participants, then the
duality requirement on its  n-events  is not sufficient to ensure the
absence of circular dependencies\footnote{This is a well-known issue
  in multiparty session types, which motivated the introduction of
  global types in \cite{CHY08}, see \refToSection{sec:types}.}.  For
instance, in the following ternary network (which may be viewed as
representing the 3-philosopher deadlock) the relation $\precN^*$ is
not a partial order.
\begin{example}\mylabel{ex2}
Let $\Nt$ be the network 
%\\
%\centerline{$
\[
\pP{\pp}{\Seq{\rcvL{\pr}{\la}}{\sendL{\q}{\la'}}} \parN
\pP{\q}{\Seq{\rcvL{\pp}{\la'}}{\sendL{\pr}{\la''}}} \parN
\pP{\pr}{\Seq{\rcvL{\q}{\la''}}{\sendL{\pp}{\la}}}
\]
%$}
Then  $\ESN{\Nt}$  has three  n-events
%\\
%\centerline{$
\[
\begin{array}{c}
\netev_1 = \set{\locev{\pp}{\rcvL{\pr}{\la}},\locev{\pr}{\concat{{\rcvL{\q}{\la''}}}{\sendL{\pp}{\la}}}}\qquad
\netev_2 =
\set{\locev{\pp}{\concat{{\rcvL{\pr}{\la}}}{\sendL{\q}{\la'}}},
\locev{\q}{\rcvL{\pp}{\la'}}}\\
\netev_3 = \set{\locev{\q}{\concat{{\rcvL{\pp}{\la'}}}{\sendL{\pr}{\la''}}},\locev{\pr}{\rcvL{\q}{\la''}}}
\end{array}
\]
%$}
By \refToDef{netev-relations}(\ref{c1}) 
we have
$\netev_1 \precN \netev_2 \precN \netev_3$ and  $\netev_3 \precN \netev_1$.
The only configuration of  $\ESN{\Nt}$  is the empty
configuration, because the only set of n-events that satisfies
%left-closure 
downward-closure up to conflicts
is $X =\set{\netev_1, \netev_2, \netev_3}$, 
but this is not a configuration because 
$\prec_X^*$ is not  a partial order (recall that $\prec_X$ is the
restriction of $\precN$ to $X$) and hence the condition (\ref{configF3}) of
\refToDef{configF} is not satisfied. 
\end{example}
% !TEX root =cdgS.tex

\subsection{Further Properties}
\mylabel{subsec:further-props}

In this subsection, we first prove two properties of the
conflict relation in network ESs: non disjoint n-events are always in
conflict, and conflict induced by Clause (\ref{c22}) of
\refToDef{netevent-relations} is semantically inherited.  
 We then discuss the relationship between causal sets and prime
configurations and prove two further properties of causal sets, which
are shared with prime configurations: finiteness, and the existence 
of a causal set for each event in a configuration.
Finally,  observing that the FES of a network may be viewed as the product
of the PESs of its processes, we proceed to prove a classical property
for ES products, namely that their projections on their components
preserve configurations.  To this end, we define a projection function
from n-events to participants, yielding p-events, and we show that
configurations of a network ES project down to configurations of the 
PESs of its processes. 

\bigskip

Let us start with the conflict properties. By definition, two
n-events intersect each other if and only if they share a located
event $\locev{\pp}{\procev}$. Otherwise, the two n-events are
disjoint. 
 Note that if 
$\locev{\pp}{\procev}\in(\netev\cap\netev')$, then
$\loc{\netev} = \loc{\netev'} = \set{\pp, \q}$, where $\q =\ptone{\act{\procev}}$.
The next proposition establishes that two  distinct  intersecting n-events
in $\DE$ are in conflict. 
\begin{proposition}[Sharing of located events implies conflict]\mylabel{prop:conf}
  If $\netev, \netev' \in  \DE $ and $\netev\neq \netev'$ and
  $(\netev\cap\netev') \neq \emptyset$, then $\netev\gr \netev'$.
\end{proposition}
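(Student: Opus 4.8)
The plan is to locate the shared component of $\netev$ and $\netev'$ and then derive the conflict from Clause (\ref{c21}) of \refToDef{netevent-relations}, i.e.\ from a conflict between two p-events sitting at a common participant. First I would use the observation recorded just before the statement: since $(\netev\cap\netev')\neq\emptyset$, the two n-events share a located event, say $\locev{\pp}{\procev}$, and because the two components of any n-event are dual (\refToDef{dualLocEv}), the remaining location of each is forced to be $\q=\ptone{\act{\procev}}$. Hence $\netev=\set{\locev{\pp}{\procev},\locev{\q}{\procev'}}$ and $\netev'=\set{\locev{\pp}{\procev},\locev{\q}{\procev''}}$, and from $\netev\neq\netev'$ together with the shared component $\locev{\pp}{\procev}$ I get $\procev'\neq\procev''$. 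It then suffices to prove $\procev'\grr\procev''$, since Clause (\ref{c21}) immediately yields $\netev\grr\netev'$.

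The key auxiliary fact I would establish is that $\procev'$ and $\procev''$ have the same projection on $\pp$. Indeed, duality of the two located events in $\netev$ and in $\netev'$ gives, by \refToDef{dualLocEv}, both $\dualev{\projb{\procev}{\q}}{\projb{\procev'}{\pp}}$ and $\dualev{\projb{\procev}{\q}}{\projb{\procev''}{\pp}}$. A short induction on the defining rules of \refToDef{dualProcEv} shows that the dual of an undirected action sequence is unique: the base case pairs $\emptyseq$ with $\emptyseq$, and each inductive rule pairs a leading $\sendL{}{\la}$ with $\rcvL{}{\la}$ (or vice versa) while keeping the message. Therefore $\projb{\procev'}{\pp}$ and $\projb{\procev''}{\pp}$, being both dual to $\projb{\procev}{\q}$, must coincide.

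Finally I would rule out that $\procev'$ and $\procev''$ are prefix-comparable, which is the only way two distinct p-events could fail to be in conflict; this is the main obstacle, since sharing a located event might at first sight suggest a causal rather than a conflicting relation. Suppose, say, $\procev'\sqsubsetneq\procev''$, so that $\procev''$ extends $\procev'$ by a nonempty suffix $\actseq_1$. By the projection equality just proved, $\actseq_1$ contains no action involving $\pp$; but the last action of $\procev''$ is the last action of $\actseq_1$, and by \refToDef{dualLocEv} it must satisfy $\ptone{\act{\procev''}}=\pp$, i.e.\ it does involve $\pp$, a contradiction (the case $\procev''\sqsubsetneq\procev'$ is symmetric). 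Hence $\procev'$ and $\procev''$ are $\sqsubseteq$-incomparable; being distinct finite action sequences, they share a maximal common prefix after which their next actions $\pi_1\neq\pi_2$ differ, so $\procev'\grr\procev''$ by Clause (\ref{ila--esp3}) of \refToDef{procevent-relations}. Applying Clause (\ref{c21}) of \refToDef{netevent-relations} then gives $\netev\grr\netev'$, completing the argument.
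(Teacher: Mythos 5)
Your proof is correct and follows essentially the same route as the paper's: identify the shared located event, note that the two remaining components sit at the same participant $\q$ and have equal projections on $\pp$ by duality, use $\ptone{\act{\cdot}}=\pp$ to exclude prefix-comparability, and conclude via Clause (\ref{c21}). You merely spell out in more detail two steps the paper leaves implicit (uniqueness of the dual sequence, and the suffix argument ruling out $<$), which is fine.
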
 
\begin{proof}
  Let  $\locev{\pp}{\procev}\in(\netev\cap\netev')$ and
  $\loc{\netev}=\loc{\netev'} =\set{\pp,\q}$. 
%$\q = \participant{\act{\procev}}$.  
  Then there must exist $\procev_0, \procev'_0$ such that
  $\locev{\q}{\procev_0}\in \netev$ and $\locev{\q}{\procev'_0}\in
  \netev'$.   From
  $\dualevS{\locev{\pp}{\procev}}{\locev{\q}{\procev_0}}$ and
  $\dualevS{\locev{\pp}{\procev}}{\locev{\q}{\procev'_0}}$ it follows that
  $\projb{\procev_0}{\pp} = \projb{\procev'_0}{\pp}$. This, in conjunction
  with the fact that $\ptone{\act{\procev_0}} =
  \ptone{\act{\procev'_0}} = \pp$, implies that  
  neither $\procev_0 < \procev'_0$ nor $\procev'_0 < \procev_0$.
  Thus $\procev_0 \grr \procev'_0$ and therefore $\netev \grr \netev'$
  by \refToDef{netevent-relations}.
\end{proof}

Although conflict is not hereditary in FESs, we prove that a conflict
due to incompatible mutual projections (i.e., a conflict derived by
Clause (\ref{c22}) of \refToDef{netevent-relations}) is semantically
inherited.  Let $\tr\bm n$ denote the prefix of length $n$ of $\bm$.

\begin{proposition}[Semantic conflict hereditariness]
\mylabel{sem-conf}
Let $\locev{\pp}{\procev}\in \netev$ and $\locev{\q}{\procev'}\in\netev'$ with $\pp\not=\q$.
Let $n=min \set{\eh{\projb\procev\q},\eh{\projb{\procev'}\pp}}$. 
If $\neg(\dualev{\tr{(\projb\procev\q)}n}{\tr{(\projb{\procev'}\pp)}n})$,
then there exists no configuration $\ESet$ such that $\netev,
\netev'\in \ESet$. 
\end{proposition}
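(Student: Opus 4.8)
The plan is to argue by contradiction. Suppose some configuration $\ESet$ satisfies $\netev,\netev'\in\ESet$, and abbreviate $\bms=\projb\procev\q$ and $\bms'=\projb{\procev'}\pp$, so that $n=\min\set{\eh\bms,\eh{\bms'}}$. The hypothesis already forces $n\geq 1$: if $n=0$ both truncations are empty and $\dualev\emptyseq\emptyseq$ holds by \refToDef{dualProcEv}, contradicting $\neg(\dualev{\tr\bms n}{\tr{\bms'}n})$. Since $n\leq\eh\bms$, the p-event $\procev$ performs at least $n$ actions towards $\q$; let $\procev_n\le\procev$ be the prefix of $\procev$ ending exactly at its $n$-th such action, so that $\projb{\procev_n}\q=\tr\bms n$, $\eh{\projb{\procev_n}\q}=n$ and $\ptone{\act{\procev_n}}=\q$.

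The reduction that drives everything is the following \emph{occurrence property} (prefix-closure of configurations at a participant), which I isolate as the key lemma: whenever $\locev\pp\procev$ occurs in a configuration $\ESet$, then $\occ{\locev\pp{\procev_0}}{\ESet}$ for every nonempty prefix $\procev_0\le\procev$. Granting it, $\occ{\locev\pp{\procev_n}}{\ESet}$, so $\locev\pp{\procev_n}$ sits in some n-event of $\ESet$ which, as $\ptone{\act{\procev_n}}=\q$, must be of the form $\set{\locev\pp{\procev_n},\locev\q\tau}\in\ESet$. By \refToDef{n-event} and \refToDef{dualLocEv} this n-event is dual, hence $\dualev{\tr\bms n}{\projb\tau\pp}$, and since dual sequences have equal length (\refToDef{dualProcEv}) we get $\eh{\projb\tau\pp}=n$. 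Finally $\locev\q\tau$ and $\locev\q{\procev'}$ both occur in the conflict-free $\ESet$; since two p-events at one participant that are not in conflict are $\sqsubseteq$-comparable, which follows from \refToDef{procevent-relations}(\ref{ila--esp3}), $\tau$ and $\procev'$ are comparable, and together with $\eh{\projb\tau\pp}=n\le\eh{\bms'}$ this yields $\projb\tau\pp=\tr{\bms'}n$ in either direction of the comparison. Thus $\dualev{\tr\bms n}{\tr{\bms'}n}$, contradicting the hypothesis.

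It remains to prove the occurrence property, which I expect to be the real obstacle. I would argue by downward induction along the prefixes of $\procev$: assuming $\occ{\locev\pp{\procev_1}}{\ESet}$ for the immediate successor prefix $\procev_1=\procev_0\cdot\pi$ (with $\pt=\ptone\pi$), I must show $\occ{\locev\pp{\procev_0}}{\ESet}$. A candidate cause exists: since $\ESet\sset\GE(\Nt)=\nr{\DE(\Nt)}$ and the narrowing is the greatest fixpoint of $f_{\DE(\Nt)}$ (\refToDef{def:narrowing}), the n-event carrying $\locev\pp{\procev_1}$ has a causal set in $\GE(\Nt)$, into which \refToDef{cs}(\ref{cs2}) places an occurrence of $\locev\pp{\procev_0}$; so some $\netev_0\in\GE(\Nt)$ has $\locev\pp{\procev_0}\in\netev_0$, and $\netev_0\precN\netev_1$ by \refToDef{netevent-relations}(\ref{c1}). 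Applying \refToDef{configF}(\ref{configF1}) to $\netev_0\precN\netev_1\in\ESet$ gives either $\netev_0\in\ESet$, whence $\locev\pp{\procev_0}$ occurs and we are done, or a substitute $\netev''\in\ESet$ with $\netev_0\grr\netev''\precN\netev_1$ that I must refute.

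The hard part is refuting this substitute, and here I would turn the fact that $\netev''$ is itself a genuine (hence dual) n-event of the conflict-free $\ESet$ against the conflict that produced it. In the informative case $\netev''$ shares $\pp$ with $\netev_1$, so $\locev\pp\mu\in\netev''$ with $\mu<\procev_0$ (it cannot equal $\procev_0$, which does not occur); write $\netev''=\set{\locev\pp\mu,\locev\pr\zeta}$. The conflict $\netev_0\grr\netev''$ cannot stem from $\pp$, where $\mu$ and $\procev_0$ are comparable, so it comes from Clause (\ref{c22}) or from Clause (\ref{c21}) at a shared partner. For the representative Clause (\ref{c22})-conflict, between $\locev\pp{\procev_0}$ and $\locev\pr\zeta$ (the others being analogous): if $\pr\neq\pt$ then $\projb{\procev_1}\pr=\projb{\procev_0}\pr$, so the same clause yields $\netev_1\grr\netev''$, contradicting conflict-freeness; if $\pr=\pt$ then duality of $\netev''$ gives $\dualev{\projb\mu\pr}{\projb\zeta\pp}$ with $\eh{\projb\mu\pr}=\eh{\projb\zeta\pp}=\eh{\projb{\procev_0}\pr}$, whence $\projb\mu\pr=\projb{\procev_0}\pr$ and $\dualev{\projb{\procev_0}\pr}{\projb\zeta\pp}$, contradicting the very non-duality that Clause (\ref{c22}) required. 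The genuinely delicate case is a Clause (\ref{c21})-conflict, where $\netev_0$ and $\netev''$ share the partner $\pr$ and their $\pr$-components conflict: were $\locev\pr{}$ of $\netev_0$ itself to occur in $\ESet$, Clause (\ref{c21}) would at once put its carrier in conflict with $\netev''$ inside $\ESet$, so that component does not occur, and one is reduced to a smaller unfulfilled cause at $\pr$. I expect this mutually recursive descent at the partner — terminating by finiteness of $\ESet$ and the acyclicity condition \refToDef{configF}(\ref{configF3}) — to be the crux requiring the most care; \refToProp{prop:conf} is convenient throughout for converting any incidental sharing of located events into conflict.
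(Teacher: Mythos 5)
Your overall reduction is sound and, at its core, the same as the paper's: everything is made to rest on the claim that the prefix $\procev_0<\procev$ (your $\procev_n$) whose projection on $\q$ has length exactly $n$ must itself occur, as a located event of $\pp$, in some n-event of $\ESet$. The paper gets this at once from the fact that every configuration contains a causal set for each of its events (\refToLemma{csl}, whose Condition~(\ref{cs2}) of \refToDef{cs} supplies a carrier of $\locev{\pp}{\procev_0}$ inside $\ESet$), and then concludes by deriving the conflict $\netev_0\grr\netev'$ via Clause~(\ref{c22}) of \refToDef{netevent-relations}; your endgame instead extracts the dual partner $\locev{\q}{\tau}$ and contradicts the non-duality hypothesis directly, which is a harmless variation (your use of comparability of $\tau$ and $\procev'$ at $\q$, via conflict-freeness and \refToDef{netevent-relations}(\ref{c21}), is fine).

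The genuine gap is in your inline proof of the occurrence property, which you yourself flag as unfinished. The ``refute the substitute'' framing is the wrong move: when $\netev_0\notin\ESet$ and \refToDef{configF}(\ref{configF1}) produces $\netev''\in\ESet$ with $\netev_0\grr\netev''\precN\netev_1$, this situation is not refutable — it is exactly the disjunctive causality that FESs are designed to allow (see the two causal sets of $\netev$ in \refToFigure{fig:network-FES}: the ``candidate'' cause drawn from one causal set may be absent from a configuration built on the other). What must be shown is not that the substitute cannot exist, but that among the events of $\ESet$ strictly below $\netev_1$ there is one carrying $\locev{\pp}{\procev_0}$; your case analysis establishes this only when the conflict $\netev_0\grr\netev''$ is of certain shapes, and the Clause~(\ref{c21})-at-the-partner case is left as a hoped-for descent whose termination and, more importantly, whose \emph{conclusion} (that the descent ends at a carrier of $\locev{\pp}{\procev_0}$ rather than merely ending) are not argued. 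The clean repair is the one the paper uses: prove once that $E'=\set{\netev''\in\ESet\mid\netev''\precN\netev_1}$ is a causal set of $\netev_1$ (\refToLemma{csl}), or equivalently invoke \refToTheorem{config-preservation}, and then Condition~(\ref{cs2}) hands you the occurrence you need without any descent.
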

\begin{proof}
  Suppose ad absurdum that $\ESet$ is a configuration such that
  $\netev, \netev'\in \ESet$. If $\cardin{\projb\procev\q} =
  \cardin{\projb{\procev'}\pp}$ then $\netev \grr\netev '$ by
  \refToDef{netevent-relations}(\ref{c22}) and we reach immediately a
  contradiction.  So, assume $\cardin{\projb\procev\q} >
  \cardin{\projb{\procev'}\pp} = n$. This means that $\cardin{\procev}
  > 1$ and thus there exists a non-empty causal set $E_\netev$ of
  $\netev$ such that $E_\netev \subseteq\ESet$.  Let $\procev_0 <
  \procev$ be such that $\cardin{\projb{\procev_0}\q}
  =\cardin{\projb{\procev'}\pp} =n$.  By definition of causal set,
  there exists $\netev_0 \in E_\netev$ such that
  $\locev{\pp}{\procev_0}\in \netev_0$. By
  \refToDef{netevent-relations}(\ref{c22}) we have then $\netev_0\grr
  \netev'$, contradicting the fact that $\ESet$ is conflict-free. 
\end{proof}

%%%%%%%%%% INIZIO PEZZO PRESO DA ESPN1 %%%

We prove now two further properties of causal sets. 
For the reader familiar with ESs, the notion of causal set may be
reminiscent of that of \emph{prime configuration}~\cite{Winskel80},
which similarly consists of a complete set of causes for a given
event\footnote{In PESs, the prime configuration associated with an
  event is unique, while it is not unique in FESs and more generally
  in Stable ESs, just like a causal set.}. However, there are some
important differences: the first is that a causal set does not include
the event it causes, unlike a prime configuration. The second is that
a causal set only contains direct causes of an event, and thus it is
not downward-closed up to conflicts, as opposed to a prime
configuration. The last difference is that, while a prime
configuration uniquely identifies its caused event, a causal set may
cause different events, as shown in \refToExample{ex:causal-sets}.

A common feature of prime configurations and causal sets is that they
are both finite. For causal sets, this is implied by minimality
together with Clause (\ref{cs2}) of \refToDef{cs}, as shown by the
following lemma. 
\\

\begin{lemma}
\label{cs-prop}
 Let $\netev\in \EvSet \subseteq \DE$. If $E$ is a causal set of $\netev$ in $\EvSet$,
  then $E$ is finite. 
\end{lemma}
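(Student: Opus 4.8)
The plan is to bound the cardinality of $E$ by a quantity that depends only on the two p-events constituting $\netev$, and is therefore finite. Write $\netev = \set{\locev{\pp}{\procev_1}, \locev{\q}{\procev_2}}$, and introduce the set of located events that are ``proper prefixes at the two locations of $\netev$'':
\[
L = \set{\locev{\pp}{\chi} \mid \chi < \procev_1} \cup \set{\locev{\q}{\psi} \mid \psi < \procev_2}.
\]
Since $\procev_1$ and $\procev_2$ are finite sequences of actions (by \refToDef{proceventP}), $L$ is finite, with at most $\cardin{\procev_1} + \cardin{\procev_2} - 2$ elements. I would then exhibit an injection from $E$ into $L$, which gives $\cardin{E} \le \cardin{L}$ and settles the claim.

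For the first step I would use the observation recorded immediately after \refToDef{cs}: the conjunction of minimality and Clause (\ref{cs2}) guarantees that every $\netev' \in E$ satisfies $\netev' \precN \netev$. By Clause (\ref{c1}) of \refToDef{netevent-relations}, this means $\netev'$ and $\netev$ share a participant $\pr \in \set{\pp,\q}$ with $\locev{\pr}{\sigma} \in \netev'$, $\locev{\pr}{\sigma'} \in \netev$ and $\sigma < \sigma'$; as $\sigma'$ is exactly the p-event of $\netev$ at $\pr$, the located event $\locev{\pr}{\sigma}$ belongs to $L$. Hence each $\netev' \in E$ contains at least one element of $L$, and I may pick one such element, writing it $g(\netev')$.

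The second step shows that $g \colon E \to L$ is injective. Suppose $\netev'_1, \netev'_2 \in E$ with $g(\netev'_1) = g(\netev'_2)$. Then $\netev'_1$ and $\netev'_2$ share this located event, so $(\netev'_1 \cap \netev'_2) \neq \emptyset$. Since $E \subseteq \EvSet \subseteq \DE$, \refToProp{prop:conf} applies: were $\netev'_1 \neq \netev'_2$ we would get $\netev'_1 \grr \netev'_2$, contradicting the conflict-freeness of $E$ guaranteed by Clause (\ref{cs1}). Therefore $\netev'_1 = \netev'_2$, so $g$ is an injection and $E$ is finite.

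The main obstacle is making the first step airtight, namely being sure that $\netev' \precN \netev$ really forces $\netev'$ to contain a located event whose p-event is a \emph{proper} prefix of a p-event of $\netev$ at a shared location; this is precisely what Clause (\ref{c1}) of \refToDef{netevent-relations} delivers, so the subtlety is only in invoking the right definitional ingredient. Everything else is a routine counting argument, with \refToProp{prop:conf} doing the essential work of converting conflict-freeness into injectivity.
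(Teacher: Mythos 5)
Your proof is correct and follows essentially the same route as the paper's: both bound $\cardin{E}$ by $\cardin{\procev_1}+\cardin{\procev_2}-2$ using the fact that minimality plus Clause (\ref{cs2}) forces every element of $E$ to be a direct cause containing a proper-prefix located event of $\netev$, and both use \refToProp{prop:conf} together with conflict-freeness to get uniqueness (you phrase it as an injection $E\to L$, the paper as ``at most one event of $E$ per prefix'', which is the same counting argument).
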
 
\begin{proof}
  Suppose
  $\netev=\set{\locev\pp\procev,\locev\q{\procev'}}$.  We show that
  $\cardin{E} \leq \cardin{\procev} + \cardin{\procev'} - 2$, where
  $\cardin{E}$ is the cardinality of $E$.  By Condition (\ref{cs2}) of \refToDef{cs}, for
  each $\procev_0 < \procev$ and $\procev'_0 < \procev'$ there must be
  $\netev_0, \netev'_0\in E$ such that $\locev\pp\procev_0 \in
  \netev_0$ and $\locev\q\procev'_0 \in \netev'_0$. Note that
  $\netev_0$ and $\netev'_0$ could possibly coincide. Moreover, there
  cannot be $\netev'\in E$ such that $\locev\pp\procev_0
  \in \netev' \neq \netev_0$ or $\locev\q\procev'_0 \in \netev'
  \neq \netev'_0$, since this would contradict the
  minimality of
  $E$  (and also its conflict-freeness, since by
  \refToProp{prop:conf} we would have $\netev' \grr \netev_0$). 
Hence the number of events in $E$ is at most $(\cardin{\procev}
  -1) + (\cardin{\procev'} - 1)$.
\end{proof}

 A key property of causal sets, which is again shared with prime
configurations, is that each configuration includes a unique causal
set for each n-event in the configuration.

\begin{lemma}\label{csl}
  If $\ESet$ is a configuration of $\ESN{\Nt}$ and $\netev\in\ESet$, then there is a
 unique causal set $E$ of $\netev$ such that $E\subseteq\ESet$.
\end{lemma}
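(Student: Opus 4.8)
The plan is to split the statement into a \emph{uniqueness} half and an \emph{existence} half, the latter being the real work. Throughout, the decisive structural fact will be \refToProp{prop:conf}: since $\ESet$ is conflict-free (Clause~(\ref{configF2}) of \refToDef{configF}), for every participant $\pp$ and every p-event $\procev_0$ there is \emph{at most one} n-event $\netev_0\in\ESet$ with $\locev{\pp}{\procev_0}\in\netev_0$, because two distinct such n-events share a located event and hence conflict by \refToProp{prop:conf}. Writing $\netev=\set{\locev{\pp}{\procev_\pp},\locev{\q}{\procev_\q}}$, I would define $C$ to be the set collecting, for each proper prefix $\procev_0<\procev_\pp$ and each proper prefix $\procev_0<\procev_\q$, the unique carrier of $\locev{\pp}{\procev_0}$, resp. $\locev{\q}{\procev_0}$, in $\ESet$ (well-definedness of these carriers is exactly the existence half, deferred below). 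Granting that, uniqueness is immediate: $C$ satisfies Clause~(\ref{cs2}) by construction and Clause~(\ref{cs1}) because $C\cup\set{\netev}\subseteq\ESet$ is conflict-free; moreover \emph{any} causal set $E\subseteq\ESet$ of $\netev$ satisfies $C\subseteq E$, since Clause~(\ref{cs2}) forces $E$ to carry each $\locev{\pp}{\procev_0}$ and that carrier, lying in $\ESet$, must be the unique one in $C$. Minimality of $E$ together with $C\subseteq E$ and the fact that $C$ already meets (\ref{cs1})--(\ref{cs2}) then gives $E=C$; in particular $C$ is itself minimal and is the unique causal set of $\netev$ contained in $\ESet$.

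It remains to prove existence, i.e. that each proper prefix really has a carrier in $\ESet$. I would fix $\pp$ (the case of $\q$ being symmetric) and consider $L_\pp$, the set of p-events $\sigma$ such that $\locev{\pp}{\sigma}$ occurs in some n-event of $\ESet$; note $L_\pp\subseteq\ES(\PP)$ where $\pP{\pp}{\PP}\in\Nt$. Using \refToProp{prop:conf} and conflict-freeness once more, together with the dichotomy that distinct p-events of a participant are either $\sqsubseteq$-comparable or in conflict (\refToDef{procevent-relations}), one sees that $L_\pp$ is \emph{linearly ordered} by the prefix relation: two incomparable members would conflict, forcing their distinct carriers in $\ESet$ to conflict by Clause~(\ref{c21}) of \refToDef{netevent-relations}, which is impossible. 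Since $\procev_\pp\in L_\pp$, the existence statement is precisely the claim that $L_\pp$ has \emph{no gaps} below $\procev_\pp$, i.e. is downward closed there.

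To close a putative gap I would invoke downward-closure up to conflicts (Clause~(\ref{configF1})). Take $\procev_0<\procev_\pp$ longest missing from $L_\pp$, so its immediate successor prefix $\procev_0^{+}$ lies in $L_\pp$, witnessed by some $\netev^{+}\in\ESet$ with $\locev{\pp}{\procev_0^{+}}\in\netev^{+}$. As $\netev^{+}\in\GE(\Nt)=\nr{\DE(\Nt)}$ is a fixpoint of the narrowing, it has a causal set in $\GE(\Nt)$, which by Clause~(\ref{cs2}) contains some $\hat\netev$ with $\locev{\pp}{\procev_0}\in\hat\netev$ and $\hat\netev\precN\netev^{+}$. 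By Clause~(\ref{configF1}) either $\hat\netev\in\ESet$ --- whence $\procev_0\in L_\pp$, contradicting the gap --- or there is $e''\in\ESet$ with $\hat\netev\grr e''\precN\netev^{+}$, and I must refute this. The clean sub-case is a conflict $\hat\netev\grr e''$ running \emph{through $\pp$} (Clause~(\ref{c21}) at $\pp$): if $e''\precN\netev^{+}$ also runs through $\pp$ then the $\pp$-component of $e''$ is a prefix of $\procev_0^{+}$, hence $\sqsubseteq$-comparable with $\procev_0$ and non-conflicting, a contradiction; if $e''\precN\netev^{+}$ runs through the other location of $\netev^{+}$, then hereditariness of conflict on p-events (\refToProp{basta10}) upgrades the conflict of the $\pp$-component of $e''$ with $\procev_0\sqsubseteq\procev_0^{+}$ to a conflict with $\procev_0^{+}$, giving $e''\grr\netev^{+}$ and contradicting conflict-freeness of $\ESet$.

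I expect the genuine obstacle to be the residual configurations of this conflict, namely $\hat\netev\grr e''$ arising through the partner location of $\hat\netev$ (Clause~(\ref{c21})) or, worse, through incompatible past communications (Clause~(\ref{c22})): here $e''$ need not touch $\pp$ at all, so it does not obviously supply the missing cause. The strategy I would pursue is to combine \refToProp{prop:conf} with \refToProp{sem-conf}, reading off from the incompatible mutual projections of $\hat\netev$ and $e''$, truncated to their common length, that no configuration can contain both --- which, pushed along the flow down to shorter $\pp$-prefixes and coupled with an induction on $\cardin{\procev_\pp}$ and the finiteness of causal sets (\refToLemma{cs-prop}) to forbid infinite descent, should force $\hat\netev\in\ESet$ after all. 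Once every gap is closed, $C$ is well-defined, and by the first paragraph it is the unique causal set of $\netev$ contained in $\ESet$. The turning of the non-hereditary, possibly \emph{disjoint} conflict of Clause~(\ref{c22}) into a contradiction is where I anticipate all the ingenuity to be needed.
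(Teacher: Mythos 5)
Your uniqueness half is sound and is essentially the paper's: conflict-freeness of $\ESet$ together with \refToProp{prop:conf} gives at most one carrier in $\ESet$ for each located prefix, and minimality does the rest. The existence half, however, contains a genuine gap --- precisely the one you flag yourself. After reducing to closing a gap in $L_\pp$, you obtain from \refToDef{configF}(\ref{configF1}) the alternative ``$\hat\netev\in\ESet$, or $\hat\netev\grr e''\precN\netev^{+}$ for some $e''\in\ESet$'', and you refute the second disjunct only when the conflict $\hat\netev\grr e''$ arises from Clause~(\ref{c21}) of \refToDef{netevent-relations} at location $\pp$. When it arises at the partner location of $\hat\netev$, or from Clause~(\ref{c22}), the event $e''$ carries no information about $\pp$, and the strategy you sketch (\refToProp{sem-conf} plus an induction on $\cardin{\procev}$) is not carried out. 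Note moreover that the particular $\hat\netev$ you chase is only one of possibly many carriers of $\locev{\pp}{\procev_0}$ in $\GE(\Nt)$, so the target of that induction (``force $\hat\netev\in\ESet$'') is stronger than what you actually need and not obviously attainable. As it stands the argument is incomplete.

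The paper avoids this case analysis entirely by working with the whole set $E'=\set{\netev'\in\ESet\mid\netev'\precN\netev}$ of direct causes of $\netev$ already present in $\ESet$, rather than with individual missing prefixes and intermediate events $\netev^{+}$. It first argues that $E'\subseteq E$ for some causal set $E$ of $\netev$ (using that $\netev\in\nr{\DE(\Nt)}$ has at least one causal set, that $E'\cup\set{\netev}$ is conflict-free, and that each located prefix has at most one carrier in $E'$), and then shows $E\subseteq E'$: if $\netev_0\in E\setminus E'$, then $\netev_0\precN\netev$ and $\netev_0\notin\ESet$, so \refToDef{configF}(\ref{configF1}) applied to $\netev$ \emph{itself} yields $\netev_1\in\ESet$ with $\netev_0\grr\netev_1\precN\netev$; this $\netev_1$ lies in $E'\subseteq E$ by construction, and $\netev_0\grr\netev_1$ with both in $E$ contradicts Clause~(\ref{cs1}) of \refToDef{cs} directly. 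No inspection of which clause produced the conflict is needed, because the conflict witness supplied by the configuration axiom is automatically a direct cause of $\netev$ and hence already captured by $E'$. You could repair your proof by switching to this sandwich argument; your prefix-by-prefix chase does not appear to close without substantially more machinery.
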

\begin{proof} 
  By \refToDef{def:narrowing}, if $\netev\in \GE(\Nt)$, then $\netev$
  has at least one causal set included in $\GE(\Nt)$.  Let
  $E'=\set{\netev'\in\ESet\mid \netev'\prec\netev}$. By
  \refToDef{configF}, $E'\cup \set{\netev}$ is conflict-free.
  Moreover, if $\locev{\pp}{\procev}\in \netev$ and $\procev' <
  \procev$, then by \refToProp{prop:conf} there is at most one
 $\netev'' \in E'$ such that $\locev{\pp}{\procev'}\in \netev''$. 
  Therefore, $E'\subseteq E$ for some causal set $E$ of $\netev$ by
  \refToDef{cs}.  We show that $E\subseteq E'$.  Assume ad absurdum
  that $\netev_0\in E\backslash E'$.  By definition of causal set,
  $\netev_0\precN\netev$. By definition of $E'$, $\netev_0\not\in E'$
  implies $\netev_0\not\in \ESet$.  By \refToDef{configF} this implies
  $\netev_0\gr\netev_1\prec\netev$ for some $\netev_1\in\ESet$. Then
  $\netev_1\in E'$ by definition of $E'$, and thus $\netev_1\in E$.
  Hence $\netev_0,\netev_1\in E$ and $\netev_0\gr\netev_1$,
  contradicting \refToDef{cs}.
\end{proof}

In the remainder of this section we show that projections of n-event
configurations give p-event configurations. We start by formalising
the projection function of n-events to p-events and showing that it is
downward surjective.

\begin{definition}[Projection of n-events to p-events]
\label{def:proj-network-n-event}
\[
\projnet{\pp}{\netev}=\begin{cases}
\procev     & \text{if } \locev{\pp}{\procev}\in\netev, \\
 undefined   & \text{otherwise}.
\end{cases}
\]
The projection function $\projnet{\pp}{\cdot}$ is extended to sets of
n-events in the obvious way:
%\\
%\centerline{$
\[
\projnet{\pp}{X}  = \set{ \procev \mid 
    \exists\netev \in X  \, .\, \projnet{\pp}{\netev} = \procev}
    \]
%$}  
\end{definition}

\begin{example}\mylabel{ex:narrowing-ila}
  Let $\set{\netev_1,\netev_2,\netev_3}$ be the configuration defined in \refToExample{ex1}. We get
%  \\
%  \centerline{$
\[
\projnet{\q}{\set{\netev_1,\netev_2,\netev_3}} =
    \set{\rcvL\pp{\la_1},
      \concat{\rcvL\pp{\la_1}}{\sendL\pr{\la_2}}}
      \]
%$}
\end{example}

\begin{example}
Let  $\Nt=\pP{\pp}{\rcvL{\pr}\la;\rcvL{\q}{\la'}} \parN \pP{\q}{\sendL{\pp}{\la'}}$.  Then
%\\
%\centerline{$
\[ 
\GE(\Nt) =\nr{\set{\set{\locev\pp{\concat{\rcvL{\pr}\la}{\rcvL{\q}{\la'}}},\locev\q{\sendL{\pp}{\la'}}}}}=\emptyset
\]
%$}
Note that if we did not apply narrowing the set of events of 
$\ESN{\Nt}$ 
%$\Nt$
would be the singleton
$\set{\locev\pp{\concat{\rcvL{\pr}\la}{\rcvL{\q}{\la'}}},\locev\q{\sendL{\pp}{\la'}}}$,
which would also be a configuration $\ESet$ of $\ESN{\Nt}$. However,
$\projnet{\pp}{\netev}=\set{\concat{\rcvL{\pr}\la}{\rcvL{\q}{\la'}}}$
would not be configuration in $\ES(\PP)$, since it would contain the
event $\concat{\rcvL{\pr}\la}{\rcvL{\q}{\la'}}$ without its cause
$\rcvL{\pr}\la$. 
\end{example}

Narrowing ensures that each projection of the set of n-events of a
network  FES  on one of its participants is downward
surjective (according to~\refToDef{down-onto}):

\begin{lemma}[Downward surjectivity of projections]
\label{prop:down-onto1}
Let $\ESN{\Nt} = (\GE(\Nt), \precN_\Nt , \grr_\Nt)$ and $\ESP{\PP} =
(\ES(\PP), \precP_\PP , \gr_\PP)$ and $\pP{\pp}{\PP}\in\Nt$. Then the
partial function $\projnetfun{\pp}:\GE(\Nt) \rightarrow_* \ES(\PP)$ is
downward surjective.
\end{lemma}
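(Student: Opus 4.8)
The plan is to unfold the definition of downward surjectivity (\refToDef{down-onto}) and reduce it to Clause (\ref{cs2}) of the causal-set definition (\refToDef{cs}), leaning crucially on the fact that narrowing is a \emph{greatest} fixpoint. Concretely, I would take an $\netev \in \GE(\Nt)$ on which $\projnetfun{\pp}$ is defined, say $\projnet{\pp}{\netev} = \procev$, so that $\locev{\pp}{\procev} \in \netev$, together with a p-event $\procev'$ satisfying $\procev' \prec \procev$ in the FES reading of the PES $\ESP{\PP}$. Since this flow relation is contained in the strict causality $<$ of $\ESP{\PP}$ (by the remark following \refToDef{fes}), which by \refToDef{procevent-relations}(\ref{ila--esp2}) is exactly the proper-prefix order, $\procev'$ is a proper prefix of $\procev$ and $\procev' \in \ES(\PP)$. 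The goal is then to exhibit some $\netev' \in \GE(\Nt)$ with $\projnet{\pp}{\netev'} = \procev'$.

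The first real step is to observe that, because $\GE(\Nt) = \nr{\DE(\Nt)}$ is by \refToDef{def:narrowing} a fixpoint of $f_{\DE(\Nt)}$, the membership $\netev \in \GE(\Nt) = f_{\DE(\Nt)}(\GE(\Nt))$ unfolds directly to the existence of a causal set $E$ of $\netev$ with $E \subseteq \GE(\Nt)$. Next I would apply Clause (\ref{cs2}) of \refToDef{cs} to the located event $\locev{\pp}{\procev} \in \netev$ and the proper prefix $\procev' < \procev$: this yields $\occ{\locev{\pp}{\procev'}}{E}$, i.e. some $\netev' \in E$ with $\locev{\pp}{\procev'} \in \netev'$. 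Since $E \subseteq \GE(\Nt)$, this $\netev'$ lies in $\GE(\Nt)$ and satisfies $\projnet{\pp}{\netev'} = \procev'$, which is precisely the witness required by \refToDef{down-onto}.

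The point demanding care is the appeal to the greatest fixpoint. Were one to try to build the preimage by hand -- truncating the $\pp$-component of $\netev$ to $\procev'$ and pairing it with a suitable dual located event -- one would have to argue both that a matching partner exists and that the resulting n-event survives narrowing, and neither is guaranteed for an arbitrary (untyped) network. Routing the argument through the fixpoint equation avoids constructing $\netev'$ explicitly and delivers, for free, both that the witness projects to $\procev'$ via Clause (\ref{cs2}) and that it already belongs to $\GE(\Nt)$. I expect the remaining verifications -- that the flow on the process PES is the prefix order, and that $\occ{\locev{\pp}{\procev'}}{E}$ translates to $\projnet{\pp}{\netev'} = \procev'$ -- to be routine bookkeeping.
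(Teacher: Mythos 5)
Your proposal is correct and follows essentially the same route as the paper's own proof: both unfold the membership $\netev \in \GE(\Nt) = \nr{\DE(\Nt)}$ via the fixpoint property of narrowing to obtain a causal set $E \subseteq \GE(\Nt)$ of $\netev$, and then apply Clause (2) of the causal-set definition to the located event $\locev{\pp}{\procev}$ and its proper prefix to produce the required witness inside $E$. The only cosmetic difference is that you spell out why the flow relation of the process PES coincides with the strict prefix order and why an explicit construction of the preimage would be problematic, points the paper treats as immediate.
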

\begin{proof}
  As mentioned already in~\refToSection{sec:eventStr}, any PES $S =
  (E, \leq, \gr)$ may be viewed as a FES, with $\prec$ given by $<$
  (the strict ordering underlying $\leq$). Let $\procev \in \ES(\PP)$
  and $\netev \in \GE(\Nt)$.  Then the property we need to show is:
\[\procev <_\PP \projnet{\pp}{\netev} \implies \exists \netev' \in
\GE(\Nt)~.~\procev = \projnet{\pp}{\netev'} \] Note that $\procev
<_\PP \projnet{\pp}{\netev}$ implies $\projnet{\pp}{\netev} =
\concat{\procev}{\procev'}$ for some $\procev'$. Recall that $\GE(\Nt)
= \nr{\DE(\Nt)}$, where $\nr{\cdot}$ is the narrowing function
(\refToDef{def:narrowing}).\\  By definition of narrowing,
$\occ{\locev{\pp}{\concat{\procev}{\procev'}}}{\GE(\Nt)}$ implies 
that there is $E\subseteq\GE(\Nt)$ such that $E$ is a causal set of
$\netev$ in $\GE(\Nt)$.  Therefore
$\locev{\pp}{\concat{\procev}{\procev'}}\in\netev$
implies $\occ{\locev{\pp}{\procev}}{E}$ and so
$\occ{\locev{\pp}{\procev}}{\GE(\Nt)}$, which is what we wanted to show.
\end{proof}

\begin{theorem}[Projection preserves configurations]
\mylabel{config-preservation}
  If $\pP{\pp}{\PP}\in\Nt$, then $\ESet \in \Conf{\ESN{\Nt}}$ implies
  $\projnet{\pp}{\ESet} \in \Conf{\ESP{\PP}}$.
\end{theorem}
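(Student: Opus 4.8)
The plan is to verify directly that $\projnet{\pp}{\ESet}$ satisfies the two defining conditions of a PES configuration from \refToDef{configP}, namely downward-closure (Clause (\ref{configP1})) and conflict-freeness (Clause (\ref{configP2})). Finiteness is immediate, since $\ESet$ is a finite FES configuration and $\projnet{\pp}{\cdot}$ is a partial function, so its image is finite as well. Recall also that, since $\ESP{\PP}$ is a PES, there is no causality-cycle condition to discharge on the image.

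For conflict-freeness I would argue by contradiction. Suppose $\procev, \procev' \in \projnet{\pp}{\ESet}$ with $\procev \grr \procev'$. By \refToDef{def:proj-network-n-event} there are n-events $\netev, \netev' \in \ESet$ with $\locev{\pp}{\procev} \in \netev$ and $\locev{\pp}{\procev'} \in \netev'$. Since conflict on p-events is irreflexive (Clause (\ref{ila--esp3}) of \refToDef{procevent-relations} forces $\pi \neq \pi'$), we have $\procev \neq \procev'$, and because the projection at $\pp$ of a single n-event is unique this gives $\netev \neq \netev'$. But then Clause (\ref{c21}) of \refToDef{netevent-relations} yields $\netev \grr \netev'$, contradicting the conflict-freeness of the FES configuration $\ESet$ (Clause (\ref{configF2}) of \refToDef{configF}).

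For downward-closure, take $\procev \in \projnet{\pp}{\ESet}$, witnessed by some $\netev \in \ESet$ with $\locev{\pp}{\procev} \in \netev$, and suppose $\procev' < \procev$. The crux is to produce an n-event that lies in $\ESet$ itself (not merely in $\GE(\Nt)$) and projects to $\procev'$. For this I would invoke \refToLemma{csl}: since $\netev$ belongs to the configuration $\ESet$, it has a unique causal set $E$ with $E \subseteq \ESet$. Applying Clause (\ref{cs2}) of \refToDef{cs} to $\locev{\pp}{\procev} \in \netev$ and $\procev' < \procev$ then gives $\occ{\locev{\pp}{\procev'}}{E}$, i.e.\ some $\netev' \in E \subseteq \ESet$ with $\locev{\pp}{\procev'} \in \netev'$; hence $\procev' = \projnet{\pp}{\netev'} \in \projnet{\pp}{\ESet}$, as required.

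I expect the downward-closure step to be the main obstacle. The bare downward surjectivity of \refToLemma{prop:down-onto1} only guarantees that a witness for the cause $\procev'$ exists somewhere in $\GE(\Nt)$, which is insufficient at the level of an individual configuration. What makes the argument succeed is the localisation supplied by \refToLemma{csl}, which confines a complete set of direct causes of $\netev$ inside $\ESet$, so that $\procev'$ is recovered from an event genuinely present in $\ESet$. Once this localisation is secured, both PES configuration conditions follow without any further computation.
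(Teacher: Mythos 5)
Your proof is correct and follows essentially the same route as the paper's: conflict-freeness is immediate from Clause (\ref{c21}) of \refToDef{netevent-relations} together with Clause (\ref{configF2}) of \refToDef{configF}, and downward-closure is obtained exactly as in the paper by invoking \refToLemma{csl} to localise a causal set $E\subseteq\ESet$ of the witnessing n-event and then applying Clause (\ref{cs2}) of \refToDef{cs}. The only difference is presentational: you spell out the conflict-freeness argument and the finiteness remark, which the paper dismisses as clear.
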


\begin{proof}  Clearly, $\projnet{\pp}{\ESet}$ is
  conflict-free. We show that it is also downward-closed. 
 If $\netev\in\ESet$, by \refToLemma{csl} there is a
  causal set $E$ of $\netev$ such that $E \subseteq\ESet$. If
  $\locev\pp\procev\in\netev$ and $\procev'<\procev$, by \refToDef{cs}
  there is $\netev'\in E$ such that
  $\locev\pp{\procev'}\in\netev'$. We conclude that $\netev'\in\ESet$,
   and therefore $\procev' \in \projnet{\pp}{\ESet}$. 
\end{proof}

 The reader may wonder why our ES semantics for sessions is not
cast in categorical terms, like classical ES semantics for process
calculi~\cite{Win80, CZ97}, where process constructions arise as
categorical constructions (e.g., parallel composition arises
as a categorical product).  In fact, a categorical formulation of our
semantics would not be possible, due to our two-level syntax for
processes and networks, which does not allow networks to be further
composed in parallel. However, it should be clear that our
construction of a network FES from the process PESs of its
components is a form of parallel composition, and the properties
expressed by \refToLemma{prop:down-onto1} and
\refToTheorem{config-preservation} give some evidence that this
construction enjoys the properties usually required for a categorical
product of ESs.

% !TEX root =cdgS.tex

\section{Global Types}
\mylabel{sec:types}

This section is devoted to our type system for multiparty sessions.
Global types describe the communication protocols involving all
session participants.  Usually, global types are projected into local
types and typing rules are used to derive local types for
processes~\cite{CHY08,Coppo2016,CHY16}. The simplicity of our calculus
allows us to project directly global types into processes and to have
exactly one typing rule, see \refToFigure{fig:typing}.
This section is split  in two subsections.\\
The first subsection presents the projection of global types onto
processes, together with the proof of its soundness. Moreover it
introduces a \emph{boundedness} condition on global types, which is
crucial for our type system to ensure  progress.\\
The second subsection presents the type system, as well as an LTS for
global types.  Lastly, the properties of Subject Reduction, Session
Fidelity and Progress are shown.

\subsection{Well-formed Global Types}\label{wfgt}
Global types are built from choices among communications.

\begin{definition}[Global types]
\mylabel{def:GlobalTypes} 
Global types $\G$  are defined by: 
%\\[4pt]
%\centerline{ $
\[
\begin{array}{lll}
      \G~~& \coDefGr & 
    \gt\pp\q i I \la \G 
 %         ~\mid 
%          \G_1\parG\dots\parG\G_n 
%          ~\mid~\mu\ty.\G ~\mid ~\ty
      ~\mid ~\End
  \end{array}
  \]
%$
%}
%\noindent
where $I$ is not empty, $\la_h\not=\la_k\,$ for all $h,k\in I$, $h\neq k$, i.e. messages
in choices are all different.  
\end{definition}  

As for processes, $ \coDefGr$ indicates that global types are 
defined coinductively. Again, we focus on \emph{regular} terms. 
%coinductively defined \emph{regular} terms. 
%

Sequential composition ($;$) has higher precedence than choice
($\GlSy$).  %
When $I$ is a singleton, a choice $ \gt\pp\q i I \la \G$ will be
rendered simply as $\gtCom{\pp}{\q}{\la} \, ; \G$.  In writing
global types,
we omit the final $\End$.

Given a global type, the sequences of decorations of nodes and edges on
the path from the root to an edge in the tree of the global type are traces, in
the sense of \refToDef{traces}.  We denote by $\FPaths{\G}$ the set of
traces of $\G$. By definition, $\FPaths{\End} = \emptyset$ and each
trace in $\FPaths{\G}$ is non-empty.
\label{G-traces}

%%%%%%%%%%% Precedente definizione %%%%%%%%%%%%%%%%%%%%%%
% The set of {\em participants of a global type $\GP$}, $\participant{\GP}$, is the smallest set such that: 
% % \\
% %\centerline{$
% \[
% \begin{array}{l}
% \participant{\gt\pp\q i I \M \RG}=\set{\pp,\q}\cup\bigcup_{i\in I}\participant{\RG_i}\\
% \participant{\End}=\emptyset 
% \end{array}
% \]
% %$}
% \bcomila
% Come la definizione di $play(\GP)$ nel lavoro asincrono (FI p.~7), la
% definizione di $\participant{\GP}$ potrebbe essere semplificata nel
% modo seguente: 
% \ecomila
%%%%%%%%%%%%%%%%%%%%%%%%%%%%%%%%%%%%%%%%%%%%%%%%%%

The set of {\em participants of a global type $\GP$},
$\participant{\GP}$, 
is defined to be the union of the sets of participants of all its traces, namely\\
\centerline{$\participant{\GP} = \bigcup_{\comseq \in
    \FPaths{\G}} \participant{\comseq} $}  Note that the
regularity assumption ensures that the set of participants is finite.

\begin{figure}[t]
% \centerline{
% $
\[
 \begin{array}{c}
 \\[-1pt]
\proj\G{\pr} = \inact \text{ if }\pr\not\in\participant\G \\\\
\proj{(\gt\pp\q i I \M \RG)}\pr=\begin{cases}
  \inpP\pp{i}{I}{\M}{\proj{\RG_i}\pr}    & \text{if }\pr=\q, \\
    \oupP\q{i}{I}{\M}{\proj{\RG_i}\pr}    & \text{if }\pr=\pp, \\
      \proj{\RG_1}\pr  & \text{if } \pr\not\in\set{\pp, \q}\text{ and } \pr\in\participant{\RG_1} \text{ and }\\
      &\proj{\RG_i}\pr=\proj{\RG_1}\pr\text{ for all } i \in I
\end{cases}\\[13pt]
\end{array}
\]
%$
%}
\caption{Projection of  global types onto participants.} \mylabel{fig:proj}
\end{figure}
 
\bigskip

The projection of a global type onto participants is given in
\refToFigure{fig:proj}. As usual, projection is defined only when it
is defined on all participants. Because of the simplicity of our
calculus, the projection of a global type, when defined, is simply a
process.  The definition is coinductive, so a global type with an
infinite (regular) tree produces a process with a regular tree. The
projection of a choice type on the sender produces an output choice,
i.e. a process sending one of its possible messages to the receiver
and then acting according to the projection of the corresponding
branch.  Similarly for the projection on the receiver, which produces
a process which is an input choice.  Projection of a choice type on
the other participants is defined only if it produces the same process
for all the branches of the choice. This is a
standard condition for multiparty session types~\cite{CHY08}.\\
% We need to show that projection is well defined, i.e. that it is a
% partial function. The proof uses the  notion of  depth of
% participants in global types (\refToDef{depth}), see \refToLemma{pf}.

Our coinductive definition of global types is more permissive than that based
on the standard $\mu$-notation used in \cite{CHY08}, because it
allows more global types to be projected, as shown by  the
following example.
 \begin{example}\label{de}
   The global type $\G=
   \gtCom\pp\q{}:(\Seq{\la_1}{\gtCom\q\pr{\la_3}}~\GlSyB~\Seq{\la_2}{\G})$
   is projectable and
\begin{itemize}
  \item $\proj{\G}{\pp}=\PP=\sendL{\q}{\la_1} \oplus\Seq{\sendL{\q}{\la_2}}{\PP}$
  \item $\proj{\G}{\q}=\Q=\Seq{\rcvL{\pp}{\la_1}}{\sendL{\pr}{\la_3}} +\Seq{\rcvL{\pp}{\la_2}}{\Q}$
  \item $\proj{\G}{\pr}=\rcvL{\q}{\la_3}$
\end{itemize}
 On the other hand, the corresponding global type based on the
$\mu$-notation 
\[\G'=\mu\ty.  \,\gtCom\pp\q{}:  (\Seq{\la_1}{\gtCom\q\pr{\la_3}}~\GlSyB~\Seq{\la_2}{\ty})\]
 is not projectable because $\proj{\G'}{\pr}$
is not defined. 
\end{example}

% We want to ensure by typing that each participant wishing to
% communicate can eventually do so.  To this aim, we require global
% types to satisfy the property of boundedness, which states that for
% each participant, the depth of the first occurrence of that
% participant in all traces of any subtree of the type is bounded.  This
% is formalised by the following definition of {\em depth}.

 To achieve progress, we need to ensure that each network
participant occurs in every computation, whether finite or
infinite. This means that each type participant must occur in every
path of the tree of the type. Projectability already ensures that each
participant of a choice type occurs in all its branches. This implies
that if one branch of the choice gives rise to an infinite path,
either the participant occurs at some finite depth in this path, or
this path crosses infinitely many branching points in which the
participant occurs in all branches. In the latter case, since the
depth of the participant increases when crossing each branching point,
there is no bound on the depth of the participant over all paths of
the type. Hence, to ensure that all type participants occur in all 
paths, it is enough to require the existence of such bounds. This
motivates the following definition of depth and boundedness.  

 \begin{definition}[Depth  and boundedness]\label{depth}$\;$\\
 Let the two functions $\weight(\comseq,\pp)$ and $\weight(\G,\pp)$ be defined by:
   % Let $\weight(\concat\comseq\alpha,\pp)=\cardin\comseq + 1 $ if
   % $\pp\not\in\participant{\comseq}$ and $\pp\in\participant{\alpha}$.
   % \\ \bcomila Mi sembra che con questa definizione
   % $\weight(\comseq,\pp)$ sia indefinita per tutte le tracce in cui
   % non occorre $\pp$ e anche per tutte le tracce della forma
   % $\concat{\comseq}{\concat{\alpha}{\comseq'}}$ dove $\comseq$ e
   % $\alpha$ sono come sopra. E in questo
   % caso anche $\weight(\G,\pp)$ viene indefinita! 
   % Mi sembra che dovremmo usare la stessa definizione che in FI.
   %\\
   %\ecomila 
\[\weight(\comseq,\pp)=\begin{cases} n &\text{ if }\comseq =
\concat{\concat{\comseq_1}{\alpha}}{\comseq_2}\text { and }\cardin{\comseq_1} = n-1\text { and }\pp \notin
\participant{\comseq_1}\text { and }\pp \in \participant{\alpha}\\
  0   & \text{otherwise }
\end{cases}
\]
Then

$\weight(\G,\pp)=
       \sup  \{\weight(\comseq,\pp)\ |\ \comseq\in\FPaths{\G}\}$\\
    
% \[
%    \weight(\G,\pp)=\begin{cases}
%        \sup  \{\weight(\comseq,\pp)\ |\ \comseq\in\FPaths{\G}\}   & \text{if }\pp\in\participant{\G}\\
%        0   & \text{otherwise }\\
%      \end{cases}
%      \]
%      \bcomila Mi sembra che la seconda clausola di $\weight(\G,\pp)$
%      si potrebbe eliminare perché se $\pp\notin\participant{\G}$
%      allora $\weight(\comseq,\pp) =0$ per ogni
%      $\comseq\in\FPaths{\G}$ e quindi per la prima clausola abbiamo
%      già $\sup \{\weight(\comseq,\pp)\ |\
%      \comseq\in\FPaths{\G}\} =0$. \\
%      \ecomila
We say that a global type $\GP$ is {\em bounded}
   if $\weight(\G',\pp)$ is finite for all subtrees $\G'$ of $\GP$ and
   for all participants $\pp$.
 \end{definition}
If $\weight(\G,\pp)$ is finite, then there are  no paths in
the tree of $\G$ in which $\pp$ is delayed indefinitely. 
Note that if  
 $\weight(\G,\pp)$ is finite, $\G$ may have subtrees $\G'$ for which
 $\weight(\G',\pp)$ is  infinite 
%not finite 
as the following example shows.
 \begin{example}
Consider $\G'= \Seq{\gtCom\q\pr{\la}}{\G}$  where
 $\G$ is as defined in \refToExample{de}. 
 Then we have: 
% \\
% \centerline{$
\[
 \weight(\G',\pp)=2\quad\quad \weight(\G',\q)=1\quad\quad\weight(\G',\pr)=1
 \]
% $}
 whereas
% \\
% \centerline{$
\[
 \weight(\G,\pp)=1\quad\quad \weight(\G,\q)=1\quad\quad\weight(\G,\pr)=\infty
 \]
% $ }
  since\\ \centerline{$\FPaths{\G}=\{\underbrace{\Comm\pp{\la_2}\q\cdots \Comm\pp{\la_2}\q}_n\cdot\Comm\pp{\la_1}\q\cdot\Comm\q{\la_3}\pr\ |\ n\geq 0\}
 \cup\{\Comm\pp{\la_2}\q\cdots\Comm\pp{\la_2}\q\cdots\} $ }  and
  $\sup\{2,3,\ldots\} = \infty$. 
 \end{example}
 
The depths of the participants  in  $\G$
which are not participants of its root communication decrease in the immediate subtrees of $\G$.  

\begin{proposition}\label{dd}
If $\G=\gt\pp\q i I \la \G$ and $\pr\in\participant{\G}{\setminus}\set{\pp,\q}$, then $\weight(\G,\pr)>\weight(\G_i,\pr)$ for all $i\in I$.
\end{proposition}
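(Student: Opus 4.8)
The plan is to reduce the statement to an elementary computation on traces, exploiting the recursive shape of $\FPaths{\G}$. Since $\G = \gt\pp\q i I \la \G$, every trace in $\FPaths{\G}$ either is the single root communication $\Comm\pp{\la_i}\q$ for some $i \in I$, or has the form $\concat{\Comm\pp{\la_i}\q}{\comseq'}$ with $\comseq' \in \FPaths{\G_i}$; and conversely, for every $i \in I$ and every $\comseq' \in \FPaths{\G_i}$ the trace $\concat{\Comm\pp{\la_i}\q}{\comseq'}$ belongs to $\FPaths{\G}$. So I would first record this decomposition: the paths of $\G$ are exactly the paths of its immediate subtrees $\G_i$, each prefixed by the corresponding root communication.

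The crux is a \emph{depth-shift} observation. Because $\pr \notin \set{\pp, \q}$, we have $\pr \notin \participant{\Comm\pp{\la_i}\q}$, so prefixing the root communication to a trace creates no new occurrence of $\pr$ and merely pushes any existing occurrence one step to the right. Reading off \refToDef{depth}, this yields, for every $\comseq' \in \FPaths{\G_i}$ in which $\pr$ occurs, the identity $\weight(\concat{\Comm\pp{\la_i}\q}{\comseq'}, \pr) = \weight(\comseq', \pr) + 1$, whereas if $\pr$ does not occur in $\comseq'$ it does not occur in the prefixed trace either.

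Next I would take suprema branch by branch. Fix $i \in I$. If $\pr \in \participant{\G_i}$, then $\weight(\G_i, \pr)$ is the supremum over those traces of $\G_i$ that contain $\pr$ (the remaining traces contribute only $0$, which does not affect the supremum); prefixing each such trace and applying the depth-shift fact gives $\weight(\G, \pr) \geq \weight(\comseq', \pr) + 1$ for every such $\comseq'$, and hence $\weight(\G, \pr) \geq \weight(\G_i, \pr) + 1$. If instead $\pr \notin \participant{\G_i}$, then $\weight(\G_i, \pr) = 0$, while $\pr \in \participant{\G}$ forces some path of $\G$ to contain $\pr$, so $\weight(\G, \pr) \geq 1 > 0$. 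Either way the claimed strict inequality is within reach.

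The one step requiring genuine care --- and the main obstacle --- is converting $\weight(\G, \pr) \geq \weight(\G_i, \pr) + 1$ into a \emph{strict} inequality, since $\infty + 1 = \infty$. This is precisely where finiteness of $\weight(\G_i, \pr)$ is indispensable: as $\G_i$ is an immediate subtree of the bounded type $\G$, \refToDef{depth} gives $\weight(\G_i, \pr) < \infty$, and then $\weight(\G_i, \pr) + 1 > \weight(\G_i, \pr)$. That boundedness cannot be dropped is witnessed by the recursive type $\G$ of \refToExample{de}, whose branch $\G_2$ coincides with $\G$ itself: there $\weight(\G, \pr) = \weight(\G_2, \pr) = \infty$ and the strict inequality fails.
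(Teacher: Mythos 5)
Your argument takes the same route as the paper's: the paper's entire proof is the one-line observation that every trace in $\FPaths{\G}$ has the shape $\concat{\Comm\pp{\la_i}\q}{\comseq'}$ with $i\in I$ and $\comseq'\in\FPaths{\G_i}$, leaving the depth-shift identity and the supremum computation to the reader; you have simply written those steps out. The one substantive point where you go beyond the paper is your last paragraph, and it is a fair one: the strict inequality really does need $\weight(\G_i,\pr)$ to be finite, the proposition as stated carries no boundedness hypothesis, and your counterexample is correct --- for the type $\G$ of \refToExample{de} the branch $\G_2$ is $\G$ itself and $\weight(\G,\pr)=\weight(\G_2,\pr)=\infty$, so the claimed inequality fails. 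Since the proposition is only ever applied to bounded types (in the proof of \refToLemma{pf}, and in the Progress theorem where the induction is on the finite value $\weight(\G,\pp)$), your reading is the intended one; but strictly speaking either boundedness should be added to the hypotheses or the conclusion should be weakened to $\weight(\G,\pr)\geq\weight(\G_i,\pr)+1$ in $\mathbb{N}\cup\set{\infty}$, which is all the decomposition argument actually delivers.
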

\begin{proof}
Each trace $\comseq\in\FPaths{\G}$ is of the shape $\concat{\Comm\pp{\la_i}\q}{\comseq'}$ where $i\in I$ and $\comseq'\in\FPaths{\G_i}$. 
\end{proof}

% \bcomila Nella riga precedente ho sostituito $\Comm\pp\q{\la_i}$ con
% $\Comm\pp{\la_i}\q$. Verificare che non abbiamo fatto questo typo
% anche nel seguito. \ecomila
We can now show that the definition of projection given in
\refToFigure{fig:proj} is sound  for bounded global types. 
%The proof is given in the Appendix.
% We need to show that projection is well defined, i.e. that it is a
% partial function. The proof uses the  notion of  depth of
% participants in global types (\refToDef{depth}), see \refToLemma{pf}.
 
\begin{lemma}\label{pf}
If $\GP$ is bounded, then $\proj\GP\pr$ is a partial function for all $\pr$.  
\end{lemma}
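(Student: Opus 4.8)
The plan is to show that the relation defined by the clauses of \refToFigure{fig:proj} is single-valued, i.e.\ that for every $\G$ and $\pr$ at most one process $\PP$ satisfies $\proj\G\pr=\PP$. The only clause that threatens single-valuedness is the third, ``pass-through'' clause, the one for a participant $\pr\notin\set{\pp,\q}$: it defines $\proj\G\pr$ in terms of $\proj{\RG_1}\pr$ without emitting any process constructor, so it is circular and a priori could be unfolded forever. The first two clauses, for $\pr=\q$ and $\pr=\pp$, are productive (they emit an input, resp.\ output, prefix), and the clause for $\pr\notin\participant\G$ fixes $\proj\G\pr=\inact$ outright. Boundedness is exactly what is needed to keep the non-productive clause from being applied indefinitely.

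Concretely, I would argue by coinduction on the (regular) process tree: it suffices to prove the \emph{one-step} property, namely that whenever $\proj\G\pr=\PP$ and $\proj\G\pr=\Q$, the processes $\PP$ and $\Q$ have the same head constructor and their immediate continuations are again projections of a common global type on $\pr$. The set of such pairs $(\PP,\Q)$ is then a bisimulation, and since equality of regular processes is the greatest bisimulation, this yields $\PP=\Q$.

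The one-step property I would establish by well-founded induction on $\weight(\G,\pr)$, which is a natural number precisely because $\G$ is bounded (and so are all its subtrees). If $\weight(\G,\pr)=0$ then $\pr\notin\participant\G$ and both $\PP$ and $\Q$ equal $\inact$. If $\G=\gt\pp\q i I \la \RG$ with $\pr\in\set{\pp,\q}$, then the relevant productive clause fixes the head of both $\PP$ and $\Q$ to the \emph{same} input or output prefix (same $I$ and same $\la_i$), and each continuation is $\proj{\RG_i}\pr$, so the corresponding continuation pair is again witnessed by the common global type $\RG_i$ and the coinductive step is discharged. The crux is the remaining case, $\pr\in\participant\G\setminus\set{\pp,\q}$: here $\PP=\Q=\proj{\RG_1}\pr$, and by \refToProp{dd} we have $\weight(\RG_1,\pr)<\weight(\G,\pr)$, so the induction hypothesis applies to $\RG_1$ and delivers the same head together with matching, $\RR$-related continuations.

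The main obstacle, and the only place where boundedness is genuinely used, is this last case. A priori nothing pins down the head of $\proj\G\pr$, because the pass-through clause merely delegates to a subtree; \refToProp{dd} turns every such delegation into a strict decrease of $\weight(\cdot,\pr)$, and boundedness makes this quantity a well-founded measure, so only finitely many pass-through steps can intervene before a productive clause or the $\inact$ clause fires. I would also remark that the ``all branches agree'' side condition of the third clause plays no part in single-valuedness — it only governs whether $\proj\G\pr$ is \emph{defined} — so it need not be invoked in this argument.
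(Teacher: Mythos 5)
Your proposal is correct and follows essentially the same route as the paper: the paper also reads the clauses of Figure~\ref{fig:proj} as defining a relation, characterises process equality coinductively as the largest relation closed under matching of head constructors (your bisimulation), and shows that the pairs of projections of a common global type form such a relation by induction on $\weight(\G,\pr)$, with the pass-through clause handled via the strict decrease of depth (Proposition~\ref{dd}) guaranteed by boundedness. The only differences are cosmetic (the paper's base case is $d=1$ rather than $d=0$, and it phrases the bisimulation as the relation $\RR_\pr$).
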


Boundedness and projectability single out the global types we want to use in
our type system.

 \begin{definition} [Well-formed global types] \label{wfs}We say that  the global type 
 $\GP$ is {\em well formed} 
if $\GP$ is bounded 
and $\proj{\GP}{\pp}$ is defined for all  $\pp$.
 \end{definition}
Clearly it is sufficient to check that
  $\proj{\GP}{\pp}$ is defined  for all $\pp\in\participant{\GP}$, since otherwise 
 $\proj{\GP}{\pp}=\inact$.

 \subsection{Type System}
 \begin{figure}[h]
% \centerline{
% $%\;\\
{\small \[
 \begin{array}{c}
\inact\subt\inact~\rulename{ $\subt$ -$\inact$}\quad \cSinferrule{\PP_i\subt\Q_i ~~~~~i\in I}{\inp\pp{i}{I\cup J}{\M}{\PP}\subt \inp\pp{i}{I}{\M}{\Q}}{\rulename{ $\subt$-In}}
\quad \cSinferrule{\PP_i\subt\Q_i ~~~~~i\in
  I}{\oupTx\pp{i}{I}{\M}{\PP}\subt \oup\pp{i}{I}{\M}{\Q}}{\rulename{
    $\subt$-Out}}\\ \\
\inferrule{\PP_i\subt\proj\GP{\pp_i}~~~~~i\in I~~~~~~\participant\GP\subseteq \set{\pp_i\mid i\in I}}
{\derN{\PiB_{i\in I}\pP{\pp_i}{\PP_i}}\GP} ~\rulename{Net}
\\[13pt]
\end{array}
\]}
%$}
\caption{Preorder on processes and  network typing rule.} \mylabel{fig:typing}
\end{figure}
The definition of  well-typed network is given in \refToFigure{fig:typing}.
 We first define a preorder on processes, $\PP\leq\Q$, %saying when a 
meaning that {\em process $\PP$ can be used where we expect process $\Q$}.  
More precisely, $\PP\leq\Q$ if either $\PP$ is equal to $\Q$,   or we
are in one of two situations: 
either both $\PP$ and $\Q$ are output processes with the same
receiver and choice of   messages,  and their continuations after the send
are two processes $\PP'$ and $\Q'$ such that $\PP'\leq\Q'$; or they
are both input processes with the same sender and choice of messages,
and $\PP$ may receive more  messages  than $\Q$ (and thus have more
behaviours) but whenever it receives the same message as $\Q$ their
continuations are two processes $\PP'$ and $\Q'$ such that
$\PP'\leq\Q'$.
 The rules are interpreted coinductively, since the processes may have
infinite (regular) trees.\\ 
 A network is well typed
if all its participants have associated processes that behave as
specified by the projections of a global type.
In Rule \rulename{Net}, the condition
$\participant{\GP}\subseteq\set{\pp_i\mid i\in I}$ ensures that all
participants of the  global type
appear in the network.  Moreover it permits additional participants
that do not appear in the global type,
allowing the typing of sessions containing $\pP{\pp}{\inact}$ for a
fresh $\pp$ --- a property required to guarantee invariance of types
under structural congruence of networks.

\begin{example}\mylabel{exg}
The  first network of   \refToExample{ex:rec2} and the network of \refToExample{ex1}  
%%\ref{ex3-variant}  and \ref{ex-prec-gr}  
can be typed respectively by
%\\ \centerline{$
\[
\begin{array}{lll}
\G&=&   \gtCom\pp\q{}:(\Seq\la\G~\GlSyB~\la') \\
\G'&=& \Seq{\Seq{\gtCom\pp\q{\la_1}}{\gtCom\q\pr{\la_2}}}{\gtCom\pr\ps{\la_3}}\\
\end{array}
\]
%$}
%\noindent
%The second network of \refToExample{ex:rec2} cannot be typed
%because its candidate global type: 
%%\\
%%\centerline{$
%\[
%\G'' = \gtCom\pp\q{}:(\Seq\la{\G''}~\GlSyB~\Seq{\Seq{\la'}{\gtCom\pp\pr{\la}}}{\gtCom\pr\ps{\la'}} ) 
%\]
%%$}
%is not bounded, given that $\weight(\G'',\pr)$ and $\weight(\G'',\ps)$
%are not finite.\\
%The network of \refToExample{ex2} cannot be typed because 
% we cannot define a global type whose projections are greater than or equal to  the
%processes associated with  the network  participants.  
\end{example}

\begin{figure}
 %\centerline{$%\;\\
 \[
\begin{array}{c}
 \gt\pp\q i I \la \G \stackred{\Comm\pp{\la_j}\q}\G_j~~~~~~j\in I{~~~\rulename{Ecomm}}
 \\ \\
 \prooftree
 \G_i\stackred\alpha\G_i' \quad 
 \text{ for all }
i \in I \quad\participant{\alpha}\cap\set{\pp,\q}=\emptyset
 \justifies
 \gt\pp\q i I \la \G \stackred\alpha\gt\pp\q i I \la {\G'} 
 \using ~~~\rulename{Icomm}
  \endprooftree\\ \\
\end{array}
%$}
\]
\caption{
LTS for global types.
}\mylabel{ltgt}
\end{figure}

It is handy to define the LTS for  global types
given in \refToFigure{ltgt}.  Rule \rulename{Icomm} is justified by
the fact that in a projectable global type
$\gt\pp\q i I \la \G$, the behaviours of the participants different
from $\pp$ and $\q$ are the same in all branches, and hence they are
independent from the choice and may be executed before it.  This LTS
respects well-formedness of global types,  as shown in Proposition
\ref{prop:wfs}.

We start with a lemma %, proved in the Appendix,  
relating the projections of a well-formed global type
with its  transitions.

\begin{lemma}\label{keysr}  Let $\G$ be a well-formed global type. 
\begin{enumerate}
\item\label{keysr1}
If $\proj\G\pp=\oup\q{i}{I}{\M}{\PP}$ and $\proj\G\q=\inp\pp{j}{J}{\M'}{\Q}$, then $I=J$, $\M_i=\M_i'$, $\G\stackred{\Comm\pp{\la_i}\q}\G_i$, $\proj{\G_i}\pp=\PP_i$ and $\proj{\G_i}\q=\Q_i$ for all $i\in I$.
\item\label{keysr2} If $\G\stackred{\Comm\pp{\la}\q}\G'$, then
  $\proj\G\pp=\oup\q{i}{I}{\M}{\PP}$,
  $\proj\G\q=\inp\pp{i}{I}{\M}{\Q}$, where $\la_i=\la$
for some $i\in I$, and 
$\proj{\G'}\pr=\proj\G\pr$ for all $\pr\not\in\set{\pp,\q}$.
\end{enumerate}
\end{lemma}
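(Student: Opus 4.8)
The plan is to prove the two statements separately, each by an induction that is controlled by boundedness. For the first statement I would argue by well-founded induction on $\weight(\G,\pp)$; this is a finite natural number because $\G$ is bounded, and it is at least $1$ since $\proj\G\pp$ is an output process, so that $\pp\in\participant\G$ (and likewise $\q\in\participant\G$). In particular $\G\neq\End$, so we may write $\G=\gt{\ps_1}{\ps_2}{k}{K}{\nu}{\G}$ and analyse the root participants $\ps_1,\ps_2$ against $\pp,\q$ using \refToFigure{fig:proj}. The configurations $\pp=\ps_2$ and $\q=\ps_1$ are impossible, as they would make $\proj\G\pp$ an input (resp.\ $\proj\G\q$ an output), contradicting the hypotheses. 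If $\pp=\ps_1$, then $\proj\G\pp$ is an output towards $\ps_2$, so equating it with $\oup\q iI\M\PP$ forces $\ps_2=\q$; symmetrically $\q=\ps_2$ forces $\ps_1=\pp$. In either case the root is $\pp\to\q$ and the claim is immediate: reading $\proj\G\pp$ and $\proj\G\q$ off \refToFigure{fig:proj} and equating them with the given output and input processes yields $I=J$, $\M_i=\M'_i$, $\PP_i=\proj{\G_i}\pp$ and $\Q_i=\proj{\G_i}\q$, while $\G\stackred{\Comm\pp{\la_i}\q}\G_i$ is an instance of $\rulename{Ecomm}$.

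In the only remaining case, $\pp,\q\notin\set{\ps_1,\ps_2}$, and the third-party projection clause gives $\proj{\G_k}\pp=\proj\G\pp=\oup\q iI\M\PP$ and $\proj{\G_k}\q=\proj\G\q=\inp\pp jJ{\M'}\Q$ for every $k\in K$. Each $\G_k$ is again well formed (boundedness and projectability are inherited by subtrees) and satisfies $\weight(\G_k,\pp)<\weight(\G,\pp)$ by \refToProp{dd}, so the induction hypothesis applies to it and returns $I=J$, $\M_i=\M'_i$, and for each $i\in I$ a transition $\G_k\stackred{\Comm\pp{\la_i}\q}\G_k^{(i)}$ with $\proj{\G_k^{(i)}}\pp=\PP_i$ and $\proj{\G_k^{(i)}}\q=\Q_i$. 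Since these transitions exist on every branch, carry the same label, and that label has participants disjoint from $\set{\ps_1,\ps_2}$, rule $\rulename{Icomm}$ assembles them into $\G\stackred{\Comm\pp{\la_i}\q}\G_i$, where $\G_i$ has root $\ps_1\to\ps_2$ and $k$-th branch $\G_k^{(i)}$. A last use of the third-party projection clause (all branches of $\G_i$ now project onto $\pp$, resp.\ $\q$, to the common value $\PP_i$, resp.\ $\Q_i$, with the $\inact$ subcase handled separately) gives $\proj{\G_i}\pp=\PP_i$ and $\proj{\G_i}\q=\Q_i$.

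For the second statement I would induct on the derivation of $\G\stackred{\Comm\pp\la\q}\G'$, splitting on the last rule. If it is $\rulename{Ecomm}$, the root of $\G$ is $\pp\to\q$, the fired branch $j$ satisfies $\la=\la_j$, and $\G'=\G_j$; unfolding \refToFigure{fig:proj} then exhibits $\proj\G\pp$ and $\proj\G\q$ in the required output/input shape sharing the index set, while for $\pr\notin\set{\pp,\q}$ projectability gives $\proj{\G'}\pr=\proj{\G_1}\pr=\proj\G\pr$ (both sides being $\inact$ when $\pr\notin\participant\G$). If it is $\rulename{Icomm}$, the root is some $\ps_1\to\ps_2$ with $\set{\pp,\q}\cap\set{\ps_1,\ps_2}=\emptyset$ (the label's participants being $\set{\pp,\q}$), and on each branch $\G_k\stackred{\Comm\pp\la\q}\G'_k$ with $\G'_k$ the corresponding branch of $\G'$. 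The induction hypothesis gives the output/input shapes of $\proj{\G_k}\pp$ and $\proj{\G_k}\q$; since projectability of $\G$ forces these to coincide for all $k$, they fix a single index set and messages, which are precisely $\proj\G\pp$ and $\proj\G\q$. The equality $\proj{\G'}\pr=\proj\G\pr$ for $\pr\notin\set{\pp,\q}$ is then verified per projection clause: for $\pr\in\set{\ps_1,\ps_2}$ it reduces to the branchwise equalities $\proj{\G'_k}\pr=\proj{\G_k}\pr$ supplied by the induction hypothesis, and for $\pr\notin\set{\ps_1,\ps_2,\pp,\q}$ it follows by combining those with projectability of $\G$.

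The routine part throughout is the direct unfolding of \refToFigure{fig:proj} in the two $\rulename{Ecomm}$/base cases. The delicate points I would treat carefully are, first, the well-foundedness of the measure in the first statement, where boundedness together with \refToProp{dd} is exactly what guarantees that $\pp$ (and $\q$) cannot be postponed forever and that the measure strictly decreases when the root does not involve them; and second, the $\rulename{Icomm}$ cases in both statements, where the transition must be decomposed into, or reassembled from, branchwise transitions and projectability — the requirement that projection onto a third party agree on all branches — is invoked to merge the branchwise data into a single output/input process. I expect the $\rulename{Icomm}$ reassembly in the first statement to be the main obstacle, since it combines the inductive descent, the side condition on the participants of the label, and the merging of projections all at once.
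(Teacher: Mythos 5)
Your proposal is correct and follows essentially the same route as the paper: part (1) by induction on $\weight(\G,\pp)$ (finite by boundedness, strictly decreasing on branches by \refToProp{dd}), with the root either being $\pp\to\q$ (handled by \rulename{Ecomm} and direct unfolding of \refToFigure{fig:proj}) or disjoint from $\set{\pp,\q}$ (handled by applying the induction hypothesis branchwise and reassembling via \rulename{Icomm} and the third-party projection clause); and part (2) by rule induction on the transition, using projectability to merge the branchwise data in the \rulename{Icomm} case. The only cosmetic difference is that you make explicit the case analysis showing the root participants must either both coincide with $\set{\pp,\q}$ or be disjoint from it, which the paper leaves implicit.
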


\begin{proposition}\label{prop:wfs}
If $\G$ is a well-formed global type and $\G\stackred{\Comm\pp{\la}\q}\G'$, then $\G'$ is a well-formed global type.
\end{proposition}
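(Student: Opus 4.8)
The plan is to argue by induction on the derivation of the transition $\G\stackred{\Comm\pp\la\q}\G'$, distinguishing the last rule used, which is either \rulename{Ecomm} or \rulename{Icomm}. In each case I must re-establish both conjuncts of well-formedness in \refToDef{wfs}: that $\G'$ is bounded, and that $\proj{\G'}{\pt}$ is defined for every participant $\pt$. A fact I will use repeatedly is that \emph{every subtree of a well-formed global type is again well-formed}: boundedness is inherited because the subtrees of a subtree of $\G$ are themselves subtrees of $\G$, and projectability is inherited because, by the clauses of \refToFigure{fig:proj}, definedness of $\proj{\G}{\pt}$ presupposes definedness of $\proj{\G_i}{\pt}$ in every branch $\G_i$.

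In the base case \rulename{Ecomm}, we have $\G=\gt\pp\q i I \la \G$ and $\G'=\G_j$ with $\la=\la_j$ for some $j\in I$. Then $\G'$ is an immediate subtree of $\G$, hence well-formed by the observation above, and there is nothing more to do.

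The inductive case \rulename{Icomm} is the substantive one. Here $\G=\gt\pr\ps i I \M \G$ with $\set{\pp,\q}\cap\set{\pr,\ps}=\emptyset$, each branch performs $\G_i\stackred{\Comm\pp\la\q}\G_i'$, and $\G'=\gt\pr\ps i I \M {\G'}$. Each $\G_i$ is a subtree of $\G$, hence well-formed, so the induction hypothesis yields that each $\G_i'$ is well-formed. For boundedness of $\G'$, note that the finitely many $\G_i'$ are bounded, so every proper subtree of $\G'$ already has finite depth; and for $\G'$ itself, if $\pt\in\set{\pr,\ps}$ then $\weight(\G',\pt)=1$, while if $\pt\notin\set{\pr,\ps}$ the first occurrence of $\pt$ along any path of $\G'$ is merely one step deeper than in the relevant $\G_i'$, so $\weight(\G',\pt)\leq 1+\max_{i\in I}\weight(\G_i',\pt)$, which is finite since $I$ is finite and each $\weight(\G_i',\pt)$ is finite.

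It remains to establish projectability of $\G'$, which is the main obstacle. For $\pt\notin\set{\pp,\q}$ (in particular for $\pt\in\set{\pr,\ps}$), \refToLemma{keysr}(\ref{keysr2}) gives $\proj{\G_i'}{\pt}=\proj{\G_i}{\pt}$, so these are defined and, for $\pt\notin\set{\pr,\ps}$, agree across all $i$ because $\G$ is projectable; the root clauses of \refToFigure{fig:proj} then supply $\proj{\G'}{\pr}$ and $\proj{\G'}{\ps}$ as matching output/input choices. The delicate point is $\pt\in\set{\pp,\q}$: since $\pp,\q\notin\set{\pr,\ps}$, projectability of $\G$ forces $\proj{\G_i}{\pp}$ to be one and the same output process $\oup\q k K \la \PP$ for all $i$ (symmetrically $\proj{\G_i}{\q}=\inp\pp k K \la \Q$). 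By \refToLemma{keysr}(\ref{keysr2}) this choice offers $\la$, say $\la=\la_{k_0}$, and by \refToLemma{keysr}(\ref{keysr1}) the step out of $\G_i$ labelled $\Comm\pp\la\q$ leads to a type whose projection on $\pp$ is the continuation $\PP_{k_0}$. Since the LTS of \refToFigure{ltgt} is deterministic — source and label determine the target, because \rulename{Ecomm} and \rulename{Icomm} impose mutually exclusive conditions on how $\participant\alpha$ meets the root pair, and each yields a unique successor (by a straightforward induction for \rulename{Icomm}) — this target is exactly $\G_i'$, so $\proj{\G_i'}{\pp}=\PP_{k_0}$ independently of $i$, and symmetrically $\proj{\G_i'}{\q}=\Q_{k_0}$. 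Hence the branch projections on $\pp$ and on $\q$ coincide, the corresponding clause of \refToFigure{fig:proj} applies, and $\proj{\G'}{\pp}$ and $\proj{\G'}{\q}$ are defined; thus $\G'$ is projectable, completing the induction. The hard part is precisely this last step, where determinism of the global-type LTS together with \refToLemma{keysr} is needed to see that the \emph{silent} participants $\pp,\q$ of the root choice retain uniform projections after the step.
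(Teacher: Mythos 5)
Your proof is correct and follows essentially the same route as the paper's: projectability of $\G'$ is obtained from \refToLemma{keysr}(\ref{keysr1}) and (\ref{keysr2}), and boundedness by induction on the transition rules of \refToFigure{ltgt}. The paper's version is only two lines long---it applies \refToLemma{keysr} directly to the top-level transition (leaving implicit the determinism of the LTS, which you rightly make explicit in order to identify the target of \refToLemma{keysr}(\ref{keysr1}) with $\G'$) and declares the boundedness induction ``easy''---so your write-up essentially supplies the details the paper omits.
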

\begin{proof}
If  $\G\stackred{\Comm\pp{\la}\q}\G'$, by  \refToLemma{keysr}(\ref{keysr1}) and (\ref{keysr2})
$\proj{\G'}\pr$ is defined for all $\pr$. 
The proof that $\weight(\G'',\pr)$ for all $\pr$ and $\G''$ subtree
of $\G'$ is easy by induction on the  transition   rules
of \refToFigure{ltgt}. 
\end{proof}
\noindent
Given the previous proposition, we will focus on {\bf well-formed global types  from now on.} \\

 We end this section with the expected proofs of Subject Reduction, Session Fidelity \cite{CHY08,CHY16}  and Progress \cite{Coppo2016,Padovani15}, which use Inversion and Canonical Form lemmas.

\begin{lemma}[Inversion]\mylabel{lemma:InvSync}
If $\derN{\Nt}{\G}$, then $\PP\subt\proj\G{\pp}$ for all  $\pP{\pp}{\PP}\in\Nt$. \end{lemma}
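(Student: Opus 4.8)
The plan is to perform a straightforward inversion of the typing rule, exploiting the fact that \rulename{Net} is the \emph{only} rule of the type system (see \refToFigure{fig:typing}). Hence any derivation of $\derN{\Nt}{\G}$ must end with an application of \rulename{Net}, so that $\Nt$ is exactly of the form $\PiB_{i\in I}\pP{\pp_i}{\PP_i}$ and the premises of that rule give at once $\PP_i\subt\proj\G{\pp_i}$ for every $i\in I$ (together with the side condition $\participant\G\subseteq\set{\pp_i\mid i\in I}$, which is not needed here). Thus the whole content of the lemma is already recorded in the premises of the last rule; the only work left is to match the participant $\pp$ of an arbitrary component $\pP{\pp}{\PP}\in\Nt$ with the corresponding index in $I$, restricting attention to the indices $i$ with $\PP_i\neq\inact$.

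To carry out this matching, I would unfold the definition of $\pP{\pp}{\PP}\in\Nt$, namely $\Nt\equiv\pP{\pp}{\PP}\parN\Nt'$ for some $\Nt'$ with $\PP\neq\inact$. Since $\parN$ is associative and commutative with neutral element $\pP{\pp}{\inact}$, and since the participants occurring in $\Nt$ are pairwise distinct, the non-trivial components of $\Nt$ (those whose process differs from $\inact$) are uniquely determined up to $\equiv$. Consequently there is a unique index $j\in I$ with $\pp_j=\pp$ and $\PP_j=\PP$. Combining this with the premise for $j$ yields $\PP=\PP_j\subt\proj\G{\pp_j}=\proj\G{\pp}$, as required.

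The proof is essentially immediate once the single-rule structure of the type system is used, so there is no genuinely hard step. The only point that requires care is the bookkeeping in the second paragraph: reconciling the congruence-based membership relation $\in$ with the indexed parallel product $\PiB_{i\in I}$ appearing in the conclusion of \rulename{Net}. This is handled by the structural congruence $\equiv$ on networks (associativity, commutativity, and the neutral element $\pP{\pp}{\inact}$) together with the requirement that all participants in a network be distinct, which jointly guarantee that the index $j$ above is well defined and unique.
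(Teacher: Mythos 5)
Your proof is correct and is exactly the argument the paper leaves implicit: the lemma is stated without proof precisely because \rulename{Net} is the only typing rule, so inversion of the last (and only) rule application immediately yields the premises $\PP_i\subt\proj\G{\pp_i}$, and the bookkeeping you do with structural congruence and distinctness of participants to identify the index $j$ is the right way to reconcile $\pP{\pp}{\PP}\in\Nt$ with the indexed product.
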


\begin{lemma}[Canonical Form]\mylabel{lemma:CanSync}
If  $\derN\Nt\G$ and
$\pp\in\participant\G$, then $\pP{\pp}{\PP}\in\Nt$ and $\PP\subt\proj\G{\pp}$.
\end{lemma}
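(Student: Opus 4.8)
The plan is to read off the shape of the typing derivation and then supply the one missing ingredient, namely that the projection of $\G$ onto a genuine participant is never $\inact$. Since \rulename{Net} is the only typing rule, from $\derN\Nt\G$ we may invert the derivation to obtain $\Nt \equiv \PiB_{i\in I}\pP{\pp_i}{\PP_i}$ together with $\PP_i \subt \proj\G{\pp_i}$ for all $i \in I$ and $\participant\G \subseteq \set{\pp_i \mid i \in I}$. Given $\pp \in \participant\G$, this inclusion yields an index $k \in I$ with $\pp = \pp_k$, and hence $\PP_k \subt \proj\G\pp$ (this last subtyping can also be quoted directly from \refToLemma{lemma:InvSync}). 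What remains is to argue that $\PP_k$ is a genuine component, i.e. $\PP_k \neq \inact$, so that $\pP\pp{\PP_k} \in \Nt$ holds in the intended sense, which by the convention fixed right after the definition of networks requires the process to be different from $\inact$.

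The first key step is the auxiliary claim: if $\G$ is well formed and $\pp \in \participant\G$, then $\proj\G\pp \neq \inact$. I would prove this by induction on the depth $\weight(\G,\pp)$, which is a well-defined natural number because $\G$ is bounded (\refToDef{wfs}) and $\pp \in \participant\G$ forces $\weight(\G,\pp) \geq 1$. If $\pp$ occurs in the root communication of $\G$, i.e. $\pp$ is the sender or the receiver, then by \refToFigure{fig:proj} the projection $\proj\G\pp$ is an output or an input process, hence $\neq \inact$. Otherwise, writing $\G$ as a choice $\gt\ps\pt i I \la \RG$ with $\pp \notin \set{\ps,\pt}$, the facts that $\proj\G\pp$ is defined and $\pp \in \participant\G$ rule out the first projection clause and force the third clause of \refToFigure{fig:proj} to apply; this gives $\pp \in \participant{\RG_1}$ and $\proj\G\pp = \proj{\RG_1}\pp$. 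By \refToProp{dd} we have $\weight(\RG_1,\pp) < \weight(\G,\pp)$, and $\RG_1$ is again well formed, since boundedness and projectability are inherited by immediate subtrees (the latter because definedness of $\proj\G\pr$ for all $\pr$ requires definedness of each $\proj{\RG_i}\pr$). The induction hypothesis then yields $\proj{\RG_1}\pp \neq \inact$, whence $\proj\G\pp \neq \inact$.

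The second key step is to transfer non-triviality across subtyping: the three subtyping rules of \refToFigure{fig:typing} have pairwise distinct conclusion shapes (only $\inact\subt\inact$, only an output process below an output process, only an input process below an input process), so the shape of the right-hand side determines the shape of the left-hand side. In particular $\PP_k \subt \proj\G\pp$ with $\proj\G\pp \neq \inact$ forces $\PP_k \neq \inact$. Combining the two steps we obtain $\pP\pp{\PP_k} \in \Nt$ and $\PP_k \subt \proj\G\pp$, which is exactly the statement.

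I expect the induction on depth in the auxiliary claim to be the only real obstacle. The delicate points are to ensure that $\weight(\G,\pp)$ genuinely decreases along the recursive third projection clause (this is precisely what \refToProp{dd} provides) and that well-formedness is inherited by the immediate subtree $\RG_1$, so that the induction hypothesis applies; here one must use both halves of \refToDef{wfs}, boundedness for the decreasing measure and projectability to guarantee the third clause is the one that fires. Everything else is routine bookkeeping on the typing and subtyping rules.
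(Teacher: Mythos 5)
Your proof is correct. The paper states this lemma without proof (treating it, together with Inversion, as routine), so there is no official argument to compare against; but you have correctly isolated the one non-obvious ingredient, namely that $\pp\in\participant\G$ and well-formedness force $\proj\G\pp\neq\inact$, which via the shape-determinacy of the rules for $\subt$ yields $\PP\neq\inact$ and hence $\pP{\pp}{\PP}\in\Nt$ in the sense of the paper's convention. Your induction on $\weight(\G,\pp)$ is sound: boundedness makes the measure finite, \refToProp{dd} makes it decrease, and projectability of $\G$ guarantees both that the third clause of \refToFigure{fig:proj} fires for a non-root participant and that the immediate subtree is again well formed, so the induction hypothesis applies.
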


\begin{theorem}[Subject Reduction]\mylabel{sr}
If $\derN\Nt\G$ and $\Nt\stackred\alpha\Nt'$, then $\G\stackred\alpha\G'$ and \mbox{$\derN{\Nt'}{\G'}$.}
\end{theorem}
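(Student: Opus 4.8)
The plan is to decompose the network reduction with rule \rulename{Com}, read off the shape of the two relevant projections through Inversion and the subtyping rules, and then reassemble a \rulename{Net} derivation for $\Nt'$ using \refToLemma{keysr}. Since $\Nt\stackred\alpha\Nt'$ can only be derived by \rulename{Com}, I would first fix $\alpha=\Comm\pp{\la_k}\q$ and write $\Nt\equiv\pP{\pp}{\oup\q{i}{I}{\la}{\PP}}\parN\pP{\q}{\inp\pp{j}{J}{\la}{\Q}}\parN\Nt_0$ with $k\in I\cap J$, so that $\Nt'\equiv\pP{\pp}{\PP_k}\parN\pP{\q}{\Q_k}\parN\Nt_0$. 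From $\derN\Nt\G$, Inversion (\refToLemma{lemma:InvSync}) gives $\oup\q{i}{I}{\la}{\PP}\subt\proj\G\pp$ and $\inp\pp{j}{J}{\la}{\Q}\subt\proj\G\q$; since $\G$ is well formed (as assumed throughout), these projections are defined.

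Next I would perform a shape analysis of these two subtyping judgements. The only rule whose left-hand side is an output is \rulename{$\subt$-Out}, and it keeps the index set fixed, so $\proj\G\pp=\oup\q{i}{I}{\la}{\PP'}$ with $\PP_i\subt\PP'_i$ for all $i\in I$. Dually, the only rule with an input on the left is \rulename{$\subt$-In}, which lets the subtype offer extra branches, so $\proj\G\q=\inp\pp{j}{J'}{\la}{\Q'}$ with $J'\subseteq J$ and $\Q_j\subt\Q'_j$ for all $j\in J'$. In particular $\proj\G\pp$ is an output towards $\q$ and $\proj\G\q$ an input from $\pp$.

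I would then feed these two forms into \refToLemma{keysr}(\ref{keysr1}), obtaining $I=J'$, matching messages, a transition $\G\stackred{\Comm\pp{\la_i}\q}\G_i$ for each $i\in I$, and $\proj{\G_i}\pp=\PP'_i$, $\proj{\G_i}\q=\Q'_i$. Setting $\G'\eqdef\G_k$ yields the required step $\G\stackred\alpha\G'$, which is legal because $k\in I\cap J$ together with $I=J'$ ensures $k\in I$ is a branch index. To rederive $\derN{\Nt'}{\G'}$ by \rulename{Net} (noting $\G'$ is well formed by \refToProp{prop:wfs}) I would check its subtyping premise participant by participant: at $\pp$, $\PP_k\subt\PP'_k=\proj{\G_k}\pp$; at $\q$, since $k\in I=J'$ we have $\Q_k\subt\Q'_k=\proj{\G_k}\q$; and for every untouched $\pr\in\Nt_0$, \refToLemma{keysr}(\ref{keysr2}) gives $\proj{\G'}\pr=\proj\G\pr$, so the Inversion subtyping $\R\subt\proj\G\pr$ is preserved. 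The side condition requiring every participant of $\G'$ to occur in $\Nt'$ holds because a transition introduces no new participants, whence $\participant{\G'}\subseteq\participant\G$, and $\Nt'$ has the same participants as $\Nt$ up to neutral $\pP\pr\inact$ components.

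The step I expect to be most delicate is the bookkeeping of the three index sets $I$, $J$, $J'$. The reduction only guarantees $k\in I\cap J$; output subtyping pins the sender's set to exactly $I$ while input subtyping only yields $J'\subseteq J$; and it is \refToLemma{keysr}(\ref{keysr1}) that finally collapses these to $I=J'$, so that $k\in I=J'$ makes both the branch $\Q_k\subt\Q'_k$ and the transition $\G\stackred\alpha\G_k$ simultaneously available. Threading this chain of inclusions and equalities correctly, rather than any single deep argument, is where the proof must be carried out with care.
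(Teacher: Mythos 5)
Your proposal is correct and follows essentially the same route as the paper's own proof: decompose the \rulename{Com} step, apply Inversion together with the shape of the \rulename{$\subt$-Out} and \rulename{$\subt$-In} rules to pin down $\proj\G\pp$ and $\proj\G\q$, invoke \refToLemma{keysr}(\ref{keysr1}) to obtain the type transition and the projections of $\G_k$, and \refToLemma{keysr}(\ref{keysr2}) for the untouched participants, choosing $\G'=\G_k$. Your explicit tracking of the index sets $I$, $J$, $J'$ and the check of the \rulename{Net} side condition are slightly more detailed than the paper's write-up but match it in substance.
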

\begin{proof}
Let  $\alpha=\Comm\pp{\la}\q$. 
By Rule
\rulename{Com} of \refToFigure{fig:netred}, $\Nt\equiv \pP{\pp}{\PP}\parN \pP{\q}{\Q}\parN\Nt''$ where
$\PP=\oup\q{i}{I}{\la}{\PP}$ and $\Q=\inp\pp{j}{J}{\la}{\Q}$  and 
$\Nt'\equiv\pP{\pp}{\PP_h}\parN \pP{\q}{\Q_h}\parN\Nt''$ and $\la=\la_h$ for some $h\in I\cap J$.
From \refToLemma{lemma:InvSync} we get
\begin{enumerate}
\item \label{psr1} $\proj\G{\pp}=\oup\q{i}{I}{\la}{\PP'}$ with
  $\PP_i\subt\PP'_i$ for all $i\in I$, from Rule 
\rulename{ $\subt$ -Out} of \refToFigure{fig:typing}, and
\item \label{psr2} $\proj\G{\q}=\inp\pp{j}{J'}{\la}{\Q'}$ with
  $\Q_j\subt\Q'_j$ for all $j\in J'\subseteq J$, from Rule \rulename{
    $\subt$ -In} of \refToFigure{fig:typing}, and
\item \label{psr3}  $\R\subt\proj\G{\pr}$ for all   $\pP{\pr}{\R}\in\Nt''$.
\end{enumerate}
By Lemma~\ref{keysr}(\ref{keysr1}) $\G\stackred{\Comm\pp{\la_h}\q}\G_h$ and $\proj{\G_h}\pp=\PP_h'$ and $\proj{\G_h}\q=\Q'_h$. By Lemma~\ref{keysr}(\ref{keysr2}) $\proj{\G_h}\pr=\proj{\G}\pr$  for all $\pr\not\in\set{\pp,\q}$. We can then choose $\G'=\G_h$.
\end{proof}
\begin{theorem}[Session Fidelity]\mylabel{sf}
If $\derN\Nt\G$ and $\G\stackred\alpha\G'$, then $\Nt\stackred\alpha\Nt'$ and $\derN{\Nt'}{\G'}$. 
\end{theorem}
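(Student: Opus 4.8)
The plan is to run the proof of Subject Reduction (\refToTheorem{sr}) backwards, again combining the Inversion and Canonical Form lemmas with \refToLemma{keysr}. Write $\alpha = \Comm\pp\la\q$. Since $\G\stackred\alpha\G'$, \refToLemma{keysr}(\ref{keysr2}) gives $\proj\G\pp = \oup\q i I \la \PP$ and $\proj\G\q = \inp\pp i I \la \Q$ with $\la_k = \la$ for some $k\in I$, and moreover $\proj{\G'}\pr = \proj\G\pr$ for every $\pr\notin\set{\pp,\q}$. In particular $\pp,\q\in\participant\G$, so by the Canonical Form Lemma (\refToLemma{lemma:CanSync}) there are processes $\pP\pp{\PP^\pp},\pP\q{\PP^\q}\in\Nt$ with $\PP^\pp\subt\proj\G\pp$ and $\PP^\q\subt\proj\G\q$.

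First I would read off the shapes of $\PP^\pp$ and $\PP^\q$ from these subtyping facts. As $\proj\G\pp$ is an output process, the only applicable clause of the preorder is rule \rulename{$\subt$-Out}, which forces $\PP^\pp = \oup\q i I \la {\PP^\pp}$ with the same index set and messages and $\PP^\pp_i\subt\PP_i$ for all $i\in I$. Dually, rule \rulename{$\subt$-In} forces $\PP^\q = \inp\pp i {I\cup J} \la {\PP^\q}$ for some $J$, with $\PP^\q_i\subt\Q_i$ for all $i\in I$. Since $k\in I\subseteq I\cup J$, rule \rulename{Com} of \refToFigure{fig:netred} applies and yields $\Nt\stackred\alpha\Nt'$ with $\Nt'\equiv\pP\pp{\PP^\pp_k}\parN\pP\q{\PP^\q_k}\parN\Nt''$, where $\Nt\equiv\pP\pp{\PP^\pp}\parN\pP\q{\PP^\q}\parN\Nt''$.

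It then remains to derive $\derN{\Nt'}{\G'}$ by rule \rulename{Net}. For each participant $\pr\notin\set{\pp,\q}$, Inversion (\refToLemma{lemma:InvSync}) gives $\R\subt\proj\G\pr$ for $\pP\pr\R\in\Nt''$, and $\proj{\G'}\pr = \proj\G\pr$ by \refToLemma{keysr}(\ref{keysr2}), so the required subtyping is immediate. For $\pp$ and $\q$ I would invoke \refToLemma{keysr}(\ref{keysr1}): from the output/input shapes of $\proj\G\pp$ and $\proj\G\q$ it produces, for each $i\in I$, a transition $\G\stackred{\Comm\pp{\la_i}\q}\G_i$ with $\proj{\G_i}\pp = \PP_i$ and $\proj{\G_i}\q = \Q_i$. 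Taking $i=k$ gives a transition labelled $\Comm\pp{\la_k}\q = \alpha$, so by determinacy of the global-type LTS this $\G_k$ coincides with the given $\G'$. Hence $\PP^\pp_k\subt\PP_k = \proj{\G'}\pp$ and $\PP^\q_k\subt\Q_k = \proj{\G'}\q$. Finally, $\Nt'$ has the same participants as $\Nt$ and $\participant{\G'}\subseteq\participant\G$, while $\G'$ is well formed by \refToProp{prop:wfs} so all its projections are defined; thus the side condition of rule \rulename{Net} holds and $\derN{\Nt'}{\G'}$ follows.

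The main obstacle is the identification of the given $\G'$ with the canonical successor $\G_k$ delivered by \refToLemma{keysr}(\ref{keysr1}), since \refToLemma{keysr}(\ref{keysr2}) alone does not pin down $\proj{\G'}\pp$ or $\proj{\G'}\q$. The delicate point is that a transition labelled $\Comm\pp\la\q$ need not be derived by \rulename{Ecomm}: it may come from \rulename{Icomm}, in which case the top-level communication of $\G$ is between two other participants and $\proj\G\pp$ is an output to $\q$ merely because projection collapses the identical branches. What makes the argument go through is that \rulename{Ecomm} and \rulename{Icomm} can never fire the same label at the same type, and each determines its successor uniquely; so the LTS is deterministic on labels and $\G' = \G_k$. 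Everything else is a routine combination of the inversion principles for subtyping with rules \rulename{Com} and \rulename{Net}.
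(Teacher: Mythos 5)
Your proof is correct and follows essentially the same route as the paper's: \refToLemma{keysr}(2) for the shapes of $\proj\G\pp$, $\proj\G\q$ and the invariance of the remaining participants, Canonical Form and Inversion to read off the processes of $\pp$ and $\q$, rule \rulename{Com} for the network transition, and \refToLemma{keysr}(1) to obtain the projections of $\G'$ and retype with \rulename{Net}. The only place you are more explicit than the paper is the identification of the given $\G'$ with the successor $\G_k$ delivered by \refToLemma{keysr}(1) --- the paper takes this for granted, while you justify it via label-determinacy of the global-type LTS, which indeed holds since \rulename{Ecomm} and \rulename{Icomm} cannot fire the same label at the same type.
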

\begin{proof}
Let  $\alpha=\Comm\pp{\la}\q$. By Lemma~\ref{keysr}(\ref{keysr2}) $\proj\G\pp=\oup\pp{i}{I}{\M}{\PP}$ and $\proj\G\q=\inp\pp{i}{I}{\M}{\Q}$ and $\M=\M_i$ for some $i\in I$ and 
$\proj{\G'}\pr=\proj\G\pr$ for all $\pr\not\in\set{\pp,\q}$. By Lemma~\ref{keysr}(\ref{keysr1})  $\proj{\G'}\pp=\PP_i$ and $\proj{\G'}\q=\Q_i$. From \refToLemma{lemma:CanSync} and \refToLemma{lemma:InvSync} we get \mbox{$\Nt\equiv \pP{\pp}{\PP}\parN \pP{\q}{\Q}\parN\Nt''$} and 
\begin{enumerate}
\item \label{sfc1} $\PP=\oup\q{i}{I}{\la}{\PP'}$ with $\PP'_i\subt\PP_i$ for $i\in I$, from Rule \rulename{ $\subt$ -Out} of \refToFigure{fig:typing}, and
\item \label{sfc2}  $\Q=\inp\pp{j}{J}{\la}{\Q'}$ with $\Q'_j\subt\Q_j$ for $j\in I\subseteq J$, from Rule \rulename{ $\subt$ -In} of \refToFigure{fig:typing},  and 
\item \label{sfc3}  $\R\subt\proj\G{\pr}$ for all   $\pP{\pr}{\R}\in\Nt''$.
\end{enumerate}
We can then choose $\Nt'=\pP{\pp}{\PP'_i}\parN \pP{\q}{\Q'_i}\parN\Nt''$.
\end{proof}

We are now able to prove that in a typable network, every participant
whose process is not terminated may eventually perform a communication. This
property is generally referred to as progress.
\begin{theorem}[Progress]\mylabel{pr}
If $\derN\Nt\G$ and $\pP\pp\PP\in\Nt$, then $\Nt\stackred{\comseq\cdot\alpha}\Nt'$ and $\pp\in\participant\alpha$.
\end{theorem}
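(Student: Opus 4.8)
The plan is to proceed by strong induction on the natural number $\weight(\G,\pp)$, which is well defined because $\G$ is well formed, hence bounded (\refToDef{depth}). Before starting, I would record that $\pp$ really occurs in $\G$. Since $\pP\pp\PP\in\Nt$ stands for $\PP\neq\inact$, the Inversion Lemma (\refToLemma{lemma:InvSync}) gives $\PP\subt\proj\G\pp$; inspecting the preorder of \refToFigure{fig:typing}, the only rule whose conclusion has right-hand side $\inact$ is the axiom $\inact\subt\inact$, so $\PP\subt\inact$ would force $\PP=\inact$. Hence $\proj\G\pp\neq\inact$, and by \refToFigure{fig:proj} this means $\pp\in\participant\G$; in particular $\G\neq\End$, so $\G=\gt\q\pr i I \la\G$ and, by Rule \rulename{Ecomm} of \refToFigure{ltgt}, $\G$ can fire $\Comm\q{\la_j}\pr$ along every branch $j\in I$.

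For the base case, $\weight(\G,\pp)=1$, which (as every trace begins with the root communication) happens exactly when $\pp\in\set{\q,\pr}$. Choosing any $j\in I$, Rule \rulename{Ecomm} gives $\G\stackred{\Comm\q{\la_j}\pr}\G_j$, and Session Fidelity (\refToTheorem{sf}) lifts this to $\Nt\stackred{\Comm\q{\la_j}\pr}\Nt'$. Taking $\comseq=\ee$ and $\alpha=\Comm\q{\la_j}\pr$ settles this case, because $\pp\in\set{\q,\pr}=\participant\alpha$.

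For the inductive step, $\weight(\G,\pp)>1$, equivalently $\pp\in\participant\G\setminus\set{\q,\pr}$. Fix any $j\in I$ and set $\alpha_0=\Comm\q{\la_j}\pr$. As above, \rulename{Ecomm} and Session Fidelity give $\Nt\stackred{\alpha_0}\Nt_1$ with $\derN{\Nt_1}{\G_j}$, and $\G_j$ is well formed by \refToProp{prop:wfs}. Two observations then drive the induction. First, $\pp$ is not lost: by the projection clause of \refToFigure{fig:proj} for participants outside $\set{\q,\pr}$ we have $\proj{\G_j}\pp=\proj\G\pp\neq\inact$, whence $\pp\in\participant{\G_j}$, so the Canonical Form Lemma (\refToLemma{lemma:CanSync}) yields $\pP\pp{\PP_1}\in\Nt_1$. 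Second, the measure strictly drops: \refToProp{dd} gives $\weight(\G_j,\pp)<\weight(\G,\pp)$. The induction hypothesis applied to $\Nt_1$, $\G_j$ and $\pp$ now produces $\Nt_1\stackred{\comseq'\cdot\alpha}\Nt'$ with $\pp\in\participant\alpha$, and prepending the root step gives $\Nt\stackred{\alpha_0\cdot\comseq'\cdot\alpha}\Nt'$, i.e. a trace of the required form $\comseq\cdot\alpha$ with $\comseq=\alpha_0\cdot\comseq'$.

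The main obstacle is concentrated entirely in the inductive step, and it is precisely where well-formedness is used twice for different reasons: boundedness makes $\weight(\G,\pp)$ finite and, via \refToProp{dd}, guarantees that descending into a branch strictly decreases it, so the induction terminates; projectability guarantees that a participant of $\G$ not occurring in the root communication keeps the very same projection in every immediate subtree, so it cannot disappear and Canonical Form still applies at the reduct. I would expect the small verification $\proj{\G_j}\pp=\proj\G\pp$ (together with $\proj{\G_j}\pp\neq\inact\Rightarrow\pp\in\participant{\G_j}$, the contrapositive of the first projection clause) to be the only place demanding care, although it follows directly from the side conditions of the last projection clause in \refToFigure{fig:proj}.
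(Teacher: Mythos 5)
Your proof is correct and follows essentially the same route as the paper's: induction on $\weight(\G,\pp)$, using Rule \rulename{Ecomm} together with Session Fidelity (\refToTheorem{sf}) and the strict decrease of depth from \refToProp{dd} in the inductive step. The only (harmless) differences are that you make explicit the preliminary observation that $\pp\in\participant\G$ (hence $\weight(\G,\pp)\geq 1$), which the paper leaves implicit, and that you follow a single branch $j\in I$ where the paper tracks all branches so as to also conclude fairness of the \rulename{Ecomm}-only strategy.
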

\begin{proof}
We prove by induction on $d=\weight(\G,\pp)$ that:   if $\derN\Nt\G$ and $\pP\pp\PP\in\Nt$, then   $\G\stackred{\concat{\comseq}\alpha}\G'$ with $\pp\in\participant\alpha$. This will imply  $\Nt\stackred{\concat{\comseq}\alpha}\Nt'$  by Session Fidelity  (\refToTheorem{sf}). \\
%Let $\G\equiv\G\parG\GP_0$ with $\pp\in \participant \G$.\\
{\it Case $d=1$.}  In this case  $\G= \gt\q\pr i I \la \G $ and $\pp\in\set{\q,\pr}$ and $\G\stackred{\Comm\q{\la_h} \pr}\G_h$ for some $h\in I$ by Rule \rulename{Ecomm}.\\ 
   {\it Case $d>1$.}  In this case  $\G= \gt\q\pr i I {\la} \G $  and $\pp\not\in\set{\q,\pr}$. By \refToLemma{dd} this implies $\weight(\G_i,\pp)<d$ for all $i\in I$. Using Rule \rulename{Ecomm} we get $\G\stackred{\Comm\q{\la_i} \pr}\G_i$ for all $i\in I$.   By Session Fidelity, $\Nt\stackred{\Comm\q{\la_i}\pr}\Nt_i$ and 
   $\derN{\Nt_i}{\G_i}$ for all $i\in I$. Moreover, since  $\pp\not\in\set{\q,\pr}$ we also have  $\pP\pp\PP\in\Nt_i$ for all $i\in I$.
   By induction $\G_i\stackred{\concat{\comseq_i}{\alpha_i}}\G_i'$ with $\pp\in\participant{\alpha_i}$  for all $i\in I$. We conclude $\G\stackred{\Comm\q{\la_i} \pr\cdot\concat{\comseq_i}{\alpha_i}}\G_i'$ for all $i\in I$. 
\end{proof} 
The proof of the progress theorem shows that the execution 
%reduction 
strategy which uses only Rule \rulename{EComm} is fair, since there are no infinite  transition sequences where some participant is stuck.  This is due to the boundedness condition on global types.

\begin{example}
  The second network of \refToExample{ex:rec2} and the network of
  \refToExample{ex2} cannot be typed because they do not enjoy
  progress. Notice that the candidate global type for the second
  network of \refToExample{ex:rec2}:
\[
\G'' = \gtCom\pp\q{}:(\Seq\la{\G''}~\GlSyB~\Seq{\Seq{\la'}{\gtCom\pp\pr{\la}}}{\gtCom\pr\ps{\la'}} ) 
\]
is not bounded, given that $\weight(\G'',\pr)$ and $\weight(\G'',\ps)$
are not finite.\\
Moreover we cannot define a global type whose projections are greater
than or equal to the processes associated with the network of
\refToExample{ex2}.  
\end{example}

% !TEX root =cdgS.tex

\section{Event Structure Semantics of Global Types}
\mylabel{sec:events}

 We define now the event structure associated with a global type, 
 which will be a PES whose events
are equivalence classes of particular traces.

 We recall that a trace $\comseq \in \Comseq$ is a finite sequence of 
communications (see \refToDef{traces}). 
We will use the following notational conventions:

\begin{itemize}
\item We denote by 
  $\at{\comseq}{i}$ the $i$-th
  element of  $\comseq$,  $i > 0$. 
\item If $i \leq j$, we define $\range{\comseq}{i}{j} =
  \at{\comseq}{i} \cdots \at{\comseq}{j}$ to be the subtrace of
  $\comseq$ consisting of the $(j-i+1)$ elements starting from the
  $i$-th one and ending with the $j$-th one.  If $i > j$, we convene
  $\range{\comseq}{i}{j}$ to be the empty trace $\ee$.
\end{itemize}
If not otherwise stated we assume that $\comseq$ has $n$
elements, so $\comseq=\range{\comseq}{1}{n}$. 

\bigskip

We start by defining an equivalence relation on $\Comseq$ which allows
swapping of communications with disjoint participants.

\begin{definition}[Permutation equivalence]\mylabel{def:permEq}
The permutation equivalence on
$\Comseq$ is the least equivalence $\sim$ such that
%\\
%\centerline{$
\[
\concat{\concat{\concat{\comseq}{\alpha}}{\alpha'}}{\comseq'}\,\sim\,\,
\concat{\concat{\concat{\comseq}{\alpha'}}{\alpha}}{\comseq'}
\quad\text{if}\quad \participant{\alpha}\cap\participant{\alpha'}=\emptyset
\]
%$}
We denote by $\eqclass{\comseq}$ the equivalence class of the  trace 
$\comseq$, and by $\quotient$ the set of equivalence classes on
$\Comseq$. Note that $\eqclass{\emptyseq} = \set{\emptyseq}\in
\quotient$, and $\eqclass{\alpha}= \set{\alpha} \in \quotient$ for any
$\alpha$.  Moreover $\eh{\comseq'}=\eh{\comseq}\,$ for all
$\comseq'\in\eqclass{\comseq}$. 
\end{definition}
The events associated with a global type, 
called  \emph{
  g-events}  and denoted by $\globev, \globev'$, are equivalence classes of
particular  traces  that we call \emph{pointed}.
Intuitively, in a pointed  trace all communications  but the last one
are causes of
%the last one.
 some subsequent communication. 
Formally:
\begin{definition}[Pointed  trace]\mylabel{pcs}
A  trace   $\comseq = \range{\comseq}{1}{n}$ is said to be \emph{pointed} if
%\\
%\centerline{
\[
 \mbox{~~for all $i$, $1\leq i<n$,
$\,\participant{\at{\comseq}i}\cap\participant{\range\comseq{(i+1)}n}\not=\emptyset
$}
\]
%}
\end{definition}
Note that the condition of \refToDef{pcs} must be satisfied only by the 
$\at{\comseq}i$ with $i < n$, thus it is vacuously satisfied by any
 trace   of length 1. 
\begin{example}\mylabel{exg6}
  Let $\alpha_1= \Comm\pp{\la_1}\q, \,\alpha_2= \Comm\pr{\la_2}\ps$ and
  $\alpha_3= \Comm\pr{\la_3}\pp$.  Then $\sigma_1 = \alpha_1$ 
  and $\sigma_3 =
  \concat{\concat{\alpha_1}{\alpha_2}}{\alpha_3}\,$ are pointed
   traces,  while  $\sigma_2 = \concat{\alpha_1}{\alpha_2}\,$ is
  \textit{not} a pointed trace.
\end{example}

We use $\last{\comseq}$ to denote the last communication of
$\comseq$.

\begin{lemma}\mylabel{ma1}
Let $\comseq$ be a pointed  trace.  If  $\comseq \sim \comseq'$, then $\comseq'$ is a
 pointed  trace   and $\last{\comseq}=\last{\comseq'}$.  
\end{lemma}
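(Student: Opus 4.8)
The plan is to exploit the fact that $\sim$ is the \emph{least} equivalence generated by the single adjacent swap of \refToDef{def:permEq}. Thus any $\comseq \sim \comseq'$ is witnessed by a finite chain $\comseq = \comseq_0, \comseq_1, \ldots, \comseq_m = \comseq'$ in which each consecutive pair differs by exactly one swap of two adjacent communications with disjoint participants (the swap relation being symmetric, since disjointness is symmetric). I would argue by induction on $m$: the base case $m=0$ is trivial, and the inductive step reduces the whole statement to the case of a \emph{single} swap, which is then applied to $\comseq_{m-1}$ — already shown pointed with the same last communication as $\comseq$ by the induction hypothesis.

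The core is therefore the single-swap case. Suppose $\comseq = \concat{\concat{\concat{\comseqA}{\alpha}}{\alpha'}}{\rho}$ is pointed and $\comseq' = \concat{\concat{\concat{\comseqA}{\alpha'}}{\alpha}}{\rho}$ with $\participant{\alpha}\cap\participant{\alpha'}=\emptyset$. Writing $k = \cardin{\comseqA}$ and letting $n$ be the length of $\comseq$, the two traces agree everywhere except at positions $k+1$ and $k+2$, where $\alpha$ and $\alpha'$ are interchanged. The first observation is that $\rho$ cannot be empty: if it were, then $\alpha$ would sit at position $n-1$ in $\comseq$, and pointedness of $\comseq$ at that position would force $\participant{\alpha}\cap\participant{\alpha'}\neq\emptyset$, contradicting the swap condition. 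Hence the swap never touches the last element, and $\last{\comseq}=\last{\rho}=\last{\comseq'}$ at once.

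For pointedness of $\comseq'$ I would verify the defining condition index by index. For $i \leq k$ the entry is unchanged and the tail $\range{\comseq'}{i+1}{n}$ has the same participant set as $\range{\comseq}{i+1}{n}$ (it is the same collection of communications, merely reordered), so the condition is inherited from $\comseq$; the same holds for $i \geq k+3$, where entry and tail both coincide with those of $\comseq$. Only $i=k+1$ and $i=k+2$ remain. At position $k+1$ of $\comseq'$ the entry is $\alpha'$ with tail $\concat{\alpha}{\rho}$, and pointedness of $\comseq$ at position $k+2$ (legitimate since $\rho$ is nonempty) gives $\participant{\alpha'}\cap\participant{\rho}\neq\emptyset$, which suffices because $\participant{\concat{\alpha}{\rho}}\supseteq\participant{\rho}$. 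At position $k+2$ of $\comseq'$ the entry is $\alpha$ with tail $\rho$; pointedness of $\comseq$ at position $k+1$ gives $\participant{\alpha}\cap\participant{\concat{\alpha'}{\rho}}\neq\emptyset$, and since $\participant{\alpha}\cap\participant{\alpha'}=\emptyset$ this intersection must already be realised in $\rho$, yielding $\participant{\alpha}\cap\participant{\rho}\neq\emptyset$. Thus $\comseq'$ is pointed.

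The main obstacle is more a subtlety than a genuine difficulty: it is recognising that the disjointness hypothesis of the swap and the pointedness of $\comseq$ together forbid swapping the final two communications. This single fact simultaneously guarantees that $\rho$ is nonempty — needed to invoke pointedness of $\comseq$ at the two critical positions — and that the last communication is preserved. Once this is in place, the remaining reasoning is routine bookkeeping of participant sets, relying only on the elementary facts that reordering a tail leaves its participant set unchanged and that disjointness from $\alpha'$ allows one to peel $\alpha'$ off the tail $\concat{\alpha'}{\rho}$.
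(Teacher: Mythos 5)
Your proposal is correct and follows essentially the same route as the paper's proof: induction on the number of adjacent swaps, reduction to the single-swap case, the observation that disjointness plus pointedness forbids swapping the last two communications (hence the tail after the swapped pair is nonempty and the last communication is preserved), and a check of the pointedness condition at the two swapped positions using disjointness to peel $\alpha'$ off the tail. No gaps.
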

\begin{proof}
  Let $\comseq \sim \comseq'$. By \refToDef{def:permEq} $\comseq'$ is
  obtained from $\comseq$ by $m$ swaps of adjacent communications. The
  proof is by induction on such a number 
 $m$. \\
  If $m=0$ the result is obvious.\\
  If $m>0$, then there exists $\comseq_0$ obtained from $\comseq$ by
  $m-1$ swaps of adjacent communications and there are  
$\comseq_1$,  $\comseq_2$,  $\alpha$ and $\alpha'$ such that
%  \\
%  \centerline{$
\[
  \comseq_0 =
    \concat{\concat{\concat{\comseq_1}{\alpha}}{\alpha'}}{\comseq_2}\,\sim\,\,
    \concat{\concat{\concat{\comseq_1}{\alpha'}}{\alpha}}{\comseq_2}
    = \comseq'\ \ \mbox{and}\
    \ \participant{\alpha}\cap\participant{\alpha'}=\emptyset
    \]
  %  $.}  
   By
  induction hypothesis $\comseq_0$ is a pointed  trace  
  and $\last{\comseq}=\last{\comseq_0}$.  Therefore
  $\comseq_2\neq\emptyseq$ since otherwise $\alpha'$ would be the
  last communication of $\comseq_0$ and it cannot be
  $\participant{\alpha}\cap \participant{\alpha'}=\emptyset$.\
  This implies $\last{\comseq}=\last{\comseq'}$.\\
  To show that $\comseq'$ is pointed, since all the communications in
  $\comseq_1$ and $\comseq_2$ have the same successors in
  $\comseq_0$ and $\comseq'$, all we have to prove is  that the
  required property holds for the two swapped communications
  $\alpha'$ and $\alpha$ in $\comseq'$, namely:
\[
\begin{array}{l}
 \participant{\alpha'}\cap(\participant{\alpha}\cup\participant{\comseq_2})\not=\emptyset\\[2pt]
 \participant{\alpha}\cap\participant{\comseq_2}\not=\emptyset\\[2pt]
\end{array}
\]
Since $\participant{\alpha}\cap\participant{\alpha'}=\emptyset$, these
two statements are respectively equivalent to:
\[
\begin{array}{l}
\participant{\alpha'}\cap\participant{\comseq_2}\not=\emptyset \\[2pt]
\participant{\alpha}\cap(\participant{\alpha'}\cup\participant{\comseq_2})\not=\emptyset\\[2pt]
\end{array}
\]
The last two statements are known to hold since $\comseq_0$ is pointed
by induction hypothesis.
\end{proof}

\begin{definition}[Global event]\mylabel{def:glEvent}
Let $\comseq = \concat{\comseq'}{\alpha}\,$ be a pointed
 trace.  
Then $\gamma =
\eqclass{\comseq}$ is a \emph{global event}, also called \emph{g-event}, with communication
$\alpha$, notation  $\comm\gamma=\alpha$.\\
 We denote by $\GEs$ the set of g-events.\end{definition}

Notice that  ${\sf cm}(\cdot)$  is well defined due
to \refToLemma{ma1}.  
 
 \bigskip

 We now introduce an operator called ``retrieval'', which
 applied to a communication $\alpha$ and a g-event $\comocc$, yields
 the g-event corresponding to $\comocc$ before the communication
 $\alpha$ is executed.

  \begin{definition}[Retrieval of g-events before communications]\text{~}\\[-15pt]
  \mylabel{causal-path} 
    \begin{enumerate}
    \item\mylabel{causal-path1} The  {\em retrieval operator} $\circ$ applied to a
  communication and  a g-event  
       is defined by
%       \\
%      \centerline{$
\[
      \cau{\alpha}\eqclass{\comseq}=\begin{cases}
          \eqclass{\concat{\alpha}{\comseq}} & \text{if~
            $\participant{\alpha}\cap\participant{\comseq} \neq \emptyset$}\\
  %$\eqclass{\concat{\Comm{\pp}{\M}{\q}}\comseq}$ is a global event}, \\
   \eqclass{\comseq}  & \text{otherwise}
\end{cases}
\]
%$}
\item\mylabel{causal-path2} The operator $\circ$ naturally extends to  nonempty traces 
%\\
%\centerline{$
\[
\cau{(\concat\alpha\comseq)}\comocc=\cau\alpha{(\cau\comseq\comocc)}\qquad\comseq\not=\emptyseq
\]
%$}
  \end{enumerate}
\end{definition} 

Using the retrieval, we can define the mapping ${\sf ev}(\cdot)$  which,
 applied to a trace $\comseq$, gives the g-event representing
the communication $\last\comseq$ prefixed by its causes occurring in
$\comseq$. 
\begin{definition}
\mylabel{causal-path3}    The {\em g-event generated by a  trace  } 
 %$\concat\comseq\alpha$, notation $\ev{\concat\comseq\alpha}$, 
 is defined by:
% \\
% \centerline{$
\[
 \ev{\concat\comseq\alpha}=\cau\comseq{\eqclass\alpha}
 \]
 %$}
 \end{definition}
 Clearly $\comm{\ev\comseq}=\last\comseq$.

\bigskip

 We proceed now to define the causality and conflict relations on
g-events. 
To define the conflict relation, % between g-events,
 it is handy to define the projection of a  trace  on a participant, which 
gives the sequence of the participant's actions  in the trace.  

\begin{definition}[Projection]
\mylabel{def:projection}
\begin{enumerate}
\item The {\em projection} of $\alpha$ onto $\pr$, %notation 
$\projS\alpha\pr$, is defined by:
%\\
%\centerline{$
\[
\projS{\Comm\pp\la\q}\pr=\begin{cases}
      \sendL\q\la & \text{if }\pr=\pp\\
      \rcvL\pp\la & \text{if }\pr=\q\\
      \ee& \text{if }\pr\not\in\set{\pp,\q}
  \end{cases}
  \]
  %$} 
\item The projection of a  trace  $\comseq$ onto $\pr$, $\projS\comseq\pr$, is defined by:
%\\
%\centerline{$
\[
\projS{\ee}\pr=\ee\quad\quad
\projS{(\alpha\cdot\comseq)}\pr=\projS\alpha\pr\cdot\projS{\comseq}\pr
\]
%$} 
\end{enumerate}
\end{definition}

\begin{definition}[Causality and conflict relations on g-events] \label{sgeo}
The {\em causality} relation $\precP$ and the {\em conflict} relation $\gr$ on the set of g-events $\GEs$ are defined by:
\begin{enumerate}
\item\mylabel{sgeo1} $\comocc\precP\comocc'$
~if~$\comocc=\eqclass\comseq$ and $\comocc'=\eqclass{\concat\comseq{\comseq'}}$ for some $\comseq,\comseq'$;
 %$\eqclass{\comseq}\,\leq \,\eqclass{\comseq'}$ 
\item\mylabel{sgeo2}
  $\eqclass{\comseq}\grr\eqclass{\comseq'}$~if~$\projS\comseq\pp\grr\projS{\comseq'}\pp$
  for some $\pp$.
\end{enumerate}
\end{definition}
\noindent If
$\comocc=\eqclass{\concat{\concat\comseq{\alpha}}{\concat{\comseq'}{\alpha'}}}$,
then the communication $\alpha$ must be done before the communication
$\alpha'$. This is expressed by the causality
$\eqclass{\concat\comseq{\alpha}}\precP\comocc$.  An example is
$\eqclass{\Comm\pp\la\q}\precP\eqclass{\concat{\Comm\pr{\la'}\ps}{\concat{\Comm\pp\la\q}{\Comm\ps{\la''}\q}}}$. \\
As regards conflict, note that if $\comseq\sim\comseq'$ then
$\projS\comseq\pp=\projS{\comseq'}\pp$ for all $\pp$, because $\sim$
does not swap communications which share some participant.  Hence,
conflict is well defined, since it does not depend on the trace chosen
in the equivalence class. The condition
$\projS\comseq\pp\gr\projS{\comseq'}\pp$ states that participant $\pp$
does the same actions in both traces up to some point, after which it
performs two different actions in $\comseq$ and $\comseq'$.  For example $\eqclass{\concat{\concat{\Comm\pp\la\q}{\Comm\pr{\la_1}\pp}}{\Comm\q{\la'}\pp}}\gr  \eqclass{\concat{\Comm\pp\la\q}{\Comm\pr{\la_2}\pp}}$, since 
$\projS{(\concat{\concat{\Comm\pp\la\q}{\Comm\pr{\la_1}\pp}}{\Comm\q{\la'}\pp})}\pp=
   \concat{\concat{\sendL\q\la}{\rcvL\pr{\la_1}}}{\rcvL\q{\la'}}\gr\concat{\sendL\q\la}{\rcvL\pr{\la_2}}=
\projS{(\concat{\Comm\pp\la\q}{\Comm\pr{\la_2}\pp})}\pp$.

\begin{definition}[Event structure of a global type] \mylabel{eg}The {\em
    event structure of  the global type
    } 
    $\GP$ is the triple
%    \\
% \centerline{$
\[
 \ESG{\GP} = (\EGG(\GP), \precP_\GP , \grr_\GP)
 \]
% $} 
 where:
\begin{enumerate}
\item\mylabel{eg1a} 
 $\EGG(\GP) =
\set{ \ev{\comseq}\ |\ \comseq\in\FPaths{\G}}$  
\item\mylabel{eg2} $\precP_\GP$ is the restriction of $\precP$ to the set $\EGG(\GP)$;
\item\mylabel{eg3} $\gr_\GP$ is the restriction of $\gr$ to the set $\EGG(\GP)$.
\end{enumerate}
\end{definition}
 Note that, %due to Clause~\ref{eg1d} of  \refToDef{eg},  
 in case the tree of $\G$ is infinite, the set $\EGG(\G)$  is  
 denumerable.  
\begin{example}\mylabel{ex:eventsGT}
  Let
  $\G_1=\Seq{\gtCom\pp\q{\la_1}}{\Seq{\gtCom\pr\ps{\la_2}}{\gtCom\pr\pp{\la_3}}}$
  and
  $\G_2=\Seq{\gtCom\pr\ps{\la_2}}{\Seq{\gtCom\pp\q{\la_1}}{\gtCom\pr\pp{\la_3}}}$.
  Then $\EGG(\G_1)=\EGG(\G_2)=\set{\comocc_1,\comocc_2,\comocc_3}$
  where
   %\\ \centerline{$
  \[
  \comocc_1=\{\Comm\pp{\la_1}\q\}\qquad
    \comocc_2=\{\Comm\pr{\la_2}\ps\}\qquad
    \comocc_3=\{\concat{\concat{\Comm\pp{\la_1}\q}{\Comm\pr{\la_2}\ps}}{\Comm\pr{\la_3}\pp},
    \concat{\concat{\Comm\pr{\la_2}\ps}{\Comm\pp{\la_1}\q}}{\Comm\pr{\la_3}\pp}
    \}
    \]
    %$} 
    with $\comocc_1\precP\comocc_3$ and
  $\comocc_2\precP\comocc_3$. The configurations are
  $\set{\comocc_1}$, $\set{\comocc_2}$,   $\set{\comocc_1,\comocc_2}$    and
  $\set{\comocc_1,\comocc_2,\comocc_3}$, and the proving sequences
  are
 % \\ \centerline{$
 \[
  \comocc_1\qquad \comocc_2\qquad
    \Seq{\comocc_1}{\comocc_2}\qquad
    \Seq{\comocc_2}{\comocc_1}\qquad
    \Seq{\Seq{\comocc_1}{\comocc_2}}{\comocc_3}\qquad
    \Seq{\Seq{\comocc_2}{\comocc_1}}{\comocc_3}
    \]
%    $}
  If $\G'$ is as in \refToExample{exg}, then
  $\EGG(\G')=\set{\comocc_1,\comocc_2,\comocc_3}$ where
%  \\
%  \centerline{$
\[
  \comocc_1=\{\Comm\pp{\la_1}\q\}\qquad
    \comocc_2=\{\concat{\Comm\pp{\la_1}\q}{\Comm\q{\la_2}\pr}\}\qquad
    \comocc_3=\{\concat{\concat{\Comm\pp{\la_1}\q}{\Comm\q{\la_2}\pr}}{\Comm{ \pr }{\la_3}{ \ps }}\}
    \]
%$} 
with
  $\comocc_1\precP\comocc_2\precP\comocc_3$. The configurations are
  $\set{\comocc_1}$, $\set{\comocc_1,\comocc_2}$ and
  $\set{\comocc_1,\comocc_2,\comocc_3}$,  and there is a unique
   proving sequence
  corresponding to each configuration. \end{example}
\begin{theorem}\mylabel{basta12}
Let $\GP$ be a  global type. 
Then  $\ESG{\GP}$  is a prime event structure.
\end{theorem}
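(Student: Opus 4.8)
The plan is to verify the three defining clauses of a prime event structure from \refToDef{pes}. Clause \ref{pes1}, the denumerability of $\EGG(\GP)$, has already been observed in the text. It therefore remains to check that the restriction $\precP_\GP$ of the causality relation of \refToDef{sgeo} is a partial order (Clause \ref{pes2}), and that the restriction $\grr_\GP$ of the conflict relation is irreflexive, symmetric and hereditary (Clause \ref{pes3}). I would establish all of these properties for the relations $\precP$ and $\grr$ on the whole set $\GEs$ of g-events, since they then transfer immediately to their restrictions to $\EGG(\GP) \subseteq \GEs$. Two elementary facts about permutation equivalence will be used throughout: by \refToDef{def:permEq} all traces in a class $\eqclass{\comseq}$ have the same length, and since $\sim$ never swaps two communications sharing a participant, $\comseq \sim \comseq'$ implies $\projS{\comseq}\pp = \projS{\comseq'}\pp$ for every $\pp$ (as already noted after \refToDef{sgeo}); moreover $\sim$ is stable under concatenation on the right, so $\comseq \sim \comseq'$ gives $\concat{\comseq}{\comseq''} \sim \concat{\comseq'}{\comseq''}$.

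For Clause \ref{pes2}, reflexivity of $\precP$ is immediate by taking the empty suffix $\comseq' = \emptyseq$ in \refToDef{sgeo}(\ref{sgeo1}). For transitivity, from $\comocc \precP \comocc'$ and $\comocc' \precP \comocc''$ I obtain representatives with $\comocc = \eqclass{\comseq}$, $\comocc' = \eqclass{\concat{\comseq}{\comseq'}} = \eqclass{\comseq_1}$ and $\comocc'' = \eqclass{\concat{\comseq_1}{\comseq_2}}$; then $\concat{\comseq}{\comseq'} \sim \comseq_1$ together with right-concatenation gives $\comocc'' = \eqclass{\concat{\comseq}{(\concat{\comseq'}{\comseq_2})}}$, hence $\comocc \precP \comocc''$. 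Antisymmetry follows from lengths: if $\comocc = \eqclass{\comseq}$ and $\comocc' = \eqclass{\concat{\comseq}{\comseq'}}$, the common length of $\comocc'$ is at least that of $\comocc$, so $\comocc \precP \comocc'$ and $\comocc' \precP \comocc$ can hold together only when the intervening suffixes are empty, forcing $\comocc = \comocc'$.

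For Clause \ref{pes3}, symmetry and irreflexivity of $\grr$ are inherited from the corresponding properties of the conflict relation on p-events (\refToDef{procevent-relations}(\ref{ila--esp3}), established in \refToProp{basta10}), because $\eqclass{\comseq} \grr \eqclass{\comseq'}$ is defined through some participant witness $\projS{\comseq}\pp \grr \projS{\comseq'}\pp$. For conflict hereditariness, assume $\comocc \grr \comocc' \precP \comocc''$. Writing $\comocc = \eqclass{\comseq}$, $\comocc' = \eqclass{\comseq'}$, the conflict gives a participant $\pp$ with $\projS{\comseq}\pp \grr \projS{\comseq'}\pp$; writing $\comocc' = \eqclass{\comseq_1}$ and $\comocc'' = \eqclass{\concat{\comseq_1}{\comseq_2}}$, projection-invariance under $\sim$ yields $\projS{\comseq'}\pp = \projS{\comseq_1}\pp$, which is a prefix of $\projS{\concat{\comseq_1}{\comseq_2}}\pp$ since projection distributes over concatenation, so $\projS{\comseq_1}\pp \precP \projS{\concat{\comseq_1}{\comseq_2}}\pp$ by \refToDef{procevent-relations}(\ref{ila--esp2}). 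Conflict hereditariness at the level of p-events (again \refToProp{basta10}) then gives $\projS{\comseq}\pp \grr \projS{\concat{\comseq_1}{\comseq_2}}\pp$, that is $\comocc \grr \comocc''$. The one point requiring genuine care, and the main obstacle, is precisely this realignment of the representative of $\comocc'$ between its role as the right argument of a conflict and its role as the left argument of a causality; it is the length- and projection-invariance of $\sim$ that make the argument independent of the chosen representatives and hence legitimate.
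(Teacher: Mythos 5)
Your proof is correct and follows essentially the same route as the paper's: reflexivity, transitivity and antisymmetry of $\precP$ via properties of concatenation and length-preservation of $\sim$, irreflexivity and symmetry of $\grr$ inherited from p-events, and hereditariness by extending the conflicting projection $\projS{\comseq'}\pp$ along the suffix $\comseq_2$ using hereditariness of conflict on p-events. The only difference is presentational: you make explicit the representative realignment and the invariance of projections under $\sim$, which the paper leaves implicit.
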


\begin{proof}
  We show that $\precP$ and $\gr$ satisfy Properties (\ref{pes2}) and
  (\ref{pes3}) of \refToDef{pes}. Reflexivity and transitivity of
  $\precP$ follow from the properties of concatenation and of
  permutation equivalence. As for antisymmetry, by
  \refToDef{sgeo}(\ref{sgeo1}) $\eqclass{\comseq}\,\leq
  \,\eqclass{\comseq'}$ implies $\comseq' \sim
  \concat{\comseq}{\comseq_1}$ for some $\comseq_1$ and
  $\eqclass{\comseq'}\,\leq \,\eqclass{\comseq}$ implies $\comseq \sim
  \concat{\comseq'}{\comseq_2}$ for some $\comseq_2$.  Hence $\comseq
  \sim \concat{\concat{\comseq}{\comseq_1}}{\comseq_2}$, which implies
  $\comseq_1 = \comseq_2 = \ee$.  Irreflexivity and symmetry of $\gr$
  follow from the corresponding properties of $\gr$ on p-events. \\ 
    As for conflict hereditariness, suppose that $\eqclass{\comseq} \grr
  \eqclass{\comseq'}\precP \eqclass{\comseq''}$.  By
  \refToDef{sgeo}(\ref{sgeo1}) and (\ref{sgeo2}) we have respectively
  that $\concat{\comseq'}{\comseq_1}\sim\comseq''$ for some
  ${\comseq_1}$ and $\projS\comseq\pp\grr\projS{\comseq'}\pp$ for some
  $\pp$.  Hence also
  $\projS\comseq\pp\grr\projS{(\concat{\comseq'}{\comseq_1})}\pp$,
  whence by \refToDef{sgeo}(\ref{sgeo2}) we conclude that
  $\eqclass{\comseq} \grr \eqclass{\comseq''}$. 
\end{proof}

Observe that while our interpretation of networks as FESs exactly
reflects the concurrency expressed by the syntax of networks, our
interpretation of global types as PESs exhibits more concurrency than
that given by the syntax of global types.

% !TEX root =cdgS.tex

%\afterpage{
\begin{figure}[!h]

\begin{center}

\bigskip

$\Nt = \pP\pp{\q!\la_1 ; \pr!\la\oplus \q!\la_2 ;
  \pr!\la} \parN\pP\q{\pp?\la_1 ; \ps!\la' +
  \pp?\la_2  ; \ps!\la' }\parN\pP\pr{\pp?\la  ; \ps!\la'' } \parN\pP\ps{\q?\la'  ; \pr?\la'' }  $

%%%%%%%%%%%%%%%%%%%%%%%%%%%%%%%%%%%%%%%%%%%%%%

\setlength{\unitlength}{1mm}
\begin{picture}(100,70)

\put(8,60){{\small $\netev_1 = \set{\locev{\pp}{\,\q!\la_1}, \locev{\q}{\,\pp?\la_1}}$}} 
\put(5,35){{\small $\netev''_1 =\set{\locev{\q}{\,\pp?\la_1\cdot\ps!\la'},
        \locev{\ps}{\,\q?\la'}}$}} 
\put(29,10){{\small $\netev =\set{\locev{\pr}{\pp?\la\cdot \ps! \la''}, \locev{\ps}{\q?\la'\cdot\pr?\la''}}$}}
\put(62,60){{\small $\netev_2 =\set{\locev{\pp}{\,\q!\la_2}, \locev{\q}{\,\pp?\la_2}}$}}
\put(56,35){{\small $\netev''_2 =\set{\locev{\q}{\,\pp?\la_2\cdot\ps!\la'},
        \locev{\ps}{\,\q?\la'}}$}} 
\put(-15,25) {{\small $\netev'_1 =\set{\locev{\pp}{\,\q!\la_1\cdot\pr!\la},
      \locev{\pr}{\,\pp?\la}}$}}
\put(80,25) {{\small $\netev'_2 =\set{\locev{\pp}{\,\q!\la_2{\cdot}\pr!\la}\,, \locev{\pr}{\,\pp?\la}}$}}
\put(50,47.5){{$\gr$}}

\thicklines
\linethickness{0.3mm}
\put(32,57){\vector(0,-1){17}}
\put(72,57){\vector(0,-1){17}}
\multiput(51.5,64)(0,-1){13}{\bf{$\cdot$}}
\multiput(51.5,43)(0,-1){28}{\bf{$\cdot$}}
\put(33,31){\vector(1,-2){7}}
\put(71,31){\vector(-1,-2){7}}
\put(21,57){\vector(-1,-1){25}}
\put(83,57){\vector(1,-1){25}}
\put(13,21){\vector(2,-1){12}}
\put(93,21){\vector(-2,-1){12}}

\end{picture}
\end{center}

\caption{FES of the network $\Nt$.}
\mylabel{fig:network-FES}
\end{figure}

%%%%%% DRAWING of type PES %%%%%%%%%%%

\begin{figure}[!h]

\begin{center}

\bigskip

$\G = 
\gtCom\pp\q{}:(\Seq{\la_1}{\Seq{\Seq{\gtCom\pp\pr{\la}}{\gtCom\q\ps{\la'}}}}{\gtCom\pr\ps{\la''}}~\GlSyB~
\Seq{\la_2}{\Seq{\Seq{\gtCom\pp\pr{\la}}{\gtCom\q\ps{\la'}}}}{\gtCom\pr\ps{\la''}})$

\setlength{\unitlength}{1mm}
\begin{picture}(100,70)

\put(9,60){{\small $\globev_1 = \eqclass{\pp\q\la_1}$}}   
\put(21,35){{\small $\globev''_1 = \eqclass{\pp\q\la_1 \cdot \q\ps\la'}$}}
\put(-2,10){{\small $\globev = \eqclass{\pp\q\la_1\cdot \pp\pr\la \cdot
\q\ps\la' \cdot \pr\ps\la''}$}}
\put(65,10){{\small $\globev' = \eqclass{\pp\q\la_2\cdot \pp\pr\la \cdot
\q\ps\la' \cdot \pr\ps\la''}$}}
\put(75,60){{\small $ \globev_2 = \eqclass{\pp\q\la_2}$}}
\put(-12,35){{\small $\globev'_1 = \eqclass{\pp\q\la_1 \cdot
      \pp\pr\la}$}}
\put(56,35){{\small $\globev''_2 = \eqclass{\pp\q\la_2 \cdot \q\ps\la'}$}}
\put(90,35){{\small $\globev'_2 = \eqclass{\pp\q\la_2\cdot \pp\pr\la}$}}
\put(50,60){{$\gr$}}

\thicklines
\linethickness{0.3mm}
\put(20,57){\vector(1,-1){15}}
\put(35,31){\vector(-1,-1){15}}
\put(1,31){\vector(1,-1){15}}
\put(102,31){\vector(-1,-1){15}}
\put(68,31){\vector(1,-1){15}}
\multiput(30,60)(1,0){16}{\bf{$\cdot$}}
\multiput(56,60)(1,0){16}{\bf{$\cdot$}}
\put(15,57){\vector(-1,-1){15}}
\put(87,57){\vector(1,-1){15}}
\put(83,57){\vector(-1,-1){15}}
\end{picture}
\end{center}

\caption{PES of the type $\G$.}
\mylabel{fig:type-PES}
\end{figure}

%} endAfterPage

We conclude this section with two pictures that summarise the
features of our ES semantics and illustrate the difference between the
FES of a network and the PES of its type.  In general these two ESs
are not isomorphic, unless the FES of the network is itself a PES.

Consider the network FES pictured in \refToFigure{fig:network-FES},
where the arrows represent the flow relation and all the n-events on
the left of the dotted line are in conflict with all the n-events on
the right of the line. In particular, notice that the conflicts
between n-events with a common location are deduced by Clause
(\ref{c21}) of \refToDef{netevent-relations}, while the conflicts
between n-events with disjoint sets of locations, such as $\netev'_1$
and $\netev''_2$, are deduced by Clause (\ref{c22}) of
\refToDef{netevent-relations}. Observe also that the n-event $\netev$
has two different causal sets in $\GE(\Nt)$, namely $\set{\netev'_1,
  \netev''_1}$ and $\set{\netev'_2, \netev''_2}$. The reader familiar
with ESs will have noticed that there are also two prime
configurations\footnote{A prime configuration is a
  configuration with a unique maximal element, its \emph{culminating}
  event.} whose maximal element is $\netev$, namely
$\set{\netev_1,\netev'_1, \netev''_1, \netev}$ and
$\set{\netev_2,\netev'_2, \netev''_2, \netev}$.  It is easy to see
that the network $\Nt$ can be typed with the global type 
shown in \refToFigure{fig:type-PES}. 
% $\G =
% \gtCom\pp\q{}:(\Seq{\la_1}{\Seq{\Seq{\gtCom\pp\pr{\la}}{\gtCom\q\ps{\la'}}}}{\gtCom\pr\ps{\la''}}~\GlSyB~
% \Seq{\la_2}{\Seq{\Seq{\gtCom\pp\pr{\la}}{\gtCom\q\ps{\la'}}}}{\gtCom\pr\ps{\la''}})$.

Consider now the PES of the type $\G$ pictured in
\refToFigure{fig:type-PES}, where the arrows represent the covering
relation of the partial order of causality and inherited conflicts are
not shown.  Note that while the FES of $\Nt$ has a unique maximal
n-event $\netev$, the PES of its type $\G$ has two maximal g-events
$\globev$ and $\globev'$. This is because an n-event only records the
computations that occurred at its locations, while a g-event records
the global computation and keeps a record of each choice, including
those involving locations that are disjoint from those of its last
communication.  Indeed, g-events correspond exactly to prime
configurations.  

Note that the FES of a network may be easily recovered from the
PES of its global type by using the following function
$\gn{\cdot}$ that maps g-events to n-events:
%\\
%\centerline{$
\[
\gn{\globev} = \set{\locev{\pp}{\projS{\comseq}\pp},
  \locev{\q}{\projS{\comseq}\q}} \quad \mbox{if }
\globev = \eqclass{\comseq}\mbox{ with}~~\partcomm{{\sf cm}(\globev)} = \set{\pp, \q}
\]
%$}

On the other hand, the inverse construction is not as direct. First of
all, an n-event in the network FES may give rise to several g-events
in the type PES, as shown by the n-event $\netev$ in
\refToFigure{fig:network-FES}, which gives rise to the pair of
g-events $\globev$ and $\globev'$ in
\refToFigure{fig:type-PES}. Moreover, the local information contained
in an n-event is not sufficient to reconstruct the corresponding
g-events: for each n-event, we need to consider all the prime
configurations that culminate with that event, and then map each of
these configurations to a g-event. Hence, we need a function ${\sf
  ng}(\cdot)$ that maps n-events to sets of prime configurations of
the FES, and then maps each such configuration to a g-event. We will
not explicitly define this function here, since we miss another
important ingredient to compare the FES of a network and the PES of
its type, namely a structural characterisation of the FESs that
represent typable networks.  Indeed, if we started from the FES of a
non typable network, this construction would not be correct. Consider
for instance the network $\Nt'$ obtained from $\Nt$ 
by
omitting the output $\pr! \la$ from the second branch of the process
of $\pp$. 
Then
the FES of $\Nt'$ would not contain the n-event  $\netev'_2$ and the
event $\netev$ would have the unique causal set
$\set{\netev'_1,\netev''_1}$, and the unique prime configuration
culminating with $\netev$ would be
$\set{\netev_1,\netev'_1,\netev''_1, \netev}$.  Then our construction would
give a PES that differs from that of type $\G$ only for the absence of
the g-events $\globev'_2$ and $\globev'$. However, the network $\Nt'$
is not typable and thus we would expect the construction to fail.
Note that in the FES of $\Nt'$, the n-event  $\netev''_2$
is a cause of
$\netev$ but does not belong to any causal set of $\netev$. Thus a
possible well-formedness property to require for FESs to be images of
a typable network would be that each cause of each n-event belong to
some causal set of that event. However, this would still not be enough to
exclude the FES of the non typable network $\Nt''$ obtained from $\Nt'$ 
by omitting the
output $\ps! \la'$ from the second branch of the process of $\q$.

\label{wf-discussion}

To conclude, 
%in the absence of such characterisation, 
in the absence of a semantic counterpart for the well-formedness
properties of global types,
which eludes us for the time being, we will follow
another approach here, namely we will compare the FESs of networks and
the PESs of their types at a more operational level, by looking at their
configuration domains and by relating their configurations to the
transition sequences of the underlying networks and types.

% !TEX root =cdgS.tex

\section{Equivalence of the two Event Structure Semantics}\label{sec:results}

\begin{figure}[h]
\centering\footnotesize
\begin{minipage}{8cm}
\xymatrix{
&&\ar@{.>}[dl]_{\raisebox{1ex}{\small Th.8.8}}
\netev_1;\ldots;\netev_n=\nec{\comseq}&&\\
\Nt\ar@{-}[r] &\ar@{.>}[d]_{\raisebox{1ex}{\small SR}}{\comseq=\comm{\netev_1}\cdot\ldots\cdot\comm{\netev_n}}\ar[rrr]&&\ar@{.>}[ul]_{\raisebox{1ex}{\small Th.8.7}}& \Nt'\\ 
\G\ar@{-}[rrr] &\ar@{.>}[dr]_{\raisebox{1ex}{\small Th.8.15~~~~}}&&{\comseq=\comm{\globev_1}\cdot\ldots\cdot\comm{\globev_n}}\ar[r]\ar@{.>}[u]_{\raisebox{1ex}{\small SF}}& \G'\\ 
&&\gec{\comseq}=\globev_1;\ldots;\globev_n \ar@{.>}[ur]_{\raisebox{1ex}{\small ~~~~Th.8.16}}&&
}
\end{minipage}
\caption{Isomorphism proof in a nutshell.}\label{ipn}\end{figure}

In this section we establish our main result for typable networks,
namely the isomorphism between the domain of configurations of the FES
of such a network and the domain of configurations of the PES of its
global type.   To do so, we will first relate the transition
sequences of networks and global types to the configurations of their
respective ESs. Then, we will exploit our results of Subject Reduction
(\refToTheorem{sr}) and Session Fidelity (\refToTheorem{sf}), which
relate the transition sequences of networks and their global types, to
derive a similar relation between the configurations of their
respective ESs. The schema of our proof is described by the 
diagram in \refToFigure{ipn}. 
%where starting from the top line we prove the bottom line and
%viceversa, using the results mentioned on the various arrows. 
Here, SR
stands for Subject Reduction and SF for Session Fidelity,  $\netev_1;\ldots;\netev_n$ and $\globev_1;\ldots;\globev_n$ are proving sequences of $\ESN\Nt$ and $\ESG\G$, respectively.
Finally 
$\nec{\comseq}$ and $\gec{\comseq}$ denote   the proving sequence
of n-events and the proving sequence of g-events corresponding to the trace
$\comseq$ (as given in \refToDef{nec} and \refToDef{gecdef}).  
\refToTheorem{uf12} says that, if
$\Seq{{\netev_1};\cdots}{\netev_n}$ is a proving sequence of
$\ESN{\Nt}$, then $\Nt\stackred\comseq\Nt'$, where $\comseq=\comm{\netev_1}\cdot\ldots\cdot\comm{\netev_n}$. By Subject Reduction (\refToTheorem{sr})
$\G\stackred\comseq\G'$. This implies  
that $\gec{\comseq}$ is a proving
sequence of $\ESG{\RG}$ by \refToTheorem{uf13}. Dually, \refToTheorem{uf14}  says that, if
$\Seq{{\globev_1};\cdots}{\globev_n}$ is a proving sequence of
$\ESG{\G}$, then $\G\stackred\comseq\G'$, where $\comseq=\comm{\globev_1}\cdot\ldots\cdot\comm{\globev_n}$.  By Session Fidelity (\refToTheorem{sf})  $\Nt\stackred\comseq\Nt'$. Lastly $\nec{\comseq}$ is a proving
sequence of $\ESN{\Nt}$ by \refToTheorem{uf10}. The equalities in the top and bottom lines are proved in Lemmas~\ref{ecn}(\ref{ecn22}) and~\ref{ecg}(\ref{ecg1}).

 This section is divided in two subsections: \refToSection{lozenges},
 which handles the upper part of the above diagram, and
 \refToSection{bullets}, which handles the lower part of the diagram and
 then connects the two parts using both SR and SF within
 \refToTheorem{iso}, our closing result.

\subsection{Relating Transition Sequences of Networks and Proving
  Sequences of their ESs}\label{lozenges}

 The aim of this subsection is to relate the traces that label
the transition sequences of networks with the configurations of their
FESs.  We start by showing how network communications affect
n-events in the associated ES.  To this end  we define two
partial operators $\lozenge$ and $\blacklozenge$, which applied to a
communication $\alpha$ and an n-event $\netev$ yield another n-event
$\netev'$ (when defined), which represents the event $\netev$ before
the communication $\alpha$ or after the communication $\alpha$,
respectively. We call ``retrieval'' the $\lozenge$ operator (in
agreement with \refToDef{causal-path}) and ``residual'' the
$\blacklozenge$ operator.

Formally, the operators $\lozenge$ and $\blacklozenge$ are defined as follows.

\begin{definition}[Retrieval and residual of n-events with
  respect to communications]\text{~}\\[-15pt]\mylabel{def:PostPre}
\begin{enumerate}
\item\mylabel{def:PostPre1} The {\em retrieval operator} $\lozenge$ applied to a
  communication and a  located event  returns the located event
  obtained
  by  prefixing the process event by the projection of the communication:
%  \\
%  \centerline{$
\[
  \preP{\locev{\pp}{\event}}\alpha=\locev{\pp}{\concat{(\projS{\alpha}{\pp})}\event}
  \]
%$}
\item\mylabel{def:PostPre2} The {\em residual operator} $\blacklozenge$ applied to a
  communication and a located event returns the located event obtained
  by erasing
  from the process event the projection of the communication (if possible):
%  \\
%  \centerline{$
\[
  \postP{\locev{\pp}\event}\alpha=%\begin{cases}
    \locev{\pp}{\event'}
      %& 
      \quad
      \text{if }\event=
        \concat{(\projS{\alpha}{\pp})}{\event'}
        \]
 % \end{cases}
 % $} 

  \item\mylabel{def:PostPre3} The operators $\lozenge$ and
    $\blacklozenge$ naturally extend to n-events  and to traces:
%    \\
% \centerline{$
\[
\begin{array}{c}
 \preP{\set{\locev{\pp}{\event},\locev{\q}{\event'}}}{\alpha}=
  \set{\preP{\locev{\pp}{\event}}{\alpha},\preP{\locev{\q}{\event'}}{\alpha}}
  %$}
  \\
%\centerline{$
\postP{\set{\locev{\pp}{\event},\locev{\q}{\event'}}}{\alpha}=
  \set{\postP{\locev{\pp}\event}{\alpha},\postP{\locev{\q}{\event'}}{\alpha}}
 % $} 
 \\
 % \centerline{$ 
  \pre\netev{\epsilon}=\netev\qquad
  \pre\netev{(\concat\alpha\comseq)}=\preP{\pre\netev\comseq}\alpha\qquad\post\netev{(\concat\alpha\comseq)}=\postP{\post\netev\alpha}\comseq \qquad \comseq\not=\emptyseq
  \end{array}
  \]
  %$}
    \end{enumerate}
\end{definition}
\noindent
Note that the operator $\lozenge$ is always defined. Instead
$\post{\locev{\pr}\event}{\Comm\pp\la\q}$ is undefined if $\pr \in
\set{\pp,\q}$ and either $\event$ is just one atomic action or
$\projS{\Comm\pp\la\q}\pr$ is not the first atomic action of $\event$.

\bigskip

 The retrieval and residual operators are inverse of each other. Moreover they preserve the flow and conflict relations. %The proof is in the Appendix. 

\begin{lemma}[Properties of 
  retrieval and residual for n-events]\text{~}\\[-10pt]
\mylabel{prop:prePostNet}
\begin{enumerate}
\item \mylabel{ppn1} If $\post{\netev}{\alpha}$ is defined, then
  $\preP{\post{\netev}{\alpha}}{\alpha}=\netev$;
\item \mylabel{ppn1b} 
$\postP{\pre{\netev}{\alpha}}{\alpha}=\netev$;
 \item  \mylabel{ppn2b}  If  $\netev\precN \netev'$, 
then $\pre{\netev}{\alpha}\precN \pre{\netev'}{\alpha}$;
\item \mylabel{ppn2} If  $\netev\precN \netev'$ and both $\post{\netev}{\alpha}$ and
  $\post{\netev'}{\alpha}$ are defined, then
  $\post{\netev}{\alpha}\precN \post{\netev'}{\alpha}$; 
\item \mylabel{ppn3b} If  $\netev\grr \netev'$, then $\pre{\netev}{\alpha}\grr
  \pre{\netev'}{\alpha}$;
\item \mylabel{ppn3} If  $\netev\grr
  \netev'$ and both $\post{\netev}{\alpha}$ and $\post{\netev'}{\alpha}$
  are defined, then $\post{\netev}{\alpha}\grr\post{\netev'}{\alpha}$;
  \item \mylabel{ppn7} If  $\pre{\netev}{\alpha}\grr
  \pre{\netev'}{\alpha}$, then $\netev\grr \netev'$.
%\item \mylabel{ppn8} If  $\post{\netev}{\alpha}\grr\post{\netev'}{\alpha}$, then $\netev\grr\netev'$.
\end{enumerate}
\end{lemma}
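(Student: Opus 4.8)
The plan is to reduce all seven statements to elementary manipulations of sequences of (undirected) actions, the only genuinely new ingredient being a small lemma about how a single communication projects onto two distinct participants. Before anything else I would record this key fact: for a communication $\alpha$ and distinct participants $\pp,\q$, the undirected sequences $\projb{(\projS\alpha\pp)}\q$ and $\projb{(\projS\alpha\q)}\pp$ have the same length and are dual (both empty unless $\set{\pp,\q}$ is exactly the pair of participants of $\alpha$, in which case one is $\sendL{}\la$ and the other $\rcvL{}\la$). This is proved by a three-way case analysis on whether $0$, $1$ or $2$ of $\pp,\q$ occur in $\alpha$, using \refToDef{def:projection} and \refToDef{pd}. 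The same case analysis shows that $\lozenge$ (and $\blacklozenge$, when defined) sends n-events to n-events, i.e. that prefixing/erasing $\projS\alpha\pp$ at every location preserves the duality condition of \refToDef{n-event}; this is needed for the statements even to typecheck.

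Parts (\ref{ppn1}) and (\ref{ppn1b}) are the inverse laws and are purely syntactic: working at each located event, the residual erases the prefix $\projS\alpha\pp$ (when it is a prefix) and the retrieval prepends it, so prepend-after-erase and erase-after-prepend are both the identity on the underlying p-event, and I would lift this to n-events componentwise. For (\ref{ppn1b}) I note that $\pre\netev\alpha$ always has $\projS\alpha\pp$ as a genuine prefix at each location, so $\post{(\pre\netev\alpha)}\alpha$ is defined and returns $\netev$. For the flow statements (\ref{ppn2b}) and (\ref{ppn2}) I would use the shared location $\pp$ witnessing $\netev\precN\netev'$: by \refToDef{netevent-relations}(\ref{c1}) the flow relation on n-events is inherited from the strict prefix order $<$ on the p-events at that location (\refToDef{procevent-relations}), which is preserved under prepending a common prefix $\projS\alpha\pp$ (giving (\ref{ppn2b})) and reflected under cancelling a common prefix that both p-events share (giving (\ref{ppn2}), where the hypothesis that both residuals are defined guarantees $\projS\alpha\pp$ is a prefix of both).

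The conflict statements (\ref{ppn3b}) and (\ref{ppn3}) split according to the two clauses of \refToDef{netevent-relations}(\ref{c2}). Clause (\ref{c21}) is handled exactly like the flow case, using the structure of p-event conflict from \refToDef{procevent-relations}(\ref{ila--esp3}) (a common prefix followed by distinct actions $\pi\neq\pi'$); in (\ref{ppn3}) one observes that if both residuals are defined then the erased action $\projS\alpha\pp$ cannot be the point of divergence, so the divergence survives the erasure. Clause (\ref{c22}) is where the key lemma does its work: writing $\projb{(\concat{(\projS\alpha\pp)}\procev)}\q = \concat{c_\pp}{\projb\procev\q}$ and $\projb{(\concat{(\projS\alpha\q)}{\procev'})}\pp = \concat{c_\q}{\projb{\procev'}\pp}$, with $c_\pp,c_\q$ the equal-length dual sequences from the key lemma, the length equality $\cardin{\projb\procev\q}=\cardin{\projb{\procev'}\pp}$ is preserved, and non-duality is preserved/reflected because prepending (resp. removing) a matched dual prefix preserves (resp. reflects) the relation $\Join$, which follows by induction from \refToDef{dualProcEv}. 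Finally (\ref{ppn7}) is a corollary: by (\ref{ppn1b}) we have $\netev=\post{(\pre\netev\alpha)}\alpha$ and $\netev'=\post{(\pre{\netev'}\alpha)}\alpha$ with both residuals defined, so applying (\ref{ppn3}) to $\pre\netev\alpha\grr\pre{\netev'}\alpha$ yields $\netev\grr\netev'$.

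The main obstacle is Clause (\ref{c22}) in (\ref{ppn3b})–(\ref{ppn3}): one must show that the condition $\neg(\dualev{\projb\procev\q}{\projb{\procev'}\pp})$ transports correctly under prefixing and erasure. This rests entirely on the two facts that $c_\pp$ and $c_\q$ are dual of equal length and that $\Join$ enjoys a cancellation property for matched dual prefixes, so the bulk of the care goes into establishing these and into checking the length bookkeeping; everything else is routine sequence manipulation.
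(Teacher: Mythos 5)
Your proposal is correct and follows essentially the same route as the paper's proof: both reduce every statement to a componentwise manipulation of located events, treat the flow cases by prepending/cancelling the common prefix $\projS{\alpha}{\pp}$, and split the conflict cases along the two clauses of \refToDef{netevent-relations}, with your ``key lemma'' about the equal-length dual projections of $\alpha$ at the two locations being exactly the case analysis ($\participant{\alpha}=\set{\pp,\q}$ versus not) that the paper performs inline for Clause (\ref{c22}). The only divergence is part (\ref{ppn7}), which the paper proves directly by reversing the argument of part (\ref{ppn3b}) while you obtain it as a corollary of parts (\ref{ppn1b}) and (\ref{ppn3}); this shortcut is sound, since the residual of a retrieval is always defined.
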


Starting from the trace $\comseq \neq \ee$  that labels a transition
sequence  in a network, one can
reconstruct the corresponding sequence of n-events in its 
FES.  Recall
that  $\range\comseq1{i}$ is the prefix of length $i$ of 
%$\range\comseq1{i-1}$ is the prefix of length $(i-1)$ of
$\comseq$ and $\range{\comseq}{i}{j}$ is the empty trace if $i\geq
j$.

%\newpage

\begin{definition}[Building sequences of n-events from traces]
%[n-events from  communications]
  \mylabel{nec} If $\comseq$ is a trace with
  $\at\comseq{i}=\Comm{\pp_i}{\la_i}{\q_i}$, $1\leq i\leq n$, we
  define
  the {\em sequence of n-events corresponding to $\comseq$} by
%  \\
%  \centerline{$
  \[
  \nec{\comseq}=\Seq{\netev_1;\cdots}{\netev_n}
  \]
 % $} 
 where
  $\netev_i=\pre{\set{\locev{\pp_i}{\sendL{\q_i}{\la_i}},\locev{\q_i}{\rcvL{\pp_i}
        {\la_i}}}}{\range\comseq1{i-1}}$ for $1\leq i\leq n$.
  \end{definition}
   
  It is immediate to see that, if $\comseq=\Comm{\pp}{\la}{\q}$, then
  $\nec{\comseq}$ is the event
  $\set{\locev{\pp}{\sendL{\q}{\la}},\locev{\q}{\rcvL{\pp}{\la}}}$.

\bigskip
%
% We prove next some expected properties of the sequence
%$\nec{\comseq}$. 
 We show now that two n-events occurring in $\nec{\comseq}$ cannot be in conflict  and that from $\nec{\comseq}$ we can recover $\comseq$. Moreover we relate the retrieval and residual operators with the mapping $\nec{\cdot}$. 

\begin{lemma}[Properties of $\nec{\cdot}$]\mylabel{ecn}\text{~}\\[-10pt]
\begin{enumerate}
\item\mylabel{ecn2} Let
  $\nec{\comseq}=\Seq{\netev_1;\cdots}{\netev_n}$.  Then 
\begin{enumerate}
 \item\mylabel{ecn22} $\comm{\netev_i}=\at\comseq{i}$  for all $i$, $1\leq i\leq n$;
\item\mylabel{ecn21} If $1\leq h,k\leq n$, then $\neg(\netev_h\gr\netev_k)$.
  \end{enumerate}
\item\mylabel{ecn1} $\neg(\nec\alpha\grr\pre\netev\alpha)$ for all $\netev$.
\item\mylabel{ecn3} Let $\comseq=\alpha\cdot\comseq'$ and
  $\comseq'\neq \ee$.  If
  $\nec{\comseq}=\Seq{\netev_1;\cdots}{\netev_n}$ and
  $\nec{\comseq'}=\Seq{\netev'_2;\cdots}{\netev'_n}$, then
  $\pre{\netev'_i}\alpha=\netev_i$ and
  $\post{\netev_i}\alpha=\netev_i'$ for all $i$, $2\leq i\leq n$.
 \end{enumerate}
  \end{lemma}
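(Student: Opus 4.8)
The plan is to begin by unfolding \refToDef{nec} into an explicit closed form for the events of $\nec\comseq$. Writing $\at\comseq i=\Comm{\pp_i}{\la_i}{\q_i}$ and recalling that the retrieval operator prefixes each located event by the projections of the retrieved communications (\refToDef{def:PostPre}), an easy induction shows that retrieving an n-event along the prefix $\range\comseq1{i-1}$ prepends $\projS{\range\comseq1{i-1}}\pp$ to its component at $\pp$. Since $\projS{}\cdot$ distributes over concatenation and $\projS{\at\comseq i}{\pp_i}=\sendL{\q_i}{\la_i}$, $\projS{\at\comseq i}{\q_i}=\rcvL{\pp_i}{\la_i}$, this yields
\[
\netev_i=\set{\locev{\pp_i}{\projS{\range\comseq1i}{\pp_i}},\,\locev{\q_i}{\projS{\range\comseq1i}{\q_i}}}.
\]
From this closed form Statement (\ref{ecn22}) is immediate, since the last actions of the two components are $\sendL{\q_i}{\la_i}$ and $\rcvL{\pp_i}{\la_i}$, so $\comm{\netev_i}=\at\comseq i$.

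For Statement (\ref{ecn21}), the heart of the lemma, I would fix $h,k$ and rule out both conflict clauses of \refToDef{netevent-relations}. For Clause (\ref{c21}): any shared participant $\pp$ carries the p-events $\projS{\range\comseq1h}\pp$ and $\projS{\range\comseq1k}\pp$, and since $\range\comseq1{\min(h,k)}$ is a prefix of $\range\comseq1{\max(h,k)}$ and trace projection preserves prefixes, one p-event is a prefix of the other, hence not in conflict by \refToDef{procevent-relations}. For Clause (\ref{c22}): given components at distinct $\pp\neq\q$, the relevant objects are $\projb{(\projS{\range\comseq1h}\pp)}\q$ and $\projb{(\projS{\range\comseq1k}\q)}\pp$. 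The key sub-lemma is that for every trace $\comseq$ and $\pp\neq\q$ the mutual projections $\projb{(\projS\comseq\pp)}\q$ and $\projb{(\projS\comseq\q)}\pp$ are dual (\refToDef{dualProcEv}); this is shown by induction on $\comseq$, the step being a case analysis on whether the head communication is $\Comm\pp{\la}\q$, $\Comm\q{\la}\pp$, or involves at most one of $\pp,\q$ (contributing the dual pair $!\la,?\la$, respectively $?\la,!\la$, or nothing). Since our two objects are prefixes of these dual sequences and duality is pointwise, equal-length prefixes of dual sequences are again dual; thus whenever the length condition of Clause (\ref{c22}) holds, its non-duality condition fails. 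Hence neither clause fires and $\neg(\netev_h\gr\netev_k)$; the case $h=k$ is subsumed, giving irreflexivity in accordance with \refToTheorem{nf}.

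For Statement (\ref{ecn1}) I would argue directly from the shapes of the events. By the remark following \refToDef{nec}, $\nec\alpha=\set{\locev\pp{\sendL\q\la},\locev\q{\rcvL\pp\la}}$ for $\alpha=\Comm\pp\la\q$, whose two components have length one, while every component $\locev\pr{\procev}$ of $\pre\netev\alpha$ begins with $\projS\alpha\pr$. For Clause (\ref{c21}), a shared participant must lie in $\set{\pp,\q}$, where the $\nec\alpha$-component is a one-action prefix of the $\pre\netev\alpha$-component, so no conflict. For Clause (\ref{c22}), a component of $\nec\alpha$ projects nonemptily onto another participant only in the pairs $(\locev\pp{\sendL\q\la},\locev\q{\procev'})$, giving $!\la$, and $(\locev\q{\rcvL\pp\la},\locev\pp{\procev''})$, giving $?\la$ (when the $\nec\alpha$-side projection is empty, equal length forces the other side empty too, hence trivially dual); since $\procev'$ begins with $\rcvL\pp\la$ and $\procev''$ with $\sendL\q\la$, the opposite projection begins with $?\la$ respectively $!\la$, and the equal-length requirement forces it to be exactly that single dual action, so duality holds and Clause (\ref{c22}) does not fire.

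Finally, Statement (\ref{ecn3}) follows from the closed form together with the shifted indexing. From $\comseq=\alpha\cdot\comseq'$ we get $\range{\comseq'}1{i-1}=\range\comseq2i$ and $\range\comseq1i=\alpha\cdot\range{\comseq'}1{i-1}$, so the component of $\netev_i$ at $\pp_i$ is exactly $\projS\alpha{\pp_i}$ prepended to that of $\netev'_i$, i.e.\ $\pre{\netev'_i}\alpha=\netev_i$ by \refToDef{def:PostPre}; the companion identity $\post{\netev_i}\alpha=\netev'_i$ is then immediate by \refToLemma{prop:prePostNet}(\ref{ppn1b}). The main obstacle is the duality sub-lemma underpinning Clause (\ref{c22}) in Statement (\ref{ecn21}): the closed form and the remaining statements are essentially bookkeeping, whereas establishing that mutual trace projections are dual, and that this duality is inherited by equal-length prefixes, is where the real content, namely that two events of a single run never conflict, resides.
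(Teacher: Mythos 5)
Your proposal is correct and follows essentially the same route as the paper's proof: both rest on the closed form $\netev_i=\set{\locev{\pp_i}{\projS{\range\comseq1i}{\pp_i}},\locev{\q_i}{\projS{\range\comseq1i}{\q_i}}}$, rule out Clauses (\ref{c21}) and (\ref{c22}) separately for Statements (\ref{ecn21}) and (\ref{ecn1}), and obtain Statement (\ref{ecn3}) from the prefix shift together with \refToLemma{prop:prePostNet}(\ref{ppn1b}). The only difference is that you spell out (via the mutual-duality sub-lemma and its stability under equal-length prefixes) the duality claim that the paper simply asserts when dismissing Clause (\ref{c22}), which is a welcome elaboration rather than a new argument.
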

  \begin{proof}
  (\ref{ecn22}) 
 Immediate  from \refToDef{nec},  since
$\comm{\pre{\netev}{\sigma}}=\comm{\netev}$ for any event $\netev$. 

(\ref{ecn21}) We show that neither Clause (\ref{c21}) nor
Clause (\ref{c22}) of \refToDef{netevent-relations} can be used to
derive $\netev_h\grr\netev_k$.  Notice that $\netev_i=
\set{\locev{\pp_i}{\projS{{\range\comseq1{i}}}{\pp_i}},
  \locev{\q_i}{\projS{{\range\comseq1{i}}}{\q_i}}}$. So if
$\locev{\pp}\procev\in{\netev_h}$ and
$\locev{\pp}{\procev'}\in{\netev_k}$  with $h<k$,  then
 either  $\procev < \procev'$  or $\procev =
\procev'$. 
%otherwise $\procev'  <  \procev$. 
Therefore Clause (\ref{c21}) 
%of \refToDef{netevent-relations} 
does not apply.  If
$\locev{\pp}{\procev}\in{\netev_h}$ and $\locev{\q}{\procev'}\in{\netev_k}$ and 
  $\pp \neq \q$ and $\cardin{\proj\procev\q} =
  \cardin{\proj{\procev'}\pp}$,
  then it must be
  $\dualev{\proj\procev\q =
    \proj{(\projS{\range\comseq1{h}}{\pp})}{\q}}
{\proj{(\projS{\range\comseq1{k}}{\q})}{\pp}
= \proj{\procev'}\pp}$.  Therefore Clause (\ref{c22}) 
%of \refToDef{netevent-relations} 
cannot be used.

     (\ref{ecn1}) We show that neither Clause (\ref{c21}) nor
    Clause (\ref{c22}) of \refToDef{netevent-relations} can be used to
    derive $\nec\alpha\grr\pre\netev\alpha$.  Let
    $\participant{\alpha} = \set{\pp, \q}$. Then $\nec{\alpha}
    =\set{\locev{\pp}{\projS{\alpha}{\pp}},
      \locev{\q}{\projS{\alpha}{\q}}}$. Note that
    $\locev{\pp}{\procev}\in \pre{\netev}{\alpha}$ iff $\procev =
    \concat{(\projS{\alpha}{\pp})}{\procev'}$ and
    $\locev{\pp}{\procev'}\in \netev$. Since $\projS{\alpha}{\pp} <
    \concat{(\projS{\alpha}{\pp})}{\procev'}$, Clause (\ref{c21})
    of \refToDef{netevent-relations}
    cannot be used. Now suppose $\locev{\pr}{\procev}\in
    \pre{\netev}{\alpha}$ for some $\pr \notin \set{\pp, \q}$.  In
    this case $\proj{(\projS{\alpha}{\pp})}{\pr} =
    \proj{(\projS{\alpha}{\q})}{\pr} = \ee$. Therefore,  since
    $\dualev{\ee}{\ee}$, Clause
    (\ref{c22}) of \refToDef{netevent-relations} does not
    apply.

(\ref{ecn3}) Notice that $\at{\comseq}{i}=\at{\comseq'}{i-1}$ for all
$i$, $2\leq i\leq n$. Then,
by \refToDef{nec}
%\\
%\centerline{$
\[
\begin{array}{lll}\netev_i &=&\pre{\nec{\at{\comseq}{i}}}{\range\comseq1{i-1}}=
\pre{(\pre{\nec{\at{\comseq}{i}}}{\range\comseq2{i-1}})}\alpha=\\
&&\pre{(\pre{\nec{\at{\comseq'}{i-1}}}{\range{\comseq'}1{i-2}})}\alpha=\pre{\netev'_i}\alpha
%\text{for all $i$, $2\leq i\leq n$}
\end{array}
\]
%$} 
for all $i$, $2\leq i\leq n$.
\\
By
\refToLemma{prop:prePostNet}(\ref{ppn1b})
$\pre{\netev'_i}\alpha=\netev_i$ implies
$\post{\netev_i}\alpha=\netev_i'$ for all $i$, $2\leq i\leq n$.
  \end{proof}
  
%%%%%%%%%%%%%%%

 It is handy to notice that if $\post{\netev}{\alpha}$ is undefined and $\netev$ is an event of a network with communication $\alpha$, then either $\netev=\nec{\alpha}$ or $\netev\grr\nec{\alpha}$. 

\begin{lemma}\mylabel{dc} 
If $\Nt\stackred\alpha\Nt'$ and $\netev\in\GE(\Nt)$, then $\netev=\nec{\alpha}$ or $\netev\grr\nec{\alpha}$ or $\post{\netev}{\alpha}$ is defined.
\end{lemma}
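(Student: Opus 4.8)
The plan is to treat the three alternatives as a case analysis on the located event of $\netev$ sitting at one of the two participants of $\alpha$, using the hypothesis $\Nt\stackred\alpha\Nt'$ only to fix the top-level shape of the processes of these two participants. First I would unfold the transition: by Rule \rulename{Com} of \refToFigure{fig:netred}, writing $\alpha=\Comm\pp\la\q$, the step $\Nt\stackred\alpha\Nt'$ forces $\pP\pp{\oup\q i I \la \PP}\in\Nt$ and $\pP\q{\inp\pp j J \la \Q}\in\Nt$ with $\la=\la_k$ for some $k\in I\cap J$. Hence $\projS\alpha\pp=\sendL\q{\la_k}$ and $\projS\alpha\q=\rcvL\pp{\la_k}$, and — the point that will do the real work — by \refToDef{esp}(\ref{ila-esp1}) every p-event in $\ES(\oup\q i I \la \PP)$ begins with some $\sendL\q{\la_i}$ ($i\in I$), and every p-event in $\ES(\inp\pp j J \la \Q)$ begins with some $\rcvL\pp{\la_j}$ ($j\in J$). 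Note that this already shows the lemma needs only $\netev\in\DE(\Nt)$, not the narrowing.

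Next I would assume $\post\netev\alpha$ is undefined, since otherwise the third alternative holds. Because $\projS\alpha\pr=\ee$ whenever $\pr\notin\set{\pp,\q}$, the residual of a located event at such a $\pr$ is always defined; thus the undefinedness must stem from some $\locev\pr\procev\in\netev$ with $\pr\in\set{\pp,\q}$ for which $\procev$ does not factor as $\concat{(\projS\alpha\pr)}{\procev'}$ with $\procev'$ non-empty. Taking $\pr=\pp$ (the case $\pr=\q$ being symmetric) and recalling that $\procev$ opens with $\sendL\q{\la_i}$, this factorisation fails precisely in one of two ways: either (i) $i\neq k$, or (ii) $i=k$ and $\procev=\sendL\q{\la_k}$ has length $1$.

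In case (i), since $\la_i\neq\la_k$, \refToDef{procevent-relations}(\ref{ila--esp3}) gives $\procev\grr\sendL\q{\la_k}$, and as $\locev\pp{\sendL\q{\la_k}}\in\nec\alpha$, Clause (\ref{c21}) of \refToDef{netevent-relations} yields $\netev\grr\nec\alpha$, the second alternative. In case (ii) I would conclude $\netev=\nec\alpha$, the first alternative, and this is where the one genuine obstacle lies: duality alone does not pin down the companion located event of $\netev$. Applying \refToDef{dualLocEv} to $\locev\pp{\sendL\q{\la_k}}$ does fix the partner location to be $\q$ and forces $\projb{\procev'}\pp=\rcvL{}{\la_k}$ together with $\ptone{\act{\procev'}}=\pp$, but these two conditions would still allow $\procev'$ to carry extra non-$\pp$ actions before its final receive. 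It is exactly the process structure recorded above — that $\procev'$ must itself open with a $\pp$-action — that forces the unique $\pp$-action of $\procev'$ to be simultaneously its first and its last action, whence $\procev'=\rcvL\pp{\la_k}$ and therefore $\netev=\set{\locev\pp{\sendL\q{\la_k}},\locev\q{\rcvL\pp{\la_k}}}=\nec\alpha$. Since cases (i) and (ii), together with their $\pr=\q$ mirror images, exhaust the ways in which $\post\netev\alpha$ can fail to be defined, the trichotomy is complete.
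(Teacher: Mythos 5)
Your argument is correct and follows essentially the same route as the paper's proof: a case analysis on the located events of $\netev$ at the participants of $\alpha$, with the conflict alternative obtained from Clause (\ref{c21}) of \refToDef{netevent-relations} whenever the head actions disagree. The only divergence is the subcase where the located event at $\pp$ is exactly $\projS{\alpha}{\pp}$: the paper closes it with \refToProp{prop:conf} (a shared located event forces equality or conflict) without determining the partner event, whereas you pin the partner down to $\rcvL{\pp}{\la_k}$ from the shape of the input process imposed by Rule \rulename{Com} --- equally valid, and with the incidental merit of making explicit the reliance on the process structure that the paper's own final subcase (where it deduces from duality that $\event'$ factors through $\projS{\alpha}{\q}$) uses only implicitly.
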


\begin{proof}
Let
  %  $\participant{\alpha} = \set{\pp, \q}$. Then 
$\nec{\alpha} =\set{\locev{\pp}{\projS{\alpha}{\pp}},
  \locev{\q}{\projS{\alpha}{\q}}}$ and
$\netev=\set{\locev{\pr}{\event}, \locev{\ps}{\event'}}$.  By
\refToDef{def:PostPre}(\ref{def:PostPre3}) $\post{\netev}{\alpha}$ is
defined iff $\event= \concat{(\projS{\alpha}{\pr})}{\event_0}$ and
$\event'= \concat{(\projS{\alpha}{\ps})}{\event'_0}$ for some
$\event_0, \event'_0$.\\
There are 2 possibilities:
\begin{itemize}
\item
$\set{\pr,\ps} \cap \set{\pp,\q} = \emptyset$. Then
$\projS{\alpha}{\pr} = \projS{\alpha}{\ps} = \ee$ and
$\post{\netev}{\alpha} = \netev$;
\item $\set{\pr,\ps} \cap \set{\pp,\q} \neq \emptyset$. 
Suppose $\pr = \pp$. There are three possible subcases:
\begin{enumerate}
\item \mylabel{ila1} $\event = \concat{\pi}{\actseq}$ with $\pi \neq \projS{\alpha}{\pp}$.
Then $\locev{\pr}{\event} \grr \locev{\pp}{\projS{\alpha}{\pp}}$ and
thus $\netev\grr\nec{\alpha}$;
\item \mylabel{ila2} 
$\event = \projS{\alpha}{\pp}$. Then either
$\event'= \projS{\alpha}{\q}$ and $\netev=\nec{\alpha}$, or
$\event' \neq \projS{\alpha}{\q}$ 
and $\netev\grr\nec{\alpha}$ by
\refToProp{prop:conf};
% \bcomila Qui ho tolto la frase ``(which can be used since
% $\nec{\alpha} \in\GE(\Nt)$ and $\netev\in\GE(\Nt)$)'' visto che ora 
% la \refToProp{prop:conf} vale per due n-eventi qualunque. \ecomila 
\item \mylabel{ila3} 
$\event = \concat{(\projS{\alpha}{\pp})}{\event_0}$. Then 
$\post{\locev{\pp}{\event}}{\alpha} = \locev{\pp}{\event_0}$. Now, if
$\ps \neq \q$ we have $\post{\locev{\ps}{\event'}}{\alpha} =
\locev{\ps}{\event'}$, and thus $\post{\netev}{\alpha} =
\set{\locev{\pp}{\event_0}, \locev{\ps}{\event'}}$.
Otherwise,
$\netev=\set{\locev{\pp}{\concat{(\projS{\alpha}{\pp})}{\event_0}},
  \locev{\q}{\event'}}$. By \refToDef{n-event} $\dualevS{\locev{\pp}{\concat{(\projS{\alpha}{\pp})}{\event_0}}}{\locev{\q}{\event'}}$, which implies 
  $\event'=\concat{(\projS{\alpha}{\q})}{\event'_0}$ for some $\event'_0$.
\end{enumerate}
\end{itemize} 
\end{proof}

 The following lemma, which is technically quite challenging, %and
%therefore is proved in the Appendix, %will be crucial to prove the two forthcoming theorems. 
  relates the n-events of two networks which differ for one communication by means of the retrieval and residual operators. 

\begin{lemma}\label{epp}
Let $\Nt\stackred{\alpha}\Nt'$. Then 
\begin{enumerate}
\item\label{epp1} $\set{\nec\alpha}\cup\set{\pre\netev\alpha\mid\netev\in\GE(\Nt')}\subseteq\GE(\Nt)$; 
\item\label{epp2}  $\set{\post\netev\alpha\mid\netev\in\GE(\Nt)\text{ and }\post\netev\alpha\text{ defined} }\subseteq\GE(\Nt')$. 
\end{enumerate}
\end{lemma}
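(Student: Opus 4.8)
The plan is to prove each inclusion in two stages. For a candidate n-event I first check membership in the raw set $\DE$ of dual located-event pairs, and then membership in its narrowing $\GE=\nr{\DE}$. The first stage is a direct computation from the definitions of projection (\refToDef{pd}), duality (\refToDef{dualLocEv} and \refToDef{dualProcEv}) and the shape of the reduct forced by rule \rulename{Com}. The second stage is the technical core, and I would settle it by exploiting that $\nr{\cdot}$ is a \emph{greatest} fixpoint (\refToDef{def:narrowing}): to prove that a set $Y$ of n-events lies in $\nr{\DE}$ it suffices to show that $Y$ is a post-fixpoint of $f_{\DE}$, i.e. that every $\netev\in Y$ belongs to $\DE$ and admits a causal set entirely inside $Y$.

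For part (\ref{epp1}), write $\alpha=\Comm\pp\la\q$; by \rulename{Com} the processes of $\pp$ and $\q$ in $\Nt$ are $\oup\q{i}{I}{\la}{\PP}$ and $\inp\pp{j}{J}{\la}{\Q}$ with $\la=\la_k$, their reducts in $\Nt'$ are $\PP_k$ and $\Q_k$, and all other processes are unchanged. Retrieval $\pre{\cdot}{\alpha}$ prepends $\projS\alpha\pr$ to each component of an n-event: it leaves a component fixed when its participant lies outside $\set{\pp,\q}$ (there $\projS\alpha{\cdot}=\ee$), and otherwise prepends the branch-$k$ action $\sendL\q\la$ or $\rcvL\pp\la$, which is exactly the edge reattaching the component to the larger process tree in $\Nt$; duality survives because prepending $\ee$ is harmless and prepending the dual pair $\sendL{}\la/\rcvL{}\la$ is the generating clause of \refToDef{dualProcEv}. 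Hence $\nec\alpha$ and every $\pre\netev\alpha$ (with $\netev\in\GE(\Nt')$) lie in $\DE(\Nt)$. Now put $Y=\set{\nec\alpha}\cup\set{\pre\netev\alpha\mid\netev\in\GE(\Nt')}$. The event $\nec\alpha$ has length-one components, so $\emptyset$ is a causal set for it; for $\pre\netev\alpha$, pick a causal set $E$ of $\netev$ in $\GE(\Nt')$ and take $\set{\pre{\netev_1}\alpha\mid\netev_1\in E}\cup\set{\nec\alpha}$. Each proper prefix of a component of $\pre\netev\alpha$ is either the singleton $\projS\alpha{\cdot}$, supplied by $\nec\alpha$, or the retrieval of a prefix already covered by $E$, so condition (\ref{cs2}) of \refToDef{cs} holds; conflict-freeness follows because retrieval reflects conflict (\refToLemma{prop:prePostNet}(\ref{ppn7})), hence preserves non-conflict, and because $\nec\alpha$ is never in conflict with a retrieval (\refToLemma{ecn}(\ref{ecn1})). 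A minimal subset is then a genuine causal set inside $Y$, so $Y$ is a post-fixpoint and $Y\subseteq\GE(\Nt)$.

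For part (\ref{epp2}) I would run the dual argument on $Z=\set{\post\netev\alpha\mid\netev\in\GE(\Nt),\ \post\netev\alpha\text{ defined}}$, which by \refToLemma{prop:prePostNet}(\ref{ppn1}),(\ref{ppn1b}) equals $\set{\netev'\in\DE(\Nt')\mid\pre{\netev'}\alpha\in\GE(\Nt)}$. Membership in $\DE(\Nt')$ and duality follow by stripping the matching first actions, the inverse of the prepending above. To see $Z$ is a post-fixpoint of $f_{\DE(\Nt')}$, take $\netev'\in Z$ with witness $\netev=\pre{\netev'}\alpha\in\GE(\Nt)$ and a causal set $E$ of $\netev$; the natural candidate causal set of $\netev'$ is $\set{\post{\netev_1}\alpha\mid\netev_1\in E,\ \post{\netev_1}\alpha\text{ defined}}$. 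Each such residual lies in $Z$ since $\pre{(\post{\netev_1}\alpha)}\alpha=\netev_1\in\GE(\Nt)$, and using that retrieval preserves conflict (\refToLemma{prop:prePostNet}(\ref{ppn3b})) together with $\netev_1=\pre{(\post{\netev_1}\alpha)}\alpha$ the candidate is conflict-free; after stripping $\projS\alpha{\cdot}$ it covers the causes of $\netev'$.

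The main obstacle is precisely to justify that dropping from $E$ the members with undefined residual does not destroy condition (\ref{cs2}). By \refToLemma{dc} each $\netev_1\in E$ satisfies $\netev_1=\nec\alpha$, or $\netev_1\grr\nec\alpha$, or $\post{\netev_1}\alpha$ defined; a member equal to $\nec\alpha$ only accounts for the prefix $\projS\alpha{\cdot}$ already removed by the residual, so it may be dropped, and the real task is to exclude $\netev_1\grr\nec\alpha$ for $\netev_1\in E$. This is where the standing assumption of this section, that $\Nt$ is \emph{typable}, is essential: without it the statement is in fact false, since a participant may causally depend on a branch that the communication $\alpha$ discards, so that $\pre{\netev'}\alpha\in\GE(\Nt)$ while $\netev'\notin\GE(\Nt')$. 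Under typability, communication safety — ensured by typing and preserved by Subject Reduction and Session Fidelity (\refToTheorem{sr}, \refToTheorem{sf}) — guarantees that every causal predecessor of the $\alpha$-downstream event $\netev=\pre{\netev'}\alpha$ is itself branch-$k$ consistent, hence has a defined residual: a conflict $\netev_1\grr\nec\alpha$ by Clause (\ref{c21}) would force a differing first action at a participant shared with $\netev$, contradicting conflict-freeness of $E\cup\set\netev$ via \refToProp{prop:conf}, and a Clause (\ref{c22}) conflict is excluded because $\nec\alpha$ has length-one projections dual to those of $\netev$. Carrying out this coherence argument rigorously, and verifying minimality of the resulting causal set, is the technically demanding heart of the lemma.
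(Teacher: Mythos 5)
Your treatment of part (\ref{epp1}) follows the paper's proof essentially step for step: first membership in $\DE(\Nt)$ by the case analysis on $\loc\netev\cap\set{\pp,\q}$, then the observation that $\nr{\cdot}$ is a greatest fixpoint so it suffices to exhibit a post-fixpoint, with the candidate causal set of $\pre\netev\alpha$ built from $\nec\alpha$ together with the retrievals of a causal set $E$ of $\netev$ in $\GE(\Nt')$ (the paper drops $\nec\alpha$ when $\loc\netev\cap\set{\pp,\q}=\emptyset$ so as to get minimality directly; your ``take a minimal subset'' achieves the same). The divergence is in part (\ref{epp2}). You correctly isolate the delicate point: a member $\netev_1$ of a causal set $E$ of $\netev$ whose residual $\post{\netev_1}\alpha$ is undefined would leave Condition (\ref{cs2}) of \refToDef{cs} unsatisfied for the candidate causal set of $\post\netev\alpha$; the paper's own verification of that condition asserts the needed occurrence without checking that the witnessing residual is defined. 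But your resolution does not prove the lemma as stated: the lemma carries no typability hypothesis, and neither do \refToTheorem{uf10} and \refToTheorem{uf12} which consume it; typability enters only at \refToTheorem{iso}. Declaring typability a ``standing assumption'' and appealing to Subject Reduction and Session Fidelity changes the statement rather than proving it, and risks circularity, since the point of \refToSection{lozenges} is to treat networks independently of their types before the two strands are joined.

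Moreover, even granting typability, your argument for the crucial exclusion is flawed. You claim that $\netev_1\grr\nec\alpha$ by Clause (\ref{c21}) ``would force a differing first action at a participant shared with $\netev$'', contradicting conflict-freeness of $E\cup\set{\netev}$. But the participant witnessing the conflict with $\nec\alpha$ lies in $\set{\pp,\q}$ and need not be a location of $\netev$ at all: $\netev_1$ can cause $\netev$ through its other location. Concretely, with $\alpha=\Comm\pp\la\q$ and $\pp$ running $\q!\la\oplus\Seq{\q!\la'}{\pr!\mu}$, the event $\netev_1=\set{\locev\pp{\concat{\q!\la'}{\pr!\mu}},\locev\pr{\pp?\mu}}$ is in conflict with $\nec\alpha$ at $\pp$, has undefined residual, and is the unique cause of $\netev=\set{\locev\pr{\concat{\pp?\mu}{\ps!\nu}},\locev\ps{\pr?\nu}}$, whose residual $\post\netev\alpha=\netev$ is defined because $\loc\netev\cap\set{\pp,\q}=\emptyset$. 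So the exclusion you need cannot be derived from conflict-freeness of $E\cup\set{\netev}$, and making it follow from typability would require a structural link between typable networks and their FESs that neither your sketch nor the paper supplies. In short: part (\ref{epp1}) is correct and matches the paper; for part (\ref{epp2}) you name the right obstacle, but your proof neither stays within the lemma's hypotheses nor correctly closes the gap under the extra hypothesis you import.
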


 We may now prove the correspondence between the traces labelling
the transition sequences of a network and the proving sequences of its
FES.  

\begin{theorem}\mylabel{uf10}  If $\Nt\stackred\comseq\Nt'$, 
  then $\nec{\comseq}$ is a proving sequence in $\ESN{\Nt}$.
\end{theorem}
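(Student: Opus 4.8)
The plan is to verify the three defining requirements of a proving sequence (\refToDef{provseq}) for $\nec\comseq = \Seq{\netev_1;\cdots}{\netev_n}$ directly, exploiting the explicit shape of the events produced by $\nec{\cdot}$. Writing $\at\comseq{i}=\Comm{\pp_i}{\la_i}{\q_i}$, I would first record, as is already computed inside the proof of \refToLemma{ecn}(\ref{ecn21}), that
$\netev_i = \set{\locev{\pp_i}{\projS{\range{\comseq}{1}{i}}{\pp_i}},\locev{\q_i}{\projS{\range{\comseq}{1}{i}}{\q_i}}}$;
that is, the located event of $\netev_i$ at each of its two participants is exactly the sequence of that participant's actions along the prefix $\range{\comseq}{1}{i}$. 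This closed form is what makes the remaining checks concrete.

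First I would show that every $\netev_i$ belongs to $\GE(\Nt)$, by induction on $n=\cardin\comseq$. For $n=1$ this is the inclusion $\nec\alpha\in\GE(\Nt)$ from \refToLemma{epp}(\ref{epp1}). For $n>1$, writing $\comseq=\concat\alpha{\comseq'}$ and $\Nt\stackred\alpha\Nt_1\stackred{\comseq'}\Nt'$, the induction hypothesis makes $\nec{\comseq'}=\Seq{\netev'_2;\cdots}{\netev'_n}$ a proving sequence of $\ESN{\Nt_1}$, so each $\netev'_i\in\GE(\Nt_1)$; since $\netev_i=\pre{\netev'_i}{\alpha}$ for $2\leq i\leq n$ by \refToLemma{ecn}(\ref{ecn3}), \refToLemma{epp}(\ref{epp1}) gives $\netev_i\in\GE(\Nt)$, and $\netev_1=\nec\alpha\in\GE(\Nt)$ by the same lemma. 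Mutual non-conflict of the $\netev_i$ is then immediate from \refToLemma{ecn}(\ref{ecn21}), and distinctness follows from the closed form above: if $h<k$ and $\loc{\netev_h}=\loc{\netev_k}$, then, writing $\pp$ for a shared participant, $\pp\in\participant{\at\comseq{k}}$, so $\range{\comseq}{1}{k}$ contains strictly more $\pp$-actions than $\range{\comseq}{1}{h}$ and the located events at $\pp$ have different lengths.

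It remains to check the proving-sequence condition: for each $i$ and each $\netev\in\GE(\Nt)$ with $\netev\precN\netev_i$, there is $j<i$ with $\netev=\netev_j$ or $\netev\grr\netev_j\precN\netev_i$. Unfolding $\netev\precN\netev_i$ via Clause (\ref{c1}) of \refToDef{netevent-relations} and using the closed form, there is $\pr\in\set{\pp_i,\q_i}$ and a p-event $\procev$ with $\locev\pr\procev\in\netev$ and $\procev<\projS{\range{\comseq}{1}{i}}{\pr}$ (the two choices of $\pr$ being symmetric). Reading the actions of $\pr$ in order, the proper prefix $\procev$ terminates with the action contributed by some communication $\at\comseq{m}$ with $\pr\in\participant{\at\comseq{m}}$ and $m<i$, and in fact $\procev=\projS{\range{\comseq}{1}{m}}{\pr}$. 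Hence $\locev\pr\procev\in\netev_m$, and $\netev_m\precN\netev_i$ by Clause (\ref{c1}) since both carry a located event at $\pr$ with $\procev<\projS{\range{\comseq}{1}{i}}{\pr}$. If $\netev=\netev_m$ we take $j=m$; otherwise $\netev$ and $\netev_m$ are distinct and share the located event $\locev\pr\procev$, so $\netev\grr\netev_m$ by \refToProp{prop:conf}, and $j=m$ again works. When both located events of $\netev_i$ have length one, no such $\procev$ exists and the condition holds vacuously.

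I expect the real weight of the argument to sit in two places. The heavy lifting is entirely inside \refToLemma{epp}(\ref{epp1}), the technically challenging lemma transporting events backwards across a single transition; the induction for membership is only a thin wrapper around it. The conceptual crux, on the other hand, is the verification of the proving-sequence condition, where a direct cause $\netev$ of $\netev_i$ that is not listed in $\nec\comseq$ must be shown to be in conflict with the listed event $\netev_m$ carrying the same latest located event of $\pr$ --- this is exactly the ``downward-closed up to conflicts'' phenomenon of \refToDef{configF}, and it is supplied cleanly by \refToProp{prop:conf}. The only routine care needed is the symmetric treatment of the two participants of $\netev_i$ and the bookkeeping identifying the index $m$ from the length of $\procev$.
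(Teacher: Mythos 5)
Your proof is correct. The induction that establishes membership of the events of $\nec{\comseq}$ in $\GE(\Nt)$ (via \refToLemma{ecn}(\ref{ecn3}) and \refToLemma{epp}(\ref{epp1})) and their mutual non-conflict (via \refToLemma{ecn}(\ref{ecn21}) and (\ref{ecn1})) coincides with the paper's, but your verification of the defining condition of \refToDef{provseq} takes a genuinely different route. The paper transfers that condition inductively from $\nec{\comseq'}$ in the reduct network back to $\nec{\comseq}$, which forces a three-way case analysis on an arbitrary cause $\netev$ of $\netev_k$ supplied by \refToLemma{dc} (namely $\netev=\nec{\alpha}$, or $\netev\grr\nec{\alpha}$, or $\post{\netev}{\alpha}$ defined), together with appeals to \refToLemma{csl} and to the transport properties of the retrieval and residual operators in \refToLemma{prop:prePostNet}. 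You instead check the condition directly and non-inductively: from the closed form $\netev_i=\set{\locev{\pp_i}{\projS{\range{\comseq}{1}{i}}{\pp_i}},\locev{\q_i}{\projS{\range{\comseq}{1}{i}}{\q_i}}}$, any proper prefix $\procev$ of a located event of $\netev_i$ at $\pr$ equals $\projS{\range{\comseq}{1}{m}}{\pr}$ for the position $m<i$ of the $\cardin{\procev}$-th communication of $\pr$ in $\comseq$, so the required witness index $j=m$ is read off from the length of $\procev$, the relation $\netev_m\precN\netev_i$ is immediate, and the case $\netev\neq\netev_m$ is discharged by \refToProp{prop:conf}, since $\netev$ and $\netev_m$ then share the located event $\locev{\pr}{\procev}$. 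This yields a shorter and more self-contained argument whose only heavy ingredient is, as you say, \refToLemma{epp}(\ref{epp1}); what the paper's route buys in exchange is uniformity with the converse direction (\refToTheorem{uf12}), where the same residual/retrieval machinery is reused, and an argument that does not lean on the explicit shape of the events produced by $\nec{\cdot}$. Your explicit check of distinctness of the $\netev_i$ (required by \refToDef{provseq} and left implicit in the paper's proof) is a welcome addition.
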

\begin{proof}
  The proof is by  induction on ${\comseq}$.\\
 \emph{Base case.}  Let $\comseq=\alpha$. 
From
  $\Nt\stackred\alpha\Nt'$ and  \refToLemma{epp}(\ref{epp1})
$\nec{\alpha}\in\GE(\Nt)$. Since $\nec{\alpha}$ has no causes, 
by \refToDef{provseq} we conclude that $\nec{\alpha}$
is a proving sequence in $\ESN{\Nt}$.
%\ecompa
\\
\emph{Inductive case.}
\noindent
Let
  $\comseq=\concat{\alpha}{\comseq'}$.
From
$\Nt\stackred{\comseq}\Nt'$ we get
$\Nt\stackred{\alpha}\Nt''\stackred{\comseq'}\Nt'$ for some
$\Nt''$. Let $\nec{\comseq}=\Seq{\netev_1;\cdots}{\netev_{n}}$ and
$\nec{\comseq'}=\Seq{\netev'_2;\cdots }{\netev'_{n}}$.  By induction
$\nec{\comseq'}$ is a proving sequence in 
$\ESN{\Nt''}$.  \\
We show that $\nec{\comseq}$ is a proving sequence in 
$\ESN{\Nt}$.  By
\refToLemma{ecn}(\ref{ecn21}) $\nec{\comseq'}$ is conflict free.  By
\refToLemma{ecn}(\ref{ecn3}) $\netev_i=\pre{\netev_i'}{\alpha}$ for
all $i$, $2\leq i\leq n$. This implies $\netev_i\in\GE(\Nt)$ for all
$i$, $2\leq i\leq n$ by \refToLemma{epp}(\ref{epp1}) and
$\neg{(\netev_1\gr\netev_j)}$ for all $i,j$, $2\leq i,j\leq n$ by
\refToLemma{prop:prePostNet}(\ref{ppn7}).   Finally, since
$\netev_1=\nec{\alpha}$,  by \refToLemma{ecn}(\ref{ecn1}) we
obtain $\neg{(\netev_1\gr\netev_i)}$ for all $i$, $2\leq i\leq n$.  We
conclude that $\nec{\comseq}$ is conflict-free  and included in $\GE(\Nt)$.    Let
$\netev\in\GE(\Nt)$ and $\netev\precN \netev_k$ for some $k$, $1\leq
k\leq n$.  This implies $k > 1$  since $\nec{\alpha}$ has no
causes. Hence $\netev_k=\pre{\netev_k'}{\alpha}$.  By \refToLemma{dc},  we know that 
$\netev=\nec{\alpha}$ or $\netev\gr\nec{\alpha}$ or
$\post{\netev}{\alpha}$ is defined.  We consider the three
cases. Let $\participant\alpha=\set{\pp,\q}$. \\  
 {\em Case $\netev = \nec{\alpha}$}. In this case
we conclude immediately since $\nec{\alpha} = \netev_1$ and $1<k$. \\
{\em Case $\netev\gr\nec{\alpha}$}.  Since $\nec{\alpha}=
\netev_1$,  if
$\netev_1\prec\netev_k$  we are done. 
If $\netev_1\not\prec\netev_k$,  then $\loc{\netev_k} \cap \set{\pp,\q} =
\emptyset$   
otherwise $\netev_1\grr\netev_k$. 
We get %and thus 
$\netev_k=\pre{\netev_k'}{\alpha}=\netev_k'$.  Since
$\netev\precN \netev_k$, there exists
%$\pr\not\in\set{\pp,\q}$ such that 
$\locev\pr\procev\in \netev$ and $\locev\pr{\procev'}\in
\netev_k=\netev_k'$ such that $\procev<\procev'$,
where $\pr \notin \set{\pp,\q}$
because $\pr \in \loc{\netev_k}$. 
 Since
$\nec{\comseq'}$ is a proving sequence in  
$\ESN{\Nt''}$,  by
\refToLemma{csl} there is $\netev_h'  \in \GE(\Nt'')$ such that $\locev\pr\procev\in
\netev_h'$.  Since $\pre{\locev\pr\procev}\alpha=\locev\pr\procev$ we
get $\locev\pr\procev\in \netev_h$. This implies
$\netev_h\prec\netev_k$,  where $\netev_h\gr\netev$  by
\refToProp{prop:conf}. 
\\
{\em Case $\post{\netev}{\alpha}$ defined}. We get
$\post{\netev}{\alpha}\precN{\netev'_k}$ by
\refToLemma{prop:prePostNet}(\ref{ppn2}).  Since $\nec{\comseq'}$ is a
proving sequence in $\ESN{\Nt''}$, there is $h<k$ such that either
$\post{\netev}{\alpha}=\netev_h'$ or
$\post{\netev}{\alpha}\grr\netev_h'\prec\netev_k'$.  In the first case
$\netev=\preP{\post{\netev}{\alpha}}{\alpha}=\pre{\netev_h'}{\alpha}=\netev_h$
by \refToLemma{prop:prePostNet}(\ref{ppn1}).  %and (\ref{ppn1b}).
In the second case: \begin{itemize}\item from
  $\post{\netev}{\alpha}\grr\netev_h'$ we get
  $(\preP{\post{\netev}{\alpha}}{\alpha})\grr(\pre{\netev_h'}{\alpha})$
  by
  \refToLemma{prop:prePostNet}(\ref{ppn3b}),
  which implies $\netev\gr\netev_h$ by
  \refToLemma{prop:prePostNet}(\ref{ppn1}), and
\item from $\netev_h'\prec\netev_k'$ we get
  $(\pre{\netev_h'}{\alpha})\prec(\pre{\netev_k'}{\alpha})$ by
  \refToLemma{prop:prePostNet}(\ref{ppn2b}),  namely 
  $\netev_h\prec\netev_k$. \qedhere
\end{itemize}
\end{proof}

\begin{theorem}\mylabel{uf12}
If  
$\Seq{\netev_1;\cdots}{\netev_n}$ is a proving sequence in $\ESN{\Nt}$, 
then $\Nt\stackred\comseq\Nt'$, where $\comseq=\comm{\netev_1}\cdots\comm{\netev_n}$.
\end{theorem}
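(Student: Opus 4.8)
The plan is to prove the statement by induction on the length $n$ of the proving sequence, reversing the strategy of \refToTheorem{uf10}: there a transition sequence was turned into a proving sequence by repeatedly \emph{retrieving} events with the operator $\lozenge$, whereas here I would turn the proving sequence into a transition sequence by repeatedly firing its head event and passing to \emph{residuals} with $\blacklozenge$. Concretely, I would first fire $\netev_1$, reducing $\Nt$ to some $\Nt''$, replace each later event $\netev_i$ by $\post{\netev_i}{\alpha_1}$ (where $\alpha_1=\comm{\netev_1}$), check that the resulting sequence is a proving sequence of $\ESN{\Nt''}$, and then invoke the induction hypothesis.

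The first step is to show that the head event $\netev_1$ is enabled in $\Nt$. Since $\netev_1$ heads a proving sequence, \refToDef{provseq} at $i=1$ forces $\netev_1$ to have no flow-predecessor in $\GE(\Nt)$. This makes both located events of $\netev_1$ single actions: if $\netev_1=\set{\locev\pp\procev,\locev\q{\procev'}}$ had a proper prefix $\procev_0<\procev$, then, as $\netev_1\in\GE(\Nt)=\nr{\DE(\Nt)}$, it would possess a causal set whose members contain $\locev\pp{\procev_0}$ by Clause (\ref{cs2}) of \refToDef{cs}, producing a flow-predecessor of $\netev_1$ in $\GE(\Nt)$ and contradicting minimality. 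By duality we then get $\netev_1=\set{\locev\pp{\sendL\q\la},\locev\q{\rcvL\pp\la}}=\nec{\alpha_1}$ with $\alpha_1=\comm{\netev_1}=\Comm\pp\la\q$; since this exhibits $\pP\pp\PP\in\Nt$ offering the output $\sendL\q\la$ and $\pP\q\Q\in\Nt$ offering the input $\rcvL\pp\la$, Rule \rulename{Com} gives $\Nt\stackred{\alpha_1}\Nt''$. This already settles the base case $n=1$.

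For the inductive step I would set $\netev_i'=\post{\netev_i}{\alpha_1}$ for $2\leq i\leq n$. Each is defined: proving-sequence events are pairwise distinct and non-conflicting, so $\netev_i\neq\nec{\alpha_1}$ and $\neg(\netev_i\grr\nec{\alpha_1})$, whence \refToLemma{dc} leaves only the option that $\post{\netev_i}{\alpha_1}$ is defined; moreover $\netev_i'\in\GE(\Nt'')$ by \refToLemma{epp}(\ref{epp2}), and $\comm{\netev_i'}=\comm{\netev_i}$ because $\blacklozenge$ merely strips a leading action. Distinctness and conflict-freeness of the $\netev_i'$ follow by pulling back along $\lozenge$: if $\netev_i'=\netev_j'$ or $\netev_i'\grr\netev_j'$, then, using the identity $\pre{\netev_i'}{\alpha_1}=\netev_i$ (\refToLemma{prop:prePostNet}(\ref{ppn1})) together with \refToLemma{prop:prePostNet}(\ref{ppn3b}), we would obtain $\netev_i=\netev_j$ or $\netev_i\grr\netev_j$, contradicting the hypotheses on the original sequence.

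The genuine obstacle is verifying the proving-sequence condition of \refToDef{provseq} for $\Seq{\netev_2';\cdots}{\netev_n'}$. Given $e\in\GE(\Nt'')$ with $e\precN\netev_k'$, I would retrieve $\hat e=\pre{e}{\alpha_1}$, which lies in $\GE(\Nt)$ by \refToLemma{epp}(\ref{epp1}) and satisfies $\hat e\precN\netev_k$ by \refToLemma{prop:prePostNet}(\ref{ppn2b}) and the identity $\pre{\netev_k'}{\alpha_1}=\netev_k$ noted above. The proving-sequence property of $\Seq{\netev_1;\cdots}{\netev_n}$ then yields $j<k$ with $\hat e=\netev_j$ or $\hat e\grr\netev_j\prec\netev_k$. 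The crux is ruling out $j=1$: the case $\hat e=\nec{\alpha_1}$ is impossible since $\post{\hat e}{\alpha_1}=e$ is defined whereas $\post{\nec{\alpha_1}}{\alpha_1}$ is not, and the case $\hat e\grr\nec{\alpha_1}$ is excluded by \refToLemma{ecn}(\ref{ecn1}). Hence $j\geq 2$, and pushing forward along $\blacklozenge$ via \refToLemma{prop:prePostNet}(\ref{ppn3}) and (\ref{ppn2}) gives either $e=\netev_j'$ or $e\grr\netev_j'\prec\netev_k'$, as required. With $\Seq{\netev_2';\cdots}{\netev_n'}$ established as a proving sequence of $\ESN{\Nt''}$, the induction hypothesis gives $\Nt''\stackred{\comm{\netev_2}\cdots\comm{\netev_n}}\Nt'$, which composed with $\Nt\stackred{\comm{\netev_1}}\Nt''$ closes the induction.
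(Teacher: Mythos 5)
Your proposal is correct and follows essentially the same route as the paper's proof: induction on $n$, with the base case showing via narrowing/causal sets that the head event must be $\nec{\alpha_1}$ so that Rule \rulename{Com} fires, and the inductive step passing to residuals $\post{\netev_i}{\alpha_1}$, using \refToLemma{dc} and \refToLemma{epp}(\ref{epp2}) for definedness and membership, and pulling causes back along $\lozenge$ to verify the proving-sequence condition in $\ESN{\Nt''}$. Your explicit exclusion of the case $j=1$ (via undefinedness of $\post{\nec{\alpha_1}}{\alpha_1}$ and \refToLemma{ecn}(\ref{ecn1})) is a small point the paper leaves implicit, but otherwise the two arguments coincide.
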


\begin{proof}
  The proof is by induction on $n$.\\
  Case $n=1$.  Let
  $\netev_1=\set{\locev{\pp}{\actseq\cdot\sendL{\q}{\la}},
\locev{\q}{\actseq'\cdot\rcvL{\pp}{\la}}}$.
   Then  $\comm{\netev_1}=\Comm{\pp}{\la}{\q}$.  We first show
  that $\actseq=\actseq'=\ee$. Assume ad absurdum that
  $\actseq\neq\ee$  or  $\actseq'\neq\ee$. By narrowing, this
  implies that there is $\netev\in\GE(\Nt)$ such that
  $\netev\prec\netev_1$, 
%and $\neg(\netev\gr\netev_1)$. 
 contradicting the fact that $\netev_1$ is a proving sequence.\\ 
By
  \refToDef{netev-relations}(\ref{netev-relations1}) we have
  $\Nt=\pP{\pp}{\PP}\parN \pP{\q}{\Q}\parN\Nt_0$ with
  $\sendL{\q}{\la}\in \ES(\PP)$ and $\rcvL{\pp}{\la}\in\ES(\Q)$.
  Whence by \refToDef{esp}(\ref{ila-esp1}) we get
  $\PP=\oup\q{i}{I}{\la}{\PP}$ and $\Q=\inp\pp{j}{J}{\la}{\Q}$ where
  $\la = \la_k$ for some $k\in
  I \cap J$. Therefore 
%  \\
%  \centerline{$
\[
  \Nt\stackred{\Comm{\pp}{\la}{\q}}\pP{\pp}{\PP_k}\parN
    \pP{\q}{\Q_k}\parN\Nt_0
    \]
    %$.}  
    Case $n>1$. Let $\netev_1$ and $\Nt$
  be as in the basic case, $\Nt''=\pP{\pp}{\PP_k}\parN
  \pP{\q}{\Q_k}\parN\Nt_0 $ and $\alpha=\Comm{\pp}{\la}{\q}$.
 Since $\Seq{\netev_1;\cdots}{\netev_n}$ is a proving sequence, we have
$\neg(\netev_l\grr\netev_{l'})$ for all $l, l'$ such that $1\leq l, l'\leq n$. 
Moreover, for all $l$, $2\leq l\leq n$ we have
$\netev_l\not=\netev_1=\nec\alpha$, thus
$\post{\netev_l}{\alpha}$ is defined by \refToLemma{dc}.  Let  $\netev_l'=\post{\netev_l}{\alpha}$ for all $l$, $2\leq l\leq n$,  then
$\netev_l'\in\GE(\Nt'')$ by \refToLemma{epp}(\ref{epp2}).  
% \bcomila Ho rigirato un po' le cose nel pezzo sopra,
% aggiungendo il fatto che $\neg(\netev_l\grr\netev_{l'})$ per tutti gli
% $l,l'$ tali che $1\leq l, l'\leq n$, che serve sotto. \ecomila 
\\
 We show  that
$ \Seq{\netev_2';\cdots}{\netev_n'}$   is a
proving sequence in $\ESN{\Nt''}$.  First notice that for all
$l$, $2\leq l\leq n$, $\neg(\netev_l\grr\netev_{l'})$ implies
$\neg(\netev_l'\grr\netev_{l'}')$ by 
\refToLemma{prop:prePostNet}(\ref{ppn3b}) and (\ref{ppn1}). Let now 
$\netev\prec\netev_h'$ for some $h$, $2\leq h\leq n$.  By
\refToLemma{prop:prePostNet}(\ref{ppn2b}) and (\ref{ppn1})
$\pre\netev\alpha\prec\preP{\post{\netev_h}{\alpha}}{\alpha}=\netev_h$. This
implies by \refToDef{provseq} that there is $h'<h$ such that either
$\pre\netev\alpha=\netev_{h'}$ or
$\pre\netev\alpha\gr\netev_{h'}\prec\netev_h$. Therefore, since
$\netev_l'$ is defined for all $l$, $2\leq l\leq n$, we get either $\netev=\netev_{h'}'$  by
\refToLemma{prop:prePostNet}(\ref{ppn1b})  or
$\netev\gr\netev_{h'}'\prec\netev_h'$  by
\refToLemma{prop:prePostNet}(\ref{ppn3}) and
(\ref{ppn2}). \\
By induction $\Nt''\stackred{\comseq'}\Nt'$ where
$\comseq'=\comm{\netev_2'}\cdots\comm{\netev_n'}$.
Since $\comm{\netev_l}=\comm{\netev_l'}$ for all $l$,
$2\leq l\leq n$ we get $\comseq=\alpha\cdot\comseq'$.  Hence
$\Nt\stackred{\alpha} \Nt''\stackred{\comseq'}\Nt'$ is the required
transition sequence.
\end{proof}
% !TEX root =cdgS.tex

%

\subsection{Relating Transition Sequences of Global Types and Proving
  Sequences of their ESs}\label{bullets}

 In this subsection, we relate the traces that label the
transition sequences of global types with the configurations of their
PESs.  As for n-events, we need retrieval and residual operators
for g-events. The first  operator was already introduced in
\refToDef{causal-path}, so we only need to define the second, which is
given next.

\begin{definition}[Residual of g-events after communications]\text{~}\\[-10pt]\mylabel{def:PostPreGl}
\begin{enumerate} 
\item\mylabel{def:PostPreGl1}  The {\em  residual operator} $\bullet$ applied to a 
communication  and  a g-event is defined by:   
%\\
%\centerline{$
\[
\postG{\eqclass\comseq}{{\alpha}}=\begin{cases}
 \eqclass{\comseq'}      & \text{if }  \comseq\sim\concat{\alpha}{\comseq'}\text { and } \comseq'\neq\emptyseq\\
 \eqclass{\comseq}    & \text{if  $\participant{\alpha}
   \cap \participant{\comseq} = \emptyset$ } 
\end{cases}
   \]
%$}
\item\mylabel{def:PostPreGl2} The operator %s $\preG\cdot\cdot$ and 
$\bullet$ naturally extends to 
nonempty
traces: 
%\\
%\centerline{$
\[
\postG{\comocc}{(\concat\alpha\comseq)}=\postG{(\postG{\comocc}\alpha)}\comseq \qquad  \comseq\not=\emptyseq
\]
%$}
  \end{enumerate}
\end{definition} 

 The operator $\bullet$
gives the global event obtained by erasing the
communication,  if it occurs in head position (modulo
$\sim$) in  
the event and leaves the event unchanged if the participants of the global event and of the communication are disjoint. 
Note that the  operator
$\postG{\eqclass\comseq}{\alpha}$ is undefined whenever either 
$\eqclass\comseq=\{\alpha\}$ or  
 one of the participants of $\alpha$ occurs in $\comseq$
but its first communication is different from $\alpha$. 

\bigskip

 The following lemma %, proved in the Appendix, 
 gives some simple properties
of the retrieval and residual operators for g-events. 
The first five  statements  correspond to those of \refToLemma{prop:prePostNet}
for n-events. 
The last three  statements  
%points 
give properties 
%useful 
 that are relevant  only for the operators $\circ$ and $\bullet$.
\begin{lemma}[Properties of 
  retrieval and residual for g-events]\text{~}\\[-10pt]\mylabel{prop:prePostGl}
%Let $\Nt\stackred\alpha\Nt'$.
%The following hold: 
\begin{enumerate}
\item \mylabel{ppg1a} If $\postG{\comocc}{\alpha}$ is defined, then $\preG{(\postG{\comocc}{\alpha})}{\alpha}=\comocc$;
\item \mylabel{ppg1b} 
$\postG{(\preG{\comocc}{\alpha})}{\alpha}=\comocc$;
 \item  \mylabel{ppg4a}  If  $\comocc_1< \comocc_2$, 
then $\preG{\comocc_1}{\alpha}< \preG{\comocc_2}{\alpha}$;
\item \mylabel{ppg4b} If  $\comocc_1<\comocc_2$ and  both $\postG{\comocc_1}{\alpha}$ and
  $\postG{\comocc_2}{\alpha}$ %is %
  are 
  defined, then
  $\postG{\comocc_1}{\alpha}< \postG{\comocc_2}{\alpha}$;
 \item \mylabel{prop:prePostGl5}  If $\comocc_1\gr \comocc_2$, 
then $\preG{\comocc_1}{\alpha}\gr \preG{\comocc_2}{\alpha}$;
  \item \mylabel{ppg3} If $\comocc<\preG{\comocc'}{\alpha}$, then either $\comocc=\eqclass\alpha$ or $\postG{\comocc}{\alpha}<{\comocc'}$;
 \item \mylabel{prop:prePostGl6}  If $\participant{\alpha_1}\cap\participant{\alpha_2}=\emptyset$, then $\preG{(\preG{\comocc}{\alpha_2})}{\alpha_1}=\preG{(\preG{\comocc}{\alpha_1})}{\alpha_2}$;
 \item \mylabel{prop:prePostGl7} If $\participant{\alpha_1}\cap\participant{\alpha_2}=\emptyset$ and both $\postG{(\preG{\comocc}{\alpha_1})}{\alpha_2}$, $\postG{\comocc}{\alpha_2}$ are defined, then $\preG{(\postG{\comocc}{\alpha_2})}{\alpha_1}= \postG{(\preG{\comocc}{\alpha_1})}{\alpha_2}$.
\end{enumerate}
\end{lemma}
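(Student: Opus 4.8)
The plan is to prove all eight statements by case analysis on the two clauses in the definitions of the retrieval operator $\circ$ (\refToDef{causal-path}) and the residual operator $\bullet$ (\refToDef{def:PostPreGl}), namely whether the relevant communication $\alpha$ shares a participant with the trace representing the g-event or not. Three facts about permutation equivalence are the only ingredients needed beyond the definitions: first, $\sim$ preserves projections, i.e.\ $\comseq\sim\comseq'$ implies $\projS{\comseq}{\pp}=\projS{\comseq'}{\pp}$ for all $\pp$ (noted after \refToDef{sgeo}); second, $\sim$ preserves pointedness and the last communication (\refToLemma{ma1}); and third, two elementary properties of $\sim$ that I would isolate as an auxiliary lemma: a \emph{cancellation} property, $\concat{\alpha}{\comseq}\sim\concat{\alpha}{\comseq'}\impl\comseq\sim\comseq'$, and a \emph{splitting} property, stating that if $\participant{\alpha}\cap\participant{\comseq_1}=\emptyset$ and $\concat{\comseq_1}{\comseq_2}\sim\concat{\alpha}{\comseq_3}$, then $\comseq_2\sim\concat{\alpha}{\comseq_2'}$ and $\comseq_3\sim\concat{\comseq_1}{\comseq_2'}$ for some $\comseq_2'$. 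Both follow from the fact that $\sim$ is the congruence generated by swapping adjacent communications with disjoint participants, so a communication never crosses another one sharing a participant.

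I would dispatch the inverse identities (\ref{ppg1a}) and (\ref{ppg1b}) and the two commutation laws (\ref{prop:prePostGl6}) and (\ref{prop:prePostGl7}) first, as these are the most mechanical. For (\ref{ppg1b}), writing $\comocc=\eqclass\comseq$ and splitting on $\participant{\alpha}\cap\participant{\comseq}$ gives both cases immediately (in the intersecting case $\comseq\neq\emptyseq$, so the erasure clause of $\bullet$ applies). For (\ref{ppg1a}) the intersecting case uses pointedness of $\concat{\alpha}{\comseq'}$ (inherited via \refToLemma{ma1}) to guarantee that $\alpha$ still shares a participant with $\comseq'$, so that retrieval re-prepends it. The commutation laws (\ref{prop:prePostGl6}) and (\ref{prop:prePostGl7}) reduce to a four-way, respectively case, analysis on whether each $\alpha_i$ meets the trace, the only nontrivial point being that when both are prepended one uses $\participant{\alpha_1}\cap\participant{\alpha_2}=\emptyset$ to swap them, and that disjointness from $\comseq$ propagates to disjointness from $\concat{\alpha_1}{\comseq}$ and to any subtrace $\comseq'\sqsubseteq_\sim\comseq$.

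For the order-theoretic statements I would treat monotonicity of retrieval (\ref{ppg4a}) and conflict preservation (\ref{prop:prePostGl5}) directly: writing $\comocc_1=\eqclass{\comseq_1}$, $\comocc_2=\eqclass{\concat{\comseq_1}{\comseq_2}}$, the only delicate subcase of (\ref{ppg4a}) is when $\alpha$ is disjoint from $\comseq_1$ but not from $\comseq_2$, where the disjointness lets me slide $\alpha$ rightward, $\concat{\alpha}{\concat{\comseq_1}{\comseq_2}}\sim\concat{\comseq_1}{\concat{\alpha}{\comseq_2}}$, recovering $\eqclass{\comseq_1}<\preG{\comocc_2}{\alpha}$. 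For (\ref{prop:prePostGl5}), taking the witnessing participant $\pp$ with $\projS{\comseq_1}{\pp}\grr\projS{\comseq_2}{\pp}$, I would observe that such a conflict forces $\pp\in\participant{\comseq_1}\cap\participant{\comseq_2}$, so whenever $\alpha$ is prepended to only one side, necessarily $\projS{\alpha}{\pp}=\ee$ and the p-event conflict on $\pp$ survives unchanged.

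The main obstacle is the pair (\ref{ppg4b}) and (\ref{ppg3}), where the residual interacts with causality and the bookkeeping of ``where $\alpha$ sits modulo $\sim$'' becomes genuinely combinatorial; this is exactly where the splitting lemma earns its keep. For (\ref{ppg4b}) I would case on whether $\alpha$ is erased from the $\comseq_1$-part or from the $\comseq_2$-part of $\comocc_2$: definedness of $\postG{\comocc_1}{\alpha}$ rules out $\eqclass{\comseq_1}=\eqclass\alpha$, and the splitting lemma converts $\concat{\comseq_1}{\comseq_2}\sim\concat{\alpha}{\comseq_3}$ into the normalized forms needed to read off $\postG{\comocc_1}{\alpha}<\postG{\comocc_2}{\alpha}$. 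For (\ref{ppg3}), after reducing to the case $\participant{\alpha}\cap\participant{\comseq'}\neq\emptyset$ and locating $\alpha$ inside a representative $\concat{\comseq''}{\comseq_0}$ of $\preG{\comocc'}{\alpha}$, the key point is that everything preceding the (unique) occurrence of $\alpha$ is disjoint from $\alpha$; pulling $\alpha$ to the front and cancelling it then yields either $\comocc=\eqclass\alpha$ or $\postG{\comocc}{\alpha}<\comocc'$. I expect the verification that strictness is preserved in this last step — ruling out the degenerate possibility $\postG{\comocc}{\alpha}=\comocc'$ by re-using $\participant{\alpha}\cap\participant{\comseq'}\neq\emptyset$ — to be the subtlest point of the whole lemma.
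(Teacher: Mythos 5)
Your proposal is correct and, for seven of the eight items, follows essentially the same route as the paper: a case analysis on whether $\participant{\alpha}$ meets the trace representing the g-event, relying on the facts that $\sim$ preserves projections, pointedness and the last communication, and that a communication can only be permuted past ones with disjoint participants (your ``cancellation'' and ``splitting'' properties are used implicitly in the paper rather than isolated as a lemma, but they are the same combinatorial content). The one place where you take a genuinely different and more laborious path is item (\ref{ppg3}): the paper does not re-enter the combinatorics at all, but simply observes that if $\postG{\comocc}{\alpha}$ is defined then $\postG{\comocc}{\alpha}<\postG{(\preG{\comocc'}{\alpha})}{\alpha}=\comocc'$ by items (\ref{ppg4b}) and (\ref{ppg1b}), and that undefinedness of $\postG{\comocc}{\alpha}$ forces $\comocc=\eqclass\alpha$ because the remaining undefinedness case contradicts the definition of $\circ$ applied to $\comocc'$. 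This shortcut also dissolves the point you flag as subtlest: the degenerate possibility $\postG{\comocc}{\alpha}=\comocc'$ is excluded for free, since by (\ref{ppg1a}) it would give $\comocc=\preG{\comocc'}{\alpha}$, contradicting strictness of the hypothesis. So your direct argument for (\ref{ppg3}) would work, but deriving it from the already-proved items is both shorter and safer.
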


 The next lemma relates the retrieval and residual operator with the global types which are branches of choices. 

\begin{lemma}\label{paldf}
 The following hold: \\[-10pt]
\begin{enumerate}
\item\label{paldf1} If $\comocc\in\EGG( \G)$, then $\preG\comocc{\Comm\pp{\la}\q}\in \EGG( \gt\pp\q i I \la \G)$, where $\la=\la_k$ and $\G=\G_k$ for some $k\in I$;
\item\label{paldf2} If $\comocc\in\EGG( \gt\pp\q i I \la \G)$ and $\postG\comocc{\Comm\pp{\la_k}\q}$ is defined, then $\postG\comocc{\Comm\pp{\la_k}\q}\in\EGG( \G_k)$, where $k\in I$.
\end{enumerate}
\end{lemma}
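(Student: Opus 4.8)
The plan is to prove both statements by unfolding the definition of $\EGG$ and exploiting the fact, recorded in the proof of \refToProp{dd}, that every trace of the choice type $\gt\pp\q i I \la \G$ has the shape $\concat{\Comm\pp{\la_j}\q}{\comseq}$ with $j\in I$ and either $\comseq\in\FPaths{\G_j}$ or $\comseq=\emptyseq$ (the latter only when the path stops at the choice edge). The single computational observation underlying everything is that, for $\comseq\neq\emptyseq$ written as $\concat{\comseq''}{\alpha'}$ with $\alpha'=\last\comseq$, \refToDef{causal-path} gives $\ev{\concat{\Comm\pp{\la_j}\q}{\comseq}}=\cau{\Comm\pp{\la_j}\q}{\ev\comseq}=\preG{\ev\comseq}{\Comm\pp{\la_j}\q}$ (the degenerate case $\comseq''=\emptyseq$ being immediate). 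Thus prefixing a path by a choice edge corresponds exactly to applying the retrieval operator to the generated g-event.

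For \refToLemma{paldf}(\ref{paldf1}) I would argue directly. Given $\comocc\in\EGG(\G_k)$, pick $\comseq\in\FPaths{\G_k}$ (necessarily nonempty) with $\comocc=\ev\comseq$. Since $\concat{\Comm\pp{\la_k}\q}{\comseq}$ is a path of $\gt\pp\q i I \la \G$, the observation above yields $\ev{\concat{\Comm\pp{\la_k}\q}{\comseq}}=\preG{\comocc}{\Comm\pp{\la_k}\q}$, whence $\preG{\comocc}{\Comm\pp{\la_k}\q}\in\EGG(\gt\pp\q i I \la \G)$. No further machinery is needed here.

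For \refToLemma{paldf}(\ref{paldf2}), I first note that if the realizing path is just the edge $\Comm\pp{\la_j}\q$ then $\comocc=\eqclass{\Comm\pp{\la_j}\q}$ is a singleton and $\postG\comocc{\Comm\pp{\la_k}\q}$ is undefined, contrary to hypothesis; so I may write $\comocc=\ev{\concat{\Comm\pp{\la_j}\q}{\comseq}}=\preG{\comocc_1}{\Comm\pp{\la_j}\q}$ with $\comocc_1=\ev\comseq\in\EGG(\G_j)$ as above, and split on whether the retrieval actually prepended $\Comm\pp{\la_j}\q$, equivalently on whether $\set{\pp,\q}\cap\participant\comocc\neq\emptyset$. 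If it did, then the unique head communication of $\comocc$ touching $\pp$ is $\Comm\pp{\la_j}\q$, which cannot be permuted past anything sharing $\pp$; hence $\postG\comocc{\Comm\pp{\la_k}\q}$ can be defined only through the first clause of \refToDef{def:PostPreGl}, forcing $\la_k=\la_j$, i.e. $k=j$, and then \refToLemma{prop:prePostGl}(\ref{ppg1b}) gives $\postG\comocc{\Comm\pp{\la_j}\q}=\comocc_1\in\EGG(\G_j)=\EGG(\G_k)$. If the retrieval did not prepend, then $\comocc=\comocc_1$ and $\set{\pp,\q}\cap\participant\comocc=\emptyset$, so only the second clause of \refToDef{def:PostPreGl} applies, giving $\postG\comocc{\Comm\pp{\la_k}\q}=\comocc$ for \emph{every} $k\in I$; the remaining obligation is thus to show $\comocc\in\EGG(\G_k)$ for an arbitrary branch $k$.

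This last point is where I expect the real difficulty to lie, and it is the step that genuinely uses well-formedness. Its content is: a g-event whose communication and whole causal history avoid $\pp$ and $\q$ is shared by all branches. Note that $\set{\pp,\q}\cap\participant\comocc=\emptyset$ forces every cause recorded in $\comocc$ to involve only participants distinct from $\pp$ and $\q$, so that $\comocc=\eqclass{\comseq_1}$ for a pointed trace $\comseq_1$ living entirely among participants $\neq\pp,\q$. The plan is then to exploit that projectability of $\gt\pp\q i I \la \G$ forces $\proj{\G_j}\pr=\proj{\G_k}\pr$ for all $\pr\notin\set{\pp,\q}$, and that communications disjoint from $\set{\pp,\q}$ commute with the outer choice (the semantic counterpart being Rule \rulename{Icomm}); combining these one realizes the causal chain $\comseq_1$ as a path $\comseq_k\in\FPaths{\G_k}$ with $\ev{\comseq_k}=\comocc$. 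Making precise the transfer of a causal chain between two branches that agree on all projections outside $\set{\pp,\q}$ is the crux, and I would isolate it as an auxiliary lemma proved by induction on the structure of the branches.
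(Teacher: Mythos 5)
Your route coincides with the paper's own: both parts are reduced, via the identity $\ev{\concat{\Comm\pp{\la_j}\q}{\comseq}}=\preG{\ev\comseq}{\Comm\pp{\la_j}\q}$, to the fact that the elements of $\FPaths{\gt\pp\q i I \la \G}$ are exactly the choice edge $\Comm\pp{\la_j}\q$ possibly followed by an element of $\FPaths{\G_j}$, and your Part (1) is essentially word-for-word the paper's argument. For Part (2) the paper splits on whether $\set{\pp,\q}$ meets the participants of the whole suffix path $\comseq'$, whereas you split on whether it meets $\participant{\comocc}$, i.e.\ the pointed trace of the g-event itself; yours is in fact the condition that actually governs definedness of $\postG\comocc{\Comm\pp{\la_k}\q}$ in \refToDef{def:PostPreGl}, so your case analysis is the more faithful one. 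Your ``prepended'' branch (permutation equivalence preserves the relative order of communications sharing $\pp$, hence $\la_k=\la_j$ and $k=j$, then \refToLemma{prop:prePostGl}(\ref{ppg1b}) identifies the residual with $\comocc_1\in\EGG(\G_j)$) is correct and matches the paper's first case.

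The one genuine issue is that the decisive obligation of your second case --- that a g-event of $\EGG(\G_j)$ whose causal chain avoids $\pp$ and $\q$ belongs to $\EGG(\G_k)$ for \emph{every} $k\in I$ --- is only announced as an auxiliary lemma ``to be proved by induction on the structure of the branches'', and never actually proved. This is exactly where all the weight of the lemma sits: the paper discharges it in one line by appealing to the projection condition of \refToFigure{fig:proj} (participants other than $\pp,\q$ have identical projections in all branches, so a trace among such participants can be replayed, up to $\sim$, in any branch). Moreover, because your hypothesis is weaker than the paper's --- only the causal chain of $\last\comseq$, not the whole realizing path, is guaranteed to avoid $\set{\pp,\q}$ --- your auxiliary lemma must be stated with care: you need some $\comseq_k\in\FPaths{\G_k}$ with $\ev{\comseq_k}=\comocc$, where $\comseq_k$ itself may well involve $\pp$ and $\q$. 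The statement is true and provable along the lines you sketch, but as submitted the proposal does not contain that proof, so this step remains a gap; everything else is sound.
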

\begin{proof}
(\ref{paldf1}) By \refToDef{eg}(\ref{eg1a}) $\comocc\in\EGG( \G)$ implies $\comocc=\ev\comseq$ for some $\comseq\in\FPaths\G$. Since $\preG\comocc{\Comm\pp{\la}\q}=\ev{\concat{\Comm\pp{\la}\q}\comseq}$ by \refToDef{causal-path} and $\concat{\Comm\pp{\la}\q}\comseq\in\FPaths{ \gt\pp\q i I \la \G}$ we conclude $\preG\comocc{\Comm\pp{\la}\q}\in \EGG( \gt\pp\q i I \la \G)$ by \refToDef{eg}(\ref{eg1a}).

(\ref{paldf2}) By \refToDef{eg}(\ref{eg1a}) $\comocc\in\EGG( \gt\pp\q i I \la \G)$ implies $\comocc=\ev\comseq$ for some $\comseq\in\FPaths{\gt\pp\q i I \la \G}$. We get $\comseq=\concat{\Comm\pp{\la_h}\q}{\comseq'}$ with $\comseq'\in\FPaths{\G_h}$ or $\comseq'=\ee$ for some $h\in I$. The hypothesis $\postG\comocc{\Comm\pp{\la_k}\q}$ defined implies either $h=k$ and $\comseq'\not=\ee$ or $\participant{\comseq'}\cap\set{\pp,\q}=\emptyset$ and $\postG\comocc{\Comm\pp{\la_k}\q}=\ev{\comseq'}$ by \refToDef{def:PostPreGl}(\ref{def:PostPreGl1}). In the first case $\comseq'\in\FPaths{\G_k}$.
In the second case $\comseq''\in\FPaths{\G_k}$ for some $\comseq''\sim\comseq'$ by definition of projection,  which prescribes the same behaviours to all participants different from $\pp,\q$,  see \refToFigure{fig:proj}. We conclude  $\postG\comocc{\Comm\pp{\la_k}\q}\in\EGG( \G_k)$ by \refToDef{eg}(\ref{eg1a}).
\end{proof}

The following lemma %, proved in the Appendix,  
plays the role  %is the analogue 
of  \refToLemma{epp} for n-events.

\begin{lemma}\label{paltr}
Let $\G\stackred\alpha \G'$.
\begin{enumerate}
\item\label{paltr1} If $\comocc\in\EGG(\G')$, then $\preG\comocc{\alpha}\in \EGG(\G)$;
\item\label{paltr2} If $\comocc\in\EGG(\G)$ and $\postG\comocc{\alpha}$ is defined, then $\postG\comocc{\alpha}\in\EGG( \G')$.
\end{enumerate}
\end{lemma}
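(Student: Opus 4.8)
The plan is to prove statements (\ref{paltr1}) and (\ref{paltr2}) simultaneously, by induction on the derivation of $\G\stackred\alpha\G'$, splitting on the last rule of \refToFigure{ltgt} used to infer the transition. In the base case \rulename{Ecomm} we have $\G=\gt\pp\q i I \la \G$, $\alpha=\Comm\pp{\la_j}\q$ and $\G'=\G_j$ for some $j\in I$; then (\ref{paltr1}) is precisely \refToLemma{paldf}(\ref{paldf1}) and (\ref{paltr2}) is precisely \refToLemma{paldf}(\ref{paldf2}), both instantiated with $k=j$, so nothing further is needed.

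In the inductive case \rulename{Icomm} we have $\G=\gt\pp\q i I \la \G$, $\G'=\gtp\pp\q i I \la \G$, with $\G_i\stackred\alpha\G_i'$ for all $i\in I$ and $\participant\alpha\cap\set{\pp,\q}=\emptyset$. Write $\beta=\Comm\pp{\la_k}\q$ for the choice communication of branch $k$, so $\participant\beta=\set{\pp,\q}$ is disjoint from $\participant\alpha$. The structural fact I would use first is that every g-event of a choice type factors through a branch: if $\comocc\in\EGG(\G)$ then $\comocc=\ev\comseq$ with $\comseq=\concat\beta{\comseq'}$, and either $\comseq'=\ee$, in which case $\comocc=\eqclass\beta$, or $\comseq'\in\FPaths{\G_k}$, in which case \refToDef{causal-path} gives $\comocc=\preG{\comocc_k}\beta$ with $\comocc_k=\ev{\comseq'}\in\EGG(\G_k)$ (and symmetrically for $\G'$ through $\G_k'$). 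The subcase $\comseq'=\ee$ is immediate for both parts: since $\participant\alpha\cap\participant\beta=\emptyset$ we have $\preG{\eqclass\beta}\alpha=\eqclass\beta=\postG{\eqclass\beta}\alpha$, and $\eqclass\beta$ belongs to both $\EGG(\G)$ and $\EGG(\G')$.

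For the principal subcase $\comseq'\neq\ee$ the idea is to commute the operators past $\beta$ and recurse into branch $k$. For (\ref{paltr1}), from $\comocc=\preG{\comocc_k'}\beta$ with $\comocc_k'\in\EGG(\G_k')$ I compute $\preG\comocc\alpha=\preG{(\preG{\comocc_k'}\beta)}\alpha=\preG{(\preG{\comocc_k'}\alpha)}\beta$ using \refToLemma{prop:prePostGl}(\ref{prop:prePostGl6}) (applicable because $\participant\alpha\cap\participant\beta=\emptyset$); the induction hypothesis on $\G_k\stackred\alpha\G_k'$ yields $\preG{\comocc_k'}\alpha\in\EGG(\G_k)$, and \refToLemma{paldf}(\ref{paldf1}) lifts this to $\preG{(\preG{\comocc_k'}\alpha)}\beta\in\EGG(\G)$. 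For (\ref{paltr2}), from $\comocc=\preG{\comocc_k}\beta$ I would invoke \refToLemma{prop:prePostGl}(\ref{prop:prePostGl7}) to obtain $\postG\comocc\alpha=\preG{(\postG{\comocc_k}\alpha)}\beta$, then apply the induction hypothesis to get $\postG{\comocc_k}\alpha\in\EGG(\G_k')$, and finally \refToLemma{paldf}(\ref{paldf1}) to conclude $\postG\comocc\alpha\in\EGG(\G')$.

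The main obstacle is a definedness side-condition hidden in this last step: \refToLemma{prop:prePostGl}(\ref{prop:prePostGl7}) requires \emph{both} $\postG\comocc\alpha$ and $\postG{\comocc_k}\alpha$ to be defined, yet only the former is given. I would show that definedness of $\postG\comocc\alpha$ forces definedness of $\postG{\comocc_k}\alpha$. Unfolding $\comocc=\preG{\comocc_k}\beta$ via \refToDef{causal-path}: if $\participant\beta\cap\participant{\comseq'}=\emptyset$ then $\comocc=\comocc_k$ and the implication is trivial; otherwise $\comocc=\eqclass{\concat\beta{\comseq'}}$, and by \refToDef{def:PostPreGl} definedness of $\postG\comocc\alpha$ means either $\participant\alpha\cap\participant{\concat\beta{\comseq'}}=\emptyset$ — whence $\participant\alpha\cap\participant{\comseq'}=\emptyset$ and $\postG{\comocc_k}\alpha=\comocc_k$ — or $\concat\beta{\comseq'}\sim\concat\alpha{\comseq_2}$ with $\comseq_2\neq\ee$. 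The delicate alternative is the last one, where I must extract $\comseq'\sim\concat\alpha{\comseq_3}$ (with $\comseq_3\neq\ee$, which is automatic since $\participant{\comseq'}$ meets $\set{\pp,\q}$ while $\participant\alpha$ does not). This is a purely trace-theoretic statement: because $\alpha$ and $\beta$ have disjoint participants, equivalence of $\concat\beta{\comseq'}$ with a trace headed by $\alpha$ allows $\alpha$ to be permuted to the front of $\comseq'$. I expect to prove this by induction on the number of adjacent transpositions realising the equivalence of \refToDef{def:permEq}, using that neither $\beta$ nor $\alpha$ is ever swapped with a communication sharing one of its participants. Once this independence-of-heads lemma is established, both inductive steps close as described above.
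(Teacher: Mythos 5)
Your proof is correct and follows essentially the same route as the paper's: induction on the derivation of the transition, with the \rulename{Ecomm} case discharged by Lemma~\ref{paldf}, and the \rulename{Icomm} case handled by descending into a branch through the head communication $\beta=\Comm\pp{\la_k}\q$, commuting the retrieval/residual operators past $\beta$ via Lemma~\ref{prop:prePostGl}(\ref{prop:prePostGl6})--(\ref{prop:prePostGl7}), lifting back with Lemma~\ref{paldf}, and treating $\eqclass\beta$ as a separate subcase. The only differences are cosmetic: you factor $\comocc=\preG{\comocc_k}\beta$ where the paper works with the residual $\comocc'=\postG\comocc\beta$ (these coincide by Lemma~\ref{prop:prePostGl}(\ref{ppg1a})--(\ref{ppg1b})), and the trace-permutation fact you flag when establishing definedness of $\postG{\comocc_k}\alpha$ is asserted at essentially the same level of detail in the paper's own four-case analysis.
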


 We show next that each trace gives rise to a sequence of
g-events,  compare with \refToDef{nec}.  %and we establish some simple properties of such sequence of g-events.  

\begin{definition}[Building sequences of g-events from traces]
%[Global events from communications]
\label{gecdef}
We define the {\em sequence of global events corresponding to a
   trace  $\comseq$} by
   %\\
%{\em global proving sequence corresponding to $\comseq$} by\\
%  \centerline{$
\[
  \gec{\comseq}=\Seq{\comocc_1;\cdots}{\comocc_n}
  \]
 % $} 
  where
   $\comocc_i=\ev{\range\comseq1i}$ for all $i$,  $1\leq i\leq n$. 
  \end{definition}
  
  We show that $\gec\cdot$ has similar properties as  $\nec\cdot$, see \refToLemma{ecn}(\ref{ecn2}).  The proof is straightforward.  
  
  \begin{lemma}\mylabel{ecg}
Let  $\gec{\comseq}=\Seq{\comocc_1;\cdots}{\comocc_n}$. 

\begin{enumerate}
\item \mylabel{ecg1} $\comm{\comocc_i}=\at\comseq{i}$  for all $i$,  $1\leq i\leq n$. 
\item \mylabel{ecg2} If $1\leq h,k\leq n$, then $\neg (\comocc_h\gr\comocc_k)$;
 \end{enumerate}
 
  \end{lemma}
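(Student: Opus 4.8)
The plan is to treat the two clauses separately. Clause (\ref{ecg1}) is immediate: by \refToDef{gecdef} we have $\comocc_i = \ev{\range\comseq1i}$, and the remark following \refToDef{causal-path3} records that $\comm{\ev{\comseq'}} = \last{\comseq'}$ for every nonempty trace $\comseq'$. Since the last communication of $\range\comseq1i$ is $\at\comseq i$, this yields $\comm{\comocc_i} = \at\comseq i$ with no further argument.

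For clause (\ref{ecg2}) I would first unfold the conflict relation. By \refToDef{sgeo}(\ref{sgeo2}), $\comocc_h \gr \comocc_k$ holds iff $\projS{\comocc_h}\pp \gr \projS{\comocc_k}\pp$ for some participant $\pp$, where the projection is well defined on $\sim$-classes because $\sim$ never reorders communications sharing a participant. By \refToDef{procevent-relations}(\ref{ila--esp3}), two action sequences are in conflict precisely when they share a common prefix and then diverge into two different actions; equivalently, they conflict iff they are not $\sqsubseteq$-comparable. Hence it suffices to prove that for every $\pp$ the sequences $\projS{\comocc_h}\pp$ and $\projS{\comocc_k}\pp$ are $\sqsubseteq$-comparable.

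The heart of the argument is therefore the auxiliary claim that $\projS{\comocc_i}\pp \sqsubseteq \projS\comseq\pp$ for every $i$ and every $\pp$; granting this, $\projS{\comocc_h}\pp$ and $\projS{\comocc_k}\pp$ are two prefixes of the single sequence $\projS\comseq\pp$, hence comparable, which closes clause (\ref{ecg2}) (the case $h=k$ included). I would establish the claim in the stronger form $\projS{\ev{\comseq'}}\pp \sqsubseteq \projS{\comseq'}\pp$ for every nonempty trace $\comseq'$, by induction on $\cardin{\comseq'}$, and then instantiate it at $\comseq' = \range\comseq1i$, using that $\projS{\range\comseq1i}\pp$ is itself a prefix of $\projS\comseq\pp$. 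In the inductive step I write $\comseq' = \concat\alpha{\comseq''}$ and use the recursion $\ev{\comseq'} = \cau\alpha{\ev{\comseq''}}$ coming from \refToDef{causal-path} and \refToDef{causal-path3}, splitting on the two branches of the retrieval operator: if $\alpha$ shares a participant with the trace underlying $\ev{\comseq''}$ it is prepended, and the projection is merely prefixed by $\projS\alpha\pp$ on both sides, so monotonicity of concatenation under $\sqsubseteq$ finishes the case; otherwise $\alpha$ is discarded, and one checks, according to whether $\pp \in \participant\alpha$, that the projection of $\ev{\comseq''}$ remains a prefix of $\projS{\comseq'}\pp$.

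The main obstacle is exactly this discarding branch. Since the retrieval operator keeps only a \emph{subsequence} of the trace, one must rule out that dropping causally irrelevant communications creates a gap in the actions of some participant, leaving a genuine subsequence rather than a prefix of $\projS\comseq\pp$. The point that makes it work is that whenever $\alpha$ is discarded and $\pp \in \participant\alpha$, the disjointness condition forces $\pp$ to be absent from the trace underlying $\ev{\comseq''}$, so its projection there is empty and trivially a prefix; intuitively, once a participant has entered the causal closure it never leaves, so all of its retained actions form an initial segment. Making this monotonicity precise is the only delicate part, and the induction above is arranged so that it reduces to a short case analysis.
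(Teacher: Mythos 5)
Your proof is correct and takes the route the paper intends but leaves implicit (the paper dismisses this lemma as ``straightforward'', in analogy with \refToLemma{ecn}(\ref{ecn2})). The invariant you isolate, $\projS{\ev{\comseq'}}{\pp}\sqsubseteq\projS{\comseq'}{\pp}$, is exactly the right one: since the retrieval operator only discards a communication of $\pp$ when $\pp$ is absent from the already-accumulated suffix (and that suffix only grows as one moves backwards), the retained occurrences of $\pp$ form an initial segment, so every $\projS{\comocc_i}{\pp}$ is a prefix of the single sequence $\projS{\comseq}{\pp}$ and any two of them are $\sqsubseteq$-comparable, hence not in conflict.
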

  
 We may now prove the correspondence between the traces labelling
the transition sequences of a global type and the proving sequences of
its PES. Let us stress the difference between the set of traces
$\FPaths{\G}$ of a global type $\G$ as defined at page
\pageref{G-traces} and the set of traces that label the transition
sequences of $\G$, which is a larger set due to the internal Rule
\rulename{Icomm} of the LTS for global types given in \refToFigure{ltgt}.

\begin{theorem}\mylabel{uf13}
 If %Let 
$\G\stackred\comseq \G'$,
then
 $\gec{\comseq}$ is a proving sequence in $ \ESG{ \G}$. 
\end{theorem}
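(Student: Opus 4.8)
The plan is to mirror the proof of \refToTheorem{uf10} for networks, proceeding by induction on the trace $\comseq$ and exploiting that for a PES the proving-sequence condition of \refToDef{provseq} collapses to the implication $\comocc < \comocc_k \Rightarrow \exists j<k.\,\comocc = \comocc_j$. Writing $\gec{\comseq} = \Seq{\comocc_1;\cdots}{\comocc_n}$ with $\comocc_i = \ev{\range\comseq1i}$, conflict-freeness of the $\comocc_i$ is supplied directly by \refToLemma{ecg}(\ref{ecg2}), so the real work is to show that every $\comocc_i$ lies in $\EGG(\G)$ and that the causal-closure condition holds.

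For the base case $\comseq = \alpha$ we have $\gec\alpha = \comocc_1 = \ev\alpha = \eqclass\alpha$. Since $\eqclass\alpha$ is the class of a one-communication trace it has no causes, so it is a proving sequence as soon as $\eqclass\alpha \in \EGG(\G)$. This is the g-event counterpart of the inclusion $\nec\alpha \in \GE(\Nt)$ of \refToLemma{epp}(\ref{epp1}): analysing the derivation of $\G\stackred\alpha\G'$, each use of Rule \rulename{Icomm} prepends to $\alpha$ a communication whose participants are disjoint from those of $\alpha$, so $\G$ admits a path $\concat{\comseq_0}\alpha \in \FPaths\G$ with $\participant{\comseq_0}\cap\participant\alpha = \emptyset$; retrieval then erases all of $\comseq_0$ and gives $\ev{\concat{\comseq_0}\alpha} = \eqclass\alpha$, whence $\eqclass\alpha \in \EGG(\G)$.

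For the inductive step $\comseq = \concat\alpha{\comseq'}$ with $\comseq'\neq\emptyseq$, the transition splits as $\G\stackred\alpha\G''\stackred{\comseq'}\G'$, and by induction $\gec{\comseq'} = \Seq{\comocc'_2;\cdots}{\comocc'_n}$ is a proving sequence of $\ESG{\G''}$, so each $\comocc'_i \in \EGG(\G'')$. The key bookkeeping fact — the g-event analogue of \refToLemma{ecn}(\ref{ecn3}) — is that $\comocc_i = \preG{\comocc'_i}\alpha$ for $2\leq i\leq n$; this follows by unfolding \refToDef{gecdef} and \refToDef{causal-path}, since $\range\comseq1i = \concat\alpha{\range{\comseq'}1{i-1}}$ and $\cau{(\concat\alpha{\comseq'_0})}{\eqclass\beta} = \cau\alpha{(\cau{\comseq'_0}{\eqclass\beta})}$. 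Combined with \refToLemma{paltr}(\ref{paltr1}) this yields $\comocc_i \in \EGG(\G)$ for $i\geq 2$, while $\comocc_1 = \eqclass\alpha \in \EGG(\G)$ is the base-case observation. It then remains to check causal closure: suppose $\comocc \in \EGG(\G)$ with $\comocc < \comocc_k$. As $\comocc_1 = \eqclass\alpha$ has no causes we have $k>1$, so $\comocc < \preG{\comocc'_k}\alpha$, and by \refToLemma{prop:prePostGl}(\ref{ppg3}) either $\comocc = \eqclass\alpha = \comocc_1$ (done with $j=1$) or $\postG\comocc\alpha$ is defined with $\postG\comocc\alpha < \comocc'_k$. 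In the latter case \refToLemma{paltr}(\ref{paltr2}) gives $\postG\comocc\alpha \in \EGG(\G'')$, and since $\gec{\comseq'}$ is a proving sequence of $\ESG{\G''}$ there is $h<k$ with $\postG\comocc\alpha = \comocc'_h$; then \refToLemma{prop:prePostGl}(\ref{ppg1a}) yields $\comocc = \preG{(\postG\comocc\alpha)}\alpha = \preG{\comocc'_h}\alpha = \comocc_h$, so $j=h<k$ works.

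I expect the main obstacle to be the base-case inclusion $\eqclass\alpha\in\EGG(\G)$ rather than the causal-closure argument (which is in fact cleaner than in \refToTheorem{uf10}, because the PES condition lacks the conflict disjunct). The difficulty is exactly the gap, stressed before the statement, between $\FPaths\G$ and the larger set of traces labelling transition sequences: because Rule \rulename{Icomm} lets $\alpha$ fire past communications that do not share its participants, $\comseq$ need not be a path of $\G$, and one must argue separately — by induction on the LTS derivation — that the head communication $\alpha$ still names a genuine g-event of $\G$ via a suitable \rulename{Icomm}-padded path. Establishing and then exploiting the identity $\comocc_i = \preG{\comocc'_i}\alpha$ through the retrieval calculus of \refToDef{causal-path} is the other place where care is needed, but it reduces to routine unfolding once the base case is in hand.
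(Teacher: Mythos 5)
Your proof is correct and follows essentially the same route as the paper's: induction on the trace, the key identity $\comocc_i=\preG{\comocc'_i}{\alpha}$, and the causal-closure step via \refToLemma{prop:prePostGl}(\ref{ppg3}), \refToLemma{paltr} and \refToLemma{prop:prePostGl}(\ref{ppg1a}). The only cosmetic difference is the base case, where you exhibit an \rulename{Icomm}-padded path with participant-disjoint prefix directly, whereas the paper runs an inner induction on the derivation through \refToLemma{paldf}(\ref{paldf1}); both amount to the same argument.
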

\begin{proof}
By induction on ${\comseq}$.\\
 \emph{Base case.} 
Let $\comseq=\alpha$,  then $\gec{\alpha}=\eqclass\alpha$.   We use a further induction on 
%The proof is by induction on the reduction 
the inference of the transition  $ \G\stackred\alpha \G'$.\\
 Let $ \G=\gt\pp\q i I \la \G$, $\G'= \G_h$  and $\alpha=\Comm\pp{\la_h}\q$ for some $h\in I$. By \refToDef{eg}(\ref{eg1a}) $\eqclass{\Comm\pp{\la_h}\q}\in \EGG( \G)$.\\
Let $\G= \gt\pp\q i I \la \G$ and $\G'= \gt\pp\q i I \la {\G'}$ and $ \G_i\stackred{\alpha}\G_i'$  for all $ i \in I$ and $\participant\alpha\cap\set{\pp,\q}=\emptyset$. By induction $\eqclass{\alpha}\in\EGG( \G_i)$ for all $ i \in I$. 
By \refToLemma{paldf}(\ref{paldf1}) $\preG{\eqclass{\alpha}}{\Comm\pp{\la_i}\q}\in\EGG(\G)$ for all $ i \in I$.
By \refToDef{eg}(\ref{eg1a}) $\preG{\eqclass{\alpha}}{\Comm\pp{\la_i}\q}=\eqclass{\alpha}$, since $\participant\alpha{\cap}\set{\pp,\q}{=}\emptyset$. We conclude $\eqclass{\alpha}\in\EGG(\G)$. \\
%The proof in the case $\G= \G_1\parG\G_2$ is similar and simpler. \\
%%%%%%%%%%%%%
\emph{Inductive case.}   Let
$\comseq=\concat{\alpha}{\comseq'}$ with $\comseq'\not=\ee$.  %where
  %$\comseq'=\alpha_2\,\cdots\,\alpha_{n}$ and $n>1$.  
  From
 $\G\stackred{\comseq}\G'$ we get    $\G\stackred{\alpha}\G_0\stackred{\comseq'}\G'$ for some $\G_0$.  Let $\gec{\comseq}=\Seq{\comocc_1;\cdots}{\comocc_{n}}$ and $\gec{\comseq'}=\Seq{\comocc'_2;\cdots}{\comocc'_{n}}$.
  By induction 
    $\gec{\comseq'}$  is a proving
    sequence in $\ESG{\G_0}$. 
   By Definitions~\ref{gecdef} and~\ref{causal-path}  $\comocc_i=\preG{\comocc'_i}{\alpha}$, which implies 
    $\postG{\comocc_i}{\alpha}=\comocc'_i$ by \refToLemma{prop:prePostGl}(\ref{ppg1b}) for  all $i$,  $2\leq i\leq n$.\\
    We can show that $\comocc_1=\eqclass\alpha\in\EGG(\G)$ as in the proof of the base case. By \refToLemma{paltr}(\ref{paltr1}) $\comocc_i\in\EGG(\G)$ since $\comocc_i'\in\EGG(\G_0)$ and 
    $\postG{\comocc_i}{\alpha}=\comocc'_i$ for  all $i$,  $2\leq i\leq n$.
We  prove that   $\gec{\comseq}$  is a
proving sequence in  $\ESG{\G}$. 
Let
$\comocc<  \comocc_k$ for some $k$, $1\leq k\leq n$.  Note that this implies 
$k>1$. 
 Since $\comocc_k=\preG{\comocc'_k}{\alpha}$ by \refToLemma{prop:prePostGl}(\ref{ppg3})
    either $\comocc=\eqclass\alpha$ or $\postG{\comocc}{\alpha}<\comocc'_h$. %\\
    If $\comocc=\eqclass\alpha=\comocc_1$ we are done. Otherwise $\postG{\comocc}{\alpha}\in\EGG(\G_0)$ by \refToLemma{paldf}(\ref{paldf2}).  
 Since  $\gec{\comseq'}$  is a proving
    sequence in $\ESG{\G_0}$, there is $h<k$ such that $\postG{\comocc}{\alpha}=\comocc_h'$ and this implies $\comocc=\preG{(\postG{\comocc}{\alpha})}{\alpha}=\preG{\comocc_h'}{\alpha}=\comocc_h$  by \refToLemma{prop:prePostGl}(\ref{ppg1a}). 
\end{proof}

\begin{theorem}\mylabel{uf14}
  If $\Seq{\comocc_1;\cdots}{\comocc_n}$ is a proving sequence in $
  \ESG{ \G}$, then $ \G\stackred\comseq \G'$, where
  $\comseq=\concat{\concat{\comm{\comocc_1}}\cdots}{\comm{\comocc_n}}$.
\end{theorem}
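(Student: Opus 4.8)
The plan is to argue by induction on $n$, following the pattern of its network analogue \refToTheorem{uf12}, with the residual operator $\bullet$ on g-events (\refToDef{def:PostPreGl}) playing the role that $\blacklozenge$ played there and \refToLemma{paltr}, \refToLemma{prop:prePostGl} replacing \refToLemma{epp}, \refToLemma{prop:prePostNet}. Since $\ESG{\G}$ is a prime event structure (\refToTheorem{basta12}), a proving sequence is just a conflict-free enumeration compatible with causality, so its first element $\comocc_1$ has no causes in $\EGG(\G)$. The crux of the whole argument is an \emph{enabling lemma}: if $\comocc\in\EGG(\G)$ has no causes, then $\comocc=\eqclass{\comm\comocc}$ is a singleton and $\G\stackred{\comm\comocc}{\G'}$ for some $\G'$. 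Granting this, the base case $n=1$ and the extraction of the first transition in the inductive step are both immediate.

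I would prove the enabling lemma in two steps, and I expect the second to be the main obstacle. First, that a causeless $\comocc$ is a singleton: writing $\comocc=\ev\comseq$ with $\comseq\in\FPaths\G$ (by \refToDef{eg}(\ref{eg1a})), let $\gamma=\at\comseq{i_1}$ be the left-most communication retained by the retrieval defining $\ev\comseq$, where $i_1$ is the smallest retained index. Any cause of $\gamma$ retained in $\ev{\range\comseq1{i_1}}$ would connect to $\gamma$ along a participant-sharing chain, and since $\gamma$ in turn connects to $\last\comseq$, that communication would also be retained with respect to $\last\comseq$ at a position $<i_1$, contradicting minimality of $i_1$; hence $\ev{\range\comseq1{i_1}}=\eqclass\gamma$. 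If the pointed representative of $\comocc$ had length $\geq 2$ we would get $\eqclass\gamma\precP\comocc$ by \refToDef{sgeo}(\ref{sgeo1}) (as $\gamma$ heads that representative) with $\eqclass\gamma\neq\comocc$, contradicting causelessness; so $\comocc=\eqclass\alpha$ with $\alpha=\comm\comocc$. Second, that $\eqclass\alpha\in\EGG(\G)$ implies $\G\stackred\alpha{\G'}$: here I would induct on the length of a witnessing path $\comseq\in\FPaths\G$ with $\ev\comseq=\eqclass\alpha$ and $\last\comseq=\alpha$. If $|\comseq|=1$ then $\alpha$ labels the root of $\G$ and Rule \rulename{Ecomm} applies. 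If $|\comseq|>1$, the root communication $\beta$ of $\G$ satisfies $\participant\beta\cap\participant\alpha=\emptyset$ (otherwise the retrieval would retain $\beta$ and $\comocc$ would not be the singleton $\eqclass\alpha$), and the remainder of the path lies in the chosen branch $\G_j$ and still generates $\eqclass\alpha$, so $\eqclass\alpha\in\EGG(\G_j)$ and the induction hypothesis gives $\G_j\stackred\alpha{\cdot}$. The delicate point is that Rule \rulename{Icomm} requires $\alpha$ to be enabled in \emph{all} branches, not just the one the path visits. This is where well-formedness enters: by \refToLemma{keysr}(\ref{keysr2}) the transition $\G_j\stackred\alpha{\cdot}$ forces $\proj{\G_j}{\pp}$ and $\proj{\G_j}{\q}$ to be matching output/input processes (where $\participant\alpha=\set{\pp,\q}$), and since $\pp,\q\notin\participant\beta$, projectability of $\G$ gives $\proj{\G_i}{\pp}=\proj{\G_j}{\pp}$ and $\proj{\G_i}{\q}=\proj{\G_j}{\q}$ for every $i$; \refToLemma{keysr}(\ref{keysr1}) then yields $\G_i\stackred\alpha{\cdot}$ for all $i$, and \rulename{Icomm} applies. (Subtrees of well-formed types are well-formed, so both the induction hypothesis and \refToLemma{keysr} are legitimately invoked on $\G_j$.)

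For the inductive step of the main induction, let $\alpha=\comm{\comocc_1}$ and take $\G\stackred\alpha{\G_0}$ from the enabling lemma. For each $l\geq 2$ I would first check that $\postG{\comocc_l}{\alpha}$ is defined: since $\comocc_l\neq\comocc_1=\eqclass\alpha$ and, the sequence being conflict-free, $\neg(\comocc_l\gr\comocc_1)$, the only way the residual could be undefined — a participant of $\alpha$ heading $\comocc_l$ with a communication $\neq\alpha$ — is excluded, as it would make the two projections of that participant differ at their first action and hence give $\comocc_l\gr\comocc_1$ by \refToDef{sgeo}(\ref{sgeo2}). Setting $\comocc_l'=\postG{\comocc_l}{\alpha}$, \refToLemma{paltr}(\ref{paltr2}) gives $\comocc_l'\in\EGG(\G_0)$, and $\comm{\comocc_l'}=\comm{\comocc_l}$ since erasing $\alpha$ from the head leaves the last communication unchanged (invariance of $\last{\cdot}$ under $\sim$, \refToLemma{ma1}). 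That $\comocc_2';\cdots;\comocc_n'$ is a proving sequence in $\ESG{\G_0}$ follows exactly as in \refToTheorem{uf12}: it is conflict-free because $\comocc_l'\gr\comocc_{l'}'$ would give $\comocc_l\gr\comocc_{l'}$ via \refToLemma{prop:prePostGl}(\ref{prop:prePostGl5}) and (\ref{ppg1a}); and it satisfies the proving condition because any $\comocc<\comocc_k'$ yields $\preG\comocc\alpha<\comocc_k$ by \refToLemma{prop:prePostGl}(\ref{ppg4a}),(\ref{ppg1a}), so $\preG\comocc\alpha=\comocc_{h'}$ for some $h'<k$ by the proving property of the original sequence, whence $\comocc=\postG{\comocc_{h'}}{\alpha}=\comocc_{h'}'$ by (\ref{ppg1b}) — and $h'=1$ is impossible, since $\preG\comocc\alpha=\comocc_1=\eqclass\alpha$ cannot hold for any genuine g-event $\comocc$. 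The induction hypothesis then gives $\G_0\stackred{\comseq'}{\G'}$ with $\comseq'=\comm{\comocc_2}\cdots\comm{\comocc_n}$, and prefixing $\G\stackred\alpha{\G_0}$ yields $\G\stackred{\comseq}{\G'}$ with $\comseq=\comm{\comocc_1}\cdots\comm{\comocc_n}$, as required.
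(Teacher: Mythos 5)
Your proposal is correct and follows essentially the same route as the paper's proof: an outer induction on $n$, a base case establishing that the causeless first event is a singleton $\eqclass{\alpha}$ whose communication $\G$ can fire, and an inductive step that pushes the tail of the proving sequence through the residual operator $\bullet$, using \refToLemma{prop:prePostGl} and \refToLemma{paltr} exactly as you describe (including the observation that $\preG{\comocc}{\alpha}=\eqclass{\alpha}$ is impossible, which the paper leaves implicit). The only real divergence is inside your ``enabling lemma'': the paper runs an inner induction on $d=\weight(\G,\pp)$ and uses \refToLemma{paldf}(\ref{paldf2}) to place $\eqclass{\alpha}$ in \emph{every} branch $\G_i$, so the induction hypothesis directly yields the premises of \rulename{Icomm}; you instead induct on the length of a witnessing path, obtain the transition only in the branch the path visits, and then transfer it to the remaining branches via \refToLemma{keysr} and the agreement of projections on non-root participants. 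Both mechanisms ultimately appeal to projectability of well-formed types, so the difference is one of packaging rather than substance; your variant arguably makes the role of well-formedness more explicit, while the paper's depth-based induction reuses machinery (\refToLemma{paldf}, Proposition~\ref{dd}) already set up for the rest of Section~\ref{bullets}.
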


\begin{proof} 
The proof is by induction on the length $n$ of the proving sequence. Let $\comm{\comocc_1}=\alpha$ and $\set{\pp,\q}=\participant\alpha$.\\
 {\it Case $n=1$.}   Since $\globev_1$ is the
first event of a proving sequence, we have $\globev_1 =\eqclass{\alpha}$. We show this case by induction on $d=\weight(\G,\pp)=\weight(\G,\q)$.\\
{\it Case $d=1$.} Let $\alpha = \Comm\pp\la\q$ and $\G=\gt\pp\q i I \la \G$ and $\la=\la_h$ for some $h\in I$. Then $\G\stackred\alpha\G_h$ by rule \rulename{Ecomm}.\\
{\it Case $d>1$.} Let  $\G= \gt\pr\ps i I \la \G$ and $\set{\pr,\ps}\cap\set{\pp,\q}=\emptyset$. By \refToDef{def:PostPreGl}(\ref{def:PostPreGl1})  $\postG{\comocc_1}{\Comm\pr{\la_i}\ps}$
is defined for all $i\in I$ since $\set{\pr,\ps}\cap\set{\pp,\q}=\emptyset$. This implies $\postG{\comocc_1}{\Comm\pr{\la_i}\ps}\in\EGG(\G_i)$  for all $i\in I$ by \refToLemma{paldf}(\ref{paldf2}).
By induction hypothesis $\G_i\stackred\alpha\G'_i$ for all $i\in I$. Then we can apply rule  \rulename{Icomm} to derive $\G\stackred\alpha\gtp\pr\ps i I \la \G$.\\
{\it Case $n>1$.}    
Let $\G\stackred\alpha\G''$
be the transition as obtained from the base case. 
We show that $\postG{\comocc_j}{\alpha}$ is defined for all $j$, $2\leq j\leq n$. 
If $\postG{\comocc_k}{\alpha}$ were undefined
for some $k$, $2\leq k\leq n$, then by
\refToDef{def:PostPreGl}(\ref{def:PostPreGl1}) either
$\comocc_k=\comocc_1$ or $\comocc_k=\eqclass\comseq$ with
$\comseq\not\sim\concat\alpha{\comseq'}$ and
$\participant\alpha\cap\participant\comseq\not=\emptyset$. In the second case
$\pro\alpha{\pp}\grr\pro\comseq{\pp}$ or $\pro\alpha{\q}\grr\pro\comseq{\q}$, which
implies $\comocc_k\grr\comocc_1$.  So both cases are impossible.  If
$\postG{\comocc_j}{\alpha}$ is defined, by
\refToLemma{paltr}(\ref{paltr2}) we get $
\postG{\comocc_j}{\alpha}\in\EGG(\G'')$ for all $j$,
$2\leq j\leq n$.\\
 We show that $\comocc'_2 ;\cdots ;\comocc'_n$ is a
proving sequence in $\ESG{\G''}$ where $\comocc'_j =
\postG{\comocc_j}{\alpha}$ for all $j$,
$2\leq j\leq n$.
By \refToLemma{prop:prePostGl}(\ref{ppg1a}) $\comocc_j =
\preG{\comocc'_j}{\alpha}$ for all $j$, $2\leq j\leq n$. 
Then by
\refToLemma{prop:prePostGl}(\ref{prop:prePostGl5}) no two
events in  the sequence $\comocc'_2 ;\cdots ;\comocc'_n$  can be in conflict. 
Let $\comocc\in\EGG(\G'')$ and $\comocc< \comocc'_h$
for some $h$, $2\leq h\leq n$.  
By  \refToLemma{paltr}(\ref{paldf1})  $\preG{\comocc}\alpha$ and $\preG{\comocc'_h}\alpha$  belong to $\EGG(\G)$. By
\refToLemma{prop:prePostGl}(\ref{ppg4a})
$\preG{\comocc}\alpha<\preG{\comocc'_h}{\alpha}$. 
By \refToLemma{prop:prePostGl}(\ref{ppg1a})
$\preG{\comocc'_h}{\alpha}=\comocc_h$.
Let $\comocc' = \preG{\comocc}\alpha$. 
Then
$\comocc' < \comocc_h$ implies, by \refToDef{provseq} 
and the fact that $\ESG{\G}$ is a PES, that there is $k<h$ 
such that $\comocc'
=\comocc_k$.  By \refToLemma{prop:prePostGl}(\ref{ppg1a})
we get $\comocc=\postG{\comocc'}{\alpha} = \postG{\comocc_k}\alpha
= \comocc'_k$.\\
Since $\comocc'_2 ;\cdots ;\comocc'_n$ is a proving
sequence in $\ESG{\G''}$, by induction
$\G''\stackred{\comseq'}\G'$  where $\comseq' = 
\concat{\concat{\comm{\comocc'_2}}\ldots}{\comm{\comocc'_n}}$.  Let
$\comseq=\concat{\concat{\comm{\comocc_1}}\ldots}{\comm{\comocc_n}}$. Since
$\comm{\comocc'_j} = \comm{\comocc_j}$ for all $j, 2\leq j\leq n$, we
    have $\comseq = \concat{\alpha}{\comseq'}$. Hence
$\G\stackred\alpha\G''\stackred{\comseq'}\G'$
is the required transition sequence. 

\end{proof}

 The last ingredient required  to prove our main theorem is
the following separation result from~\cite{BC91} (Lemma 2.8 p. 12):

\begin{lemma}[Separation~\cite{BC91}]
\mylabel{separation}
Let $S=(E,\prec, \gr)$ be a flow event structure and $\ESet, \ESet' \in \Conf{S}$
be such that $\ESet \subset \ESet'$.
Then there exist $e \in \ESet'\backslash \ESet$ such that $\ESet \cup \set{e} \in \Conf{S}$.
\end{lemma}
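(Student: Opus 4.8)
The plan is to exploit the fact, guaranteed by Condition~(\ref{configF3}) of \refToDef{configF}, that $\prec_{\ESet'}^*$ is a partial order on the finite set $\ESet'$. Since $\ESet\subset\ESet'$ is a \emph{strict} inclusion, the set $\ESet'\backslash\ESet$ is finite and nonempty, hence it contains an element $e$ that is minimal in $\ESet'\backslash\ESet$ with respect to $\prec_{\ESet'}^*$. I claim this $e$ is the required event, i.e. that $\ESet\cup\set{e}\in\Conf{S}$, and I would prove the claim by verifying the three conditions of \refToDef{configF} for $\ESet\cup\set{e}$.

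Conditions~(\ref{configF2}) and~(\ref{configF3}) are inherited from $\ESet'$ via the inclusion $\ESet\cup\set{e}\subseteq\ESet'$. Conflict-freeness is immediate. For the absence of causality cycles, note that any $\prec$-path that stays inside $\ESet\cup\set{e}$ is in particular a $\prec$-path inside $\ESet'$, so $\prec_{\ESet\cup\set{e}}^*\subseteq\prec_{\ESet'}^*$; since the latter is antisymmetric, so is the former, and $\prec_{\ESet\cup\set{e}}^*$ is again a partial order.

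The hard part — and the only place where the minimality of $e$ is used — is Condition~(\ref{configF1}), downward-closure up to conflicts. Here I would take $e'\prec f\in\ESet\cup\set{e}$ with $e'\notin\ESet\cup\set{e}$ and exhibit a witness $e''\in\ESet\cup\set{e}$ such that $e'\gr e''\prec f$. If $f\in\ESet$, then $e'\prec f\in\ESet$ with $e'\notin\ESet$, so the witness is supplied directly by Condition~(\ref{configF1}) applied to the configuration $\ESet$ (and $e''\in\ESet\subseteq\ESet\cup\set{e}$). The delicate case is $f=e$. First, $e'$ cannot lie in $\ESet'$: otherwise $e'\in\ESet'\backslash\ESet$ with $e'\prec e$, whence $e'\prec_{\ESet'}^* e$ and $e'\neq e$, contradicting the minimality of $e$. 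So $e'\notin\ESet'$, and Condition~(\ref{configF1}) for $\ESet'$ yields $e''\in\ESet'$ with $e'\gr e''\prec e$. Since $\prec$ is irreflexive (\refToDef{fes}) we have $e''\neq e$, and were $e''$ in $\ESet'\backslash\ESet$ we would again get $e''\prec_{\ESet'}^* e$, contradicting minimality; hence $e''\in\ESet$, which is precisely the witness needed. This exhausts all cases, so $\ESet\cup\set{e}$ is a configuration and $e\in\ESet'\backslash\ESet$ separates $\ESet$ from $\ESet'$. The main obstacle, as indicated, is handling the subcase $f=e$: it is exactly there that choosing $e$ \emph{minimal} (rather than arbitrary) in $\ESet'\backslash\ESet$ is essential, both to locate $e'$ outside $\ESet'$ and to force the conflict witness $e''$ back into $\ESet$.
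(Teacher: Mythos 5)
Your proof is correct and complete. Note that the paper does not actually prove this lemma: it is imported from~\cite{BC91} (Lemma~2.8 there) and used as a black box in the proof of \refToTheorem{iso}, so there is no in-paper argument to compare against. Your argument --- picking $e$ minimal in $\ESet'\backslash\ESet$ with respect to $\prec_{\ESet'}^*$ (legitimate, since Condition~(\ref{configF3}) of \refToDef{configF} makes this a partial order on a finite set), inheriting Conditions~(\ref{configF2}) and~(\ref{configF3}) by inclusion, and then using minimality twice in the case $f=e$ (first to place $e'$ outside $\ESet'$, then to force the conflict witness $e''$ into $\ESet$) --- is the standard proof of this separation property and all steps check out against \refToDef{fes} and \refToDef{configF}.
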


We may now  finally  show the
correspondence between the configurations of the  FES of a network
and the configurations of the PES  of its global type.
Let $\simeq$ denote isomorphism on domains of configurations.

\begin{theorem}[Isomorphism]\mylabel{iso}
 If $\derN\Nt\RG$, then $\CD{\ESN{\Nt}} \simeq \CD{\ESG{\RG}}$.
\end{theorem}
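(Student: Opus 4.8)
The plan is to build a bijection between the two configuration domains by routing through transition sequences, exploiting the bridge provided by Subject Reduction and Session Fidelity. By \refToProp{provseqchar} a subset of events is a configuration exactly when it can be enumerated as a proving sequence. On the network side, \refToTheorem{uf10} and \refToTheorem{uf12} (together with \refToLemma{ecn}(\ref{ecn22})) show that the proving sequences of $\ESN{\Nt}$ are precisely the sequences $\nec{\comseq}$ for traces $\comseq$ with $\Nt \stackred{\comseq} \Nt'$; symmetrically, \refToTheorem{uf13}, \refToTheorem{uf14} and \refToLemma{ecg}(\ref{ecg1}) show that the proving sequences of $\ESG{\RG}$ are precisely the $\gec{\comseq}$ for traces $\comseq$ with $\RG \stackred{\comseq} \RG'$. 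Finally, since $\derN{\Nt}{\RG}$, iterating \refToTheorem{sr} and \refToTheorem{sf} yields $\Nt \stackred{\comseq} \Nt'$ if and only if $\RG \stackred{\comseq} \RG'$, so networks and types realise exactly the same set of traces. This common set of traces is what I use to glue the two domains.

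Concretely, I would define $\Phi : \CD{\ESN{\Nt}} \to \CD{\ESG{\RG}}$ as follows. Given a configuration $\ESet$, enumerate it as a proving sequence $\Seq{\netev_1;\cdots}{\netev_n}$ (\refToProp{provseqchar}) and set $\comseq = \concat{\concat{\comm{\netev_1}}\cdots}{\comm{\netev_n}}$, so that $\Nt \stackred{\comseq} \Nt'$ by \refToTheorem{uf12}; by Subject Reduction $\RG \stackred{\comseq} \RG'$, and by \refToTheorem{uf13} the sequence $\gec{\comseq}$ is a proving sequence of $\ESG{\RG}$, whose underlying set is a configuration. Put $\Phi(\ESet)$ equal to that set. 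The map $\Psi$ in the opposite direction is defined symmetrically, using \refToTheorem{uf14}, Session Fidelity and \refToTheorem{uf10}. By \refToLemma{ecn}(\ref{ecn22}) and \refToLemma{ecg}(\ref{ecg1}) both $\nec{\comseq}$ and $\gec{\comseq}$ have $\comseq$ as their sequence of communications, so $\Phi$ and $\Psi$ preserve the underlying trace.

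Three facts then have to be checked. First, $\Phi$ and $\Psi$ are mutually inverse: for a fixed trace $\comseq$ the located events occurring in the $i$-th entry of any proving sequence with communications $\comseq$ are forced to be the projections of $\range{\comseq}{1}{i}$ onto the relevant participants, so $\nec{\comseq}$ and $\gec{\comseq}$ reconstruct, as \emph{sets}, exactly the configuration one started from; hence $\Psi \circ \Phi$ and $\Phi \circ \Psi$ are the identity. Second, $\Phi$ is monotone: if $\ESet \subseteq \ESet'$, repeated use of the Separation \refToLemma{separation} lets me enumerate $\ESet'$ by a proving sequence whose initial segment enumerates $\ESet$; the corresponding trace $\comseq'$ then has a prefix $\comseq$ realising $\ESet$, and since $\gec{\range{\comseq'}{1}{i}}$ is the length-$i$ prefix of $\gec{\comseq'}$, we obtain $\Phi(\ESet) \subseteq \Phi(\ESet')$. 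The same argument gives monotonicity of $\Psi$, so $\Phi$ is an isomorphism of posets.

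The main obstacle is the \emph{well-definedness} of $\Phi$, i.e.\ its independence from the chosen enumeration of $\ESet$. I would reduce this to two observations. On the one hand, two proving-sequence enumerations of the same configuration are connected by repeatedly transposing adjacent, non-causally-related n-events; such events must have disjoint locations, since any shared participant would force comparable p-events there (configurations are conflict-free) and hence a flow edge by \refToDef{netevent-relations}(\ref{c1}). Their communications therefore have disjoint participants, so the two traces are permutation equivalent in the sense of \refToDef{def:permEq}. On the other hand, $\gec{\cdot}$ sends permutation-equivalent traces to the same set of g-events: by \refToDef{causal-path} the retrieval operator discards a communication whose participants are disjoint from those of the event, so swapping two adjacent disjoint communications merely exchanges the two corresponding entries of $\gec{\comseq}$ and leaves every other entry unchanged (using \refToLemma{ma1} to see that the remaining entries are unaffected). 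Combining these, $\Phi(\ESet)$ does not depend on the enumeration, and by symmetry neither does $\Psi$. Establishing this permutation-invariance cleanly, and in particular the connectivity of the proving-sequence enumerations of a configuration under adjacent transpositions, is the delicate part; everything else is bookkeeping on top of Theorems~\ref{uf10}, \ref{uf12}, \ref{uf13}, \ref{uf14}, \ref{sr} and \ref{sf}.
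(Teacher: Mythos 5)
Your proposal is correct and follows essentially the same route as the paper's proof: both pass from proving sequences of $\ESN{\Nt}$ to traces via \refToTheorem{uf12}, across to the type side via Subject Reduction and Session Fidelity, back to proving sequences of $\ESG{\RG}$ via \refToTheorem{uf13} (and symmetrically via \refToTheorem{uf14} and \refToTheorem{uf10}), and then use the Separation \refToLemma{separation} to reduce order-preservation to one-step extensions. The only difference is one of explicitness: you single out and argue the independence of the map from the chosen proving-sequence enumeration (via permutation equivalence of the induced traces), a point the paper leaves implicit in its phrase ``for any $\comseq$ generated by the (bisimilar) LTSs''.
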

\begin{proof}
  By \refToTheorem{uf12} if $\Seq{{\netev_1};\cdots}{\netev_n}$ is a
  proving sequence of $\ESN{\Nt}$, then $\Nt\stackred\comseq\Nt'$
  where $\comseq=\comm{\netev_1}\cdots\comm{\netev_n}$. By  applying iteratively  Subject
  Reduction (\refToTheorem{sr}) $\G\stackred\comseq\G'$ and
  $\derN{\Nt'}{\RG'}$.  By \refToTheorem{uf13} %(\ref{uf131}) 
  $\gec\comseq$ is a
  proving sequence of $\ESG{\RG}$.
  
  By \refToTheorem{uf14} if $\Seq{{\globev_1};\cdots}{\globev_n}$ is a
  proving sequence of $\ESG{\RG}$, then $\G\stackred\comseq\G'$ where
  $\comseq=\comm{\globev_1}\cdots\comm{\globev_n}$. By  applying iteratively  Session
  Fidelity (\refToTheorem{sf}) $\Nt\stackred\comseq\Nt'$ and
  $\derN{\Nt'}{\RG'}$.  By \refToTheorem{uf10} %(\ref{uf101}) 
  $\nec\comseq$ is a
  proving sequence of $\ESN{\Nt}$.

Therefore we have a bijection between $\CD{\ESN{\Nt}}$ and
$\CD{\ESG{\RG}}$, given by $\nec\comseq \leftrightarrow \gec\comseq$
for any $\comseq$ generated by the (bisimilar) LTSs of $\Nt$ and $\RG$.

   We show now that this bijection preserves inclusion of
  configurations.  By
  \refToLemma{separation} it is enough to prove that if 
  $\Seq{{\netev_1};\cdots}{\netev_n} \in
  \Conf{\ESN{\Nt}}$ is mapped to $\Seq{{\globev_1};\cdots}{\globev_n} \in \Conf{\ESG{\RG}}$, then
  $\Seq{\Seq{{\netev_1};\cdots}{\netev_n}}\netev \in
  \Conf{\ESN{\Nt}}$  iff  %and only if
  $\Seq{\Seq{{\globev_1};\cdots}{\globev_n}}\globev \in
  \Conf{\ESG{\RG}}$,  where
  $\Seq{\Seq{{\globev_1};\cdots}{\globev_n}}\globev$ is the image of
$\Seq{\Seq{{\netev_1};\cdots}{\netev_n}}\netev$ under the bijection. I.e. let $\nec{\concat{\comseq}{\alpha}} =
\Seq{{\netev_1};\cdots}{\netev_n;\netev}$ and $\gec{\concat{\comseq}{\alpha}} =
\Seq{{\globev_1};\cdots}{\globev_n;\globev}$. This implies $\comseq=\comm{\netev_1}\cdots\comm{\netev_n}=
\comm{\globev_1}\cdots\comm{\globev_n}$ and $\alpha=\comm\netev=\comm\globev$ by  Lemmas~\ref{ecn} and~\ref{ecg}. 

%By \refToLemma{snb} $\GE(\Nt)$ is balanced. 
By \refToTheorem{uf12}, if
$\Seq{{\netev_1};\cdots}{\netev_n;\netev}$ is a proving sequence of
$\ESN{\Nt}$, then $\Nt\stackred\comseq\Nt_0\stackred\alpha\Nt'$.
By  applying iteratively  Subject Reduction (\refToTheorem{sr})
$\G\stackred\comseq\G_0\stackred\alpha\G'$ and $\derN{\Nt'}{\RG'}$.
By \refToTheorem{uf13} %(\ref{uf131}) 
$\gec{\concat\comseq\alpha}$ is a proving
sequence of $\ESG{\RG}$.

By \refToTheorem{uf14}, if
  $\Seq{{\globev_1};\cdots}{\globev_n;\globev}$ is a proving sequence
  of $\ESG{\RG}$, then $\G\stackred\comseq\G_0\stackred\alpha\G'$.
By  applying iteratively  Session Fidelity (\refToTheorem{sf})
  $\Nt\stackred\comseq\Nt_0\stackred\alpha\Nt'$ and
  $\derN{\Nt'}{\RG'}$.   By \refToTheorem{uf10} %(\ref{uf101})
  $\nec{\concat\comseq\alpha}$ is a proving sequence of
  $\ESN{\Nt}$. 
  \end{proof}

% !TEX root =cdgS.tex

\section{Related Work and Conclusions}
\mylabel{sec:related}

Event Structures (ESs) were introduced in Winskel's PhD
Thesis~\cite{Win80} and in the seminal paper by Nielsen, Plotkin and
Winskel~\cite{NPW81}, roughly in the same frame of time as Milner's
calculus CCS~\cite{Mil80}. It is therefore not surprising that the
relationship between these two approaches for modelling concurrent
computations started to be investigated very soon afterwards. The
first interpretation of CCS into ESs was proposed by Winskel
in~\cite{Win82}. This interpretation made use of Stable ESs, because
PESs, the simplest form of ESs, appeared not to be flexible enough to
account for CCS parallel composition. Indeed, since CCS parallel
composition allows for two concurrent complementary actions to either
synchronise or occur independently in any order, each pair of such
actions gives rise to two forking computations: this requires
duplication of the same continuation process for these forking
computations in PESs, while the continuation process may be shared by
the forking computations in Stable ESs, which allow for disjunctive
causality.  Subsequently, ESs (as well as other nonsequential
``denotational models'' for concurrency such as Petri Nets) have been
used as the touchstone for assessing noninterleaving operational
semantics for CCS: for instance, the pomset semantics for CCS by
Boudol and Castellani~\cite{BC87,BC88a} and the semantics based on
``concurrent histories'' proposed by Degano, De Nicola and
Montanari~\cite{DM87,DDM88,DDM90}, were both shown to agree with an
interpretation of CCS processes into some class of ESs (PESs
for~\cite{DDM88,DDM90}, PESs with non-hereditary conflict
for~\cite{BC87}, and FESs for~\cite{BC88a}).  Among the early
interpretations of process calculi into ESs, we should also mention
the PES semantics for TCSP (Theoretical CSP~\cite{BHR84,Old86}),
proposed by Goltz and Loogen~\cite{LG91} and later generalised by
Baier and Majster-Cederbaum~\cite{BM94}, and the Bundle ES semantics
for LOTOS, proposed by Langerak~\cite{Lan93} and extended by
Katoen~\cite{Kat96}.  Like FESs, Bundle ESs are a subclass of Stable
ESs. We recall the relationships between the above
classes of ESs (the reader is referred to~\cite{BC94} for separating examples):
%\\[2pt]
%\centerline{$
\[
Prime~ESs \subset Bundle~ESs \subset Flow~ESs \subset
  Stable~ESs \subset General~ESs 
  \]
%$}

More sophisticated ES semantics for CCS, based on FESs and designed to
be robust under action refinement~\cite{AH89,DD93,GGR96}, were 
subsequently  proposed by Goltz and van
Glabbeek~\cite{GG04}. Importantly, all the above-mentioned classes of
ESs, except General ESs, give rise to the same \emph{prime algebraic
  domains} of configurations, from which one can recover a PES by
selecting the complete prime elements.

More recently, ES semantics have been investigated for the
$\pi$-calculus by Crafa, Varacca and Yoshida~\cite{CVY07,VY10,CVY12}
and by Cristescu, Krivine and Varacca~\cite{Cri15,CKV15,CKV16}. 
Previously,  other
causal models for the $\pi$-calculus had already been put forward by
Jategaonkar and Jagadeesan~\cite{JJ95}, by Montanari and
Pistore~\cite{MP95}, by Cattani and Sewell~\cite{CS04} and by Bruni,
Melgratti and Montanari~\cite{BMM06}.  The main new issue, when
addressing causality-based semantics for the $\pi$-calculus, is the
implicit causality induced by scope extrusion. Two alternative views
of such implicit causality had been proposed in  early  work on
noninterleaving operational semantics for the $\pi$-calculus,
respectively by Boreale and Sangiorgi~\cite{BS98} and by Degano and
Priami~\cite{DP99}.  Essentially, in~\cite{BS98} an \emph{extruder}
(that is, an output of a private name) is considered to cause any
action that uses the extruded name, whether in subject or object
position, while in~\cite{DP99} it is considered to cause only the
actions that use the extruded name in subject position. Thus, for
instance, in the process $P = \nu a \,(\overline{b} \langle a\rangle
\pc \overline{c} \langle a\rangle \pc a)$, the two parallel extruders
are considered to be causally dependent in the former approach, and
independent in the latter. All the causal models for the
$\pi$-calculus mentioned above, including the ES-based ones, take one
or the other of these two stands.  Note that opting for the second one
leads necessarily to a non-stable ES model, where there may be causal
ambiguity within the configurations themselves: for instance, in the
above example the maximal configuration contains three events, the
extruders $\overline{b}\langle a\rangle$, $\overline{c} \langle
a\rangle$ and the input on $a$, and one does not know which of the two
extruders enabled the input. Indeed, the paper~\cite{CVY12} uses
non-stable ESs.  The use of non-stable ESs (General ESs) to express
situations where a computational step can merge parts of the state is
advocated for instance by Baldan, Corradini and Gadducci
in~\cite{BCG17}. These ESs give rise to configuration domains that are
not prime algebraic, hence the classical representation theorems have
to be adjusted.

In our simple setting, where we deal only with single sessions and do
not consider session interleaving nor delegation, we can dispense with
channels altogether, and therefore the question of parallel extrusion
does not arise. In this sense, our notion of causality is closer to
that of CCS than to the more complex one of the
$\pi$-calculus. However, even in a more general setting, where
participants would be paired with the channel name of the session they
pertain to, the issue of parallel extrusion would not
arise: indeed, in the above example $b$ and $c$ should be equal,
because participants can only delegate their own channel, but then
they could not be in parallel because of linearity, one of the
distinguishing features enforced by session types. Hence
we believe that in a session-based framework
the two above views of implicit causality should collapse into just
one.

We now briefly discuss our design choices. 
\begin{itemize}
\item  The calculus
considered in the present paper  uses synchronous communication -
rather than asynchronous, buffered communication - because this is how
communication is  classically  modelled in ESs, when they
are used to give semantics to process calculi.   We should
mention however that after first proposing the present study
in~\cite{CDG-LNCS19}, we also considered a calculus with asynchronous
communication in the companion paper~\cite{CDG21}.  In that work too,
networks are interpreted as FESs, and their associated global types,
which we called \emph{asynchronous types} as they split communications
into outputs and inputs, are interpreted as PESs.  The key result is
again an isomorphism between the configuration domain of the FES of a
typed network and that of the PES of its type.  
\item Concerning the choice operator, we adopted here the basic (and
  most restrictive) variant for it, as it was originally proposed for
  multiparty session calculi in \cite{CHY08}.
% where the sender may send different messages to the same
% receiver, 
% We did many choices to shorten the description of the calculus
% and of the types. 
This is essentially a simplifying assumption, and we do not foresee any difficulty in extending our results
%Our results hold also if the communication is asynchronous
%\cite{CDG17}, and the syntax of the global types allows
to a more general choice operator,
% allowing for different receivers,
where the projection is rendered more flexible through the use of  a merge operator~\cite{DY11}. 
% \cite{CHY16}.  
%
\item 
%Finally, concerning 
 As regards the preorder on processes, which is akin to a subtyping relation, we envisaged to use the
standard subtyping,  in which a process with fewer outputs
 can be used in place of 
%smaller than 
 a process with more outputs.  However, in that case Session Fidelity
 would become weaker, since a transition in the LTS of a global type
 would only ensure a transition in the LTS of the corresponding
 network, but not necessarily with the same labelling communication.
 The main drawback would be that \refToTheorem{iso} would no longer
 hold: more precisely, the domains of network configurations would
 only be embedded in (and not isomorphic to) the domains of their
 global type configurations. Notably, typability is independent from
 the use of our preorder or of the standard one, as proved
 in~\cite{BDLT21}.
\end{itemize}

As regards future work, we plan to define an asynchronous
transition system (ATS)~\cite{Bed88} for our calculus, along the lines
of~\cite{BC94}, and show that it provides a noninterleaving
operational semantics for networks that is equivalent to their FES
semantics. This would enable us also to investigate the issue of
reversibility, jointly on our networks and on their FES
representations, since the ATS semantics would give us the handle to
unwind networks, while the corresponding FESs could be unrolled
following one of the methods proposed in existing work on reversible
event structures~\cite{PU2016,CKV16,GPY19,GPY21,Gra21}.

As mentioned at the end of \refToSection{sec:events}, the quest
for a semantic counterpart of our well-formedness conditions on global
types -- namely, for properties that characterise the FESs obtained
from typable networks -- is still open.  By way of comparison, such
semantic well-formedness conditions have been proposed in~\cite{TuostoG18} for
\emph{graphical choreographies}, a truly concurrent
graphical model for global specifications with two kinds of forking
nodes, representing respectively choice and parallel
composition. In~\cite{TuostoG18}, those well-formedness conditions, called
\emph{well-sequencing} and \emph{well-branchedness}, were shown to be
sufficient to ensure projectability on local specifications. In our
case, the property corresponding to well-sequencing is automatically
ensured by our ES semantics, and we conjecture that the well-branchedness
condition for choice nodes (corresponding to projectability) could amount in our
simpler setting\footnote{Our choice operator for global types is less
general than that of~\cite{TuostoG18}.} to the following semantic
condition:

Let $\netev_1, \netev_2\in \GE(N)$ and $\locev{\pp}{\actseq\cdot\pi}\in\netev_1$
and $\locev{\pp}{\actseq\cdot\pi'}\in\netev_2$ with
$\pi\neq\pi'$ and 
$\q = \ptone{\pi}=\ptone{\pi'}$.
If $\netev_1\precN^*\netev'_1$ for some $\netev'_1\in \GE(N)$
such that $\pr\in\loc{\netev'_1}$ with $\pr\not\in \set{\pp, \q}$, then
$\netev_2\precN^*\netev'_2$ for some $\netev'_2\in \GE(N)$ such that
$\pr\in\loc{\netev'_2}$.

This condition would allow us to rule out the FESs of both networks
$\Nt'$ and $\Nt''$ discussed at page~\pageref{wf-discussion}. However,
it should be completed with a condition corresponding to boundedness, and the
conjunction of these two conditions might still not be sufficient in general to ensure typability.
We plan to further investigate this question in the near future.

%\bibliographystyle{plain}
%
%\bibliography{session}

\newpage
\appendix
% !TEX root =cdgS.tex

\section{}
This Appendix contains the proofs of Lemmas~\ref{pf},  \ref{keysr}, \ref{prop:prePostNet},  \ref{epp}, \ref{prop:prePostGl}, and \ref{paltr}.
\begin{lemmaa}{\ref{pf}}{If $\GP$ is bounded, then $\proj\GP\pr$ is a partial function for all $\pr$.  }\end{lemmaa}
\begin{proof}
  We redefine the projection   $\downarrow_\pr$    as the
  largest relation between global types and processes such that $\prR\G\PP\pr$
  implies: %\bmc commentato per compilare \emc  \bcompa\ Fatto\ecompa
\begin{enumerate}[label=\roman*)]%[i)]
\item if $\pr\not\in  \participant{\G}$, then $\PP=\inact$;
\item if $\G= \gt\pr\pp i I \la \G $, then $\PP= \oupP\q{i}{I}{\M}{\PP_i}$ and $\prR{\G_i}{\PP_i}\pr$ for all $i\in I$;
\item if $\G= \gt\pp\pr i I \la \G $, then $\PP= \inpP\pp{i}{I}{\M}{\PP_i}$ and $\prR{\G_i}{ \PP_i}\pr$ for all $i\in I$;
\item if $\G= \gt{\pp}\q i I \la \G $ and
  $\pr\not\in\set{\pp,\q}$ and $\pr\in   \participant{\G_i}$, then
  $\prR{\G_i}{\PP}\pr$ for all $i\in I$. 
\end{enumerate}
The equality $\mathcal E$ of processes is the largest symmetric binary relation   $\RR$   
%$\mathcal E$
on processes such that 
%$\iR\PP\Q{\mathcal E}$ implies:
$\iR\PP\Q{  \RR  }$ implies: %\bmc commentato per compilare \emc  \bcompa\ Fatto\ecompa
\begin{enumerate}[label=(\alph*)]%[(a)]
\item\label{cbf} if $\PP=\oupP\pp{i}{I}{\M}{\PP_i}$ , then $\Q=\oupP\pp{i}{I}{\M}{\Q_i}$ and $\iR{\PP_i}{\Q_i}{  \RR  }$ for all $i\in I$;
\item\label{caf}  if $\PP=\inpP\pp{i}{I}{\M}{\PP_i}$ , then $\Q=\inpP\pp{i}{I}{\M}{\Q_i}$ and $\iR{\PP_i}{\Q_i}{  \RR  }$ for all $i\in I$.
\end{enumerate}
It is then enough to show that the relation
$\RR_\pr  =\set{(\PP,\Q)\mid  \exists\, \G
  \, . \ \prR\G\PP\pr\text { and } \prR\G\Q\pr}$ 
%implies $\iR\PP\Q\mathcal E$,  %is a bisimulation, 
%i.e. $\RR$  
satisfies Clauses~\ref{cbf} and~\ref{caf}   (with $\RR$ replaced by
$\RR_\pr$),   since this will imply $ \RR_\pr \subseteq
\mathcal{E}$.  Note first that $(\inact, \inact) \in \RR_\pr$ because
$(\End, \inact) \in \downarrow_\pr$, and that $(\inact, \inact) \in
\mathcal E$ because Clauses~\ref{cbf} and~\ref{caf} are vacuously
satisfied by the pair   $(\inact, \inact)$.
The proof is by induction on $d=\weight(\G,\pr)$. We only consider
Clause~\ref{caf}, the proof   for   Clause~\ref{cbf}
being similar.  So, assume 
  $(\PP,\Q)\in \RR_\pr$ and   
$\PP=\inpP\pp{i}{I}{\M}{\PP_i}$. \\[3pt]
{\it Case $d=1$.}  In this case  $\G= \gt\pp\pr i I \la \G $ and $\PP= \inpP\pp{i}{I}{\M}{\PP_i}$ and $\prR{\G_i}{  \PP_i}\pr$ for all $i\in I$. From $\prR\G\Q\pr$ we get 
$\Q= \inpP\pp{i}{I}{\M}{\Q_i}$ and $\prR{\G_i}{ \Q_i}\pr$ for all $i\in I$. 
  Hence  $\Q$ has the required form and
   $\iR{\PP_i}{\Q_i}\RR_\pr$ for all $i\in I$.\\
   {\it Case $d>1$.}  In this case  $\G= \gt\pp\q j  J {\la'} \G $ and $\pr\not\in \set{\pp,\q}$ and $\prR{\G_j}{\PP}\pr$ for all $j\in J$. From
$\prR\G\Q\pr$ we get $\prR{\G_j}{\Q}\pr$ for all $j\in J$.   Then   $\iR{\PP}{\Q}{{  \RR_\pr }}$.
\end{proof}
\begin{lemmaa}{\ref{keysr}}{Let $\G$ be a well-formed global type. 
\begin{enumerate}
\item
If $\proj\G\pp=\oup\q{i}{I}{\M}{\PP}$ and $\proj\G\q=\inp\pp{j}{J}{\M'}{\Q}$, then $I=J$, $\M_i=\M_i'$, $\G\stackred{\Comm\pp{\la_i}\q}\G_i$, $\proj{\G_i}\pp=\PP_i$ and $\proj{\G_i}\q=\Q_i$ for all $i\in I$.
\item If $\G\stackred{\Comm\pp{\la}\q}\G'$, then
  $\proj\G\pp=\oup\q{i}{I}{\M}{\PP}$,
  $\proj\G\q=\inp\pp{i}{I}{\M}{\Q}$, where $\la_i=\la$
for some $i\in I$, and 
$\proj{\G'}\pr=\proj\G\pr$ for all $\pr\not\in\set{\pp,\q}$.
\end{enumerate}}\end{lemmaa}
\begin{proof}
  (\ref{keysr1}). The proof is by induction on $d=\weight(\G,\pp)$.\\
 If $d=1$, then by definition of projection (see
  Figure~\ref{fig:proj}) $\proj\G\pp=\oup\q{i}{I}{\M}{\PP}$ implies
  $\G=\gt\pp\q i I {\la} {\G}$ with $\proj{\G_i}\pp=\PP_i$. By the
  same definition it follows that $J = I$ and $\la'_j=\la_j$ and $\Q_j
  = \proj{\G_j}\q$ for all $j\in J$.  Moreover
  $\G\stackred{\Comm\pp{\la_i}\q}\G_i$ by Rule
  \rulename{Ecomm} for all $i\in I$. 
  \\
  If $d>1$, then $\G=\gt\pr\ps h H {\la''} {\G'}$ with
  $\set{\pp,\q}\cap\set{\pr,\ps}=\emptyset$.  By definition of
  projection $\proj\G\pp=\proj{\G'_h}\pp$ and
  $\proj\G\q=\proj{\G'_h}\q$ for all $h\in H$. By Proposition~\ref{dd}
  $\weight(\G,\pp)>\weight(\G'_h,\pp)$ for all $h\in H$.  Then by
  induction $I=J$, $\M_i=\M_i'$,
  \mbox{$\G'_h\stackred{\Comm\pp{\la_i}\q}\G^i_h$,} $\proj{\G^i_h}\pp=\PP_i$
  and $\proj{\G^i_h}\q=\Q_i$ for all $i\in I$ and all $h\in H$. Let
  $\G_i=\gt\pr\ps h H {\la''} {\G^i}$.  By Rule \rulename{Icomm}
  $\G\stackred{\Comm\pp{\la_i}\q}\G_i$ for all $i\in I$.
  By definition of projection $\proj{\G_i}\pp=\PP_i$ and $\proj{\G_i}\q=\Q_i$ for all $i\in I$.\\
  (\ref{keysr2}). The proof is by induction on  the  transition rules of \refToFigure{ltgt}.  \\
  The interesting case is: \prooftree
  \G_h\stackred{\Comm\pp{\la}\q}\G_h' \quad h \in H
  \quad\set{\pp,\q}\cap\set{\ps,\pt}=\emptyset \justifies \gt\ps\pt h
  H {\la'} \G \stackred{\Comm\pp{\la}\q}\gt\ps\pt h H {\la'} {\G'}
  \using ~~~\rulename{Icomm}
  \endprooftree\\
  with $\G=\gt\ps\pt h H {\la'} \G$ and $\G'=\gt\ps\pt h H {\la'} {\G'} $. By induction $\proj{\G_h}\pp=\oup\q{i}{I}{\M}{\PP}$, $\proj{\G_h}\q=\inp\pp{i}{I}{\M}{\Q}$, $\M=\M_i$ for some $i\in I$ and 
$\proj{\G'_h}\pr=\proj{\G_h}\pr$ for all $\pr\not\in\set{\pp,\q}$ and all $h\in H$. By definition of projection $\proj\G\pp=\proj{\G_h}\pp$ and $\proj\G\q=\proj{\G_h}\q$ for all $h\in H$. For $\pr\not\in\set{\pp,\q,\ps,\pt}$ we get $\proj{\G'}\pr=\proj{\G_h'}\pr=\proj{\G_h}\pr=\proj{\G}\pr$. Moreover $\proj{\G'}\ps=\oupp\pt{h}{H}{\M'}{\proj{\G'_h}\ps}=\oupp\pt{h}{H}{\M'}{\proj{\G_h}\ps}=\proj{\G}\ps$ and $\proj{\G'}\pt=\inpp\pt{h}{H}{\M'}{\proj{\G'_h}\pt}=\inpp\ps{h}{H}{\M'}{\proj{\G_h}\ps}=\proj{\G}\pt$.
\end{proof}
\begin{lemmaa}{\ref{prop:prePostNet}}{\begin{enumerate}
\item  If $\post{\netev}{\alpha}$ is defined, then
  $\preP{\post{\netev}{\alpha}}{\alpha}=\netev$;
\item 
$\postP{\pre{\netev}{\alpha}}{\alpha}=\netev$;
 \item   If  $\netev\precN \netev'$, 
then $\pre{\netev}{\alpha}\precN \pre{\netev'}{\alpha}$;
\item  If  $\netev\precN \netev'$ and both $\post{\netev}{\alpha}$ and
  $\post{\netev'}{\alpha}$ are defined, then
  $\post{\netev}{\alpha}\precN \post{\netev'}{\alpha}$; 
\item If  $\netev\grr \netev'$, then $\pre{\netev}{\alpha}\grr
  \pre{\netev'}{\alpha}$;
\item  If  $\netev\grr
  \netev'$ and both $\post{\netev}{\alpha}$ and $\post{\netev'}{\alpha}$
  are defined, then $\post{\netev}{\alpha}\grr\post{\netev'}{\alpha}$;
  \item  If  $\pre{\netev}{\alpha}\grr
  \pre{\netev'}{\alpha}$, then $\netev\grr \netev'$.
\end{enumerate}}\end{lemmaa}

\begin{proof}
 For (\ref{ppn1}) and (\ref{ppn1b})  it  is enough to show
  the corresponding properties for
  located events.
  
  (\ref{ppn1}) Since $\postP{\locev{\pp}\event}{\alpha}$ is defined,
  we have $\event=\concat{(\projS{\alpha}{\pp})}\event'$ and
  $\postP{\locev{\pp}\event}{\alpha}= \locev{\pp}{\event'}$ for some
  $\event'$. Then $\preP{ \postP{\locev{\pp}\event}{\alpha}}{\alpha} =
  \preP{\locev{\pp}{\event'}}{\alpha}=
  \locev{\pp}{\concat{(\projS{\alpha}{\pp})}{\event'}}=
  \locev{\pp}{\event}$.

  (\ref{ppn1b}) Since
  $\preP{\locev{\pp}{\event}}\alpha=\locev{\pp}{\concat{(\projS{\alpha}{\pp})}\event}
  \,$ is always defined, we immediately get $
  \postP{\preP{\locev{\pp}{\event}}\alpha}{\alpha} =
  \postP{\locev{\pp}{\concat{(\projS{\alpha}{\pp})}\event}}{\alpha} =
  \locev{\pp}{\event}$.
 
  (\ref{ppn2b}) Let $\netev\precN \netev'$. By
  \refToDef{netevent-relations}(\ref{c1}), there are
  $\locev{\pp}{\event}\in\netev$ and $\locev{\pp}{\event'}\in\netev'$
  such that $\event<\event'$.  Then
  $\preP{\locev{\pp}{\event}}\alpha=\locev{\pp}{\concat{(\projS{\alpha}{\pp})}\event}
  \, \in \pre{\netev}{\alpha}$ and
  $\preP{\locev{\pp}{\event'}}\alpha=\locev{\pp}{\concat{(\projS{\alpha}{\pp})}\event'}
  \, \in \pre{\netev'}{\alpha}$. Since $\event<\event'$ implies
  $\concat{(\projS{\alpha}{\pp})}{\event} <
  \concat{(\projS{\alpha}{\pp})}{\event'}$, we conclude that
  $\pre{\netev}{\alpha}\precN \pre{\netev'}{\alpha}$.

  (\ref{ppn2}) % Again by \refToDef{netevent-relations}, if
% $\netev\precN \netev'$ then 
As in the previous case,
there are $\locev{\pp}{\event}\in\netev$ and
$\locev{\pp}{\event'}\in\netev'$ such that $\event<\event'$.  Since
both $\post{\netev}{\alpha}$ and $\post{\netev'}{\alpha}$ are defined,
% also $\postP{\locev{\pp}\event}{\alpha}$ and
% $\postP{\locev{\pp}\event'}{\alpha}$ are defined and thus 
there exist $\event_0$ and $\event'_0$ such that
$\event=\concat{(\projS{\alpha}{\pp})}\event_0$ and $\event'
=\concat{(\projS{\alpha}{\pp})}\event'_0$ and
$\postP{\locev{\pp}\event}{\alpha}= \locev{\pp}{\event_0}$ and
$\postP{\locev{\pp}\event'}{\alpha}= \locev{\pp}{\event'_0}$. Since
$\event<\event'$ implies $\event_0<\event'_0$,
%and $\postP{\locev{\pp}\event}{\alpha} \in \post{\netev}{\alpha}$ 
%and $\postP{\locev{\pp}\event'}{\alpha} \in \post{\netev'}{\alpha}$,
we conclude that $\post{\netev}{\alpha}\precN \post{\netev'}{\alpha}$.

(\ref{ppn3b}) Let $\netev\grr\netev'$. If Clause (\ref{c21}) of
\refToDef{netevent-relations} applies, then there are
$\locev{\pp}{\procev}\in \netev$ and $\locev{\pp}{\procev'}\in\netev'$
such that $\procev \grr \procev'$.  From
$\pre{(\locev{\pp}{\procev})}{\alpha}=\locev{\pp}{\concat{(\projS{\alpha}{\pp})}\event}$
and
$\pre{(\locev{\pp}{\procev'})}{\alpha}=\locev{\pp}{\concat{(\projS{\alpha}{\pp})}{\event'}}$
we get
$\concat{(\projS{\alpha}{\pp})}\event\grr\concat{(\projS{\alpha}{\pp})}{\event'}$. If
Clause (\ref{c22}) of \refToDef{netevent-relations} applies, then there
are $\locev{\pp}{\procev}\in \netev$ and
$\locev{\q}{\procev'}\in\netev'$ with $\pp \neq \q$ such that
$\cardin{\proj\procev\q} = \cardin{\proj{\procev'}\pp}$ and
$\neg(\dualev{\proj\procev\q}{\proj{\procev'}\pp})$. Let
$\event_0 =\concat{(\projS{\alpha}{\pp})}\event$ and $\event'_0
=\concat{(\projS{\alpha}{\q})}\event'$.  If
$\participant{\alpha}\neq\set{\pp,\q}$, then
$\proj{(\projS{\alpha}{\pp})}{\q} = \ee =
\proj{(\projS{\alpha}{\q})}{\pp}$ and thus
$\proj{\procev_0}\q=\proj{\procev}\q$ and
$\proj{\procev'_0}\pp=\proj{\procev'}\pp$. If
$\participant{\alpha}=\set{\pp,\q}$, say $\alpha=\Comm\pp\la\q$, then
$\procev_0=\sendL{\q}{\la}\cdot\procev$ and
$\procev'_0=\rcvL{\pp}{\la}\cdot\procev'$, which implies
$\cardin{\proj{\procev_0}\q} = \cardin{\proj{\procev}\q} +1 =
\cardin{\proj{\procev'}\pp} +1 = \cardin{\proj{\procev'_0}\pp}$ and
$\neg(\dualev{\proj{\procev_0}\q}{\proj{\procev'_0}\pp})$.  In both
cases we conclude that $\pre\netev\alpha\grr\pre{\netev'}\alpha$.

(\ref{ppn3}) The proof is similar to that of Point (\ref{ppn3b}),
considering that $\post{\netev}{\alpha}$ and $\post{\netev'}{\alpha}$
are defined.

(\ref{ppn7}) Let $\pre\netev\alpha\grr\pre{\netev'}\alpha$. If Clause
(\ref{c21}) of \refToDef{netevent-relations} applies, then there are
$\locev{\pp}{\procev}\in \netev$ and $\locev{\pp}{\procev'}\in
\netev'$ such that
$\concat{(\projS{\alpha}{\pp})}\event\grr\concat{(\projS{\alpha}{\pp})}{\event'}$.
Therefore $\procev \grr \procev'$ and thus $\netev \grr \netev'$.  If
Clause (\ref{c22}) of \refToDef{netevent-relations} applies, then there
are
$\locev{\pp}{\procev_0}=\pre{(\locev{\pp}{\procev})}{\alpha}\in\pre\netev\alpha
$ and
$\locev{\q}{\procev'_0}=\pre{(\locev{\q}{\procev'})}{\alpha}\in\pre{\netev'}\alpha$
with $\pp \neq \q$ such that $\cardin{\proj{\procev_0}\q} =
\cardin{\proj{\procev'_0}\pp}$ and
$\neg(\dualev{\proj{\procev_0}\q}{\proj{\procev'_0}\pp})$.  It follows
that $\event_0 =\concat{(\projS{\alpha}{\pp})}\event$ and $\event'_0
=\concat{(\projS{\alpha}{\q})}\event'$ and $\locev{\pp}{\procev}\in
\netev$ and $\locev{\q}{\procev'}\in \netev'$.  If
$\participant{\alpha}\neq\set{\pp,\q}$, then
$\proj{(\projS{\alpha}{\pp})}{\q} = \ee =
\proj{(\projS{\alpha}{\q})}{\pp}$ and thus
$\proj{\procev}\q=\proj{\procev_0}\q$ and
$\proj{\procev'}\pp=\proj{\procev'_0}\pp$.  If
$\participant{\alpha}=\set{\pp,\q}$, say $\alpha=\Comm\pp\la\q$, then
$\procev_0=\sendL{\q}{\la}\cdot\procev$ and
$\procev'_0=\rcvL{\pp}{\la}\cdot\procev'$, and thus
$\cardin{\proj{\procev}\q} = \cardin{\proj{\procev_0}\q} - 1 =
\cardin{\proj{\procev'_0}\pp} -1 = \cardin{\proj{\procev'}\pp}$ and
$\neg(\dualev{\proj{\procev}\q}{\proj{\procev'}\pp})$. In both cases
we conclude that $\netev \grr \netev'$.
\end{proof}

\begin{lemmaa}{\ref{epp}}{
Let $\Nt\stackred{\alpha}\Nt'$. Then 
\begin{enumerate}
\item $\set{\nec\alpha}\cup\set{\pre\netev\alpha\mid\netev\in\GE(\Nt')}\subseteq\GE(\Nt)$; 
\item $\set{\post\netev\alpha\mid\netev\in\GE(\Nt)\text{ and }\post\netev\alpha\text{ defined} }\subseteq\GE(\Nt')$. 
\end{enumerate}
}\end{lemmaa}
\begin{proof}
  Let $\alpha=\Comm{\pp}{\la}{\q}$. From $\Nt\stackred{\alpha}\Nt'$ we get
%  \\
%  \centerline{$
\[
  \Nt=\pP{\pp}{\textstyle{\oupp\q{i}{I}{\la}{\PP}}}\parN
    \pP{\q}{\inp\pp{j}{J}{\la}{\Q}}\parN\Nt_0
    \]
%$} 
    where
  for some $k\in (I \cap J)$ we have $\la_k = \la$ and 
%  \\
%  \centerline{$
\[
  \Nt'=\pP{\pp}{\PP_k}\parN \pP{\q}{\Q_k}\parN\Nt_0
  \]
  %$}
  
  (\ref{epp1})   Let
  $\RT=\set{\nec\alpha}\cup\set{\pre\netev\alpha\mid\netev\in\GE(\Nt')}$.
  We first show that $\RT\subseteq\DE(\Nt)$.  By
  \refToDef{netev-relations}(\ref{netev-relations1})
  $\nec{\alpha}\in\DE(\Nt)$.  Let $\netev=\set{\locev{\pr}{\event},
    \locev{\ps}{\event'}} \in \GE(\Nt')$. We want to prove
  that $\pre\netev\alpha\in\DE(\Nt)$.  By
  \refToDef{netev-relations}(\ref{netev-relations1}) there are $R,S$
  such that $\pP\pr R\in\Nt'$ and $\pP\ps S\in\Nt'$ and $\procev\in
  \ES(R)$ and $\procev'\in \ES(S)$.
There are two possible cases:
\begin{itemize}
\item
$\set{\pr,\ps} \cap \set{\pp,\q} = \emptyset$. Then
$\pP\pr R\in\Nt$ and $\pP\ps S\in\Nt$ and thus $\pre\netev\alpha = \netev \in\DE(\Nt)$;
%Since $\pre\netev\alpha = \netev$, we are done;
\item $\set{\pr,\ps} \cap \set{\pp,\q} \neq \emptyset$. 
Suppose $\pr = \pp$. 
Then $\procev\in\ES(\PP_k)$ and 
$\locev\pp {\sendL\q{\la_k}\cdot\procev}\in{\pre\netev\alpha}$ and
$\sendL\q{\la_k}\cdot\procev\in\ES(\oup\q{i}{I}{\la}{\PP})$. 
There are two subcases:
\begin{itemize}
\item \mylabel{case1} $\ps = \q$. Then $\procev'\in\ES(\Q_k)$ and
  $\locev\q {\rcvL\pp{\la_k}\cdot\procev'}\in{\pre\netev\alpha}$ and
  $\sendL\q{\la_k}\cdot\procev'\in\ES(\inp\pp{j}{J}{\la}{\Q})$.  In
  this case we have $\pre\netev\alpha = \set{\locev\pp
      {\sendL\q{\la_k}\cdot\procev}, \locev\q
      {\rcvL\pp{\la_k}\cdot\procev'}} \in \DE(\Nt)$;
\item \mylabel{case2} $\ps \neq \q$. Then $\pre{\locev{\ps}{\event'}}{\alpha} =
\locev{\ps}{\event'}$, and thus $\pre{\netev}{\alpha} =
\set{\locev{\pp}{\sendL\q{\la_k}\cdot\procev}, \locev{\ps}{\event'}} \in \DE(\Nt)$.
\end{itemize}
\end{itemize}
Therefore, $\pre\netev\alpha\in\DE(\Nt)$. 
% We conclude that 
Hence
$\RT\subseteq\DE(\Nt)$.  We want now to show that
$\RT\subseteq\GE(\Nt)$.

Recall from \refToSection{sec:netS-ES} that 
$\GE(\Nt)$ is the greatest fixed point of the function
%\\
%\centerline{ $ 
\[
f_{\DE(\Nt)}(X) = \set{\netev_0\in \DE(\Nt) \mid
    \exists E_0 \subseteq X. \, E_0 ~\text{is a causal set of }
    \netev_0 \text{ in } X}
    \]
%    $ } 
%@
Then $\GE(\Nt)$ is also the greatest post-fixed point of
$f_{\DE(\Nt)}(X)$, namely the greatest $X$ such that $X\subseteq
f_{\DE(\Nt)}(X)$.
Therefore, to show that $\RT \subseteq \GE(\Nt)$, it is enough to
show that $\RT$ is also a post-fixed point of
$f_{\DE(\Nt)}(X)$, namely that $\RT \subseteq f_{\DE(\Nt)}(\RT)$.

Consider first the event $\nec{\alpha}$. Since the only 
causal set of $\nec{\alpha}$ in any set is $\emptyset$,
it is immediate that $\nec{\alpha}\in f_{\RT}(\RT)$. 
Consider now $\pre\netev\alpha \in \RT$ for some $\netev\in \GE(\Nt')$  with $\loc\netev=\set{\pr,\ps}$.  Define
%\\
%\centerline{$
\[
\causpre{\alpha}{E}{\netev}= \begin{cases}
    \Xi    & \text{if }\set{\pr,\ps} \cap \set{\pp,\q} = \emptyset\\
    \set{\nec\alpha}\cup\Xi & \text{otherwise}
\end{cases}
\]
%$}
where $\Xi=
\set{\pre{\netev'}\alpha\mid\netev'
  \in E\text{ and $E$ is a causal set of $\netev$ in $\GE(\Nt')$}}$. 

%$\causpre{\alpha}{E}{\netev}= \set{\pre{\netev'}\alpha\mid\netev'
%  \in E\text{ where $E$ is a causal set of $\netev$ in $\GE(\Nt')$}}$ if $\set{\pr,\ps} \cap \set{\pp,\q} = \emptyset$, and
%$\causpre{\alpha}{E}{\netev}= \set{\pre{\netev'}\alpha\mid\netev'
%  \in E\text{ where $E$ is a causal set of $\netev$ in $\GE(\Nt')$}}$ otherwise. 
  We show that $\causpre{\alpha}{E}{\netev}$ is a causal set of
$\pre\netev\alpha$ in $\RT$, namely that it is a minimal 
subset of $\RT$ satisfying Conditions (\ref{cs1}) and
(\ref{cs2}) of \refToDef{cs}.\\ % and it is minimal. 
{\em Condition }(\ref{cs1}) If $\nec{\alpha} \in \causpre{\alpha}{E}{\netev}$, then
$\set{\pr,\ps} \cap \set{\pp,\q} \neq \emptyset$. A conflict between $\nec{\alpha}$ and any other event of
$\causpre{\alpha}{E}{\netev} \cup \set{\pre\netev\alpha}$ can only be derived
by Clause (\ref{c21}) of \refToDef{netevent-relations}, since $\nec{\alpha}
=\set{\locev{\pp}{\sendL{\q}{\la}},
  \locev{\q}{\rcvL{\pp}{\la}}}$ and
$\proj{(\projS{\alpha}{\pp})}{\pt} = \proj{(\projS{\alpha}{\q})}{\pt}
= \ee$ for all $\pt\not\in\set{\pp,\q}$.  Suppose $\pr
= \pp$. Then $\locev{\pp}{\concat{\sendL{\q}\la}{\procev}}\in
\pre{\netev}{\alpha}$.  Since $\sendL{\q}\la <
\concat{\sendL{\q}\la}{\procev}$,  Clause (\ref{c21}) cannot be
used to derive a conflict $\nec{\alpha} \grr\pre{\netev}{\alpha}$.
Similarly, if $ \pre{\netev_1}\alpha\in \causpre{\alpha}{E}{\netev}$ and
$\locev{\pp}{\procev_1}\in \netev_1$, then
$\locev{\pp}{\concat{\sendL{\q}\la}{\procev_1}}\in \netev_1$.
Then $\sendL{\q}\la <
\concat{\sendL{\q}\la}{\procev_1}$, hence  Clause (\ref{c21})  
cannot be used to derive $\nec{\alpha}\grr \pre{\netev_1}\alpha$.\\
Suppose now $\pre{\netev_1}\alpha\in \causpre{\alpha}{E}{\netev}$ and  
$\pre{\netev_2}\alpha\in \causpre{\alpha}{E}{\netev}$. Since $E$ is a causal
set, we have $\neg(\netev_1\grr\netev_2)$. Thus 
$\neg(\pre{\netev_1}\alpha\grr\pre{\netev_2}{\alpha})$ by
\refToLemma{prop:prePostNet}(\ref{ppn7}). \\
{\em Condition }(\ref{cs2}) Let $\netev=\set{\locev{\pr}{\event},
  \locev{\ps}{\event'}}$, we have 
$\pre\netev\alpha =\set{\locev{\pr}{\concat{(\projS{\alpha}{\pr})}{\procev}},
  \locev{\ps}{\concat{(\projS{\alpha}{\ps})}{\procev'}}}$.  We 
show that if $\procev_0 < \concat{(\projS{\alpha}{\pr})}{\procev}$,
then $\locev{\pr}{\procev_0} \in
\netev_0$ for some $\netev_0 \in \causpre{\alpha}{E}{\netev}$.
From $\procev_0 < \concat{(\projS{\alpha}{\pr})}{\procev}$ we derive
%If $\pr \in \set{\pp,\q}$, then $\projS{\alpha}{\pr}\neq \ee$ and
$\procev_0 = \concat{(\projS{\alpha}{\pr})}{\actseq}$ for some
$\actseq$ such that $\actseq < \procev$. If $\actseq \neq \ee$, then
$\actseq = \procev'_0 < \procev$. Since $E$ is a causal set,
$\procev'_0 < \procev_0$ implies $\occ{\locev{\pr}{\procev'_0}}{E}$.
Hence $\occ{\locev{\pr}{\procev_0}}{\causpre{\alpha}{E}{\netev}}$. If
instead $\actseq = \ee$, then it must be $\procev_0 =
\projS{\alpha}{\pr}\neq \ee$ and thus $\pr \in \set{\pp,\q}$. In this
case $\set{\nec{\alpha}} \in \causpre{\alpha}{E}{\netev}$ and thus
$\occ{\locev{\pr}{\procev_0}}{\causpre{\alpha}{E}{\netev}}$.\\
As for {\em minimality }, we first show that $\netev'\prec
\pre\netev\alpha $ for all $\netev'\in\causpre{\alpha}{E}{\netev}$.
If $\nec{\alpha} \in \causpre{\alpha}{E}{\netev}$, then $\set{\pr,\ps}
\cap \set{\pp,\q} \neq \emptyset$. Then $\nec{\alpha} \prec
\pre\netev\alpha$.  If $ \netev_1\in \causpre{\alpha}{E}{\netev}$ and
$ \netev_1\neq \nec{\alpha}$, then there exists $\netev'_1 \in E$ such
that $\netev_1 =\pre{\netev'_1}\alpha$. Since $E$ is a causal set for
$\netev$, we have $\netev'_1 \prec \netev$. Therefore $\netev_1 =
\pre{\netev'_1}\alpha\prec\pre\netev\alpha$ by
\refToLemma{prop:prePostNet}(\ref{ppn2b}). Assume now that
$\causpre{\alpha}{E}{\netev}$ is not minimal.  Then there is
$E'\subset\causpre{\alpha}{E}{\netev}$ that verifies Condition
(\ref{cs2}) of \refToDef{cs} for $\pre\netev\alpha$. Let $\netev'\in
\causpre{\alpha}{E}{\netev}\setminus E'$. Then $\netev'\prec
\pre\netev\alpha=\set{\locev{\pr}{\procev_{\pr}},
  \locev{\ps}{\procev_{\ps}}}$. Assume that
$\locev{\pr}{\procev'_{\pr}}\in\netev'$ with
$\procev'_{\pr}<\procev_{\pr}$ (the proof is similar for $\ps$).  By
Condition (\ref{cs2}), there is $\netev''\in E'$ such that
$\locev{\pr}{\procev'_{\pr}}\in\netev''$.  But then
$\netev'\grr\netev''$ by \refToProp{prop:conf}, contradicting the fact
that $\causpre{\alpha}{E}{\netev}$ verifies Condition
(\ref{c1}). Therefore $\causpre{\alpha}{E}{\netev}$ is minimal.

(\ref{epp2}) Let $\RS=\set{\post\netev\alpha\mid\netev\in
  \GE(\Nt)\text{ and }\post\netev\alpha\text{ defined} }$.  We first
show that $\RS\subseteq\DE(\Nt')$.  Let
$\netev=\set{\locev{\pr}{\event}, \locev{\ps}{\event'}} \in \GE(\Nt)$
be such that $\post\netev\alpha$ is defined. We want to prove that
$\post\netev\alpha\in\DE(\Nt')$.  By
\refToDef{netev-relations}(\ref{netev-relations1}) there are $R,S$
such that $\pP\pr R\in\Nt$ and $\pP\ps S\in\Nt$ and $\procev\in
\ES(R)$ and $\procev'\in \ES(S)$.
There are two possible cases:
\begin{itemize}
\item
$\set{\pr,\ps} \cap \set{\pp,\q} = \emptyset$. Then
$\pP\pr R\in\Nt'$ and $\pP\ps S\in\Nt'$ and thus $\post\netev\alpha = \netev \in\DE(\Nt')$;
\item $\set{\pr,\ps} \cap \set{\pp,\q} \neq \emptyset$. 
Suppose $\pr = \pp$. 
Then $\procev\in\ES(\oup\q{i}{I}{\la}{\PP})$ and since  $\post\netev\alpha$ is defined we have that
$\procev=\sendL\q{\la_k}\cdot\procev_{k}$ where $\procev_k\in\ES(\PP_k)$.
There are two subcases:
\begin{itemize}
\item \mylabel{Case1} $\ps = \q$. Then $\procev'\in\ES(\inp\pp{j}{J}{\la}{\Q})$ 
and since  $\post\netev\alpha$ is defined
  $\procev'={\rcvL\pp{\la_k}\cdot\procev'_{k}}$ where $\procev'_k\in\ES(\Q_k)$. In
  this case we have $\post\netev\alpha = \set{\locev\pp
      {\procev_{k}}, \locev\q
      {\procev'_{k}}} \in \DE(\Nt')$;
\item \mylabel{Case2} $\ps \neq \q$. Then $\post{\locev{\ps}{\event'}}{\alpha} =
\locev{\ps}{\event'}$, and thus $\post{\netev}{\alpha} =
\set{\locev{\pp}{\procev_k}, \locev{\ps}{\event'}} \in \DE(\Nt')$.
\end{itemize}
\end{itemize}
Therefore $\RS\subseteq\DE(\Nt')$. We want now to show that
$\RS\subseteq\GE(\Nt')$.

%%%%%% FINE PEZZO NUOVO %%%%%%%%%%%%%%%%%%%%%%%%%%%

We proceed as in the proof of Statement (\ref{epp1}). We know that
$\GE(\Nt')$ is the greatest post-fixed point of the function
%\\
%\centerline{ $
\[ 
f_{\DE({\Nt'})}(X) = \set{\netev_0\in \DE(\Nt') \mid
    \exists E_0 \subseteq X. \, E_0 ~\text{is a causal set of }
    \netev_0 \text{ in } X}
    \]
%    $ } 
%
Then, in order to obtain $\RS \subseteq \GE(\Nt')$ it is enough to
show that $\RS$ is a post-fixed point of
$f_{\DE({\Nt'})}(X)$, namely that $\RS \subseteq f_{\DE({\Nt'})}(\RS)$.  

%%%%%% FINE PEZZO NUOVO %%%%%%%%%%%%%%%%%%%%%%%%%%%

Let $\post\netev\alpha \in \RS$ for some $\netev\in \GE(\Nt)$. Define
%\\
%\centerline{$
\[
\causpost{\alpha}{E}{\netev}=\set{\post{\netev'}\alpha\mid\netev'
    \in E\text{ and $E$ is a causal set of $\netev$ in $\GE(\Nt)$}}
    \]
%$}
We show that $\causpost{\alpha}{E}{\netev}$ is a causal set of
$\post\netev\alpha$ in $\RS$, namely that it is a minimal subset
of $\RS$ satisfying Conditions (\ref{cs1}) and (\ref{cs2}) of
\refToDef{cs}.\\ % and it is minimal. \\
{\em Condition }(\ref{cs1}) Suppose $\post{\netev_1}\alpha\in
\causpost{\alpha}{E}{\netev}$ and $\post{\netev_2}\alpha\in
\causpost{\alpha}{E}{\netev}$. Since $E$ is a causal set  and
$\netev_1, \netev_2 \in E$, we have
$\neg(\netev_1\grr\netev_2)$. Thus
$\neg(\post{\netev_1}\alpha\grr\post{\netev_2}{\alpha})$ by
\refToLemma{prop:prePostNet}(\ref{ppn3b}) and (\ref{ppn1}). \\
{\em Condition }(\ref{cs2}) Since $\netev=\set{\locev{\pr}{\event},
  \locev{\ps}{\event'}}$ and $\post\netev\alpha$ is defined,
 we have $\procev={\concat{(\projS{\alpha}{\pr})}{\procev_{\pr}}}$ and
$\procev'=\concat{(\projS{\alpha}{\ps})}{\procev_{\ps}}$ and  $\post\netev\alpha
=\set{\locev{\pr}{\procev_{\pr}}, \locev{\ps}{\procev_{\ps}}}$.
Let $\procev_0 < \procev_{\pr}$. Then
$\concat{(\projS{\alpha}{\pr})}{\procev_0} <
\concat{(\projS{\alpha}{\pr})}{\procev_\pr} = \procev$. 
Since $E$ is a causal set for $\netev$ in $\GE(\Nt)$, this implies 
$\occ{\locev{\pr}{\concat{(\projS{\alpha}{\pr})}{\procev_0}}}{E}$. 
Hence $\occ{\locev{\pr}{\procev_0}}{\causpost{\alpha}{E}{\netev}}$. 
\\
As for {\em minimality}, we first show that $\netev'\prec
\post\netev\alpha $ for all $\netev'\in\causpost{\alpha}{E}{\netev}$.
If $ \netev_1\in \causpost{\alpha}{E}{\netev}$, then there exists
$\netev'_1 \in E$ such that $\netev_1 =\post{\netev'_1}\alpha$. Since
$E$ is a causal set for $\netev$, we have $\netev'_1 \prec
\netev$. Therefore $\netev_1 =
\pre{\netev'_1}\alpha\prec\pre\netev\alpha$ by
\refToLemma{prop:prePostNet}(\ref{ppn2b}). Assume now that
$\causpost{\alpha}{E}{\netev}$ is not minimal.  Then there is
$E'\subset\causpost{\alpha}{E}{\netev}$ that verifies Condition
(\ref{cs2}) of \refToDef{cs} for $\post\netev\alpha$.
Let $\netev'\in \causpost{\alpha}{E}{\netev}\setminus E'$. Then
$\netev'\prec \post\netev\alpha=\set{\locev{\pr}{\procev_{\pr}},
  \locev{\ps}{\procev_{\ps}}}$. Assume that
$\locev{\pr}{\procev'_{\pr}}\in\netev'$ with
$\procev'_{\pr}<\procev_{\pr}$ (the proof is similar for $\ps$).  By
Condition (\ref{cs2}), there is $\netev''\in E'$ such that
$\locev{\pr}{\procev'_{\pr}}\in\netev''$.  But then
$\netev'\grr\netev''$ by \refToProp{prop:conf}, contradicting the fact
that $\causpost{\alpha}{E}{\netev}$ verifies Condition
(\ref{c1}). Therefore $\causpost{\alpha}{E}{\netev}$ is minimal.
\end{proof}
\begin{lemmaa}{\ref{prop:prePostGl}}{\begin{enumerate}
\item  If $\postG{\comocc}{\alpha}$ is defined, then $\preG{(\postG{\comocc}{\alpha})}{\alpha}=\comocc$;
\item 
$\postG{(\preG{\comocc}{\alpha})}{\alpha}=\comocc$;
 \item   If  $\comocc_1< \comocc_2$, 
then $\preG{\comocc_1}{\alpha}< \preG{\comocc_2}{\alpha}$;
\item  If  $\comocc_1<\comocc_2$ and  both $\postG{\comocc_1}{\alpha}$ and
  $\postG{\comocc_2}{\alpha}$ %is %
  are 
  defined, then
  $\postG{\comocc_1}{\alpha}< \postG{\comocc_2}{\alpha}$;
 \item   If $\comocc_1\gr \comocc_2$, 
then $\preG{\comocc_1}{\alpha}\gr \preG{\comocc_2}{\alpha}$;
  \item  If $\comocc<\preG{\comocc'}{\alpha}$, then either $\comocc=\eqclass\alpha$ or $\postG{\comocc}{\alpha}<{\comocc'}$;
 \item  If $\participant{\alpha_1}\cap\participant{\alpha_2}=\emptyset$, then $\preG{(\preG{\comocc}{\alpha_2})}{\alpha_1}=\preG{(\preG{\comocc}{\alpha_1})}{\alpha_2}$;
 \item  If $\participant{\alpha_1}\cap\participant{\alpha_2}=\emptyset$ and both $\postG{(\preG{\comocc}{\alpha_1})}{\alpha_2}$, $\postG{\comocc}{\alpha_2}$ are defined, then $\preG{(\postG{\comocc}{\alpha_2})}{\alpha_1}= \postG{(\preG{\comocc}{\alpha_1})}{\alpha_2}$.
\end{enumerate}}\end{lemmaa}
\begin{proof}
  (\ref{ppg1a}) If $\postG{\eqclass\comseq}{\alpha}$ is defined, then
  in case $\participant\alpha\cap\participant\comseq=\emptyset$ we get
  $\postG{\eqclass\comseq}{\alpha}=\eqclass\comseq$ and also
  $\preG{\eqclass\comseq}\alpha=\eqclass\comseq$, so
  $\preG{(\postG{\eqclass\comseq}{\alpha})}{\alpha}=\eqclass\comseq$.
  Instead if $\participant\alpha\cap\participant\comseq\not=\emptyset$, then
  $\postG{\eqclass\comseq}{\alpha}=\eqclass{\comseq'}$ where
  $\comseq\sim\concat\alpha{\comseq'}$ and
  $\comseq'\neq\emptyseq$.   From $\participant\alpha\cap\participant\comseq\not=\emptyset$ we get 
  $\preG{\eqclass{\comseq'}}\alpha=\eqclass{\concat{\alpha}{\comseq'}}$ by \refToDef{causal-path}. This 
  implies $\preG{(\postG{\eqclass\comseq}{\alpha})}{\alpha}=\eqclass\comseq$. 
%  By \refToDef{def:glEvent} $\comseq$ is a
%  pointed sequence, so
%  $\participant\alpha\cap\participant\comseq\not=\emptyset$.  By
%  \refToDef{causal-path}
%  $\preG{\eqclass{\comseq'}}\alpha=\eqclass{\concat{\alpha}{\comseq'}}$,
%  which implies
%  $\preG{(\postG{\eqclass\comseq}{\alpha})}{\alpha}=\eqclass\comseq$.
  
  (\ref{ppn1b}) By \refToDef{causal-path} either
  $\preG{\eqclass\comseq}\alpha=\eqclass{\concat{\alpha}{\comseq}}$
  if $\participant\alpha\cap\participant\comseq\not=\emptyset$, or
  $\preG{\comseq}\alpha=\eqclass{\comseq}$.  In the first
  case
  $\postG{\eqclass{\concat{\alpha}{\comseq}}}{\alpha}=\eqclass\comseq$
  and in the second
  $\postG{\eqclass{\comseq}}{\alpha}=\eqclass\comseq$, which proves the result. 
  
  (\ref{ppg4a})
   Let $\comocc_1=\eqclass{\comseq}$ and
  $\comocc_2=\eqclass{\concat{\comseq}{\comseq'}}$.  If  %Suppose
  $\participant{\alpha} \cap \participant{\comseq} \neq \emptyset$,
  then %also 
  $\participant{\alpha}
  \cap \participant{\concat{\comseq}{\comseq'}} \neq \emptyset$, and
  we have
  $\preG{\comocc_1}{\alpha}=\eqclass{\concat{\alpha}{\comseq}}$ and
  $\preG{\comocc_2}{\alpha}=\eqclass{\concat{\alpha}{\concat{\comseq}{\comseq'}}}$.
Whence $\preG{\comocc_1}{\alpha} \leq \preG{\comocc_2}{\alpha}$.
  Suppose now $\participant{\alpha} \cap \participant{\comseq} =
  \emptyset$. Then $\preG{\comocc_1}{\alpha}=\eqclass{\comseq} =
  \comocc_1$. Now, if also  $\participant{\alpha} \cap \participant{\comseq'} =
  \emptyset$, then $\preG{\comocc_2}{\alpha}=\eqclass{\concat{\comseq}{\comseq}} =
  \comocc_2$ and we are done. If instead $\participant{\alpha} \cap \participant{\comseq'} \neq
  \emptyset$, then $\preG{\comocc_2}{\alpha}=\eqclass{\concat{\alpha}{\concat{\comseq}{\comseq'}}} =
\eqclass{\concat{\comseq}{\concat{\alpha}{\comseq'}}}$, whence $\comocc_1 \leq \preG{\comocc_2}{\alpha}$.

   (\ref{ppg4b}) Let $\comocc_1=\eqclass{\comseq}$ and
  $\comocc_2=\eqclass{\concat{\comseq}{\comseq'}}$. If
  $\participant{\alpha} \cap \participant{\comseq}= \participant{\alpha} \cap \participant{\concat{\comseq}{\comseq'}}= \emptyset$, then 
  $\postG{\comocc_1}{\alpha}=\comocc_1$ and $\postG{\comocc_2}{\alpha}=\comocc_2$. If $\participant{\alpha} \cap \participant{\comseq} \neq
  \emptyset$, then $\comseq\sim\concat\alpha{\comseq_0}$, which implies $\postG{\comocc_1}{\alpha}=\eqclass{\comseq_0}$ and $\postG{\comocc_2}{\alpha}=\eqclass{\concat{\comseq_0}{\comseq'}}$. If
  $\participant{\alpha} \cap \participant{\comseq}=\emptyset$ and $\participant{\alpha} \cap \participant{\concat{\comseq}{\comseq'}} \neq
  \emptyset$, then $\postG{\comocc_1}{\alpha}=\eqclass{\comseq}$ and $\comseq'\sim\concat\alpha{\comseq_0}$, which implies $\postG{\comocc_2}{\alpha}=\eqclass{\concat{\comseq}{\comseq_0}}$.

  (\ref{prop:prePostGl5}) Let $\comocc_1=\eqclass{\comseq}$ and
  $\comocc_2=\eqclass{\comseq'}$ and $\pro{\comseq}\pp\gr\pro{\comseq'}\pp$ for some $\pp$. The only interesting case is  $\participant\alpha\cap\participant{\comseq}=\emptyset$ and  $\participant\alpha\cap\participant{\comseq'}\not=\emptyset$. This implies $\preG{\comocc_1}{\alpha}=\eqclass{\comseq}$ and  $\preG{\comocc_2}{\alpha}=\eqclass{\concat\alpha{\comseq'}}$.
  We get $\pro{(\concat\alpha{\comseq'})}\pp=\pro{\comseq'}\pp$ since $\participant\alpha\cap\participant{\comseq}=\emptyset$ implies $\pp\not\in\participant\alpha$. We conclude  $\preG{\comocc_1}{\alpha}\gr \preG{\comocc_2}{\alpha}$.
  
  (\ref{ppg3})  Let $\comocc=\eqclass{\comseq}$ and $\preG{\comocc'}{\alpha}=\eqclass{\concat{\comseq}{\comseq'}}$.
  If $\postG{\comocc}{\alpha}$ is defined by Point~\ref{ppg4b}
  $\postG{\comocc}{\alpha}<\postG{(\preG{\comocc'}{\alpha})}{\alpha}$
  and by Point~\ref{ppg1b}
  $\postG{(\preG{\comocc'}{\alpha})}{\alpha}=\comocc'$. Otherwise
  either $\comocc=\eqclass\alpha$, in which case we are done, or 
  $\participant{\alpha} \cap \participant{\comseq} \neq
  \emptyset$ and $\comseq\not\sim\concat\alpha{\comseq_0}$. 
  This last case is impossible, since $\participant{\alpha} \cap \participant{\concat{\comseq}{\comseq'}} \neq
  \emptyset$ and $\concat{\comseq}{\comseq'}\not\sim\concat\alpha{\comseq_1}$  contradict the definition of $\circ$ (\refToDef{causal-path}(\ref{causal-path1})).

  (\ref{prop:prePostGl6}) Let $\comocc=\eqclass{\comseq}$. By \refToDef{causal-path}(\ref{causal-path1}) we have four cases: 
%\bmc commentato per compilare \emc  \bcompa\ Fatto\ecompa
  \begin{enumerate}[label=(\alph*)]%[(a)]
  \item $\preG{(\preG{\comseq}{\alpha_2})}{\alpha_1}=\eqclass{\concat{\alpha_1}{(\concat{\alpha_2}\comseq)}}=\eqclass{\concat{\alpha_2}{(\concat{\alpha_1}\comseq)}}=\preG{(\preG{\comseq}{\alpha_1})}{\alpha_2}$ if $\participant{\alpha_1}\cap\participant{\comseq}\not=\emptyset$ and $\participant{\alpha_2}\cap\participant{\comseq}\not=\emptyset$, since $\participant{\alpha_1}\cap\participant{\alpha_2}=\emptyset$;
  \item $\preG{(\preG{\comseq}{\alpha_2})}{\alpha_1}=\eqclass{\concat{\alpha_1}\comseq}=\preG{(\preG{\comseq}{\alpha_1})}{\alpha_2}$ if $\participant{\alpha_1}\cap\participant{\comseq}\not=\emptyset$ and $\participant{\alpha_2}\cap\participant{\comseq}=\emptyset$;
   \item $\preG{(\preG{\comseq}{\alpha_2})}{\alpha_1}=\eqclass{\concat{\alpha_2}\comseq}=\preG{(\preG{\comseq}{\alpha_1})}{\alpha_2}$ if $\participant{\alpha_1}\cap\participant{\comseq}=\emptyset$ and $\participant{\alpha_2}\cap\participant{\comseq}\not=\emptyset$;
    \item $\preG{(\preG{\comseq}{\alpha_2})}{\alpha_1}=\eqclass{\comseq}=\preG{(\preG{\comseq}{\alpha_1})}{\alpha_2}$ if $\participant{\alpha_1}\cap\participant{\comseq}=\emptyset$ and $\participant{\alpha_2}\cap\participant{\comseq}=\emptyset$.
  \end{enumerate}
  
   (\ref{prop:prePostGl7}) Let $\comocc=\eqclass{\comseq}$. By Definitions~\ref{causal-path}(\ref{causal-path1}) and~\ref{def:PostPreGl}(\ref{def:PostPreGl1}) we have four cases:
%\bmc commentato per compilare \emc  \bcompa\ Fatto\ecompa
  \begin{enumerate}[label=(\alph*)]%[(a)]
\item $\preG{(\postG{\comseq}{\alpha_2})}{\alpha_1}=\eqclass{\concat{\alpha_1}{\comseq'}}=\postG{(\preG{\comseq}{\alpha_1})}{\alpha_2}$ if $\participant{\alpha_1}\cap\participant{\comseq}\not=\emptyset$ and $\comseq\sim\concat{\alpha_2}{\comseq'}$, which implies $\concat{\alpha_1}\comseq=\concat{\alpha_1}{(\concat{\alpha_2}{\comseq'})}\sim\concat{\alpha_2}{(\concat{\alpha_1}{\comseq'})}$, since $\participant{\alpha_1}\cap\participant{\alpha_2}=\emptyset$;
\item $\preG{(\postG{\comseq}{\alpha_2})}{\alpha_1}=\eqclass{\concat{\alpha_1}{\comseq}}=\postG{(\preG{\comseq}{\alpha_1})}{\alpha_2}$ if $\participant{\alpha_1}\cap\participant{\comseq}\not=\emptyset$ and $\participant{\alpha_2}\cap\participant{\comseq}=\emptyset$;
\item $\preG{(\postG{\comseq}{\alpha_2})}{\alpha_1}=\eqclass{\comseq'}=\postG{(\preG{\comseq}{\alpha_1})}{\alpha_2}$ if $\participant{\alpha_1}\cap\participant{\comseq}=\emptyset$ and $\comseq\sim\concat{\alpha_2}{\comseq'}$;
\item $\preG{(\postG{\comseq}{\alpha_2})}{\alpha_1}=\eqclass{\comseq}=\postG{(\preG{\comseq}{\alpha_1})}{\alpha_2}$ if $\participant{\alpha_1}\cap\participant{\comseq}=\emptyset$ and $\participant{\alpha_2}\cap\participant{\comseq}=\emptyset$. \qedhere
  \end{enumerate}
  \end{proof}
\begin{lemmaa}{\ref{paltr}}{Let $\G\stackred\alpha \G'$.
\begin{enumerate}
\item If $\comocc\in\EGG(\G')$, then $\preG\comocc{\alpha}\in \EGG(\G)$;
\item If $\comocc\in\EGG(\G)$ and $\postG\comocc{\alpha}$ is defined, then $\postG\comocc{\alpha}\in\EGG( \G')$.
\end{enumerate}
}\end{lemmaa}
\begin{proof}
Both proofs are by induction on the inference of the transition
$\G\stackred\alpha \G'$, see \refToFigure{ltgt}.

(\ref{paltr1}) For rule \rulename{Ecomm} we get $\G=\gt\pp\q i I \la \G$ and $\G'=\G_k$ and $\alpha=\Comm\pp{\la_k}\q$ for some $k\in I$. We conclude $\preG\comocc{\alpha}\in \EGG(\G)$ by \refToLemma{paldf}(\ref{paldf1}).\\
For rule \rulename{Icomm} we get $\G=\gt\pp\q i I \la \G$ and $\G'=\gtp\pp\q i I \la \G$ and $\G_i\stackred\alpha \G'_i$ for all $i\in I$ and $\participant{\alpha}\cap\set{\pp,\q}=\emptyset$. By \refToDef{eg}(\ref{eg1a}) $\comocc\in\EGG( \G')$ implies $\comocc=\ev\comseq$ for some $\comseq\in\FPaths{\G'}$. This implies $\comseq=\concat{\Comm\pp{\la_k}\q}{\comseq'}$ and $\comocc=\eqclass{\comseq_0}$ 
with either $\comseq_0\sim \concat{\Comm\pp{\la_k}\q}{\comseq'_0}$ for some $k\in I$ or $\participant{\comseq_0}\cap\set{\pp,\q}=\emptyset$ by \refToDef{causal-path}. Then $\postG\comocc{\Comm\pp{\la_k}\q}$ is defined unless $\comseq_0=\Comm\pp{\la_k}\q$ by \refToDef{def:PostPreGl}(\ref{def:PostPreGl1}). We consider two cases.\\
If $\comseq_0=\Comm\pp{\la_k}\q$, then $\preG\comocc{\alpha}=\eqclass{\Comm\pp{\la_k}\q}$ since $\participant{\alpha}\cap\set{\pp,\q}=\emptyset$. We conclude $\preG\comocc{\alpha}\in \EGG(\G)$ by \refToDef{eg}(\ref{eg1a}). Otherwise let $\comocc'=\postG\comocc{\Comm\pp{\la_k}\q}$. By \refToLemma{paldf}(\ref{paldf2}) $\comocc'\in \EGG(\G_k')$. By induction $\preG{\comocc'}{\alpha}\in \EGG(\G_k)$.
By \refToLemma{paldf}(\ref{paldf1}) $\preG{(\preG{\comocc'}{\alpha})}{\Comm\pp{\la_k}\q}\in \EGG(\G)$. 
%Rimpiazza precedente
 We now show that $\preG{(\preG{\comocc'}{\alpha})}{\Comm\pp{\la_k}\q}=\preG\comocc{\alpha}$. 
 By  \refToLemma{prop:prePostGl}(\ref{prop:prePostGl6}) and $\participant{\alpha}\cap\set{\pp,\q}=\emptyset$ we get
 $\preG{(\preG{\comocc'}{\alpha})}{\Comm\pp{\la_k}\q}=\preG{(\preG{\comocc'}{{\Comm\pp{\la_k}\q}})}\alpha$ and
  by \refToLemma{prop:prePostGl}(\ref{ppg1a}) we have
 $\preG{\comocc'}{\Comm\pp{\la_k}\q}=\preG{(\postG\comocc{\Comm\pp{\la_k}\q})}{\Comm\pp{\la_k}\q}=\comocc$. Therefore
 $\preG{(\preG{\comocc'}{\alpha})}{\Comm\pp{\la_k}\q}=\preG\comocc{\alpha}\in \EGG(\G)$. 
 
%Precedente
%\refToLemma{prop:prePostGl}(\ref{prop:prePostGl6}) gives $\preG{(\preG{\comocc'}{\alpha})}{\Comm\pp{\la_k}\q}=\preG{(\preG{\comocc'}{\Comm\pp{\la_k}\q})}{\alpha}$ since $\participant{\alpha}\cap\set{\pp,\q}=\emptyset$. We conclude $\preG\comocc{\alpha}\in \EGG(\G)$ since $\preG{\comocc'}{\Comm\pp{\la_k}\q}=\preG{(\postG\comocc{\Comm\pp{\la_k}\q})}{\Comm\pp{\la_k}\q}=\comocc$ by \refToLemma{prop:prePostGl}(\ref{ppg1a}). 
%

(\ref{paltr2}) For rule \rulename{Ecomm} we get $\G=\gt\pp\q i I \la \G$ and $\G'=\G_k$ and $\alpha=\Comm\pp{\la_k}\q$ for some $k\in I$. We conclude $\postG\comocc{\alpha}\in \EGG(\G')$ by \refToLemma{paldf}(\ref{paldf2}).\\
For rule \rulename{Icomm} we get $\G=\gt\pp\q i I \la \G$ and $\G=\gtp\pp\q i I \la \G$ and $\G_i\stackred\alpha \G'_i$ for all $i\in I$ and $\participant{\alpha}\cap\set{\pp,\q}=\emptyset$. By \refToDef{eg}(\ref{eg1a}) $\comocc\in\EGG( \G)$ implies $\comocc=\ev\comseq$ for some $\comseq\in\FPaths{\G}$. This implies $\comseq=\concat{\Comm\pp{\la_k}\q}{\comseq'}$ and $\comocc=\eqclass{\comseq_0}$ 
with either $\comseq_0\sim \concat{\Comm\pp{\la_k}\q}{\comseq'_0}$ for some $k\in I$ or $\participant{\comseq_0}\cap\set{\pp,\q}=\emptyset$ by \refToDef{causal-path}. Then $\postG\comocc{\Comm\pp{\la_k}\q}$ is defined unless $\comseq_0=\Comm\pp{\la_k}\q$ by \refToDef{def:PostPreGl}(\ref{def:PostPreGl1}). We consider two cases.\\
If $\comseq_0=\Comm\pp{\la_k}\q$, then $\postG\comocc{\alpha}=\eqclass{\Comm\pp{\la_k}\q}$ since $\participant{\alpha}\cap\set{\pp,\q}=\emptyset$. We conclude $\postG\comocc{\alpha}\in \EGG(\G')$ by \refToDef{eg}(\ref{eg1a}). Otherwise let $\comocc'=\postG\comocc{\Comm\pp{\la_k}\q}$. By \refToLemma{paldf}(\ref{paldf2}) $\comocc'\in \EGG(\G_k)$. We first show that $\postG{\comocc'}{\alpha}$ is defined. Since $\postG{\comocc}{\alpha}$ and $\postG\comocc{\Comm\pp{\la_k}\q}$ are defined, by \refToDef{def:PostPreGl}(\ref{def:PostPreGl1}) we have four cases:
%\bmc commentato per compilare \emc  \bcompa\ Fatto\ecompa
\begin{enumerate}[label=(\alph*)]%[(a)]
\item\label{ca} $\comseq_0\sim\concat\alpha{\comseq_1}$ for some $\comseq_1$ and $\comseq_0\sim \concat{\Comm\pp{\la_k}\q}{\comseq'_0}$;
\item\label{cb} $\comseq_0\sim\concat\alpha{\comseq_1}$ and  $\participant{\comseq_0}\cap\set{\pp,\q}=\emptyset$;
\item\label{cc}  $\participant{\alpha}\cap\participant{\comseq_0}=\emptyset$ and $\comseq_0\sim \concat{\Comm\pp{\la_k}\q}{\comseq'_0}$;
\item\label{cd}  $\participant{\alpha}\cap\participant{\comseq_0}=\emptyset$ and $\participant{\comseq_0}\cap\set{\pp,\q}=\emptyset$.
 \end{enumerate}
 In case \ref{ca} $\comseq_0\sim\concat\alpha{\concat{\Comm\pp{\la_k}\q}{\comseq'_1}}\sim \concat{\Comm\pp{\la_k}\q}{\concat\alpha{\comseq'_1}}$ for some $\comseq'_1$  since  $\participant{\alpha}\cap\set{\pp,\q}=\emptyset$. Notice that $\comseq'_1\not=\ee$ since $\comseq_0$ is pointed and $\participant{\alpha}\cap\set{\pp,\q}=\emptyset$. We get $\comocc'=\postG\comocc{\Comm\pp{\la_k}\q}=\eqclass{\concat\alpha{\comseq'_1}}$ and $\postG{\comocc'}{\alpha}=\eqclass{\comseq'_1}$.\\
 In case \ref{cb} $\comocc'=\comocc$ and $\postG{\comocc'}{\alpha}=\eqclass{\comseq_1}$.\\
 In case \ref{cc} $\comocc'=\eqclass{\comseq_0'}$ and $\postG{\comocc'}{\alpha}=\eqclass{\comseq_0'}$, since $\participant{\alpha}\cap\participant{\comseq_0}=\emptyset$ implies $\participant{\alpha}\cap\participant{\comseq_0'}=\emptyset$.\\
 In case \ref{cd} $\comocc'=\comocc$ and $\postG{\comocc'}{\alpha}=\comocc$.\\
 By induction $\postG{\comocc'}{\alpha}\in\EGG(\G'_k)$. By \refToLemma{paldf}(\ref{paldf1}) $\preG{(\postG{\comocc'}{\alpha})}{\Comm\pp{\la_k}\q}\in \EGG(\G')$. 
 
 We now show that $\preG{(\postG{\comocc'}{\alpha})}{\Comm\pp{\la_k}\q}=\postG\comocc{\alpha}$. From $\comocc'=\postG\comocc{\Comm\pp{\la_k}\q}$
 and \refToLemma{prop:prePostGl}(\ref{ppg1a})  
 $\preG{\comocc'}{\Comm\pp{\la_k}\q}=\comocc$. Therefore from $\postG{\comocc}{\alpha}$ defined
 we have $\postG{(\preG{\comocc'}{\Comm\pp{\la_k}\q})}{\alpha}$ defined.
 Since $\postG{\comocc'}{\alpha}$ is also defined and $\participant{\alpha}\cap\set{\pp,\q}=\emptyset$, by 
 \refToLemma{prop:prePostGl}(\ref{prop:prePostGl7}) we get 
 $\preG{(\postG{\comocc'}{\alpha})}{\Comm\pp{\la_k}\q}=\postG{(\preG{\comocc'}{\Comm\pp{\la_k}\q})}{\alpha}$.
 Therefore $\preG{(\postG{\comocc'}{\alpha})}{\Comm\pp{\la_k}\q}=\postG{\comocc}{\alpha}\in \EGG(\G')$.
\end{proof}

\end{document}